\pgfplotsset{compat=1.15}
\pgfplotsset{
    discard if not/.style 2 args={
        x filter/.code={
            \edef\tempa{\thisrow{#1}}
            \edef\tempb{#2}
            \ifx\tempa\tempb
            \else
                \def\pgfmathresult{inf}
            \fi
        }
    }
}
\newlength\figureheight
\newlength\figurewidth
\newcommand{\Prob}{\mathbbm{P}}
\newcommand{\E}{\mathbbm{E}}
\newcommand{\R}{\mathbb{R}}
\newcommand{\custtype}{\psi}
\newcommand{\bA}{\boldsymbol{\alpha}}
\newcommand{\bP}{\boldsymbol{\pi}}
\newcommand{\stkout}[1]{\ifmmode\text{\sout{\ensuremath{#1}}}\else\sout{#1}\fi}
\newtheorem{theorem}{Theorem}
\newtheorem{lemma}{Lemma}
\newtheorem{proposition}{Proposition}
\newtheorem{assumption}{Assumption}
\newtheorem{definition}{Definition}
\newtheorem{example}{Example}
\newtheorem{remark}{Remark}
\newtheorem*{lemma*}{Lemma}
\newtheorem*{proposition*}{Proposition}
\newenvironment{manualprop}[1]{%
	\manualpropinner
}{\endmanualpropinner}
\newenvironment{manualthm}[1]{%
	\manualthminner
}{\endmanualthminner}
\newenvironment{manualassumption}[1]{%
	\manualassumptioninner
}{\endmanualassumptioninner}
\newsavebox{\measure@tikzpicture}
  \def\tikz@width{#1}%
  \def\tikzscale{1}\begin{lrbox}{\measure@tikzpicture}%
  \edef\tikzscale{\pgfmathresult}%
\titlespacing{\section}{0pt}{1.8ex}{1ex}
\titlespacing{\subsection}{0pt}{1.5ex}{1ex}
\titlespacing{\subsubsection}{0pt}{0.5ex}{0.5ex}
\begin{document}
\title{%
\vspace{-1.2cm}
%On the Disclosure of Promotion Value in Platforms \\\vspace{-0.1cm} with Learning Sellers
%\yg{Confounding Promotions in Platforms: \\\vspace{-0.1cm} Policy Design and Analysis}
Information Disclosure and Promotion\\\vspace{-0.1cm} Policy Design for Platforms\vspace{-0.1cm}
}
	
%\author{
%  % Authors' names blinded for peer review
%{\sf Yonatan Gur}
%\\Stanford University\and
%{\sf Gregory Macnamara}
%\\Stanford University\and
%{\sf Daniela Saban}\thanks{Correspondence: {\tt ygur@stanford.edu}, {\tt gmacnama@stanford.edu}, {\tt dsaban@stanford.edu}.}
%\\Stanford University
%\vspace{0.2cm}}

%\iffalse
\author{
%	 Authors' names blinded for peer review
	{\sf Yonatan Gur}\thanks{Stanford University, {\tt $\{$ygur,ilanmor,dsaban$\}$@stanford.edu}}
	%\thanks{Stanford University, {\tt ygur@stanford.edu}}
	\and
	{\sf Gregory Macnamara}\thanks{Meta Platforms, Inc., {\tt gregory.macnamara@gmail.com}}
	\and
	{\sf Ilan Morgenstern}\footnotemark[1]%\thanks{Stanford University, {\tt ilanmor@stanford.edu}}
	\and
	{\sf Daniela Saban}\footnotemark[1]%\thanks{Stanford University, {\tt dsaban@stanford.edu}}
	%\vspace{0.2cm}
}
%\fi

%\author{
%%Authors' names blinded for peer review
%}
\date{\monthyeardate\today}
%\date{}

\maketitle
%\setstretch{1.15}
\setstretch{1.05}
%\vspace{-1.5cm}
\vspace{-0.65cm}
\begin{abstract}\noindent
  We consider a platform facilitating trade between sellers and buyers with the objective of maximizing consumer surplus. Even though in many such marketplaces prices are set by revenue-maximizing sellers, platforms can influence prices through (i) price-dependent promotion policies that can increase demand for a product by featuring it in a prominent position on the webpage and (ii) the information revealed to sellers about the value of being promoted. Identifying effective joint information design and promotion policies is a challenging dynamic problem as sellers can sequentially learn the promotion value from sales observations and update prices accordingly. We introduce the notion of \emph{confounding} promotion policies, which are designed to prevent a Bayesian seller from learning the promotion value (at the expense of the short-run loss of diverting some consumers from the best product offering). Leveraging these policies, we characterize the maximum long-run average consumer surplus that is achievable through joint information design and promotion policies when the seller sets prices myopically. We then construct a Bayesian Nash equilibrium in which the seller's best response to the platform's  optimal policy is to price myopically in every period. Moreover, the equilibrium we identify is platform-optimal within the class of horizon-maximin equilibria, in which strategies are not predicated on precise knowledge of the horizon length, and are designed to maximize payoff over the worst-case horizon. Our analysis allows one to identify practical long-run average optimal platform policies in a broad range of demand models.

%  \vspace{0.2cm}
%\noindent{\sc Keywords}: Information design, Bayesian learning, revenue management, pricing, demand learning, platforms, online marketplaces
\end{abstract}
\setcounter{footnote}{0}
\renewcommand*{\thefootnote}{\arabic{footnote}}

\setstretch{1.5}
\section{Introduction}\vspace{-0.1cm}
Online marketplaces allow consumers to evaluate, compare, and purchase products while simultaneously providing a channel for third-party
sellers to reach a broader consumer base and increase demand for their products. %\ilanst{As platforms seek}
In order to maintain a large consumer base, many platforms prioritize increasing consumer surplus by offering %high-quality products and
competitively priced products. At the same time, it is common practice in such marketplaces to let sellers set their own price, but such flexibility may result in higher prices that reduce consumer surplus. However, platforms retain the ability to impact consumer surplus by influencing sellers' pricing policies. One avenue for doing so is through designing the search and recommendation environment to incentivize sellers to post low prices. For example, a platform can choose to prominently feature sellers that set competitive prices, thereby increasing their visibility and boosting the demand they face. A second avenue for influencing prices involves strategically sharing information on how increased visibility impacts consumer demand. Platforms can typically observe and track consumer behavior across sellers and products and thus often have better information about consumer demand than sellers. Specifically, the additional demand that is associated with being promoted by the platform (e.g., being featured in a prominent position on the webpage) is typically a priori unknown to sellers. By strategically sharing this information, the platform can alter the seller's perceived value of being promoted and thereby impact the seller's posted prices.

In general, platforms may deploy various mechanisms for altering a given product's or seller's visibility throughout a consumer's interaction with the platform. A concrete example is provided by Amazon's \textmd{featured offer} (also known as the Buy Box), which is depicted in Figure~\ref{fig:BuyBox}. When a consumer reaches a product page on Amazon, %one particular seller is prominently featured and has a significant advantage over the other sellers.
she has the option to ``Buy Now" or ``Add to Cart" through links that are positioned in a designated, highly visible area of the webpage referred to as the Buy Box, or to consider ``Other Sellers on Amazon," an option that is positioned in a less visible area of the webpage and typically requires the consumer to scroll down the page.\footnote{Similar mechanisms are used by Walmart Marketplace (\citealt{SellerActive}) and eBay (\citealt{eCommerceEbay}).}

\begin{figure}[ht]
\begin{center}
    \includegraphics[width=0.85\textwidth]{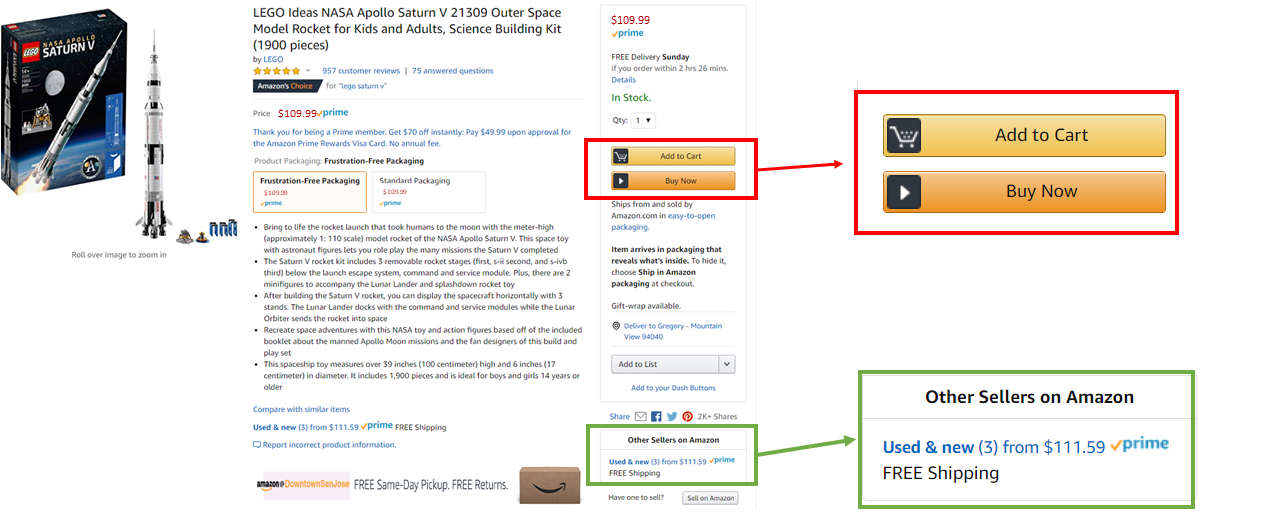}\vspace{-0.1cm}
    \caption{Example of Amazon's featured offer (Buy Box)}\vspace{-0.6cm}
    \label{fig:BuyBox}
    % \Description[Picture of Amazon Product Page with Featured Offer Highlighted]{<long description>}
    \end{center}
\end{figure}
If the consumer selects ``Buy Now" or ``Add to Cart," then the sale will be assigned to the seller that is featured in the Buy Box. Thus, by promoting a seller to the Buy Box, Amazon  \emph{effectively selects} the seller from which the consumer is purchasing; this valuable advantage allows the promoted seller to capitalize on demand from consumers that are ``impatient," or have a high cost of search. The rest of the sellers, which were not featured, are grouped under ``Other Sellers on Amazon," and will only be viewed by ``patient" consumers that scroll further down the page. Thus, the seller that is featured in the Buy Box faces limited competition for consumers who do not browse through all the available sellers and can expect to observe increased demand and a higher fraction of sales.

The impact of promotion on sales, which may %\ilanst{substantially}
vary across different %\ilanst{products and markets,}
product categories, has direct implications on the incentives of sellers and the pricing policies they deploy. If most consumers tend to buy using the Buy Box without considering other sellers, then being promoted generates a substantial increase in demand and sales. On the other hand, if the majority of consumers patiently scroll down to consider all candidate sellers, then the value of being promoted might be limited and even marginal. While it is known that being a featured merchant is valuable, individual sellers do not have access to consumer browsing information that would allow them to identify a priori the additional demand ``boost" associated with the platform's promotion decision.\footnote{Amazon's promotion decisions are based on a Featured Merchant Algorithm (FMA). While the factors accounted for by the FMA are not publicly declared, many resources suggest that the featured sellers are those who set low prices and have high consumer ratings. See, e.g., \cite{Chen:2016:EAA:2872427.2883089} and the blog post by \cite{informed.co}.} 
	%\ilanst{While Amazon does not publicly reveal the factors accounted for by the FMA when selecting the featured seller, there are many resources suggesting that the featured sellers are those who set low prices, have high consumer ratings, etc. See, e.g., the blog post by \cite{informed.co} for a description of Amazon Featured Merchant Status and some details on the FMA algorithm, as well as \cite{Chen:2016:EAA:2872427.2883089} for an overview and analysis of factors that impact Amazon's promotion decision.}
%However, while it is known that being a featured merchant is valuable, sellers do not have precise information about the impact the promotion may have on the demand they face.
%Amazon can, potentially, leverage this information asymmetry to influence the seller's pricing policy by strategically designing the information that is shared with sellers.
%\gm{Typically, the platform can track consumer browsing behavior throughout their platform and therefore observe the value of promotion in a given market, but an individual seller cannot. The platform can leverage this information asymmetry to influence the seller's pricing behavior. }
%\yg{[The platform typically has access to such information about the consumer behaviour in a given market, but sellers do not. The platform can leverage this information asymmetry to increase consumer surplus.]}
Thus, Amazon can influence pricing decisions not only through its Buy Box promotion policy, but also by leveraging the underlying information asymmetry through strategically disclosing information on the additional demand associated with being promoted.

A key challenge the platform faces in utilizing its private information is that sellers can, potentially, \emph{infer} the value of promotions over time from sales observations, and update their prices accordingly. Therefore, the platform needs to strike a balance between providing incentives for prices that maximize consumer surplus in the \emph{current} period, and controlling the information that is revealed by sales observations, which impacts consumer surplus in \emph{subsequent} periods. As the platform's information disclosure policy impacts the optimal promotion policy, which in turn impacts the seller's ability to collect information over time, the platform must consider the design of its promotion and information policies jointly. %Addressing any of these problems separately can potentially lead to suboptimal system design.
In this paper, we study how a platform can maximize consumer surplus through \emph{joint} information design and dynamic promotion policies that balance the aforementioned tradeoff.

We note that while the Buy Box example above describes a retail setting, our formulation and approach are relevant to similar ``promotions'' that are common in other online marketplaces and platforms where prices are set by sellers. Examples include lodging platforms (e.g., Airbnb), booking and travel fare platforms (e.g., Expedia, Booking.com, TripAdvisor), freelancing platforms (e.g., Upwork), and food ordering and delivery platforms (e.g., Uber Eats, Grubhub). While the structure of promotions and the criteria the platform uses to select promoted sellers may vary across these settings, they all share common features: promotions are valuable to sellers, though the exact value may be a priori unknown to sellers, and the platform may share information about this value to increase consumer surplus.

\subsection{Main Contributions}\vspace{-0.0cm}

%\textbf{Main Contributions.}
Our contribution is threefold: (1) introducing a stylized model for studying the interaction between a platform and a seller who does not know the value of promotions; % (has incomplete demand information)%, in a broad family of demand structures
(2) characterizing the maximal long-run average expected consumer surplus that is achievable by the platform when the seller prices myopically; and (3) providing practical platform policies that achieve this maximal long-run average consumer surplus in equilibrium.
%, and providing a prescription for identifying practical long-run average optimal policies given a concrete demand model.} %; and (4) %   \ilan{and (5) illustrating that such policies remain valuable in the presence of price competition among sellers.}
More specifically, our contribution is along the following dimensions.

%\subsubsection{Modeling}\vspace{-0.1cm}

\vspace{0.3cm}
\textbf{Modeling.} Our model considers a platform that can promote a single product to each arriving consumer, and a Bayesian seller that sequentially sets prices and has access to his own sales observations {(an extension to two Bayesian sellers is provided in \S\ref{sec: TwoSellers})}. Our formulation considers a broad class of demand and consumer choice models, and assumes that each arriving consumer is either impatient, and therefore considers only buying from the promoted seller (versus an outside option), or patient, and therefore considers all the relevant alternatives. The fraction of impatient consumers therefore captures the value of promotion for the seller.

The platform has private information about the true fraction of impatient consumers. At the beginning of the horizon, the platform provides an initial information signal regarding this fraction, and commits to a dynamic promotion policy (a dynamic sequence of functions) that at each possible history maps the price posted by the seller to a (possibly random) promotion decision. Subsequently, in each period the seller updates his belief about the fraction of impatient consumers and then posts a price. After the price is posted, the platform decides whether to promote the seller or an alternative. Then, a consumer arrives, forms a consideration set depending on her patience type, and makes a purchase decision according to an underlying demand model. The seller observes whether he made a sale or not.

Our baseline model is stylized, yet allows for tractability in a challenging dynamic problem that is relevant to many practical settings. Our model captures a fundamental tradeoff faced by the platform, between maximizing consumer surplus in the present, and controlling the demand information revealed to the seller, which may impact the achievable consumer surplus in the future.

%\subsubsection{Confounding Promotions: Design and Long-Run Average Optimality}\vspace{-0.1cm}

\vspace{0.3cm} %To force 33 lines in worst-case page
\textbf{Confounding Promotions: Design and Long-Run Average Optimality.} We observe that fully disclosing its private information can be detrimental to the platform. Moreover, even when the platform does not disclose the fraction of impatient consumers, the seller may be able to learn this value over time from sales observations. Thus, while optimizing the initial information signal can be valuable in the short term, it may be ineffective at increasing the long-run average consumer surplus, unless the platform's promotion policy carefully accounts for the information revealed by sales observations.

With this in mind, we propose a class of \emph{confounding} promotion policies, designed to control the information collected by the seller over time. Specifically, confounding policies ensure that if the seller sets prices myopically, his belief about the fraction of impatient consumers remains fixed throughout the problem horizon (after the initial information signal is sent). Such policies may require diverting some consumers away from the best product offering, thus weakening the seller's incentives to set low prices in the short run. Nevertheless, these policies can induce a long-run average consumer surplus that is higher than what is achievable by truthfully revealing the fraction of impatient consumers.

Furthermore, we characterize the maximum long-run average consumer surplus that is achievable by the platform when the seller prices myopically, and establish that confounding promotion policies (when coupled with a carefully selected initial information signal) are long-run average optimal.

We provide a simple procedure for constructing practical \emph{joint} information design and promotion policies that are long-run average optimal for a broad class of demand models. Our approach is based on reducing the platform's dynamic problem to one in which it needs to first identify the optimal confounding promotion policy for a given belief (which can be reduced to a static problem), and then identify the information signal that results in an optimal distribution of initial beliefs. This procedure allows one to study the impact that the underlying demand model and consumers' search patterns have on the design of effective promotion policies and the achievable consumer surplus.

%As a method for controlling the seller's uncertainty about the fraction of impatient consumers over time, we introduce the notion of \emph{confounding} promotion policies. These policies are designed to ensure that the seller's belief about the fraction of impatient consumers is fixed throughout the problem horizon (after the initial information signal is sent), at the cost of diverting some consumers away from the best product offering. Leveraging the structure of confounding promotion policies, we characterize the maximum long-run average consumer surplus that is achievable by the platform when the seller prices myopically.

%\subsubsection{Equilibrium Analysis}\vspace{-0.1cm}

\vspace{0.3cm} %To force 33 lines in worst-case page
\textbf{Equilibrium Analysis.} We establish that myopic pricing is, in fact, a best response to the confounding platform strategy that results from our construction, thereby establishing a \emph{Bayesian Nash equilibrium} between the platform and the seller. In particular, the platform cannot benefit from deviating to any other joint information design and promotion policy, and the seller cannot gain from deviating to any other dynamic pricing policy at any stage of the game. {Furthermore, the equilibrium we identify is asymptotically platform-optimal within a class of horizon-maximin equilibria, in which the seller aims to maximize his payoff over the worst-case horizon length.} While the literature on dynamic pricing suggests that sellers should avoid confounding prices in order to learn the demand function, our characterization implies that, in the presence of a strategic platform, it might be optimal for the seller to set confounding prices, even though doing so leads to incomplete learning.

%\ilanst{We then}
Moreover, we extend our formulation and analysis to incorporate price competition between two Bayesian sellers. %\ilanst{and compare the performance of confounding policies.}
 Our analysis demonstrates the value captured by confounding policies relative to truthful revelation in the presence of sellers that compete in prices in every period. We find that while confounding both sellers is typically infeasible or suboptimal, designing a joint signaling and promotion policy that confounds one of the sellers is often beneficial to the platform.

\subsection{Related Literature}\label{sec:relatedLiterature}

Our work relates to several strands of literature in operations and economics. First, the consideration of the seller's pricing decisions relates to the literature on dynamic pricing policies in settings characterized by demand uncertainty; see, e.g., \cite{araman2009dynamic}, \cite{Besbes2009}, \cite{farias2010dynamic},
\cite{Harrison2012},  \cite{DenBoer2014}, and \cite{keskinzeevi2014}; see also the surveys by \cite{araman2010revenue} and \cite{den2015dynamic} for an overview. {Some works (e.g., \citealt{conlisk1984cyclic}, \citealt{su2007intertemporal}, \citealt{besbes2015intertemporal}) study dynamic pricing with consumers that are heterogeneous in their willingness to wait to purchase a product. While we also consider patient and impatient consumers, we interpret patience as the propensity of consumers to spend time browsing for alternative products in the platform rather than their willingness to wait for lower prices in the future.%, or to monitor prices.
}

More broadly, the seller's problem relates to an extensive literature on sequential decision making under uncertainty in which a decision maker must balance a tradeoff between taking actions that generate high immediate payoffs and taking actions that generate information and therefore increase future payoffs. This tradeoff has  been studied in contexts including retail assortment selection (e.g., \citealt{Caro2007}, \citealt{saure2013optimal}) and inventory management (e.g., \citealt{huh2009nonparametric}, \citealt{Besbes2013}, \citealt{besbes2017exploration}). %\cite{hörner_skrzypacz_2017}
Our work departs from these models, which assume that, conditional on the decision maker's action, the payoff and information generated is \emph{exogenous}, by considering the pricing dynamics of a learning seller when demand is \emph{endogenously} affected by the platform. %This consideration introduces a significant departure from the aforementioned stream of literature in terms of modeling and resulting dynamics.

%By considering a strategic environment,
We analyze the aforementioned exploration/exploitation tradeoff from the perspective of the uninformed seller as well as the informed platform. See \cite{horner_skrzypacz_2017} for a survey of literature that considers the exploration/exploitation tradeoff in strategic settings. From the seller's perspective, most closely related is \cite{Harrison2012}, which considers a Bayesian seller who dynamically posts prices while learning about the underlying (exogenous) demand model. Their work demonstrates that while in many settings a myopic Bayesian pricing policy can be near optimal, it can suffer from incomplete learning if the seller reaches a \emph{confounding belief}. On the other hand, in our work the platform faces tradeoff between maximizing instantaneous consumer surplus and concealing demand information, and we show that, in many cases, effective platform strategies are designed to \emph{confound} the seller and prevent him from learning the underlying demand structure. 
%On the other hand, we show that the platform may benefit from concealing the underlying demand structure from the seller, so the platform faces a counterpart tradeoff between maximizing instantaneous consumer surplus and concealing demand information, which may increase future consumer surplus. We show that, in many cases, effective platform strategies are in fact designed to \emph{confound} the seller at certain beliefs and prevent him from learning the underlying demand structure even at the cost of diverting consumers to inferior product offerings. %Moreover, while in \cite{Harrison2012} semi-myopic policies (in which the seller does not price myopically at confounding beliefs) are suggested as a vehicle to avoid incomplete learning, our analysis implies that a strategic platform may design its policy to ensure myopic pricing is a best response for the seller.
%\ilanst{In fact, we establish the optimality of this solution for the platform in a class of horizon-maximin equilibria, which is related to previous work on settings where players maximize their worst-case payoff given uncertainty over other players' preferences or action sets (see, e.g.,}  \citealt{carroll2015robustness}\ilanst{). In our horizon-maximin solution concept, however, the platform and seller use strategies that maximize payoff over the worst-case horizon length.} 
Also related, \cite{gabaix2006shrouded} consider a setting where firms may choose to hide relevant product information to induce myopic consumers to buy an initial product at a low price and subsequently sell add-on products at higher prices. In this context, they show that information shrouding can be supported in equilibrium. While our results display some parallels, as we find that the platform often finds it optimal to prevent the seller from learning demand information, our setting differs in several ways. Particularly, our model features dynamic repeated interactions with sellers that learn both from a direct signal from the platform and from observing the evolution of sales over time.

In our formulation, the  platform and seller's interaction begins with a disclosure of information. In that sense, our work relates to the work on information design in the Bayesian persuasion framework originating in the work of \cite{Segal2010} and \cite{Kamenica2011}, and more broadly, to the work on repeated games of incomplete information in \cite{aumann1995repeated}. %, which studies how an informed player's actions influence the learning of an uninformed player.
Thus, our work contributes to the growing field of information design in operational settings including queueing (\citealt{Lingenbrink}), networks (\citealt{Candogan2019}, \citealt{candogan2020optimal}), inventory (\citealt{drakopoulos2021persuading}), and exploration in platforms (\citealt{Papanastasiou},  \citealt{Bimpikis2019}, \citealt{Ozer2019}). The present paper departs from this line of work in terms of both the application domain and the setting; while the above studies typically consider a static formulation whereas in our setting the information signal is followed by a \emph{dynamic} interaction  through which further information may be revealed to the seller. For additional models of dynamic Bayesian persuasion see, e.g., \cite{ely2015suspense} and \cite{ely2017beeps}.

Other literature streams study the interaction between sellers, consumers, and platforms that facilitate their trade. In our model, the platform can impact purchase decisions through its selection of which product to promote, which has been studied empirically in,
%. The phenomenon that consumers are more likely to consider and purchase products given prominence on a webpage has been documented empirically in many settings; see,
e.g., \cite{kim2010online},  \cite{chen2016sequential}, % in the context of consumer products,
and \cite{besbes2015optimization}. % in the context of content recommendations in media sites.
% Some implications of this phenomenon has been studied from the perspective of retailers designing optimal rankings of its products, (see, e.g., \cite{ProductRanking2018} and \cite{ferreira2019learning}).  However, we consider how the platform can use their promotion policy to leverage this behavior for incentivizing low prices from third-party sellers. Thus,
Within this stream, our work relates most closely to papers considering the design of platform recommendations and search environments such as \cite{hagiu2011intermediaries}, which studies %how a revenue maximizing platform directs consumer search and, in particular,
how the design of consumer search may incentivize sellers to lower prices.
Our analysis identifies a new consideration for this design: preventing learning by a seller with incomplete demand information.
%\ilanst{; when facing a seller with incomplete demand information, a platform may divert consumers to prevent the seller from learning this information.}
\cite{Dinerstein2018} empirically analyze a similar tradeoff between directing consumers to desired products and strengthening incentives for sellers to lower prices in the context of the eBay search environment. %\cite{hagiu2019platforms} studies a  related problem of whether a platform should induce customers to explore by steering them to new products.

%Finally, we would like to distinguish the notion of promotions studied in the current paper from the one that has been studied in retail management; see, e.g., \cite{cohen2017impact} and references therein. In this stream of work, promotions refer to times at which a retailer temporarily reduces its price to increase sales whereas promotion in our setting refers to the platform's decision to increase the visibility of a seller to consumers.

%\ilan{The rest of the paper is organized as follows. We introduce our baseline model and discuss the implications of the main assumptions in \S\ref{sec:Model}. Then, we characterize the maximum long-run average consumer surplus achievable by the platform when facing a single seller in \S\ref{sec:LongRunAverageOptimalConsumerSurplus} and we provide a practical procedure to design simple long-run average optimal policies in \S\ref{sec:CharacterizingOptimalSimplePolicies}. We then establish that such platform policies can be supported in an equilibrium in which the seller prices myopically on every period in \S\ref{sec:EquilibriumAnalysis}. Moreover, we explicitly introduce competition among sellers to the model in \S\ref{sec: TwoSellers} and illustrate that our main findings continue to hold in that setting. Finally, we conclude in \S\ref{sec:Conclusion}. The Appendix contains all the proofs or our results, as well as additional results for a number of extensions we consider.} 

%\input{s2_model.tex}
%!TEX root = main.tex
\vspace{-0.1cm}\section{Model}\label{sec:Model}\vspace{-0.1cm}
In this section, we introduce a stylized model of the dynamic interactions between a seller and a platform. We start by providing an overview of the model, followed by a detailed description of each of its components. We discuss our modeling assumptions and some extensions in \S\ref{sec:model-discussion}.

\textbf{Overview of Incentives.}  %We consider the dynamic interactions between a platform and a seller.
We study how a platform, which facilitates trade between sellers and consumers, should design a joint promotion and information disclosure policy to maximize consumer surplus by leveraging information asymmetry about the value of promotion.
In our model, consumers arrive to the platform sequentially. Upon arrival, each consumer observes a promoted product and,
depending on her type, may consider additional products as well. We interpret a consumer's type as her willingness to spend time browsing for products in the platform. For simplicity, we assume that each consumer is either \textit{impatient} or \textit{patient}. Impatient consumers spend little time searching and only consider the promoted product whereas patient consumers spend enough time to consider all available products.\footnote{{This is not to be confused with other works that use this terminology to distinguish customers that are willing to wait to buy a product in the future for a lower price (e.g., \citealt{su2007intertemporal}, \citealt{cachon2009purchasing}, \citealt{besbes2015intertemporal}).}} %\footnote{\ilan{We interpret a consumer's type as their willingness to spend time browsing for products in the platform. Impatient consumers are ``search-myopic'', in the sense that they spend little time searching (i.e., they consider only the promoted product), whereas patient consumers spend enough time to consider all the available alternatives.}}
%
%This is not to be confused with other works that use this terminology to distinguish customers that are willing to wait to buy a product in the future for a lower price (e.g., \citealt{su2007intertemporal}, \citealt{cachon2009purchasing}, \citealt{besbes2015intertemporal}).}
%Upon arrival, the
Each arriving consumer observes %\ilanst{the (potentially idiosyncratic) value of}
 the product(s) in her consideration set and makes a purchase decision according to an underlying demand model. By selecting which product to promote, the policy directly affects the {consideration sets of impatient consumers (and thus their choices)}, which impacts their surplus and the seller's revenue. Moreover, the promotion policy can also influence the seller's pricing decisions: as impatient consumers only consider purchasing from the seller if he is promoted, a policy that promotes low-priced products {may incentivize} the seller to set low prices  to increase sales.

The platform's ability to influence pricing decisions, however, is determined by the fraction of impatient consumers. {If a large fraction of consumers are impatient, the promotion decision directly impacts a substantial part of the overall demand, which may incentivize %and the platform can incentivize
the seller to set the platfom's target price.} %by rewarding him with significant demand from impatient consumers}.
On the other hand, if only a small fraction of consumers are impatient, promotion can generate little benefit for the seller and the extent to which the promotion policy can influence pricing is more limited. When this fraction is privately known by the platform, the seller's \emph{belief} about it impacts his pricing incentives. Thus, the platform can also influence the seller's pricing decisions by strategically sharing information about the fraction of impatient consumers.

A key challenge captured by this dynamic setting is that, as consumers arrive and make purchase decisions sequentially, the seller may progressively collect information about the fraction of impatient consumers and adjust his price accordingly. Importantly, as the platform's promotion policy impacts consumer demand, it also impacts the informativeness of sales observations. Therefore, the promotion policy impacts  not only the current price but also  future prices (by affecting the seller's beliefs). Thus, the platform's dynamic problem is inherently intertwined because it must \emph{jointly} optimize its information provision and promotion policies.

\textbf{The Dynamic Game.} We model the interaction between the platform and the seller as consisting of two stages. First, before consumers arrive, the platform publicly commits to $(i)$ a \textit{signaling mechanism}~$\sigma$, which may reveal information about the fraction of impatient consumers, and $(ii)$ a \textit{promotion policy}~$\boldsymbol{\alpha}$; both of these are described in detail below. The platform then \textit{privately} observes the true fraction of impatient consumers $\phi\in \{\phi_L,\phi_H\}$, where  $\phi = \phi_H$ with commonly known probability $\mu_0$ and where $0<\phi_L<\phi_H<1$. (Note that, in the tradition of the information design literature, we assume that the platform commits to a signaling mechanism before observing the true fraction of impatient consumers,~$\phi$.) Finally, the platform sends a signal $s$, which is drawn according to $\sigma$. See Figure~\ref{fig:preliminary_dynamics} for a summary of these dynamics.

\begin{figure}[htbp!]
	\centering
	\begin{scaletikzpicturetowidth}{.6\textwidth}
		\begin{tikzpicture}[scale=\tikzscale]
    \tikzstyle{fontbf} = [blue,text centered,font=\bfseries]
    \draw[thick,->] (0,0) -- (14.25,0);
    \draw[thick] (0,-0.25) -- (0,.25)node[anchor=north,yshift = -3mm, align = center, font=\footnotesize]{Period $0$} ;
    \draw[thick] (13.2,-0.25) -- (13.2,.25)node[anchor=north,yshift = -3mm, align = center, font=\footnotesize]{Period $1$  Begins} ;
    \draw[thick] (13.2/4-1,-0.1) -- (13.2/4-1,.1)node[anchor=south,yshift=1mm,align = center, font=\footnotesize]{Platform \\Commits to \\$\boldsymbol{\alpha},\sigma$ }; 
    % \draw[thick] (18/4,-0.1) -- (18/4,.1)node[anchor=south,yshift=1mm,align = center, font=\footnotesize]{Seller \\ Sets Price \\ $p_t$} ;
    \draw[thick] (13.2/2,-0.1) -- (13.2/2,.1)node[anchor=south,yshift=1mm,align = center, font=\footnotesize]{Platform \\Observes Fraction \\$\phi$} ;
    \draw[thick] (13.2*3/4+1,-0.1) -- (13.2*3/4+1,.1)node[anchor=south,yshift=1mm,align = center, font=\footnotesize]{Seller \\Observes Signal, \\ $s\sim \sigma(\phi)$ } ;
;    

    \end{tikzpicture}
	\end{scaletikzpicturetowidth}	
    \caption{Dynamics before horizon begins ($t=0$)} \vspace{-0.25cm}
    \label{fig:preliminary_dynamics}
    % \Description[Dynamics before horizon begins]{}
\end{figure}

In the second stage, $T$ different consumers arrive sequentially. In each period $t=1,...,T$, before consumer $t$ arrives, the seller sets a price $p_t\in \mathcal{P}$ and then the platform uses its promotion policy $\bA$ to decide to promote the seller, $a_t=1$, or not, $a_t=0$. Consumer $t$ then arrives and observes the products in her consideration set and their prices. With probability $\phi$, she is impatient and only considers the promoted product. With probability $1-\phi$, she is patient and considers all products, regardless of the platform's promotion decision. The consumer then makes a purchase decision according to an underlying discrete choice model from a broad family of models described under ``consumer demand" below. %; our formulation allows for a broad family of demand models.
Finally, the seller observes his own sales outcome, $y_t\in\left\{0,1\right\}$. See Figure~\ref{fig:within_period_dynamics} for a summary of these dynamics.

\begin{figure}[htbp!]
	\centering
	\begin{scaletikzpicturetowidth}{.85\textwidth}
		\begin{tikzpicture}[scale = \tikzscale]
    \draw[thick,->] (0,0) -- (14.25,0);
    \draw[thick] (0,-0.25) -- (0,.25)node[anchor=north,yshift = -5mm, align = center, font=\footnotesize]{Period $t$  Begins} ;
    \draw[thick] (13.2,-0.25) -- (13.2,.25)node[anchor=north,yshift = -5mm, align = center, font=\footnotesize]{Period $t+1$  Begins} ;
    \draw[thick] (4.5/4,-0.1) -- (4.5/4,.1)node[anchor=south,yshift=1mm,align = center, font=\footnotesize]{Seller \\ Updates Belief \\ $\mu_{t}$}; 
    \draw[thick] (18/4,-0.1) -- (18/4,.1)node[anchor=south,yshift=1mm,align = center, font=\footnotesize]{Seller \\ Sets Price \\ $p_t$} ;
    \draw[thick] (31.5/4,-0.1) -- (31.5/4,.1)node[anchor=south,yshift=1mm,align = center, font=\footnotesize]{Platform Makes \\ Promotion Decision \\ $a_t$} ;
    \draw[thick] (45/4,-0.1) -- (45/4,.1)node[anchor=south,yshift=1mm,align = center, font=\footnotesize]{Consumer Arrives \& \\ Purchases (or not) \\ $y_t$ } ;
;    
    \end{tikzpicture}
	\end{scaletikzpicturetowidth}	
    \caption{Dynamics in each period $t=1,...,T$}\vspace{-0.25cm}
    \label{fig:within_period_dynamics}
    % \Description[Dynamics at each period $t=1,...,T$]{}
\end{figure}

%We now move to formally describe the interaction.

\textbf{Consumer Demand}. In each period, a new consumer arrives with an independently drawn patience type, observes the products in her consideration set, and purchases according to a discrete choice model.
%Her consideration set is determined by an independently drawn patience type and the platform's promotion decision. With probability $\phi$, she is impatient in which case she only considers the promoted product. With probability $1-\phi$, she is patient in which case she considers all products, regardless of the platform's promotion decision.
{The probability of purchasing from the seller is given by a commonly known function $\rho$ that depends on the consumer's type $\psi$, the seller's price $p$, competitors' prices $\vec{q}$, and a platform promotion decision~$a\in\{0,1\}$:% as follows:
	
\begin{equation}\label{eq:ConsumerDemand:Rho}
    \rho(p,\vec{q},a, \psi) = \Prob(y = 1|p,\vec{q},a, \psi) =
\begin{cases}
    \bar{\rho}_c(p, \vec{q}), &\text{ if the consumer is patient} \\
    \bar{\rho}_0(p), &\text{ if the consumer is impatient} \text{ and } a = 1 \\
    0, &\text{ if the consumer is impatient} \text{ and } a =0.
\end{cases}
\end{equation}
%\ilan{Observe that}
The demand function $\rho$ captures the impact of the consumer's patience type: $\bar{\rho}_c$ denotes demand for the seller when {the consumer is} patient and thus considers purchasing from the seller {and his competitors};  $\bar{\rho}_0$, on the other hand, captures the demand when the consumer is impatient and the seller is promoted, and hence depends only on the seller's price.} Moreover, the probability of an impatient consumer purchasing from the seller equals $0$ unless the seller is promoted. We discuss a more general formulation that relaxes these assumptions in \S\ref{sec:model-discussion}. Note that if $\bar{\rho}_c(p,\vec{q}) \ne \bar{\rho}_0(p)$, then $\rho$ captures a setting where the probability that the consumer buys from the seller depends on whether the consumer considers other products. For the rest of the paper, we keep the competitor's prices fixed at some vector $\vec{q}$ and, slightly abusing notation, denote $\bar{\rho}_c(p):=\bar{\rho}_c(p,\vec{q})$. We relax this assumption by explicitly modeling price competition between sellers in~\S\ref{sec: TwoSellers}.

We assume a stationary arrival process where each consumer's patience type and purchase probability is independent of $t$. We make the following assumption on the demand function.
\begin{assumption}[Demand]\label{assump:Demand}
    $\bar{\rho}_c(p)$ and $\bar{\rho}_0(p)$ are decreasing and Lipschitz continuous in $p$; $p\bar{\rho}_c(p)$ and $p\bar{\rho}_0(p)$ are strictly concave in $p$; and $\bar{\rho}_0(p) \geq \bar{\rho}_c(p)$ for all $p \in \mathcal{P}$.
\end{assumption}\vspace{-0.2cm}
Assumption \ref{assump:Demand} is mild and satisfied by common demand models, including logit, mixed logit, and probit, among many others. The concavity of the seller's revenue function ensures that there is a unique revenue-maximizing price for each consumer type, and the ordering on purchase probability requires competitors' products to be substitutes for the seller's product. We illustrate a simple demand model that satisfies these conditions in Example~\ref{example:UniformDemand} (presented below).

\textbf{Payoffs.} Without loss of generality, we normalize the cost of the seller to be 0, and so the seller's payoff in period $t$ as a function of his price, $p \in \mathcal{P}$, and the consumer's purchase decision, $y \in \{0,1\}$, is \vspace{-0.2cm}
\begin{equation*}
    v(p,y) = py.
%    \begin{cases}
%        p, &\text{ if } y = 1\\
%        0, &\text{ otherwise.}
%    \end{cases}
\vspace{-0.2cm}
\end{equation*}
The platform's payoff in each period equals the expected consumer surplus, which is captured by a commonly known function $W$ of the seller's price $p$, the platform's promotion decision $a$, and the consumer's type $\psi$:% as follows:
%\vspace{-0.1cm}
\begin{equation}\label{eq:ConsumerWelfare:W}
    W(p,a,\psi) =
\begin{cases}
    \bar{W}_c(p),  &\text{ if the consumer is patient} \\
    \bar{W}_0(p),  &\text{ if the consumer is impatient and }  a = 1 \\
    \bar{W}_{\text{out}}, &\text{ if the consumer is impatient and } a =0.
\end{cases}%\vspace{-0.1cm}
\end{equation}

We make the following mild assumption on consumer surplus.
\begin{assumption}[Consumer Surplus]\label{Assump:ConsSurplus}
    $\bar{W}_c(p)$ and $\bar{W}_0(p)$ are decreasing and Lipschitz continuous.\vspace{-0.2cm}
    % in~$p$. %Moreover, $\bar{W}_c(p)\geq \bar{W}_0(p),$ for all $p$.
\end{assumption}

Our formulation does not specify any relationship between the demand function in \eqref{eq:ConsumerDemand:Rho} and the consumer welfare in \eqref{eq:ConsumerWelfare:W}, and our results do not depend on these two being related as long as they satisfy Assumptions~\ref{assump:Demand} and \ref{Assump:ConsSurplus}. However, in many practical settings %\ilanst{consumer demand and consumer welfare}
the demand and consumer surplus functions are generated according to an underlying consumer utility model.
%\iffalse
%For instance, suppose that there are $N$ sellers in the planform where we consider seller $1$ to be our focal seller (that chooses price $p$) and sellers $2,\dots,N$ to be fixed competitors with prices $q_2,\dots,q_N$. Moreover, suppose that each consumer has a valuation $v_i$ for seller $i$'s product and that the vector of valuations $\vec{v} = \left(v_1,\dots,v_N\right)$ is drawn from a joint distribution $F$, independently of whether the consumer is patient or impatient. In this setting, the demand functions for seller $1$ are determined by (assuming that consumers break ties in favor of seller 1 for simplicity):
%	\begin{equation*}
%		\bar{\rho}_0(p) = \mathbb{P}\left[v_1\geq p\right], \quad \bar{\rho}_c(p) = \mathbb{P}\left[v_1- p \geq v_j - q_j, \text{ for all $j=2,\dots,N$}\right].
%	\end{equation*}
%Moreover, if we define consumer surplus as the expected utility as a function or prices and valuations, we have:
%	\begin{equation*}
%		\bar{W}_0(p) = \int_{\vec{v}\in \mathbb{R}^N} \max \left \lbrace v_1 - p,0 \right \rbrace dF(\vec{v}), \quad
%			\bar{W}_c(p) = \int_{\vec{v}\in \mathbb{R}^N} \max \left \lbrace v_1 - p,v_2-q_2,\dots,v_N-q_N,0 \right \rbrace dF(\vec{v}).
%		\end{equation*}
%\fi
The following example illustrates such a relation, between the purchase probability in \eqref{eq:ConsumerDemand:Rho} and the consumer surplus in \eqref{eq:ConsumerWelfare:W}, when purchasing decisions correspond to uniformly distributed willingness to pay. We revisit this example throughout the paper for illustration.

\begin{example}[Uniform WtP]\label{example:UniformDemand}
    Suppose that there are two products on the platform, for which each customer~$t$ has willingness to pay that is independent and distributed uniformly over a unit square: $v^1_t \sim U[a-1,a]$ and $v^2_t \sim U[b-1,b],$ {where $a,b$ are parameters in $[0,1]$.} Suppose that each arriving customer maximizes her net utility (which is normalized to zero under the outside option) and that seller 2 sets a fixed price equal to~$0$ (or equivalently, that $v_t^2$ represents consumer $t$'s value relative to some fixed price).
    %\ilanst{Then, each customer's probability of purchase as a function of the first product's price $p\in [0,a]$, the platform's promotion decision, and the consumer type is:}
    Then, given the first product's price $p\in [0,a]$, the demand function in~\eqref{eq:ConsumerDemand:Rho} is characterized by:
    \vspace{-0.25cm}
    $$\bar{\rho}_0(p)  = \Prob(v_1 - p\geq 0)
    = a-p, \quad \bar{\rho}_c(p)  =%\Prob(v_1 - p\geq \max\{v_2,0\})=
    \begin{cases}
    (1-b)(a-p)+\frac{(a-p)^2}{2}, &\text{ if } p>a-b\\
    a-p - \frac{b^2}{2}, &\text{ if } p\leq a-b.
    \end{cases} $$
The consumer surplus function is then characterized by:
\begin{equation*}
	\bar{W}_0(p) = \int_{a-1}^{a} \max \{v_1-p,0\} \partial v_1 = \frac{(a-p)^2}{2}, \qquad \bar{W}_{\text{out}} = \int_{b-1}^{b} \max \{v_2,0\} \partial v_2 = \frac{b^2}{2},
\end{equation*}
\vspace{-0.5cm}
    \begin{equation*}
        \bar{W}_c(p) = \int_{b-1}^{b}\int_{a-1}^{a} \max \{v_1-p,v_2,0\} \partial v_1\partial v_2 =\begin{cases}
            \frac{1}{6}(3 b^2 + 3 (a-p)^2(1-b)  + (a-p)^3  ), & \text{ if } p>a-b\\
            \frac{1}{6}(3 (a-p)^2 + 3 b^2(1-a +p )  + b^3), &\text{ if } p\leq a-b.
        \end{cases}
    \end{equation*}
%    \begin{align*}
%	\bar{W}_c(p) = &\int_{b-1}^{b}\int_{a-1}^{a} \max \{v_1-p,v_2,0\} \partial v_1\partial v_2 =&\begin{cases}
%		\frac{1}{6}(3 b^2 + 3 (a-p)^2(1-b)  + (a-p)^3  )     , &\text{ if } p>a-b\\
%		\frac{1}{6}(3 (a-p)^2 + 3 b^2(1-a +p )  + b^3  ), &\text{ if } p\leq a-b.
%	\end{cases}
%\end{align*}
\end{example}

\textbf{Histories, Strategies and Beliefs.} Given a space $\mathcal{X}$, let $\Delta(\mathcal{X})$ be the space of probability measures on $\mathcal{X}$.
At the beginning of the horizon, before the observation of $\phi$, the platform commits to a joint promotion and information disclosure strategy $(\boldsymbol{\alpha}, \sigma)$.
Let the set of possible signals be denoted by $\mathcal{S}$; then, the platform's information disclosure strategy is a signaling mechanism, $\sigma:\{\phi_L,\phi_H\}\rightarrow \Delta(\mathcal{S})$, that maps the true fraction of impatient consumers to a distribution over signals. We denote the realized signal by $s \in \mathcal{S}$ and the space of signaling mechanisms by $\Sigma$. Let  $\bar{h}_t = \left\langle s, \left(p_{t'}, y_{t'}\right)_{t'=1}^{t-1} \right\rangle$ denote the signal and the sequence of seller's posted prices and sales realizations prior to the beginning of period $t$. Moreover, denote the set of these by~$\bar{H}_t = S\times  \left(\mathcal{P} \times \{0,1\}\right)^{t-1}$. The platform's promotion strategy, $\boldsymbol{\alpha} = \{\alpha_t\}_{t=1}^T$, is a vector of mappings,  where~$\alpha_t: \mathcal{P} \times \{\phi_L,\phi_H\} \times \bar{H}_t \rightarrow [0,1]$ specifies the probability that the seller is promoted in period $t$ as a function of the seller's current price, the value of $\phi$, and the previous prices and sales observations. We assume that the seller's pricing space $\mathcal{P}$ is a compact interval of the real line that contains zero.
We denote the realized promotion decision at time $t$ by~$a_t\in\left\{0,1\right\}$ and the set of dynamic promotion policies by~$\mathcal{A}$.

In addition to $\bar{h}_t$, the seller also observes the platform's announced strategy $(\bA,\sigma)$. Thus, denote the seller's information at the \emph{beginning} of period $t$ by $h_1 = \left\langle s,\bA,\sigma \right\rangle$ and by $h_t = \left\langle s,\bA,\sigma, \left(p_{t'}, y_{t'}\right)_{t'=1}^{t-1}\right\rangle$ for $t>1$. Moreover, we denote by $\{\mathcal{H}_t = \sigma(h_t), t=1,...,T\}$ the filtration associated with the process $\{h_t\}_{t=1}^T$, and the set of possible histories at the beginning of period $t$ by $H_t = S\times \mathcal{A}\times \Sigma \times \left(\mathcal{P} \times \{0,1\}\right)^{t-1}$. The seller's strategy is a vector of non-anticipating mappings $\boldsymbol{\pi} = \{\pi_t\}_{t=1}^T$, where each $\pi_t:H_t\rightarrow \Delta(\mathcal{P})$ maps the seller's information in period $t$ to a distribution from which the seller's period $t$ price is drawn. Denote the set of non-anticipating seller strategies by~$\Pi$.

In each period, based on the available history of information, the seller updates his belief about~$\phi$ according to Bayes' rule. We denote the seller's belief system by $\boldsymbol{\mu} = \{\mu_t\}_{t=1}^T$, where $\mu_t:{H}_t \rightarrow [0,1]$ is the probability that he assigns to $\{\phi=\phi_H\}$, the fraction of impatient consumers being high.

Given a platform strategy $\boldsymbol{\alpha},\sigma$ and a seller policy $\boldsymbol{\pi} \in \Pi$, denote the platform's expected payoff by
\begin{equation}
    W^{\boldsymbol{\alpha},\sigma,\boldsymbol{\pi}}_T(\mu_0) = \E\left(\sum_{t=1}^T  W(p_t,a_t,\psi_t) \middle|\boldsymbol{\alpha},\sigma,\boldsymbol{\pi}\right),
\end{equation}
where the expectation is with respect to ($\boldsymbol{p},\boldsymbol{a},\boldsymbol{ y},s,\phi$) and $\mu_0$ is the commonly known prior for $\phi$.  Moreover, denote the seller's expected continuation  payoff in the beginning of  period $t$ and given  history $h\in H_t$ by
\begin{equation}\label{eq:SellerPayoff}
    V_{t,T}^{\boldsymbol{\alpha},\sigma,\boldsymbol{\pi}} \left(h\right) = \E\left(\sum_{t'=t}^T v(p_t,y_t)\middle|h_t=h,\boldsymbol{\alpha},\sigma,\boldsymbol{\pi}\right).
\end{equation}

\subsection{Discussion of Model Assumptions}\label{sec:model-discussion}
\textbf{Platform Maximizes Consumer Surplus.} %For many platforms, long-term revenue is primarily driven by attracting consumers to the platform.
The platform objective of maximizing consumer surplus has been commonly considered in previous models of platform design (see, e.g., \cite{Dinerstein2018} and the  references therein). %Considering a platform that, on the other hand, seeks to drive more sellers to the platform would, perhaps, be interesting to model as well.
Moreover, we note that Assumption \ref{Assump:ConsSurplus} is quite general and satisfied in many instances by functions that, for example, %\ilanst{balance the surplus accrued to consumers and producers or}
seek to maximize the probability of a consumer purchase.
%In this case, the platform would seek to incentivize the seller to set high prices as in a collusive equilibrium.

\textbf{One Learning Seller.} {For the sake of tractability, we first focus on a setting where there is a single seller who is learning and all the other sellers set the same price in each period. %As competition among sellers is an important feature of online marketplaces,
In \S\ref{sec: TwoSellers} we study an extension where two sellers compete in prices in every period and the platform selects which of them to promote. In that setting, we find that similar insights to the ones we derive for a single seller hold. %; we discuss the main takeaways of this extension in \S\ref{sec: TwoSellers} and present the full analysis on Appendix~\ref{app: TwoSellers}. %Moreover, it can also be shown that our results extend to the setting where competitor prices are not fixed throughout the horizon, but instead follow a cyclic pattern or converge to a limit.\footnote{We omit these results from the paper but follow from similar proofs as the ones presented on Appendix~\ref{sec:proofs}.}
}

%While it is an interesting extension to consider multiple learning sellers, it
%\ilanst{An extension to multiple sellers with incomplete information that price dynamically would require either a complex belief system (as prices become signals of information to other sellers) or fairly strong assumptions on the ability of sellers to observe their competitors' actions, and is left as an avenue of future research.}% that could build on the current study. % that each seller observes each consumer's purchase decision. %\gm{Nevertheless, the insights derived from the model are valuable in a more general environment... .} %We note that there is also evidence that in settings such as Airbnb, there are a mix of professional and unprofessional sellers where the unprofessional ones update prices much less frequently \citep{li2016pros}.

\textbf{Patience Types and Search Costs.} To simplify exposition, we characterize consumers by a patience type, determining whether they consider only the promoted product or all available products. However, our main results and findings extend to a more general setting where (i) impatient consumers can buy from non-promoted sellers with nonzero probability (but are relatively less likely to do so than patient consumers), and (ii) the demand of patient consumers is also influenced by the platform's promotion decision, although to a lesser extent than for  impatient consumers. We provide this extension in Appendix~\ref{app:GeneralizedDemand}. Importantly, we remind the reader that we interpret patience in terms of search behavior, that is, the degree to which consumers are willing to spend time browsing for products in the platform.
%Our formulation abstracts away from search costs and implies that consumers' search intensity is independent of sellers' prices.
{While we abstract away from explicitly modeling search costs, our formulation implies that search behavior is  independent of posted prices.}

%\ilan{\textbf{Consumer preferences captured by general demand model.} XXX}

\textbf{Platform Leads, Seller Follows.} In line with the information design literature (e.g., \citealt{Kamenica2011}) our model assumes that the platform commits to a dynamic promotion policy and a signaling mechanism upfront. Therefore, in each period, the seller knows the probability of being promoted as a function of the posted price, given the true value of $\phi$.

\textbf{Seller Information.} Our model and analysis are motivated by settings where many consumers arrive to the platform. In such settings, it would be difficult for a seller to track how prominently his product is featured to each consumer and the seller would have limited ability to know how many customers \emph{considered} his product without the platform sharing that information. We do, however, assume that the seller knows how many potential consumers have arrived to the product page (which relies on market characteristics) and how many of them purchased his product. 

% R2: REORGANIZED SECTION 3 AS INTRO TO MYOPIC PRICING SETTING AND THEN SECTION 4 AS SOLUTION TO THE PROBLEM (MERGED MAIN RESULTS AND POLICY DESIGN IN ONE SECTION FOR BETTER FLOW)
\section{Preliminary Analysis} \label{sec: preliminary_analysis}\vspace{-0.1cm}

We begin our analysis by introducing and analyzing the setting in which the seller follows a myopic pricing policy that is designed to maximize %\ilanst{the seller's}
his expected revenue in the current period. We show that, under myopic pricing, the platform may focus on a tractable class of policies without loss of optimality, which considerably simplifies its policy design problem. While we primarily focus on this setting henceforth, we %\ilanst{return to}
consider general dynamic pricing policies in \S\ref{sec:EquilibriumAnalysis}, where we show that %\ilanst{the myopic pricing policy}
myopic pricing is in fact sustained in an equilibrium.

We then analyze a class of platform policies that are natural in the presence of myopic pricing: %the class of
myopic promotions, designed to maximize consumer surplus in the current period, paired with an initial information signal. We observe that for such policies, optimizing the information signal may lead to a short-term gain in consumer surplus, but that the average gain per period diminishes as the time-average %\ilanst{consumer}
surplus they generate converges asymptotically to that achievable with a truthful signal. This motivates the approach advanced in~\S\ref{sec:LongRunAverageOptimalConsumerSurplus}, showing that a higher long-run average surplus can be generated by platform policies that might not maximize consumer surplus in each period, but control the information collected by the seller over time.

\subsection{Myopic Pricing}\label{sec: myopic_pricing}\vspace{-0.1cm}

We formally define the seller's \emph{myopic Bayesian pricing policy}, denoted by $\boldsymbol{\pi^*}$,  as follows.

\begin{definition}[Myopic Bayesian Pricing Policy]\label{def:MyopicPricingPolicy}
	In every period $t$ and at every history $h \in H_t$, a myopic Bayesian pricing policy $\boldsymbol{\pi^*}=\{\pi_t^*\}_{t=1}^T$ sets a price $p_t \in \mathcal{P}$ that maximizes the seller's expected revenue in the current period given history $h$ and $\alpha_t$. That is, $\pi_t^*$ satisfies
	\begin{equation}\label{eq:myopic}
		\Prob\left(p_{t} \in \arg \max_{p \in \mathcal{P}} ~~~\E_{a_t,y_{t},\psi}\left( v(p,y_{t}) |h_t ={h}\right)\middle|\pi_t^*\right)=1.
	\end{equation}
	If multiple prices satisfy \eqref{eq:myopic}, $\boldsymbol{\pi^*}$ selects one that  maximizes the current  consumer surplus.\footnote{This is akin to considering \emph{sender-preferred equilibria} which is standard in models of Bayesian persuasion; see related discussion in \cite{Kamenica2011} as well as in \cite{drakopoulos2021persuading}.}
\end{definition}
\vspace{-0.25cm}

The myopic pricing policy maximizes the seller's expected revenue in the current period given the platform's promotion policy and the information that the seller has about $\phi$. In particular, the promotion policy in future periods does not affect the myopic pricing policy. We note that considering myopic pricing reduces the complexity of the seller's decision, yet reflects a fair level of seller sophistication as it still requires recurrent updates of beliefs and prices.\footnote{Moreover, in many settings with uncertainty about demand, myopic pricing policies were shown to achieve good performance in terms of maximizing the seller's long-term payoffs; see, e.g., related discussion in \cite{Harrison2012}.}

In general, in each period the posted price may affect the seller's current   revenue as well as the platform's future promotion policy, %\footnote{Recall that we allow the platform's policy to be a function of the full history of prices.}
and therefore the seller's pricing policy could potentially depend on the history in complex ways.
Nevertheless, from an analysis perspective, there is an advantage in focusing on policies that depend on the history in a simple way. For that purpose, we next define the set of promotion policies that depend on the history only through the seller's current belief.

\begin{definition}[Promotion Policies Based on Seller's Belief]\label{def:PromotionPolsSellerBelief}
	The set of promotion policies $\mathcal{A}^M \subset \mathcal{A}$ are those that are  constant across histories that generate the same belief. That is, $\boldsymbol{\alpha'} \in \mathcal{A}^M$, if and only if, for all $t=1,...,T$, $\sigma \in \Sigma$, and for any $\bar{h}',\bar{h}'' \in \bar{H}_t$ such that $\mu_t(\langle \boldsymbol{\alpha'},\sigma,\bar{h}'\rangle ) = \mu_t(\langle \boldsymbol{\alpha'},\sigma,\bar{h}''\rangle )$, one has~$\alpha_t'(p,\phi,\bar{h}') = \alpha_t'(p,\phi,\bar{h}'')$ for all $p \in \mathcal{P}$ and $\phi \in \{\phi_L,\phi_H\}$.
\end{definition}
\vspace{-0.25cm}

In the following lemma we establish that when the seller prices myopically, it is without loss of optimality for the platform to consider promotion policies in $\mathcal{A}^M$.

% The idea is that the platform need not use policies which consider the entire history because a myopic seller

% From the definition, it is straightforward to establish that the price set by a myopic seller only depends on the seller's belief and the current promotion policy. In principal, the pricing decision could be a function of the entire history which includes past prices, sales outcomes, and the platform's entire promotion policy (including the promotion policy in future periods); however, a myopic seller's price only depends on the seller's belief and the current promotion policy. Using this idea, which we establish formally in the proof of the following lemma, we can simplify the space of promotion policies which we consider in this section.

\begin{lemma}[Dependence on Histories through Beliefs]\label{lemma:PromotionAsFunctionOfBelief}
	Suppose that the seller follows the myopic Bayesian pricing policy $\boldsymbol{\pi^*}$. Then, for any $\bA \in \mathcal{A}$, $\sigma \in \Sigma$, there exists a promotion policy $\boldsymbol{\alpha'} \in \mathcal{A}^M$ such that \vspace{-0.2cm}
	$$W_T^{\bA,\sigma,\boldsymbol{\pi^*}} (\mu_0) \leq W_T^{\boldsymbol{\alpha'},\sigma,\boldsymbol{\pi^*}} (\mu_0).\vspace{-0.2cm}$$
	% Furthermore, for any $\bar{h}',\bar{h}'' \in \bar{H}_t$ such that $\mu_t(\langle \sigma,\boldsymbol{\alpha'},\bar{h}'\rangle ) = \mu_t(\langle \sigma,\boldsymbol{\alpha'},\bar{h}''\rangle )$, one has $\alpha_t'(p,\phi,\bar{h}') = \alpha_t'(p,\phi,\bar{h}'')$ for all $p \in \mathcal{P}$ and $\phi \in \{\phi_L,\phi_H\}$.
	% Moreover, it is without loss of optimality to
\end{lemma}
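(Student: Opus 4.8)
The plan is to show that, against the myopic policy $\boldsymbol{\pi^*}$, the seller's current belief $\mu_t$ is a sufficient statistic for the platform's continuation problem, so that this problem is a finite‑horizon Markov decision process (MDP) on the belief space $[0,1]$ whose optimal policies may be taken to depend on the history only through $\mu_t$ — that is, to lie in $\mathcal{A}^M$ — and to weakly dominate $\bA$. First I would isolate the relevant one‑step reductions. Fix $\sigma$ and a history $\bar h_t$, let $\mu_t$ be the induced belief and $g:=\alpha_t(\cdot,\cdot,\bar h_t)\colon P\times\{\phi_L,\phi_H\}\to[0,1]$ the period‑$t$ promotion function. By \eqref{eq:ConsumerDemand:Rho}, a price $p$ gives the seller expected current revenue $p\,\E_{\phi\sim\mu_t}\!\big[\phi\,g(p,\phi)\bar\rho_0(p)+(1-\phi)\bar\rho_c(p)\big]$, so the myopic price $p_t$, with the tie‑break of Definition~\ref{def:MyopicPricingPolicy}, is a deterministic function of $(\mu_t,g)$. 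Since the seller knows both $\mu_t$ and $g$, observing $p_t$ is uninformative, so the Bayesian update from $\mu_t$ after seeing $y_t$ uses only the likelihoods $\ell_\phi:=\phi\,g(p_t,\phi)\bar\rho_0(p_t)+(1-\phi)\bar\rho_c(p_t)$; and by \eqref{eq:ConsumerWelfare:W} the expected current consumer surplus $\E\!\big[W(p_t,a_t)\mid\bar h_t\big]$ depends on $g$ only through $p_t$ and the pair $\big(g(p_t,\phi_H),g(p_t,\phi_L)\big)$. Hence everything relevant at and after period $t$ is a function of $\mu_t$ and of the triple $\big(p_t,g(p_t,\phi_H),g(p_t,\phi_L)\big)$.

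I would then treat the platform's problem facing $\boldsymbol{\pi^*}$ as an MDP with state $\mu\in[0,1]$, action a triple $(p,x^H,x^L)$ ranging over those consistent with $p$ being the $\boldsymbol{\pi^*}$‑selected myopic price for some promotion function realizing promotion probabilities $x^H,x^L$ at $p$ (equivalently, for the function equal to $x^\phi$ at $p$ and $0$ elsewhere, which minimizes the seller's revenue at every other price and so maximizes the scope for $p$ to be optimal), reward $\bar W(\mu;p,x^H,x^L):=\E_{\phi\sim\mu}\big[(1-\phi)\bar W_c(p)+\phi\big(x^\phi\bar W_0(p)+(1-x^\phi)\bar W_c\big)\big]$, and transitions to the two Bayesian posteriors $\mu_y$ of the outcomes $y\in\{0,1\}$ under the likelihoods above. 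Writing $V_t(\cdot)$ for its value functions ($V_{T+1}\equiv 0$, Bellman recursion otherwise) and $J_t^{\bA}(\bar h_t)$ for the platform's continuation payoff under $(\bA,\sigma,\boldsymbol{\pi^*})$ from $\bar h_t$, a backward induction on $t$ gives $J_t^{\bA}(\bar h_t)\le V_t(\mu_t)$: the period‑$t$ term equals $\bar W(\mu_t;\cdot)$, the continuation term equals $\E_y J_{t+1}^{\bA}(\bar h_t,p_t,y)$, which is $\le\E_y V_{t+1}(\mu_y)$ by the inductive hypothesis, and their sum is $\le V_t(\mu_t)$ by definition.

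Finally, take a measurable maximizing selector $\big(p^\star_t(\mu),x^H_t(\mu),x^L_t(\mu)\big)$ for the Bellman operator and define $\boldsymbol{\alpha'}$ by $\alpha_t'(p,\phi,\bar h_t):=x^\phi_t(\mu_t)\,\mathbbm{1}\{p=p^\star_t(\mu_t)\}$, with $\mu_t$ the belief induced by $(\sigma,\boldsymbol{\alpha'},\bar h_t)$ — a legitimate forward‑recursive definition, since $\alpha_t'$ references the strategy only through periods $1,\dots,t-1$. By construction $\boldsymbol{\alpha'}\in\mathcal{A}^M$, the seller's myopic price under $\boldsymbol{\alpha'}$ at any $\bar h_t$ is $p^\star_t(\mu_t)$, and the induced platform value telescopes to $\E_s V_1(\mu_1(s))$. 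Because $\mu_1$ is the Bayesian posterior from $(\sigma,s)$ alone and does not depend on the promotion policy, $W_T^{\bA,\sigma,\boldsymbol{\pi^*}}(\mu_0)=\E_s J_1^{\bA}(\langle s\rangle)\le\E_s V_1(\mu_1(s))=W_T^{\boldsymbol{\alpha'},\sigma,\boldsymbol{\pi^*}}(\mu_0)$, which is the claim.

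The step I expect to be the main obstacle is the regularity justifying the maximizing selector: one must check that the feasible‑action correspondence $\mu\mapsto\{(p,x^H,x^L)\}$ is compact‑valued and suitably behaved and that the reward and posterior maps are continuous. This is exactly where Assumptions~\ref{assump:Demand} and \ref{Assump:ConsSurplus} enter — Lipschitz continuity of $\bar\rho_0,\bar\rho_c,\bar W_0,\bar W_c$, and strict concavity of $p\bar\rho_0(p)$ and $p\bar\rho_c(p)$, which makes the myopic price well‑defined and continuous in the relevant data. If one prefers to bypass measurable selection altogether, one can first invoke the standard reduction of the signaling stage to a finite signal space (the state $\phi$ is binary), so that only finitely many beliefs are ever reached and every maximization above is over a finite set.
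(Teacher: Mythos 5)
Your argument is correct in substance but follows a genuinely different route from the paper's. The paper proves the lemma by directly surgery on the given policy $\bA$: working backwards from $t=T$, it groups histories into equivalence classes by the belief they induce, picks within each class a history whose continuation consumer surplus is maximal, and redefines $\alpha_t$ on the whole class to copy the promotion rule used at that best history; since a myopic seller's price (and hence the current surplus and the posterior transition) depends on the history only through the belief and the current promotion rule, this merging is payoff-feasible and weakly improving, and iterating it back to $t=1$ yields $\boldsymbol{\alpha'}\in\mathcal{A}^M$. You instead formalize the same sufficiency-of-beliefs observation as a finite-horizon belief-MDP, bound the platform's continuation payoff under any $\bA$ by the MDP value functions via backward induction, and then exhibit a belief-based policy (an optimal Bellman selector) attaining $\E_s V_1(\mu_1(s))$. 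Both proofs hinge on the identical key facts — the myopic price is a function of $(\mu_t, \alpha_t)$, sales likelihoods depend on $\alpha_t$ only through the promotion probabilities at the chosen price, and rewriting $\{\alpha_{t'}\}_{t'\ge t}$ does not disturb earlier periods. What your route buys is a cleaner structural statement (belief-based policies are optimal, not merely weakly improving on any given $\bA$) that also prefigures the reduction used later for $W^C(\mu)$; what it costs is the measurable-selection/attainment issue you flag, which the paper's construction faces in a milder form (it too silently assumes a maximizer over each belief-equivalence class of histories exists).

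Two small caveats. First, your fallback that a binary-state, finite-signal reduction makes "every maximization above over a finite set" is not right: finitely many signals finitize the reachable beliefs under the constructed policy, but the Bellman maximization is still over the continuum of actions $(p,x^H,x^L)$, so you genuinely need the compactness/continuity route (or a sup-plus-$\varepsilon$-selector argument, which then requires an extra limiting step to recover the weak inequality exactly). Second, your parenthetical equivalence between "$p$ is the $\boldsymbol{\pi^*}$-selected price for some promotion function with values $x^\phi$ at $p$" and "for the single-price function equal to $x^\phi$ at $p$ and $0$ elsewhere" can fail through the consumer-surplus tie-break: zeroing out promotion at other prices preserves revenue-optimality of $p$ but can change which revenue-maximizer wins the tie-break, so the single-price implementation of the selector's action may induce a different price. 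This is the same knife-edge tie issue the paper dismisses in a footnote, and it can be handled the same way, but as written your action set and your implementation step are not literally the same object.
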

\vspace{-0.25cm}

Formal proofs of Lemma~\ref{lemma:PromotionAsFunctionOfBelief} and subsequent results can be found in Appendix~\ref{sec:proofs}. The key idea of the proof is to observe that, in any given period, conditional on the belief $\mu_t$ and the promotion policy in that period~$\alpha_t$ (as a function of $\phi$ and $p$), the seller's expected revenue in that period is independent of the history. Therefore, at histories with the same belief and the present promotion policy, the set of optimal prices for a myopic seller is identical. As consumer surplus is a function of the posted price, the problem of designing an optimal promotion policy can be framed as a dynamic program in which the policy influences the price and the belief in each period and, importantly, %such that
%\ilanst{. By the previous observation, this dynamic program can be formulated in a way that}
the seller's belief is the only payoff-relevant state information. %\ilanst{to track. is the present seller's belief and,}
As a result, we can construct an optimal policy that  depends only on the history through the seller's belief. %We construct such a policy recursively.

Moreover, the platform's policy design problem is further simplified when the seller sets prices myopically. Specifically, as we show below, it suffices to consider policies that promote the seller with positive probability only if his price matches a set target; such policies are practical to implement as the platform only needs to communicate a single price and the probability of promotion that corresponds to it.\footnote{In many cases these policies are equivalent to threshold policies where the platform communicates the maximum price that is promoted with positive probability and the corresponding probability.}

\begin{definition}[Single-Price Promotion Policies] \label{def: SinglePricePromoPolicies}
	Single-price promotion policies are defined as ones that, given any history $h \in \bar{H}_t$, promote at most one price with positive probability in each period. We denote the set of single-price promotion policies by $\mathcal{A}^P\subset\mathcal{A}$, formally defined as follows:\vspace{-0.25cm}
	\begin{equation*}
\mathcal{A}^{P}:=\left\{\boldsymbol{\alpha}\in\mathcal{A}:~ \forall t=1,...,T,~h\in \bar{H}_t, ~\exists \bar{p}_t(h)\in \mathcal{P}\text{~ s.t. } \alpha_t(p,\phi,h)=0, \forall p\ne \bar{p}_t(h) \right\}.
	\end{equation*}
\end{definition}
\vspace{-0.3cm}

In addition, let $\Sigma^S$ denote the set of \emph{simple signaling mechanisms} that are based on the reduced set of signals~$\mathcal{S}=\{\phi_L,\phi_H\}$. The next proposition establishes that considering single-price promotion policies with simple signaling mechanisms is without loss of optimality.

\begin{proposition}[Payoff Equivalence of Single-Price Promotion Policies with Reduced Signal Set]\label{lemma:RevenueEquivalence}
	Suppose that the seller adopts the myopic Bayesian pricing policy $\boldsymbol{\pi^*}$. Then, for any~$T\geq 1$, $\boldsymbol{\alpha} \in \mathcal{A}$, $\sigma\in\Sigma$, there exists a %single-price
	promotion policy %that depends on the history only through the seller's beliefs,
	$\boldsymbol{\alpha'} \in \mathcal{A}^{P}\cap \mathcal{A}^M$, and a simple signaling mechanism~$\sigma'\in\Sigma^S$, such that for all $\mu_0\in[0,1]$, \vspace{-0.25cm}
	\begin{equation*}
		W^{\boldsymbol{\alpha},\sigma,\boldsymbol{\pi^*}}_T(\mu_0) \leq W^{\boldsymbol{\alpha'},\sigma',\boldsymbol{\pi^*}}_T(\mu_0).
	\end{equation*}
	% Moreover, there exists a signaling mechanism $\sigma'\in\Sigma^S$ such that:
	% $$W^{\bA,\sigma,\boldsymbol{\pi^*}}_T(\mu_0) \leq W^{\bA,\sigma',\boldsymbol{\pi^*}}_T(\mu_0).$$
\end{proposition}
\vspace{-0.5cm}

When the seller sets prices myopically, the above two results simplify the platform's policy design problem, implying that it suffices to track the %\ilanst{evolution of the}
seller's belief over time instead of the full history of actions, and the promotion policy can take a simple and tractable form: selecting a single target price and its associated promotion probabilities %\ilanst{over time}
 at each history (where
the promotion probability is zero for all other prices). % (Proposition~\ref{lemma:RevenueEquivalence}).

%Proposition~\ref{lemma:RevenueEquivalence} shows that the relationship between the seller's price and the promotion probabilities can be designed in a simple form: the platform sets a target price on every period and the promotion probabilities attached to it.
%allow the platform to focus on a tractable class of policies if the seller makes myopic pricing decisions. In particular, Lemma~\ref{lemma:PromotionAsFunctionOfBelief} shows that it suffices to track the evolution of the seller's belief over time to design an optimal policy, instead of tracking the full history of actions.

%In addition,  Proposition~\ref{lemma:RevenueEquivalence} reduces the design of the promotion policy to setting a target price and the probabilities of promotion attached to it, instead of a more involved mapping in which the promotion probability is more variable as a function of the price.

\vspace{-0.05cm}
\subsection{Myopic Promotions}\label{subsec:InsufficiencyOfTruthfulDisclosure}\vspace{-0.1cm}

We now study a class of platform policies that are natural to consider when the seller prices myopically: myopic promotions designed to maximize instantaneous consumer surplus, together with an initial signal that may reveal information about~$\phi$. Our analysis reveals that myopic promotion policies typically allow the seller to learn the true fraction of impatient consumers~$\phi$. As a result, with any initial information signal, the long-run average consumer surplus generated by myopic promotions converges to the one achievable by truthfully revealing $\phi$ to the seller. Therefore, while optimizing the initial signal may increase consumer welfare in the short run, the long-run average impact of optimizing this signal diminishes asymptotically.

We begin by introducing myopic promotion policies and discussing the incentives they create. By Lemma~\ref{lemma:PromotionAsFunctionOfBelief}, it is without loss to assume that myopic pricing decisions  depend only on the history through the promotion policy and the current belief. Therefore, in any period and at any history that corresponds to the same belief $\mu\in[0,1]$, a myopic promotion policy must generate the same expected consumer surplus, which we denote by $W^M(\mu)$. Thus, to characterize a myopic promotion policy,\footnote{There is not a unique myopic promotion policy because the probability of promotion at prices not selected by the seller does not affect the outcome or expected payoffs.} it suffices to set $T=1$ and solve the following optimization problem for each belief $\mu_1 \in [0,1]$:\vspace{-0.05cm}
\begin{equation} \label{eq:myopicPromotion}
	\begin{split}
		W^M(\mu_1)=
		\max_{\substack{ p \in \mathcal{P},\\ \alpha:~P\times \{\phi_L,\phi_H\}\times[0,1]\rightarrow [0,1]}}& ~~ \E_{\phi}\bigg(\phi(\bar{W}_{\text{out}} + \alpha(p,\phi,\mu_1) (\bar{W}_0(p) - \bar{W}_{\text{out}})) + (1-\phi)\bar{W}_c(p)\bigg|\mu_1\bigg)\\
		\text{s.t.}&~~ p\in\arg\max_{p'\in \mathcal{P}}~ \E_{\phi} \left(\phi \alpha(p',\phi,\mu_1) p'\bar{\rho}_0(p') + (1-\phi)p'\bar\rho_c(p') \middle|\mu_1 \right).
		% \E_{\phi} \left(\phi \alpha(p,\phi,\mu) p\bar{\rho}_0(p) + (1-\phi)p\bar\rho_c(p) \middle|\mu \right)\geq
	\end{split} %\vspace{-0.25cm}
\end{equation}
{The constraint in (\ref{eq:myopicPromotion}) ensures that $p$ is myopically optimal for the seller, given the promotion policy, and letting $p$ be a variable ensures that $p$ maximizes consumer welfare among all myopically optimal prices (in line with the  myopic Bayesian pricing %and the notion of sender-preferred equilibrium
defined in Definition~\ref{def:MyopicPricingPolicy})}.

As consumer surplus is decreasing in the price set by the seller (Assumption~\ref{Assump:ConsSurplus}), the platform incentivizes a low price by solving \eqref{eq:myopicPromotion}. However, %the platform is constrained in its ability to do so because
the seller can always choose to ignore the platform's promotion policy and %focus on selling exclusively to patient customers. That is, the seller may
set the revenue-maximizing price for patient consumers, denoted by $p^*$: %Let $p^*$ denote the seller's revenue-maximizing price for patient consumers, that is,
\vspace{-0.1cm}
\begin{equation}\label{eq:pStar:OutsideOpt}
	p^* := \arg \max_{p\in \mathcal{P}} ~~ p\bar\rho_c(p). \vspace{-0.1cm}
\end{equation}
Note that $p^*$ is unique by Assumption~\ref{assump:Demand}, which requires $p\bar\rho_c(p)$ to be strictly concave. Denote the expected fraction of impatient customers as a function of the posterior belief $\mu_1$ by $\bar{\phi}(\mu_1):= \phi_L + (\phi_H-\phi_L)\mu_1$.
Given belief $\mu_1$,
%the seller believes that a patient consumer arrives with probability
the probability of a consumer being patient is $1-\bar{\phi}(\mu_1),$ and thus the seller's maximum expected payoff from selling %exclusively
only to patient consumers is $(1-\bar{\phi}(\mu_1))p^*\bar{\rho}_c(p^*)$.
Thus, to incentivize the seller to set a price lower than $p^*$, the platform must promote the seller with sufficiently high probability so that the seller's loss in revenue from patient consumers is, at least, made up for by revenue from impatient consumers.
To illustrate this further, we consider the setting of Example~\ref{example:UniformDemand} (put forth in~\S\ref{sec:Model}) and solve for an optimal myopic promotion policy below (see Appendix~\ref{App:AnalysisOfDemandModel} for a detailed analysis).

\begin{example}[Uniform WtP: Myopic Promotion Policy]\label{example:MyopicPromotion}
	Consider the demand structure in  Example~\ref{example:UniformDemand}, and suppose that $a> 2b\left(1-\frac{b}{4}\right)$. Then, $p^*= \frac{1}{4}(2a-b^2)$, and a myopic promotion policy has $\alpha_t = \alpha$, where:\vspace{-0.1cm}
	$$\alpha(p,\phi,\mu_t) = \begin{cases}
		1, &\text{ if } p \leq p(\mu_t),\\
		0, &\text{ otherwise, }
	\end{cases} \qquad \textup{and} \qquad  p(\mu_t) =\frac{1}{4}\left[ 2a-b^2(1-\bar{\phi}(\mu_t))-\sqrt{\bar{\phi}(\mu_t)}\sqrt{4a^2 -b^4(1-\bar{\phi}(\mu_t)) } \right].$$
\end{example}
\vspace{-0.25cm}

With this policy, the platform sets a target price $p(\mu_t)$ and, at time $t$, promotes the seller with probability~1 if his posted price does not exceed this target. As $p(\mu_t)<p^*$, the value of being promoted incentivizes the seller to price below $p^*$, and this incentive is stronger when the seller expects the fraction of impatient consumers %(and thus the value of being promoted)
to be higher, as reflected by the target price~$p(\mu_t)$ being a decreasing function of the belief $\mu_t$.

However, this policy results in sales observations that allow the seller to learn the true value of $\phi$ asymptotically. Thus, the prices posted by the seller eventually approach $p(1)$ if $\phi = \phi_H$, or $p(0)$ if $\phi = \phi_L$, which makes the long-run average consumer surplus converge to that achievable under truthful revelation.

\begin{proposition}[Payoff Equivalence of Myopic Promotions] \label{prop: LearningUnderMyopicPromo}
	Suppose that for any price $p\in(\inf(\mathcal{P}),p^*]$ one has $\bar{\rho}_0(p)>\bar{\rho}_c(p)$, and that $\bar{W}_0(p^*)~\geq~\bar{W}_{\text{out}}$. Suppose that the platform adopts a myopic promotion policy~$\hat{\bA}$. Then, for any signaling mechanism $\sigma \in \Sigma$, the resulting long-run average expected consumer surplus converges to the maximum surplus under truthful revelation, i.e., for all $\mu \in [0,1]$,\vspace{-0.1cm}
	\begin{equation*}
		\lim_{T\rightarrow \infty} ~~\frac{1}{T}W^{\hat{\bA},\sigma,\boldsymbol{\pi^*}}_T(\mu) =  W^{truth}(\mu): = \mu W^M(1)+(1-\mu)W^M(0).
	\end{equation*}
	%	$$\lim_{T\rightarrow \infty} ~~\frac{1}{T}W^{\hat{\bA},\sigma,\boldsymbol{\pi^*}}_T(\mu) = W^{truth}(\mu): = \ilan{\mu W^M(1)+(1-\mu)W^M(0)}. %\mu\bar{W}(\phi_H,1)+(1-\mu)\bar{W}(\phi_L,0).
	%	$$
\end{proposition}
\vspace{-0.25cm}
%In particular, optimizing the initial information signal adds no value in the long-run, as we formalize in Proposition~\ref{prop: LearningUnderMyopicPromo}.

Note that Proposition~\ref{prop: LearningUnderMyopicPromo} holds for any signaling mechanism. It implies that under natural conditions, if promotion decisions are myopic, efforts to optimize the initial information disclosed %to the seller
about the fraction~$\phi$ cannot increase the long-run average surplus, as the seller will learn the value of $\phi$ over time from sales~observations.

Nonetheless, even if the platform follows a myopic promotion policy, it may benefit from concealing information through the signaling mechanism if the time horizon is short. To illustrate this for $T=1$, Figure~\ref{Fig:Comp:OnePeriod} compares the consumer surplus generated by myopic promotions, along with
%\ilanst{\textbf{Signaling Mechanisms}. Along with the myopic promotion policy  \ilan{that results from solving~\eqref{eq:myopicPromotion}}, we consider}
two natural signaling mechanisms: $(i)$ \emph{truthful}, that is, $\sigma^T(\phi) = \phi$, for~$\phi \in \{\phi_L,\phi_H\}$; and $(ii)$ \emph{uninformative}, such as setting $\sigma^U(\phi) = \phi_L$ for all $\phi \in \{\phi_L,\phi_H\}$. In general, the expected consumer surplus generated by a truthful signal may be larger or smaller than the one generated by an uninformative signal (depending on the form of $W$ and $\rho$); Figure~\ref{Fig:Comp:OnePeriod} depicts an instance in  the setting described in  Examples~\ref{example:UniformDemand} and~\ref{example:MyopicPromotion}, where concealing information is valuable, and
 %\ilanst{, in the setting described in Example~\ref{example:UniformDemand}.} In that example, with $T=1$,
 revealing no information can generate short-term consumer surplus that is 5\% higher than that  %\ilanst{surplus}
 generated by a truthful signal.
 \begin{figure}[!ht]
 	\centering
 	\setlength\figureheight{4.5cm}
 	\setlength\figurewidth{.5\textwidth}
 	\begin{tikzpicture}[trim axis left, trim axis right]
  \tikzstyle{every node}=[font=\small]
  \begin{axis}[
  clip=false,
  xlabel={$\mu_0$},
  xlabel style={align=center,at={(axis description cs:.5,-.17)}},
  ylabel={Average\\Consumer\\Surplus},
  ylabel style={rotate=-90,align = center},
  xmin=-0.05, xmax=1.05,
  width=\figurewidth,
  height=\figureheight,
  tick pos=left,
  x grid style={lightgray!92.026143790849673!black},
  y grid style={lightgray!92.026143790849673!black},
  yticklabel style={ /pgf/number format/fixed, /pgf/number format/precision=3},
  scaled y ticks=false
  ]
  \node[align=center] at (axis cs:0.5, .061){\scriptsize$(a = .4,~b=.2,~\phi_L = .02,~\phi_H = .3)$};

  % \addplot[dotted,color= BrickRed,mark=none] table [x=belief, y=Values10, col sep=comma] {Plots/uniformDemandExampleConvergence.csv};
  % \addplot[dotted,color= BrickRed,mark=none] table [x=belief, y=Values20, col sep=comma] {Plots/uniformDemandExampleConvergence.csv};
  % \addplot[dotted,color= BrickRed,mark=none] table [x=belief, y=Values30, col sep=comma] {Plots/uniformDemandExampleConvergence.csv};
  % \addplot[dotted,color= BrickRed,mark=none] table [x=belief, y=Values40, col sep=comma] {Plots/uniformDemandExampleConvergence.csv};
  % \addplot[dotted,color= BrickRed,mark=none] table [x=belief, y=Values500, col sep=comma] {Plots/uniformDemandExampleConvergence.csv};  
  % \addplot[dotted,color= BrickRed,mark=none] table [x=belief, y=Values5000, col sep=comma] {Plots/uniformDemandExampleConvergence.csv};  
  % \addplot[dotted,color= BrickRed,mark=none] table [x=belief, y=Values10000, col sep=comma] {Plots/uniformDemandExampleConvergence.csv};  
  % \addplot[dotted,color= BrickRed,mark=none] table [x=belief, y=Values20000, col sep=comma] {Plots/uniformDemandExampleConvergence.csv};  
  % \addplot[dotted,color= BrickRed,mark=none] table [x=belief, y=Values50000, col sep=comma] {Plots/uniformDemandExampleConvergence.csv};  
  % \addplot[dashed,name path = NoInfo, thick,color=blue,mark=none] table [x=belief, y=Truth, col sep=comma] {Plots/uniformDemandExampleConvergence.csv};

  \addplot[dashed,name path = NoInfo, thick,color= BrickRed,mark=none] table [x=belief, y=Myopic, col sep=comma] {Plots/uniformDemandExample.csv};
  \addplot[dashed,name path = NoInfo, thick,color=blue,mark=none] table [x=belief, y=Truth, col sep=comma] {Plots/uniformDemandExample.csv};

  % \addplot[dashed,thick,color=BrickRed,mark=none,visible on = <3-6>] table [x=beliefs, y=Period500Average,  col sep=comma] {simulationMyopic002.csv};
  % \addplot[dashed,thick,color=BrickRed,mark=none,visible on = <4-6>] table [x=beliefs, y=Period1000Average,  col sep=comma] {simulationMyopic002.csv};
  % \addplot[dashed,thick,color=BrickRed,mark=none,visible on = <5-6>] table [x=beliefs, y=Period2000Average,  col sep=comma] {simulationMyopic002.csv};

  \node[anchor=south,align=center] (sourceMax) at (axis  cs:.24,.054){\footnotesize Uninformative: $\sigma^U$ \\{\scriptsize $T=1$}};
  \node (destinationMax) at (axis cs:0.25, .051){};
  \draw[->](sourceMax)--(destinationMax);

  \node[anchor=south,align=center] (sourceFull) at (axis  cs:.7,.045){\footnotesize Truthful: $\sigma^T$};
  \node (destinationFull) at (axis cs:0.5, .052){};
  \draw[->](sourceFull)--(destinationFull);

  % \addplot [line width=0.4mm,blue,dashed,thick,name path = FullInfo]
  % table {%
  % 0.0 1.062604122658572
  % 1.0 1.0833012970489309
  % };
 
  \end{axis}
  \end{tikzpicture}\vspace{-0.2cm}
 	\caption{{Comparison of the expected consumer surplus associated with a myopic promotion policy and either an informative or uninformative signaling mechanism, in the setting described in Examples~\ref{example:UniformDemand} and~\ref{example:MyopicPromotion}, with $T=1$}.}\vspace{-0.5cm}
 	\label{Fig:Comp:OnePeriod}
 \end{figure}

%\ilanst{\ilan{If the seller's product generates more welfare than the outside option (i.e., $\bar{W}_0(p^*)\geq \bar{W}_{\text{out}}$),} one may observe that relative to an uninformative signal, a truthful signal increases the single-period expected consumer surplus when~$\phi = \phi_H$ and decreases it when $\phi = \phi_L$\ilan{, as the platform can incentivize the seller to set a lower price in the former case}.  On the other hand, an uninformative signal pools the seller's pricing decision at the same price regardless of the value of~$\phi$.}

%In conclusion, concealing information may generate value for the platform in the short-term, but these gains may disappear in the long-run if the platform makes promotion decisions myopically, as sales observations allow the seller to learn the value of $\phi$ asymptotically. In the next section, we consider a different class of promotion policies which limit the information generated by sales observations (and therefore the seller's learning process), while balancing the platform's objective of maximizing consumer surplus. As we will show, these policies allow the platform to capture the value of initial signaling design over time, and are in fact optimal in the long-run average sense.

To conclude, when designing its promotion policy, the platform faces a tradeoff between increasing the consumer surplus in the current period and limiting the information contained in sales observations, which in turn impacts consumer surplus in future periods. When the promotion policy is myopic, the seller is able to learn the value of $\phi$ over time. Thus, while optimizing the initial signal provided about $\phi$ might be valuable in the short term, it is ineffective at increasing the long-run average consumer surplus.

In the next section we introduce a new class of platform policies designed to control the information collected by the seller over time. We will show that, when carefully designed, these policies guarantee long-run average consumer surplus that is higher than what is achievable by truthfully revealing the fraction $\phi$, and that, in fact, they are long-run average optimal.

\section{Confounding Promotions: Performance and Design}\label{sec:LongRunAverageOptimalConsumerSurplus}\vspace{-0.1cm}

In the previous section we  identified a tradeoff between increasing consumer surplus in the current  period (through incentivizing low prices) and limiting the information contained in sales observations, which may drive high consumer surplus in future periods. %\ilanst{maintaining high consumer surplus in future periods (through limiting the information contained in sales observations).}
A class of policies that is key to  balancing this tradeoff consists of policies that \emph{confound} the seller's belief, in the sense of preventing him from learning the value of $\phi$ through sales observations. We next define these policies, %\ilanst{and then}
leverage them to %\ilanst{determine}
 characterize the achievable long-run average consumer surplus when the seller makes myopic pricing decisions, and, in addition,  establish that they are long-run average optimal in that setting. We then turn to develop a class of simple confounding policies that are tractable, practical to implement, and %\ilan{yet }
 still %\ilanst{maintain}
  achieve long-run average optimality.

\subsection{Confounding Promotion Policies}\label{sec:ConfoundingPromotionPolicies}\vspace{-0.1cm}

%The above discussion suggests that when  designing its promotion policy, the platform faces a tradeoff between increasing the consumer surplus in the present period and limiting the information contained in sales observations, which in turn impacts consumer surplus in future periods.

%A class of promotion policies that will be key for balancing this tradeoff are those which \emph{confound} the seller's learning in all periods $t=1,\ldots,T$. As we will establish, these policies are fundamental for determining the achievable long-run average consumer surplus when the seller makes myopic pricing decisions.

%Confounding promotion policies are designed to prevent the seller from learning the value of $\phi$ through sales observations.
Formally, we define confounding promotion policies as follows.

\begin{definition}[Confounding Promotion Policies]\label{def:confounding}
    Suppose that the seller uses the myopic pricing policy,~$\boldsymbol{\pi^*}$. For each belief $\mu \in [0,1]$, define the set of confounding promotion policies $\mathcal{A}^C(\mu) \subset \mathcal{A}^M$ as those that keep the seller's belief constant %\ilanst{which prevent the seller's belief from updating}
    throughout periods $t=1,\ldots,T$. That is, $\bA \in \mathcal{A}^C(\mu)$, if and only if for all~$t=1,...,T$, one has $\Prob(\mu_{t+1}=\mu|\mu_t=\mu,\boldsymbol{\pi^*},\bA) = 1.$
\end{definition}
\vspace{-0.25cm}

Definition~\ref{def:confounding} encompasses two ways for sales observations to contain no new information about the true fraction of impatient consumers, $\phi$. The first one is trivial: if the seller knows the true value of $\phi$ with certainty, that is,~$\mu~\in \{0,1\}$, then sales observations do not affect his belief, and one has $\mathcal{A}^C(\mu) = \mathcal{A}$. On the other hand, if $\mu \in (0,1)$, then the platform may prevent sales observations  from conveying information,  by ensuring that the probability of a sale is independent of $\phi$.
To do so, the platform must design $\bA$ so that in each period $t$ and for every belief $\mu\in[0,1]$, at the price $p$ set by the seller given $\mu_t=\mu$ and $\bA$, one has  \vspace{-0.2cm}
\begin{equation}\label{eq:EqualityOfSalesProb}
    \begin{split}
        \Prob\left(y_t=1|\phi=\phi_H,\mu_t=\mu,p_t = p,\alpha_t\right)  &= \Prob\left(y_t=1|\phi=\phi_L,\mu_t=\mu,p_t = p,\alpha_t\right).\vspace{-0.2cm}
    \end{split}
\end{equation}
In any given period, a patient or impatient consumer may arrive. A patient consumer arrives and purchases from the seller with probability $(1-\phi)\bar\rho_c(p)$; an impatient consumer arrives and purchases from the seller with probability~$\phi\alpha(p,\phi,h)\bar\rho_0(p)$. Therefore, one may rewrite~\eqref{eq:EqualityOfSalesProb} as \vspace{-0.05cm}
\begin{equation}\label{eq:EqualityOfSalesProb_demand}
	\phi_H\alpha(p,\phi_H,h)\bar\rho_0(p) +(1-\phi_H)\bar\rho_c(p) = \phi_L\alpha(p,\phi_L,h)\bar\rho_0(p) +(1-\phi_L)\bar\rho_c(p). \vspace{-0.05cm}
\end{equation}
%\ilan{Observe that for~\eqref{eq:EqualityOfSalesProb_demand} to hold, the platform must promote the seller with different probabilities depending on the true value of $\phi$. That is, for any price $p$ with $\bar\rho_c(p)<\bar{\rho}_0(p)$ (i.e., where Assumption~\ref{assump:Demand} holds strictly), a confounding policy must have $\alpha(p,\phi_H,h)\neq \alpha(p,\phi_L,h)$.
%}
%\ilanst{If $\bar\rho_c(p)<\bar{\rho}_0(p)$, then the probability that a  consumer purchases in period $t$ depends on~$\phi$, and $\alpha(p,\phi,h)$ must also depend on $\phi$ to ensure \eqref{eq:EqualityOfSalesProb} holds. In particular, the platform must promote the seller to more consumers if $\phi=\phi_H$ so that:}
An essential feature of our model is that regardless of the seller's belief, the platform can design a confounding policy that eliminates the informational content of sales observations. This is formalized next in Proposition~\ref{prop: ConfoundingPolicyExistence}, which establishes that the platform can always design a promotion policy that satisfies~\eqref{eq:EqualityOfSalesProb}--\eqref{eq:EqualityOfSalesProb_demand} in all periods and for any belief $\mu$, and thus implies that the platform has the ability to keep the seller's belief constant throughout the horizon.%, and plays an important role in establishing our results.
\begin{proposition}[Existence of Confounding Promotion Policies] \label{prop: ConfoundingPolicyExistence}
		$A^C(\mu)$ is non-empty for all $\mu\in[0,1]$. That is, for any seller's belief the platform can construct a confounding promotion policy.
	\end{proposition}
\vspace{-0.25cm}

In contrast to the myopic promotions described in \S\ref{subsec:InsufficiencyOfTruthfulDisclosure}, a confounding promotion policy may not maximize instantaneous consumer surplus, as it may weaken incentives for a low price and/or commit to divert impatient consumers from the product that generates the largest expected consumer surplus. For instance, the myopic promotion policy in Example~\ref{example:MyopicPromotion} is not confounding as \eqref{eq:EqualityOfSalesProb_demand} does not hold at the myopically optimal price $p=p(\mu)$, since $\alpha(p,\phi_L,\mu)=\alpha(p,\phi_H,\mu) = 1$ and $\bar\rho_c(p) < \bar\rho_0(p)$. To confound the seller, one must decrease the sales probability when $\phi = \phi_H$ by decreasing $\alpha(p,\phi_H,\mu)$ and/or incentivize a price where the difference between $\bar\rho_c(p)$ and $ \bar\rho_0(p)$ is smaller. To illustrate this observation, we refer back to the demand and welfare structure in  Examples~\ref{example:UniformDemand} and~\ref{example:MyopicPromotion}, and provide a %an optimal
confounding promotion policy in this setting.

\begin{example}[Uniform WtP: %Optimal
Confounding Promotion Policy]\label{example:ConfoundingPromotion}
	Consider the setting in Example~\ref{example:UniformDemand}, and let $a> 2b\left(1-\frac{b}{4}\right)$. For $\mu \in (0,1)$, an optimal confounding promotion policy has $\alpha_t = \alpha^C$,  where \vspace{-0.1cm}
	\begin{equation*}
		\alpha^C(p,\phi_L,\mu) = \begin{cases}
			1, &\text{ if } p = p^C(\mu)\\
			0, &\text{ otherwise, }
		\end{cases}, \qquad \alpha^C(p,\phi_H,\mu) = \begin{cases}
			\frac{\alpha-p^C(\mu) - b^2/2}{ \alpha-p^C(\mu)}\left(\frac{\phi_H-\phi_L}{\phi_H}\right) + \frac{\phi_L}{\phi_H}, &\text{ if } p = p^C(\mu)\\
			0, &\text{ otherwise, }
		\end{cases}\vspace{-0.1cm}
	\end{equation*}
	where $p^C(\mu)$, defined by
	$p^C(\mu) =\frac{1}{4}\left(2a-b^2(1- \phi_L) -\sqrt{(2a-b^2)^2(\phi_H-\phi_L)\mu+\phi_L(4a^2-b^4)+b^4\phi_L^2} \right)$, is the myopically optimal price set by the seller given belief $\mu$ and the promotion policy~$\alpha^C$. For $\mu\in\{0,1\}$, one has $\alpha^C(p,\phi,\mu)= \alpha(p,\phi,\mu)$ and $p^C(\mu) = p(\mu)$,  where these are described as in Example~\ref{example:MyopicPromotion}.
\end{example}
\vspace{-0.25cm}

Recall that the myopic promotion policy in Example~\ref{example:MyopicPromotion} promotes the seller with probability 1 provided that he matches the target price. By  contrast, in the confounding policy given in Example~\ref{example:ConfoundingPromotion}, the promotion probability of the seller %\ilanst{at the %\ilanst{myopically optimal} target price $p^C(\mu)$}
is reduced when $\phi=\phi_H$. Thus, some impatient consumers are not shown the seller's product and are instead shown an alternative with lower expected consumer surplus. Moreover, the target price $p^C(\mu)$ of the above  confounding policy is higher
 %\ilanst{seller posts a higher price}
 than the one in Example~\ref{example:MyopicPromotion}, which lowers the expected surplus for all consumers. Thus, in order to confound the seller, the platform weakens his incentive to set low prices, and diverts a fraction of consumers away from their preferred product. Finally, note that at belief~$\mu_1 \in \{0,1\}$ the optimal confounding policy coincides with the myopic promotion policy since, as the seller's belief never updates, the confounding constraint is satisfied.  %and $\mathcal{A}^C(\mu_1) = \mathcal{A}$. %\ilanst{because the seller's belief never updates.}
%\ilanst{Thus, since the space of confounding policies increases discontinuously at $\mu_1 \in \{0,1\}$, one may expected that $W^C(\mu_1)$ also increases discontinuously at these beliefs. Indeed, one may observe that, in Example~\ref{example:ConfoundingPromotion}, the price set by a myopic seller (and thus the consumer surplus) under the optimal confounding policy has a discontinuous jump at $\mu=1$ because confounding is costly for the platform at beliefs $\mu_1$ near 1 but imposes no cost at $\mu_1=1$.}
%\footnote{One may also expect to observe a discontinuous increase in $W^C(\mu_1)$ at $\mu_1=0$, but as discussed in Example~\ref{example:ConfoundingPromotion:General} (see Appendix  \S\ref{sec:ConfoundingPromotion:General}), for any $\alpha(p,\phi_L,\mu_1)$ there exists $\alpha(p,\phi_H,\mu_1)$ such that the policy is confounding. Thus, a promotion policy can be made confounding only by changing $\alpha(p,\phi_H,\mu_1)$ which has a diminishing effect on the expected consumer surplus for $\mu_1$ near $0$.}

%For a given posterior belief $\mu_1 \in [0,1]$, define the maximum consumer surplus generated by a confounding promotion policy~as:

\subsection{Long-Run Average Optimal Consumer Surplus}\label{sec:longrunoptimalsurplus}\vspace{-0.1cm}
In this section, we leverage the notion of confounding policies to characterize the long-run average optimal consumer surplus. For that purpose, we define the maximum consumer surplus generated by a confounding promotion policy given a posterior belief $\mu_1 \in [0,1]$ as
\begin{equation}\label{PlatformProblem:SinglePeriodConfounding}
	\begin{split}
		W^C(\mu_1):= \max_{\boldsymbol{\alpha} \in \mathcal{A}^C(\mu_1)}& ~~
		\frac{1}{T}\E\left(\sum_{t=1}^T  W(p_t,a_t,\psi_t)\middle| \boldsymbol{\alpha},\boldsymbol{\pi^*},\mu_1 \right).
	\end{split}
\end{equation}
Note that $W^C(\mu_1)$ is independent of $T$, as under any confounding promotion policy the seller's belief is constant by construction and, as the seller prices myopically, the maximum consumer surplus is the same in each period. %Therefore one may formulate \eqref{PlatformProblem:SinglePeriodConfounding} with $T=1$.
We note that the confounding policy given in Example~\ref{example:ConfoundingPromotion} is derived by solving the single-period version of (\ref{PlatformProblem:SinglePeriodConfounding}), and in fact maximizes consumer surplus within the set of  confounding promotion policies.

In addition, for any function $f: \R\rightarrow \R$, define $co(f)$ as the \emph{concavification} of\footnote{This function appears often in the information design literature when characterizing the optimal signaling mechanism and corresponding payoff (see e.g. \cite{aumann1995repeated} and \cite{Kamenica2011}).} $f$: \vspace{-0.2cm}
\begin{equation*}
	co(f)(\mu) := \sup \{z|(\mu,z) \in Conv(f)\},
	\vspace{-0.2cm}
\end{equation*}
where $Conv(f)$ denotes the convex hull of the set $\{ (x,t):t\leq f(x)\}$. The following key result characterizes the maximum long-run average consumer surplus that can be generated jointly by a signaling mechanism and a dynamic promotion policy, and establishes that it can be attained by a confounding promotion policy.

\begin{theorem}[Characterization of Long-Run Average Optimal Consumer Surplus] \label{thm:LongRunAverageOptimalConsumerSurplus} Let $W^C(\mu)$ be defined as in \eqref{PlatformProblem:SinglePeriodConfounding}. For all $\mu \in [0,1]$,
    $$\lim_{T\rightarrow \infty}\sup_{\substack{\alpha \in\mathcal{A},\\\sigma \in \Sigma}} ~~\frac{1}{T}W^{\bA,\sigma,\boldsymbol{\pi^*}}_T(\mu) = co(W^C)(\mu).$$
    % where .
{Furthermore, for any fixed $T$, there exists a signaling mechanism $\sigma$ and a confounding promotion policy $\bA$ that generate an expected average consumer surplus of $co(W^C)(\mu_0)$. %$$\frac{1}{T}W^{\boldsymbol{\alpha},\sigma,\boldsymbol{\pi^*}}_T(\mu_0) = co(W^C)(\mu_0).$$
}
\end{theorem}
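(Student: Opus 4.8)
The plan is to prove the two one‑sided bounds $\liminf_{T\to\infty}\sup_{\bA\in\mathcal{A},\sigma\in\Sigma}\tfrac1T W^{\bA,\sigma,\boldsymbol{\pi^*}}_T(\mu)\ge co(W^C)(\mu)$ and $\limsup_{T\to\infty}\sup_{\bA\in\mathcal{A},\sigma\in\Sigma}\tfrac1T W^{\bA,\sigma,\boldsymbol{\pi^*}}_T(\mu)\le co(W^C)(\mu)$, which together yield the claimed limit.

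\textbf{Achievability (the easy direction).} Fix $\varepsilon>0$. By the one‑dimensional Carathéodory theorem, $co(W^C)(\mu)$ is approximated within $\varepsilon$ by $\lambda W^C(\mu^1)+(1-\lambda)W^C(\mu^2)$ for some posteriors $\mu^1,\mu^2$ and weight $\lambda\in[0,1]$ with $\lambda\mu^1+(1-\lambda)\mu^2=\mu$. By Bayes‑plausibility (the splitting lemma, cf.\ \cite{Kamenica2011}) there is $\sigma\in\Sigma$ with two realizations inducing posteriors $\mu^1,\mu^2$ with probabilities $\lambda,1-\lambda$. Conditional on posterior $\mu^i$, the platform commits to a confounding policy in $\mathcal{A}^C(\mu^i)$ coming within $\varepsilon$ of the maximum defining $W^C(\mu^i)$ in \eqref{PlatformProblem:SinglePeriodConfounding}; such a policy exists since $\mathcal{A}^C(\mu^i)\neq\emptyset$ (Example~\ref{example:ConfoundingPromotion:General}). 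By Definition~\ref{def:confounding} the myopic seller's belief never moves off $\mu^i$, so he posts the same price in every period and each period yields the same expected surplus, namely (within $\varepsilon$) $W^C(\mu^i)$. Summing over the $T$ periods and the two realizations gives $\tfrac1T W^{\bA,\sigma,\boldsymbol{\pi^*}}_T(\mu)\ge co(W^C)(\mu)-2\varepsilon$ for \emph{every} finite $T$; letting $\varepsilon\downarrow0$ proves the first bound.

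\textbf{Converse.} By Lemma~\ref{lemma:PromotionAsFunctionOfBelief} it suffices to bound the supremum over $\bA\in\mathcal{A}^M$, $\sigma\in\Sigma$. Fix such a pair. Because the seller is Bayesian and $\bA\in\mathcal{A}^M$, the belief process $\{\mu_t\}_{t=1}^{T+1}$ is a $[0,1]$‑valued martingale for the seller's filtration $\{\mathcal H_t\}$ with $\E[\mu_t]=\mu$, and the conditional period‑$t$ expected surplus equals $S(\mu_t,\alpha_t)$, where $\alpha_t$ is the (belief‑measurable) period‑$t$ promotion rule and $S(\nu,\alpha)$ denotes the one‑shot expected consumer surplus at belief $\nu$ under promotion rule $\alpha$ when the seller prices myopically. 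The crux is a one‑step inequality: there is a constant $\kappa<\infty$, depending only on the Lipschitz/concavity constants in Assumptions~\ref{assump:Demand}--\ref{Assump:ConsSurplus}, such that
\[
    S(\nu,\alpha)\ \le\ co(W^C)(\nu)\ +\ \kappa\,\E\big[\,|\mu_{t+1}-\nu|\ \big|\ \mu_t=\nu,\alpha\big]\qquad\text{for all }\nu\in[0,1],\ \alpha .
\]
Granting this, I would sum over $t$ and use (i) Jensen with concavity of $co(W^C)$, so $\E[co(W^C)(\mu_t)]\le co(W^C)(\mu)$, and (ii) orthogonality of martingale increments with Cauchy--Schwarz, $\sum_{t=1}^{T}\E|\mu_{t+1}-\mu_t|\le\sqrt{T}\,\big(\sum_{t}\E[(\mu_{t+1}-\mu_t)^2]\big)^{1/2}=\sqrt{T}\,\big(\E[(\mu_{T+1}-\mu_1)^2]\big)^{1/2}\le\sqrt{T}$, to get $\tfrac1T W^{\bA,\sigma,\boldsymbol{\pi^*}}_T(\mu)\le co(W^C)(\mu)+\kappa/\sqrt{T}$. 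Taking the supremum over $(\bA,\sigma)$ and then $T\to\infty$ gives the second bound.

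\textbf{The main obstacle: the one‑step inequality.} If $\alpha$ confounds at $\nu$ then $\mu_{t+1}=\nu$ a.s.\ and $S(\nu,\alpha)\le W^C(\nu)\le co(W^C)(\nu)$, so the inequality is immediate. Otherwise, let $q_H,q_L$ be the state‑contingent sale probabilities generated by $(\nu,\alpha)$ and its induced myopic price. A direct Bayes computation gives $\E[|\mu_{t+1}-\nu|\mid\mu_t=\nu,\alpha]\asymp\nu(1-\nu)\,|q_H-q_L|$, while comparing $\alpha$ with its ``confounded version'' (obtained by shading the promotion probability in the higher‑sale‑probability state, as in the construction preceding Definition~\ref{def:confounding}) and invoking Lipschitz continuity of $W$, $\rho$ and of the myopic‑price map (the last from strict concavity of revenue in Assumption~\ref{assump:Demand}) bounds $S(\nu,\alpha)-W^C(\nu)=O(|q_H-q_L|)$ for $\nu$ bounded away from $\{0,1\}$. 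The delicate point is uniformity of $\kappa$ near the endpoints, where both sides vanish: here one uses that $\mathcal{A}^C(\nu)=\mathcal{A}$ at $\nu\in\{0,1\}$, so $W^C(0)=\sup_\alpha S(0,\alpha)$ and $W^C(1)=\sup_\alpha S(1,\alpha)$, together with continuity of $\nu\mapsto\sup_\alpha S(\nu,\alpha)$ and of $co(W^C)$ at $0$ and $1$ (the latter since $co(W^C)(0)=W^C(0)$, $co(W^C)(1)=W^C(1)$), forcing the excess $S(\nu,\alpha)-co(W^C)(\nu)$ to shrink at least as fast as $\E|\mu_{t+1}-\nu|$. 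Should a clean global $\kappa$ prove awkward, the same conclusion follows by counting informative periods: for any $\delta,\delta'>0$ the bounded quadratic variation $\sum_t\E[(\mu_{t+1}-\mu_t)^2]\le1$ forces all but $O_{\delta,\delta'}(1)$ of the periods with $\mu_t\in(\delta',1-\delta')$ to be $\delta$‑almost confounding (hence earning at most $co(W^C)(\mu_t)+O(\delta)$), while periods with $\mu_t\notin(\delta',1-\delta')$ earn at most $co(W^C)(\mu_t)+O(\delta')$; dividing by $T$ and sending $T\to\infty$ then $\delta,\delta'\to0$ recovers the bound.
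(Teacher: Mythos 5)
Your achievability direction is essentially the paper's: split the prior into two Bayes-plausible posteriors and play an (approximately) optimal confounding policy at each, so that the myopic seller's belief and price are frozen and every period earns $W^C(\mu^i)$; the paper does the same, except it proves exact existence of the maximizer in \eqref{PlatformProblem:SinglePeriodConfounding} (extreme value theorem) and continuity of $W^C$ (Berge) so it can invoke the concavification results of \cite{Kamenica2011} directly, whereas your $\varepsilon$-optimal version sidesteps existence. For the converse, however, your headline one-step inequality $S(\nu,\alpha)\le co(W^C)(\nu)+\kappa\,\E[|\mu_{t+1}-\nu|]$ with a \emph{uniform} $\kappa$ is not established, and the argument you offer for it near the endpoints is too weak: continuity of $\nu\mapsto\sup_\alpha S(\nu,\alpha)$ and of $co(W^C)$ at $\{0,1\}$ only shows the excess vanishes there, while the inequality demands it vanish at least like the product $\nu(1-\nu)\,|q_H-q_L|$. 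The bounds one can actually extract (excess $\le C|q_H-q_L|$ from comparison with a confounded version, and excess $\le c(1-\nu)$ from $W^{\max}-co(W^C)$) give only $\min\{C|q_H-q_L|,\,c(1-\nu)\}$, which exceeds $\kappa\nu(1-\nu)|q_H-q_L|$ in the regime $|q_H-q_L|\sim 1-\nu\to 0$; whether the inequality survives there depends on finer structure (e.g.\ whether $co(W^C)-W^C$ stays bounded away from zero near the endpoint), so as written this route has a genuine gap.

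Your fallback argument, on the other hand, is sound and is a genuinely different route from the paper's. You bound the belief martingale's quadratic variation by $\E[(\mu_{T+1}-\mu_1)^2]\le 1$, note that any interior period ($\mu_t\in(\delta',1-\delta')$) that is not $\delta$-almost-confounding contributes at least $c(\delta,\delta')>0$ to it, and hence there are only $O_{\delta,\delta'}(1)$ such periods in expectation; almost-confounding interior periods earn at most $co(W^C)(\mu_t)+o_\delta(1)$, boundary periods at most $co(W^C)(\mu_t)+o_{\delta'}(1)$, and Jensen plus the martingale property finishes. The paper instead proves exponential convergence of beliefs to the truth in the number of informative periods (Lemma~\ref{lemma:generalizedHarrison}, a likelihood-ratio concentration argument à la \cite{Harrison2012}, including the treatment of the signal and stopping times) and pairs it with a linear-in-$\mu$ endpoint bound (Lemma~\ref{lemma:ConvergenceOfWC}); your counting argument replaces both with the elementary variance budget and mere endpoint continuity, which is simpler and arguably more transparent. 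Be aware, though, that the two inputs you use as black boxes are exactly the paper's nontrivial lemmas: the uniform-in-$\mu$ statement ``$|q_H-q_L|\le\delta$ implies surplus $\le W^C(\mu)+o_\delta(1)$'' is Lemma~\ref{lemma:delta_epsilonRelationCoU} (proved via a careful perturbation of the $\delta$-relaxed feasible set, uniformly in $\mu$), and the endpoint step requires the continuity facts behind Lemma~\ref{lemma:ConvergenceOfWC}; these need to be proved, not just asserted, so I would restructure the write-up around the fallback and supply those two lemmas explicitly.
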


The characterization in Theorem~\ref{thm:LongRunAverageOptimalConsumerSurplus} follows since for any promotion policy, the seller's belief $\mu_t$ asymptotically converges to a limit belief as the number of periods grows large, and the long-run average consumer surplus is effectively determined by the expected consumer surplus at that limit belief. Thus the platform's challenge is to design a policy that ensures that the distribution of the seller's limit beliefs is optimal. Figure~\ref{Fig:Comp:LongRunAverage} shows  that for an instance of the demand model in Example~\ref{example:UniformDemand} (and used in Figure~\ref{Fig:Comp:OnePeriod}), the maximum %\ilanst{optimal}
long-run average consumer surplus is distinctively higher (up to~3.5\%) than what is achieved by truthful revelation.

\begin{figure}[!htbp]
	\centering
	\setlength\figureheight{4.8cm}
	\setlength\figurewidth{.5\textwidth}
	\begin{tikzpicture}[trim axis left, trim axis right]
    \tikzstyle{every node}=[font=\small]
    \begin{axis}[
    clip=false,
    xlabel={$\mu_0$},
    xlabel style={align=center,at={(axis description cs:.5,-.16)}},
    ylabel={Average\\Consumer\\Surplus},
    ylabel style={rotate=-90,align = center},
    xmin=-0.01, xmax=1.01,
    width=\figurewidth,
    height=\figureheight,
    tick pos=left,
    yticklabel style={ /pgf/number format/fixed, /pgf/number format/precision=3},
scaled y ticks=false
    % x grid style={lightgray!92.026143790849673!black},
    % y grid style={lightgray!92.026143790849673!black},
    ]
    \node[align=center] at (axis cs:0.5, .061){\scriptsize$(a = .4,~b=.2,~\phi_L = .02,~\phi_H = .3)$};
  % \node[anchor=south,align=center] (sourceMax) at (axis  cs:.25,.054){\footnotesize Uninformative: $\sigma^U$ \\{\scriptsize $T=1$}};
  % \node (destinationMax) at (axis cs:0.3, .051){};
  % \draw[->](sourceMax)--(destinationMax);
  
    \node[anchor=south,align=center] (sourceFull) at (axis  cs:.8,.048){\footnotesize Truthful: $\sigma^T$};
    \node (destinationFull) at (axis cs:0.67, .055){};
    \draw[->](sourceFull)--(destinationFull);

    \node[anchor=south,align=center] (sourceCoWC) at (axis  cs:.2,.0555){\footnotesize $co(W^C)(\mu)$};
    \node (destinationCoWC) at (axis cs:0.35, .051){};
    \draw[->](sourceCoWC)--(destinationCoWC);

    % \addplot[line width=0.4mm,red,dotted,thick,name path = NoInfo] table [x=belief, y=Myopic, col sep=comma] {Plots/uniformDemandExample.csv};  
    
    \addplot [line width=0.4mm,black,thick,name path = ConcaveConfounding]table [x=belief, y=ConcaveConfounding, col sep=comma] {Plots/uniformDemandExample.csv};
  
    \addplot [line width=0.4mm,blue,dashed,thick,name path = FullInfo] table [x=belief, y=Truth, col sep=comma] {Plots/uniformDemandExample.csv};
   
    % \addplot[color=blue!15] fill between [of=FullInfo and ConcaveConfounding];
    % \addplot[color = red!15] fill between [of=NoInfo and ConcaveConfounding];
    \end{axis}
    \end{tikzpicture}\vspace{-0.2cm}
	\caption{Long-run average optimal policy in the setting described in Example~\ref{example:UniformDemand}.}\vspace{-0.3cm}
	\label{Fig:Comp:LongRunAverage}
\end{figure}

The characterization in Theorem~\ref{thm:LongRunAverageOptimalConsumerSurplus} can also be understood as follows. Myopic pricing decisions are based on the current belief and the platform's promotion policy, and therefore the platform has two available levers to control the seller's belief process $\{\mu_t\}$, in order to induce lower prices: (i) the signaling mechanism $\sigma$, which determines the \textit{starting point} of the belief process (i.e., the value of $\mu_1$); and (ii) the promotion policy $\bA$, which controls the \textit{drift} of the seller's belief process (determined by Bayes' rule given the platform's promotion policy). From this perspective, the platform's problem of jointly designing its signaling mechanism and promotion policy to maximize the long-run average expected consumer surplus can be viewed as selecting the starting point and the drift of the belief process, subject to incentive compatibility and feasibility constraints. %\footnote{We thank an anonymous referee for suggesting this interpretation.}
Theorem~\ref{thm:LongRunAverageOptimalConsumerSurplus} establishes that in the long run, it is optimal for the platform to deploy a confounding promotion policy that induces zero drift on the belief process, thus keeping the seller's belief constant. Taking this into account,
 %Since this can be achieved by choosing a confounding promotion policy,
 the platform's problem  reduces to setting the distribution of the starting point of the belief process via its signaling mechanism. Following the information design literature (e.g., \citealt{Kamenica2011}), the platform can optimally design such a mechanism to achieve an expected payoff of $co(W^C)(\mu_0)$.

%\ilan{Moreover, in the next section we show that in addition to achieving the maximum long-run average expected payoff for the platform, confounding promotion policies also simplify the design problem considerably since, by keeping the seller's belief constant throughout the horizon, the promotion design problem becomes static.}

The key ideas of the proof of Theorem~\ref{thm:LongRunAverageOptimalConsumerSurplus} are as follows.
%First, we consider the periods in which the platform's payoff exceeds $co(W^C)(\mu_t) + \epsilon$ for a given $\epsilon>0$ and establish that, in these periods, sales observations are informative enough to the seller. Formally, we show that in such periods it holds that $|\Prob(y_t=1|\phi=\phi_H,\alpha_t) - \Prob(y_t=1|\phi=\phi_L,\alpha_t)|>\delta$ for some $\delta>0$ (see Lemma~\ref{lemma:delta_epsilonRelationCoU} in Appendix~\ref{sec:proofs}).
To establish an upper bound on the long-run average consumer surplus, we first show that in periods in which the platform's payoff is high, sales observations carry significant information value to the seller. Formally, for any $\epsilon>0$, in periods where the platform's payoff exceeds $co(W^C)(\mu_t) + \epsilon$, one has $|\Prob(y_t=1|\phi=\phi_H,\alpha_t) - \Prob(y_t=1|\phi=\phi_L,\alpha_t)|>\delta$ for some $\delta>0$ (Lemma~\ref{lemma:delta_epsilonRelationCoU}, Appendix~\ref{sec:proofs}).
%As a result, the drift of the seller's belief process $\{\mu_t\}$ is bounded away from zero in these periods.
Next, we show that as the number of such periods increases, the seller's belief must converge to the true value of $\phi$ exponentially fast (Lemma~\ref{lemma:generalizedHarrison}, Appendix~\ref{sec:proofs}), %It follows that if the platform's payoff often exceeds $co(W^C)(\mu_t) + \epsilon$, the seller must learn the true value of $\phi$, %eliminating any gains associated with controlling the drift of $\{\mu_t\}$ and resulting
resulting in a long-run average payoff that approaches the one under truthful revelation of $\phi$, in line with the dynamics established in Proposition~\ref{prop: LearningUnderMyopicPromo}.
We then establish the upper bound by showing that the platform-optimal outcome %under truthful signaling
when the seller asymptotically learns the value of $\phi$ is at most $co(W^C)(\mu_0)$. %However, since the platform-optimal outcome under a truthful signaling mechanism involves a confounding promotion policy (as any promotion policy is confounding when the seller's belief is $\mu \in \{0,1\}$), we can upper bound the maximum long-run average consumer surplus by $co(W^C)(\mu_0)$.
%\ilanst{We finally show that the optimal signaling mechanism generates an average consumer surplus equal to $co(W^C)(\mu_0)$, by constructing}
Finally, we prove the result by constructing a policy $(\bA,\sigma)$ for which $\frac{1}{T}W^{\bA,\sigma,\boldsymbol{\pi^*}}_T(\mu_0) = co(W^C)(\mu_0)$ for all $T\geq 1$. This construction will be described in more detail in the following subsection.

%\ilanst{We next describe the key ideas of the proof of Theorem~\ref{thm:LongRunAverageOptimalConsumerSurplus}. To prove that the left hand side is bounded from above by $co(W^C)$, we show that for any $\epsilon>0$, the expected number of periods in which the promotion policy generates a consumer surplus strictly greater than $W^C(\mu_t) + \epsilon$ is finite. This result is established through two lemmas. First, we show that for any $\epsilon>0$ there exists $\delta>0$ such that for any period $t$ and belief $\mu_t$, if the promotion policy $\alpha_t$ generates an expected consumer surplus greater than $W^C(\mu_t)+\epsilon$, then at the myopically optimal price, $|\Prob(y_t=1|\phi=\phi_H,\alpha_t) - \Prob(y_t=1|\phi=\phi_L,\alpha_t)|>\delta$. Second, we show that for any $\delta>0$, the expected value of the seller's belief converges to the true value of $\phi$ exponentially fast in the number of periods in which the probabilities of a sale under $\phi=\phi_H$ and $\phi=\phi_L$ differ by $\delta$; see Lemma~\ref{lemma:generalizedHarrison} (Appendix~\ref{sec:proofs}) for more details. Finally, we show that the optimal signal generates a consumer surplus equal to $co(W^C)(\mu_0)$. By constructing a policy $(\bA,\sigma)$ for which $\frac{1}{T}W^{\bA,\sigma,\boldsymbol{\pi^*}}_T(\mu_0) = co(W^C)(\mu_0)$ for all $T\geq 1$, we establish that the left hand side is also bounded from below by $co(W^C)$.}

\subsection{Designing Optimal Simple Confounding Policies} \label{sec: OptSimplePolicies}

%\ilanst{The previous section established the effectiveness of confounding promotion policies. In this section, we develop a simple procedure to design practical policies that maintain long-run average optimality for the dynamic problem of information and promotion design.} %, and numerically demonstrate the value captured by such policies. % under the demand model from Example~\ref{example:UniformDemand}.}
%
Theorem~\ref{thm:LongRunAverageOptimalConsumerSurplus} implies that in order to achieve the maximal long-run average consumer surplus, it suffices to construct an optimal confounding promotion policy and then determine an optimal signaling mechanism. We next develop a simple procedure to do so by following three steps: (i) constructing an optimal confounding policy that generates consumer surplus of $W^C(\mu)$ for a given belief~$\mu$, as defined by~\eqref{PlatformProblem:SinglePeriodConfounding} in \S\ref{sec:longrunoptimalsurplus}; (ii) determining the form of $co(W^C(\mu))$ based on the characterization of $W^C(\mu)$; and (iii) obtaining an optimal signaling mechanism given the prior belief and the form of $co(W^C(\mu))$. To construct an optimal confounding policy in step (i), we first introduce the notion of simple confounding policies. %\ilanst{single-price policies in \S\ref{sec:SimplePolicies} and show that focusing on such policies is without loss of optimality. Then, in \S\ref{sec: OptSimplePolicies} we formulate an optimization problem that results in an optimal confounding single-price policy. Notably, this optimization problem reduces the construction of an optimal confounding policy to solving a static optimization problem of three variables. Finally, we numerically demonstrate the value captured by these policies in \S\ref{sec: NumericalEvals}.}

\begin{definition}[Simple Confounding Promotion Policies]\label{def:simplePromotionPolicies}
The set of simple confounding promotion policies~$\mathcal{A}^S \subset \mathcal{A}^M$ consists of all policies that are static, single price, and confounding, that is, policies where~$\boldsymbol{\alpha} \in \mathcal{A}^C(\mu)\cap\mathcal{A}^P$ for all $\mu \in [0,1]$ and $\boldsymbol{\alpha}$ is static  (that is,  $\alpha_1(p,\phi,\mu) = \alpha_t(p,\phi,\mu)$ for all $t =2,\dots,T$, $p\in \mathcal{P},$ $\phi \in \{\phi_L,\phi_H\}$, $\mu\in[0,1]$).
% \begin{enumerate}
	%     \item $\boldsymbol{\alpha}$ is static. That is, there exists $\alpha \in \mathcal{A}^P$ such that $\alpha_t = \alpha, \forall t$.
	%     \item $\alpha$ solves \eqref{PlatformProblem:SinglePeriodConfounding}.
	% \end{enumerate}
\end{definition}
\vspace{-0.25cm}

The proof of Proposition~\ref{prop: ConfoundingPolicyExistence} establishes that $\mathcal{A}^S$ is non-empty, and in particular that the promotion policy in Example~\ref{example:ConfoundingPromotion} is a simple confounding policy. Considering only simple confounding promotion policies reduces the design problem to a subclass of policies that are intuitive, tractable, and, as the proof of Theorem~\ref{thm:LongRunAverageOptimalConsumerSurplus} shows, sufficient for achieving long-run average optimality.

Furthermore, under the class $\mathcal{A}^S$ the platform's design problem is static after the seller updates his belief based on the initial signal sent by the platform. The belief remains constant thereafter and the promotion policy does not change, as the policy is confounding and static. The seller's posted price does not change across time periods, since the policy is also single price. Thus, to construct an optimal confounding policy that generates a  consumer surplus of $W^C(\mu)$ for a given belief $\mu$, one only needs to consider the problem with~$T=1$ and simple confounding promotion policies. That is, for any belief $\mu \in (0,1)$, the platform solves:
\begin{equation} \label{eq: ConfoundingProblemSimple}
\begin{split}
	W^C(\mu):= \max_{\substack{\alpha_{\phi_H},\alpha_{\phi_L} \in [0,1],\\p \in \mathcal{P}}}& ~~ \E_{\phi}\left(\phi \alpha_{\phi} \bar{W}_0(p) +\phi(1- \alpha_{\phi} )\bar{W}_{\text{out}}  + (1-\phi)\bar{W}_c(p)|\mu\right) \\
	\text{s.t.}&~~ p\bar{\rho}_0(p)(\phi_L\alpha_{\phi_L}(1-\mu) +\phi_H\alpha_{\phi_H}\mu) +  p\bar{\rho}_c(p)(1-\phi_L-\mu(\phi_H-\phi_L)) \geq \\
	&\quad \quad \quad p^*\bar{\rho}_c(p^*)(1-\phi_L-\mu(\phi_H-\phi_L)),\\
	&~~ \phi_H\alpha_{\phi_H}\bar{\rho}_0(p) + (1-\phi_H)\bar{\rho}_c(p) = \phi_L\alpha_{\phi_L}\bar{\rho}_0(p) + (1-\phi_L)\bar{\rho}_c(p).
\end{split}
\end{equation}
Note that focusing on simple confounding promotion policies simplifies the design problem considerably since, as these are single-price policies, one only needs to optimize over the target price $p$ and the probability of promotion at that price for each realization of $\phi$ (that is, $\alpha_{\phi_H}$ and $\alpha_{\phi_L}$), instead of over a general function~$\alpha$ (e.g., as in \eqref{eq:myopicPromotion} in \S\ref{subsec:InsufficiencyOfTruthfulDisclosure}).
The first constraint in~\eqref{eq: ConfoundingProblemSimple} ensures that the selected price is myopically optimal for the seller, and is simplified compared to \eqref{eq:myopicPromotion} as one only needs to compare the target price with $p^*$ in terms of expected revenue. The second constraint, which ensures that the policy is confounding, fully defines~$\alpha_{\phi_H}$ given $\alpha_{\phi_L}$ and $p$, and therefore one only needs to optimize over these two variables. For many demand models, one may further establish that $\alpha_{\phi_L}$ is fully defined given a price $p$, which allows one to optimize only over the price~$p$. In summary, the formulation in~\eqref{eq: ConfoundingProblemSimple} reduces the construction of an optimal confounding policy to solving a static optimization problem with  three decision variables.

Having solved for the optimal confounding policy at every belief (step (i)), it remains to find the optimal signaling mechanism by executing steps (ii) and (iii), namely, determining the form of $co(W^C(\mu))$ based on the characterization of $W^C(\mu)$, and obtaining an optimal signaling mechanism given the prior $\mu_0$ and the form of $co(W^C(\mu))$. Whenever it is possible to solve~\eqref{eq: ConfoundingProblemSimple} analytically, as in Example~\ref{example:ConfoundingPromotion}, finding the optimal signal is typically straightforward as $W^C(\mu)$ is continuous on $(0,1)$ and oftentimes concave. However, when it is not possible to solve this problem analytically, one may still solve for the optimal promotion policy numerically (e.g., for a grid of beliefs in $[0,1]$), approximate $co(W^C(\mu))$ based on these values, and then solve for the optimal signal. In addition, by Proposition~\ref{lemma:RevenueEquivalence}, it suffices to consider a reduced signal set~$\mathcal{S}=\{\phi_L,\phi_H\}$ to perform this final step. For further details, see the computation of policies in Appendix~\ref{app:DesigningSimplePolicies}.

\subsubsection{Example: Evaluation of Simple Confounding Platform Policies}\label{sec: NumericalEvals}\vspace{-0.1cm}

%For completeness, we maintain our consideration of the demand model from Example~\ref{example:UniformDemand} and demonstrate how one may use the approach to study the optimal simple policy. While the analysis is sensitive to the demand model and measure of consumer welfare chosen, we illustrate phenomena that are broadly true across different demand functions.

{We now evaluate numerically the optimal simple confounding policies we described, in the context of the demand model in  Example~\ref{example:UniformDemand}.} We note that we focus on this concrete setting here for the sake of consistency with previous analysis; while the precise outcomes clearly depend on the specific demand and consumer welfare that are assumed, the phenomena that we illustrate next are broad and hold across many structures.
%First, when there is a large range in the potential value of $\phi$ (that is, when $\phi_H-\phi_L$ is large), the platform can realize substantial gains relative to truthful revelation by using an optimal simple policy.

Define the \emph{relative gain} in consumer surplus from using the optimal simple confounding policy compared to the optimal truthful policy:
$$
RG(\mu):=\frac{co(W^C)(\mu)-W^{truth}(\mu)}{W^{truth}(\mu)}.
$$
The upper panels of Figure~\ref{Fig:ValueGain} depict the relative gain compared to the optimal truthful policy for two different parametric specifications of the demand model in  Example~\ref{example:UniformDemand}. For each of these combinations, we calculate the relative gain at a grid of beliefs $\mu\in[0,1]$ and plot the maximum, average, and minimum values. One may observe that the gain that is captured by the optimal confounding policy relative to truthful revelation can be significant, and is larger when the seller's product is superior to the outside option.  %We note that in Figure~\ref{Fig:ValueGain}, for each value of $\phi_H$ the maximum value occurs at a belief in the range~$[0.18,0.26]$, where the minimum value always equals $0\%$ as the consumer surplus under the two policies is equal at $\mu=0$ and $\mu=1$.

%\begin{figure}[!htbp]
%    \centering
%\begin{subfigure}{.47\textwidth}
%    \centering
%    \setlength\figureheight{4.25cm}
%    \setlength\figurewidth{.9\textwidth}
%    \input{Plots/maxValueGain_alpha4.tikz}\vspace{-0.0cm}
%    \caption{Seller Competitive}
%    \label{Fig:ValueGain:A4}
%\end{subfigure}
%\begin{subfigure}{.47\textwidth}
%    \centering
%    \setlength\figureheight{4.25cm}
%    \setlength\figurewidth{.9\textwidth}
%    \input{Plots/maxValueGain_alpha6.tikz}
%    \caption{Seller Preferred}
%    \label{Fig:ValueGain:A6}
%\end{subfigure}
%\caption{Value of confounding. The plots depict three measures of the \emph{relative gain} $RG(\mu)$ for a range of parametric specifications of Example~\ref{example:UniformDemand}. For each specification, we show the maximum, average, and minimum (over a grid of $\mu \in [0,1]$) value of $RG(\mu)$, which is the relative gain in consumer surplus from the optimal simple policy relative to the optimal truthful policy. {The parameters for the left-hand plot reflect a demand model where valuations for the seller's product and its alternative are comparable. In the right hand plot, the seller's product is more often preferred by consumers.}}\vspace{-0.0cm}
%\label{Fig:ValueGain}
%\end{figure}

\begin{figure}[t!]
\centering
\begin{subfigure}{.45\linewidth}
	\centering
	\setlength\figureheight{4.5cm}
	\setlength\figurewidth{.9\textwidth}
	    \begin{tikzpicture}[ trim axis left, trim axis right]
        \begin{axis}
            [
        clip=false,
        xlabel={$\phi_H$ },
        xlabel style={align=center},
        ylabel = {$RG(\mu)$ },
        ylabel style={rotate=-90,align = center, font=\small,at={(axis description cs:-.15,.5)}},
        yticklabel={\pgfmathparse{100*\tick}\pgfmathprintnumber{\pgfmathresult}\%},,
        xmin=-0.05, xmax=1.05,
        ymin=-0.005, ymax=.1,
        width=\figurewidth,
        height=\figureheight,
        tick align=outside,
        tick pos=left
        ]

        \node[align=center] at (axis  cs:.5,.093){\scriptsize $(a=.4,~b=.3,~\phi_L = .01)$};
        \node[align=center] (source01Max) at (axis  cs:.35,.068){\scriptsize Maximum};
        \node (destination01Max) at (axis cs:0.4, .038){};
        \draw[->](source01Max)--(destination01Max);
        
        \node[align=center] (source01Mean) at (axis  cs:.8,.02){\scriptsize Average};
        \node (destination01Mean) at (axis cs:0.6, .028){};
        \draw[->](source01Mean)--(destination01Mean);
        
        \node[align=left] (source01Minimum) at (axis  cs:.5,.013){\scriptsize Minimum};
        \node (destination01Minimum) at (axis cs:0.55, .0){};
        \draw[->](source01Minimum)--(destination01Minimum);

        % \node[align=center] at (axis  cs:.7,.08){\footnotesize \underline{$\phi_L = .01$}};
        % \node[align=center] (source03Max) at (axis  cs:.2,.068){\scriptsize Maximum};
        % \node (destination03Max) at (axis cs:0.4, .038){};
        % \draw[->](source03Max.south east)--(destination03Max);
        
        % \node[align=center] (source03Mean) at (axis  cs:.2,.054){\scriptsize Average};
        % \node (destination03Mean) at (axis cs:0.2, .03){};
        % \draw[->](source03Mean)--(destination03Mean);
        
        \addplot+[ discard if not={phiL}{0.01},blue,mark size = 1,thick] table [x=phiH, y=amax, col sep=comma] {python/NumericAnalysis/valueOfConfoundingalpha4.csv};
        \addplot+[ discard if not={phiL}{0.01}, blue,dashed, no marks,thick] table [x=phiH, y=mean, col sep=comma] {python/NumericAnalysis/valueOfConfoundingalpha4.csv};

        \addplot+[ discard if not={phiL}{0.01}, blue,dotted, no marks,thick] table [x=phiH, y=amin, col sep=comma] {python/NumericAnalysis/valueOfConfoundingalpha4.csv};

        % \addplot+[ discard if not={phiL}{0.03}, red,dashed,no marks,thick] table [x=phiH, y=mean, col sep=comma] {python/NumericAnalysis/valueOfConfoundingalpha4.csv};
        % \addplot+[ discard if not={phiL}{0.03},red,no marks,thick] table [x=phiH, y=amax, col sep=comma] {python/NumericAnalysis/valueOfConfoundingalpha4.csv};

        % \addplot+[ discard if not={phiL}{0.05}, ForestGreen,dashed,no marks,thick] table [x=phiH, y=mean, col sep=comma] {python/NumericAnalysis/valueOfConfoundingalpha4.csv};
        % \addplot+[ discard if not={phiL}{0.05},ForestGreen,no marks,thick] table [x=phiH, y=amax, col sep=comma] {python/NumericAnalysis/valueOfConfoundingalpha4.csv};
        \end{axis};
        \end{tikzpicture}

     \vspace{-0.0cm}
	%\caption{Seller Competitive}
	\label{Fig:ValueGain:A4}
\end{subfigure}
\begin{subfigure}{.45\linewidth}
	\centering
	\setlength\figureheight{4.5cm}
	\setlength\figurewidth{.9\textwidth}
	    \begin{tikzpicture}[ trim axis left, trim axis right]
        \begin{axis}
            [
        clip=false,
        xlabel={$\phi_H$ },
        xlabel style={align=center},
        ylabel={$RG(\mu)$},
        ylabel style={rotate=-90,align = center, font=\small,at={(axis description cs:-.15,.5)}},
        yticklabel={\pgfmathparse{100*\tick}\pgfmathprintnumber{\pgfmathresult}\%},,
        xmin=-0.05, xmax=1.05,
        ymin=-0.005, ymax=.1,
        width=\figurewidth,
        height=\figureheight,
        tick align=outside,
        tick pos=left
        ]

        \node[align=center] at (axis  cs:.5,.093){\scriptsize $(a=.6,~b=.3,~\phi_L = .01)$};
        \node[align=center] (source01Max) at (axis  cs:.35,.07){\scriptsize Maximum};
        \node (destination01Max) at (axis cs:0.4, .051){};
        \draw[->](source01Max)--(destination01Max);
        
        \node[align=center] (source01Mean) at (axis  cs:.8,.02){\scriptsize Average};
        \node (destination01Mean) at (axis cs:0.6, .038){};
        \draw[->](source01Mean)--(destination01Mean);
        
        \node[align=left] (source01Minimum) at (axis  cs:.5,.013){\scriptsize Minimum};
        \node (destination01Minimum) at (axis cs:0.55, .0){};
        \draw[->](source01Minimum)--(destination01Minimum);

        \addplot+[ discard if not={phiL}{0.01},blue,mark size = 1,thick] table [x=phiH, y=amax, col sep=comma] {python/NumericAnalysis/valueOfConfoundingalpha6.csv};
        \addplot+[ discard if not={phiL}{0.01}, blue,dashed,no marks,thick] table [x=phiH, y=mean, col sep=comma] {python/NumericAnalysis/valueOfConfoundingalpha6.csv};
        \addplot+[ discard if not={phiL}{0.01}, blue,dotted, no marks,thick] table [x=phiH, y=amin, col sep=comma] {python/NumericAnalysis/valueOfConfoundingalpha6.csv};

        % \addplot+[ discard if not={phiL}{0.03}, red,dashed,no marks,thick] table [x=phiH, y=mean, col sep=comma] {python/NumericAnalysis/valueOfConfoundingalpha6.csv};
        % \addplot+[ discard if not={phiL}{0.03},red,no marks,thick] table [x=phiH, y=amax, col sep=comma] {python/NumericAnalysis/valueOfConfoundingalpha6.csv};

        % \addplot+[ discard if not={phiL}{0.05}, ForestGreen,dashed,no marks,thick] table [x=phiH, y=mean, col sep=comma] {python/NumericAnalysis/valueOfConfoundingalpha4.csv};
        % \addplot+[ discard if not={phiL}{0.05},ForestGreen,no marks,thick] table [x=phiH, y=amax, col sep=comma] {python/NumericAnalysis/valueOfConfoundingalpha4.csv};
        \end{axis};
        \end{tikzpicture}

     
	%\caption{Seller Preferred}
	\label{Fig:ValueGain:A6}
\end{subfigure}
\par\smallskip
\begin{subfigure}{.45\linewidth}
	\centering
	\setlength\figureheight{4.5cm}
	\setlength\figurewidth{.9\textwidth}
	    \begin{tikzpicture}[ trim axis left, trim axis right]
        \begin{axis}
            [
        clip=false,
        xlabel={$\phi_H$ },
        xlabel style={align=center},
        ylabel={$CCS(\mu)$},
        ylabel style={rotate=-90,align = center,at={(axis description cs:-.18,.5)},font=\small },
        yticklabel={\pgfmathparse{100*\tick}\pgfmathprintnumber{\pgfmathresult}\%},,
        xmin=-0.05, xmax=1.05,
        ymin=.96, ymax=1.007,
        width=\figurewidth,
        height=\figureheight,
        tick align=outside,
        tick pos=left
        ]
        
        \node[align=center] at (axis  cs:.5,1.003){\scriptsize {$(a=.4,b=.3,~\phi_L=.01)$}};;

        \node[align=center] (sourceMax) at (axis  cs:.35,.994){\scriptsize Maximum};
        \node (destinationMax) at (axis cs:0.5, 1){};
        \draw[->](sourceMax)--(destinationMax);

        \node[align=center] (sourceMin) at (axis  cs:.2,.97){\scriptsize Minimum};
        \node (destinationMin) at (axis cs:0.4, .98){};
        \draw[->](sourceMin)--(destinationMin);
        
        \node[align=center] (sourceAvg) at (axis  cs:.8,.992){\scriptsize Average};
        \node (destinationAvg) at (axis cs:0.7, .985){};
        \draw[->](sourceAvg)--(destinationAvg);
        
        \addplot+[ discard if not={phiL}{0.01}, blue,mark size = 1,thick] table [x=phiH, y=amin, col sep=comma] {python/NumericAnalysis/shortTermLossalpha4.csv};

        \addplot+[ discard if not={phiL}{0.01}, blue,no marks,dashed] table [x=phiH, y=mean, col sep=comma] {python/NumericAnalysis/shortTermLossalpha4.csv};

        \addplot+[ discard if not={phiL}{0.01}, blue,no marks,dotted] table [x=phiH, y=amax, col sep=comma] {python/NumericAnalysis/shortTermLossalpha4.csv};
        % \addplot+[ discard if not={phiL}{0.03},red,dotted,no marks,thick] table [x=phiH, y=mean, col sep=comma] {python/NumericAnalysis/shortTermLossalpha4.csv};

        \end{axis};
        \end{tikzpicture}

     \vspace{-0.0cm}
	%\caption{Seller Competitive}
	\label{Fig:PercentOfMyopic:A4}
\end{subfigure}
\begin{subfigure}{.45\linewidth}
	\centering
	\setlength\figureheight{4.5cm}
	\setlength\figurewidth{.9\textwidth}
	    \begin{tikzpicture}[ trim axis left, trim axis right]
        \begin{axis}
            [
        clip=false,
        xlabel={$\phi_H$ },
        xlabel style={align=center},
        ylabel={$CCS(\mu)$},
        ylabel style={rotate=-90,align = center,at={(axis description cs:-.18,.5)},font=\small },
        yticklabel={\pgfmathparse{100*\tick}\pgfmathprintnumber{\pgfmathresult}\%},,
        xmin=-0.05, xmax=1.05,
        ymin=.96, ymax=1.007,
        width=\figurewidth,
        height=\figureheight,
        tick align=outside,
        tick pos=left
        ]

        \node[align=center] at (axis  cs:.5,1.003){\scriptsize {$(a=.6,b=.3,~\phi_L=.01)$}};;

        \node[align=center] (sourceMax) at (axis  cs:.35,.994){\scriptsize Maximum};
        \node (destinationMax) at (axis cs:0.5, 1){};
        \draw[->](sourceMax)--(destinationMax);
        
        \node[align=center] (sourceMin) at (axis  cs:.2,.97){\scriptsize Minimum};
        \node (destinationMin) at (axis cs:0.4, .975){};
        \draw[->](sourceMin)--(destinationMin);
        
        \node[align=center] (sourceAvg) at (axis  cs:.8,.992){\scriptsize Average};
        \node (destinationAvg) at (axis cs:0.7, .982){};
        \draw[->](sourceAvg)--(destinationAvg);

        \addplot+[ discard if not={phiL}{0.01}, blue,mark size = 1,thick] table [x=phiH, y=amin, col sep=comma] {python/NumericAnalysis/shortTermLossalpha6.csv};

        \addplot+[ discard if not={phiL}{0.01}, blue,no marks,dashed] table [x=phiH, y=mean, col sep=comma] {python/NumericAnalysis/shortTermLossalpha6.csv};

        \addplot+[ discard if not={phiL}{0.01}, blue,no marks,dotted] table [x=phiH, y=amax, col sep=comma] {python/NumericAnalysis/shortTermLossalpha6.csv};

        % \addplot+[ discard if not={phiL}{0.03},red,dotted,no marks,thick] table [x=phiH, y=mean, col sep=comma] {python/NumericAnalysis/shortTermLossalpha6.csv};

        \end{axis};
        \end{tikzpicture}

     
	%\caption{Seller Preferred}
	\label{Fig:PercentOfMyopic:A6}
\end{subfigure}\vspace{-0.2cm}
\caption{Long-run value and short-term loss from confounding. The plots in the upper panels depict three measures of the relative gain $RG(\mu)$ %, the relative gain relative to the optimal truthful policy,
for a range of parametric specifications in  Example~\ref{example:UniformDemand}. The plots in the lower panels do so for the captured consumer surplus $CCS(\mu)$. % the consumer surplus captured out of the maximal one-period consumer surplus that is achievable.
For each specification, we show the maximum, average, and minimum (over a grid of $\mu \in [0,1]$) values of $RG(\mu)$ and $CCS(\mu)$. %, which are, respectively, the relative gain in consumer surplus from the optimal simple confounding policy relative to the optimal truthful policy, and the percent of the maximum one-period consumer surplus that is captured by the optimal simple confounding policy.
The parameters for the left-hand plots reflect a demand model where valuations for the seller's product and its alternative are comparable. In the right-hand plots the seller's product is more often preferred by consumers.}\vspace{-0.2cm}
\label{Fig:ValueGain}
\end{figure}

Moreover, the optimal simple confounding promotion policy can be nearly optimal even in the short run. Define the fraction of \emph{one-period consumer surplus} that can be \emph{captured} by a confounding policy out of the maximum one-period consumer surplus as\vspace{-0.1cm}
$$
CCS(\mu):=\frac{co(W^C)(\mu)}{W^{\max}(\mu)}.\vspace{-0.1cm}
$$
The lower panels of Figure~\ref{Fig:ValueGain} %\ref{Fig:PercentOfMyopic}
depict the maximum, average, and minimum values of the \emph{captured consumer surplus} $CCS(\mu)$ over a grid of beliefs $\mu \in [0,1]$ for two parametric specifications of the demand model in  Example~\ref{example:UniformDemand}. One may observe that the optimal simple confounding policy, even in the worst case, captures nearly $97\%$ of the maximum one-period surplus, and oftentimes achieves an even better performance. %We note that the parametric setting depicted in Figure~\ref{Fig:PercentOfMyopic} is, in fact, nearly the worst-case as the platform incurs smaller losses for larger $\phi_L$. In Figure~\ref{Fig:ValueGain}, the minimum value always occurs at a belief in the range $[0.43,0.49]$, and the maximum always equals $100\%$ since the consumer surplus under the two policies is equal at $\mu=0$ and $\mu=1$.
%\vspace{-0.1cm}

%\begin{figure}[!htbp]
%    \centering
%\begin{subfigure}{.47\textwidth}
%    \centering
%    \setlength\figureheight{4.25cm}
%    \setlength\figurewidth{.9\textwidth}
%    \input{Plots/confDividedByMyopic4.tikz}\vspace{-0.0cm}
%    \caption{Seller Competitive}
%    \label{Fig:PercentOfMyopic:A4}
%\end{subfigure}
%\begin{subfigure}{.47\textwidth}
%    \centering
%    \setlength\figureheight{4.25cm}
%    \setlength\figurewidth{.9\textwidth}
%    \input{Plots/confDividedByMyopic6.tikz}
%    \caption{Seller Preferred}
%    \label{Fig:PercentOfMyopic:A6}
%\end{subfigure}
%\caption{Short-term loss from confounding. The plots depict three measures of $CCS(\mu)$ for a range of parametric specifications of Example~\ref{example:UniformDemand}. For each specification, we show the maximum, average, and minimum (over a grid of $\mu \in [0,1]$) value of $CCS(\mu)$, which is the percent of the maximum one-period consumer surplus that is captured by the optimal simple policy. {The parameters for the left-hand plot reflect a demand model where valuations for the seller's product and its alternative are comparable. In the right hand plot the seller's product is more often preferred by consumers.}}\vspace{-0.0cm}
%\label{Fig:PercentOfMyopic}
%\end{figure}

\subsection{The Value of Signaling} \label{sec: no_signaling}\vspace{-0.1cm}

The characterization of Theorem~\ref{thm:LongRunAverageOptimalConsumerSurplus} highlights that confounding promotion policies are key to maximizing the long-run average optimal consumer surplus, as they control the information collected by the seller. However, on their own, these policies do not guarantee long-run average optimality. Instead, they must be paired with a signaling mechanism that adjusts the seller's prior belief to one at which confounding is optimal (in expectation). Thus, for a range of prior beliefs, the platform can generate the optimal long-run average consumer surplus by revealing (some) information and then confounding the seller.

However, to do so, the platform must carefully design a signaling mechanism~$\sigma$ that impacts the seller's belief in a certain way, which might be infeasible in some contexts (e.g., it might  require communicating with the seller directly and in a randomized fashion). We next consider the setting where the platform has no ability to send an initial informative signal to the seller and may only optimize over its promotion policy. We find that even in the absence of an informative signaling mechanism, the platform may still achieve good performance in the long run. Formally, the next result establishes that the platform can still approach the maximum long-run average consumer welfare of $co(W^C)(\mu_0)$ by carefully designing its promotion policy.
	
	\begin{theorem}\label{prop: noinfodesign}
		%\begin{manualprop}{\ref{prop: noinfodesign}}
		Fix $\epsilon>0$, and suppose that the platform's signaling mechanism is uninformative (i.e., $\sigma = \sigma^U$). Then, given $\mu_0\in [0,1]$, there exists a promotion policy $\bA$ such that\vspace{-0.1cm}
		\begin{equation*}
			\lim_{T\rightarrow \infty} ~~\frac{1}{T}W_T^{\bA,\sigma^U,\boldsymbol{\pi^*}}(\mu_0) \geq co(W^C)(\mu_0) -\epsilon.
		\end{equation*}
		%\end{manualprop}
	\end{theorem}
	\vspace{-0.25cm}

The proof of Theorem~\ref{prop: noinfodesign} is provided in Appendix~\ref{app: NoInfoDesign}. Importantly, the proof shows that without initial information signaling, the platform cannot rely anymore on confounding promotion policies, and must instead deploy a more complex promotion policy to approximate the long-run average payoff of $co(W^C)(\mu_0)$. Intuitively, if one removes the platform's ability to influence the starting point of the seller's belief process, setting the drift of the belief process to zero (as in Theorem~\ref{thm:LongRunAverageOptimalConsumerSurplus}) is no longer optimal. Instead, the platform's policy needs to dynamically control the drift of the seller's belief process so that it approaches a desirable limit. As a result, the platform may require a dynamic promotion policy that is complex to calculate and implement relative to the static policy presented in \S\ref{sec: OptSimplePolicies}. Particularly, that dynamic policy might not be confounding for an initial number of periods, which requires the platform to track the evolution of the seller's belief process over time.
	
In summary, the ability to send an initial signal reduces the complexity of the platform's design problem considerably, as it allows it to achieve long-run average optimality using simple confounding policies, which can be tractably designed by following the approach described in \S\ref{sec: OptSimplePolicies}. In addition, signaling ability enables the platform to achieve an expected payoff of $co(W^C)(\mu_0)$ in every period, whereas without such ability the platform might only be able to approach $co(W^C)(\mu_0)$ asymptotically.

\section{Competition between Sellers}\label{sec: TwoSellers}\vspace{-0.1cm}

So far, our analysis has focused on a single seller's pricing decisions, while  keeping his competitors' prices fixed throughout the horizon. In the presence of price competition, sellers may react to prices set by their competitors, and learn not only from their own sales %or from the information provided by the platform, 
but also from their competitors' actions. To capture these interactions we extend the model presented in~\S\ref{sec:Model} to a setting where two sellers compete in prices in every period.  We study three platform policies that naturally extend our analysis of \S\ref{sec: preliminary_analysis}--\ref{sec:LongRunAverageOptimalConsumerSurplus}: truthful revelation of $\phi$ to both sellers, confounding only one seller, and confounding both sellers. We compare the performance of these policies numerically for a demand structure that extends  Example~\ref{example:UniformDemand}. We find that while confounding both sellers is typically infeasible or suboptimal, the platform may often benefit from confounding one of the sellers.
We next discuss the extension and the key insights derived; a summary is presented in Table~\ref{tab: competiton_policy_comparison}, and complete details of the model and results are deferred to Appendix~\ref{app: TwoSellers}.

\begin{table}[!hbtp]
	\centering\small
	\begin{tabular}{|m{3.5cm} m{9cm}|}
		\hline
		\textbf{Platform Policy} & \textbf{Main Findings} \\ \hline %\hline
		Truthful revelation to both sellers     & - Always feasible; outperforms confounding in $54\%$ of the  parameter combinations\\
\hline
\multirow{3}{*}{Confounding one seller}
    & 
   		- Always feasible; outperforms truthful revelation in $45\%$ of the parameter combinations\\
   		& - Suffices to consider confounding the seller with relatively higher demand and revealing~$\phi$ to the seller with lower demand\\ 
\hline
		Confounding both sellers   & - Usually infeasible (see Proposition~\ref{prop: impossbility_confounding_two_sellers}); 
		suboptimal in more than 99\% of the  parameter combinations\\
\hline
	\end{tabular}
	\caption{Summary of the main findings %that result from comparing policies involving truthful revelation vs. confounding in the model 
with competition between two sellers.} \vspace{-0.25cm}
	\label{tab: competiton_policy_comparison}
\end{table}\normalsize 

We extend the setting described in \S\ref{sec:Model} to two sellers as follows. At the beginning of the horizon, the platform adopts a promotion policy determining which seller will be promoted in each period given both sellers'  prices, and a signaling mechanism that provides each seller with his own private signal about the value of $\phi$. Then, in every period, the sellers set their prices taking into account their initial signal, the platform's promotion policy, and the history so far. {Then, the  sellers observe their own sales realization and their  competitor's price, and use both pieces of information to update their beliefs about the fraction of impatient consumers.}

%\del{We consider two sellers in the same setting as described in \S\ref{sec:Model}: the platform sets a promotion policy at the beginning of the horizon, determining which seller will be promoted given their prices in each period, and a signaling mechanism that may provide each seller with private information about the value of $\phi$. Then, in every period, sellers set their prices taking into account the platform's promotion policy, observe their sales realizations and competitor's prices, and update their beliefs about the fraction of impatient consumers $\phi$.}

Using similar notation as in \S\ref{sec:Model}, we assume that the probability of purchasing from seller  $i\in \{1,2\}$ is captured by a function $\rho_i$ that depends on both sellers' prices $p_1$ and $p_2$, the consumer's type $\psi\in\{I,P\}$ where $\psi = P$ ($\psi=I$) denotes that the consumer is patient (impatient), and the platform's promotion decision~$a\in\{0,1,2\}$ (where $a=0$ represents not promoting either seller):
\begin{equation}\label{eq: two_seller_demand_rho_main}
	\rho_i(p_1,p_2,a, \psi) =
	\begin{cases}
		\bar{\rho}_c^i(p_1,p_2), &\text{ if $\psi=P$,} \\
		\bar{\rho}_0^i(p_i), &\text{ if $\psi=I$} \text{ and } a = i, \\
		0, &\text{ if $\psi=I$} \text{ and } a \neq i.
	\end{cases}
\end{equation}
Thus, the functions $\rho_i$ extend the demand function defined in \eqref{eq:ConsumerDemand:Rho} to the case of two sellers. We assume that the demand model satisfies the following conditions, which are analogous to Assumption~\ref{assump:Demand}:

\begin{assumption}[Two-seller Demand]\label{assump:TwoSellerDemand}
	For each $i=1,2$, the function $\bar{\rho}_0^i(p_i)$ is decreasing and Lipschitz continuous in $p_i$; and the function $\bar{\rho}_c^i(p_i,p_{-i})$ is decreasing in $p_i$, increasing in $p_{-i}$, and Lipschitz continuous in $(p_i,p_{-i})$. In addition,  $p_i\bar{\rho}_c^i(p_i,p_{-i})$ and $p_i\bar{\rho}_0^i(p_i)$ are strictly concave in $p_i$ for any fixed value of $p_{-i}$. Finally, $\bar{\rho}_0^i(p_i) \geq \bar{\rho}_c^i(p_i,p_{-i})$ for all $p_i\in\mathcal{P}_i$, $p_{-i}\in\mathcal{P}_{-i}$.
\end{assumption} \vspace{-0.25cm}

Similarly, the consumer welfare function in this setting depends on both sellers' prices, and we denote the platform's per-period payoff by a function $W = W(p_1,p_2,a, \psi)$ that extends \eqref{eq:ConsumerWelfare:W} to the case of two sellers.  %The extension to the model and the notion of myopic Bayesian Nash pricing are defined in Appendix~\ref{sec: model_two_sellers}.

\subsection{Confounding and Truthful Policies with Two Sellers}\label{sec: two_sellers_policy_classes}

In line with the previous analysis we focus on a setting where sellers price myopically given their beliefs, but now sellers also take into account the price set by their competitor when making their pricing decisions; we therefore assume that in each period sellers price according to a (myopic) Bayesian Nash equilibrium.% \ds{clarify how beliefs are updated. private signals, can use different signaling but same promo.} \ilan{This is two paragraphs above "We consider two sellers in the same setting..." I think it's clear but lmk if not.}

We consider three natural classes of platform policies. The first one is \textit{truthful revelation,} under which the platform discloses the value of $\phi$ to both sellers at the beginning of the horizon and employs an optimal promotion policy thereafter. The second family is designed to \textit{confound one seller only,} which extends the notion of confounding policies defined in \S\ref{sec:ConfoundingPromotionPolicies} to the setting with two sellers. These policies are designed to disclose the value of $\phi$ to one seller while preventing the other (uninformed) seller from updating his belief  throughout the horizon.
%prevent one of the sellers from updating beliefs throughout the horizon while disclosing the value of $\phi$ to the other seller.
 The third class of policies involves \textit{confounding both sellers;} these policies also extend the notion of confounding policies, but they are instead designed to keep the beliefs of both sellers constant by keeping them uninformed of the value of $\phi$. %We formally define these policies and discuss some of their properties in Appendix~\ref{sec: policies_definitions_two_sellers}.
Within each of these classes, building on \S\ref{sec: OptSimplePolicies}, we focus on static and single-price promotion policies. This allows us to simplify the dynamics of the game to a case where the equilibrium prices set by the sellers and the platform's promotion policy are the same in every period, and makes the analysis of consumer  welfare induced by these policies tractable.

We compare the effectiveness of these classes of policies by: $(i)$ characterizing the optimization problems that correspond to optimal static and single-price promotion policies that are designed to confound one or both sellers, when sellers price according to a myopic Bayesian Nash equilibrium in every period; and then $(ii)$ solving for these optimal policies and comparing their performances numerically. This approach directly extends the one described in \S\ref{sec: OptSimplePolicies} to the case of two sellers. It is based on solving a static optimization problem that is similar to \eqref{eq: ConfoundingProblemSimple}, where the constraints of the problem now depend on the subset of sellers that the platform seeks to confound. This procedure highlights a key difference in the design of policies that confound one seller with respect to the single-seller setting. Namely, as both sellers observe each other's pricing decisions (as well as their own sales), the platform must take into account both sellers' actions and incentives when designing its policy, even if the goal is to confound only one of the sellers.

To illustrate this, suppose that the platform chooses to confound seller 1 and reveals the value of $\phi$ to seller~2. To keep seller 1's belief constant, the platform must design its policy in a way that not only eliminates the informational content of seller 1's demand realizations (e.g.,  as in \eqref{eq:EqualityOfSalesProb}), but also induces seller 2 to select prices that are uninformative about the true value of $\phi$. Concretely, if in a given time period seller 2's price is $p_2^H$ if $\phi = \phi_H$ and $p_2^L$ otherwise, the following two conditions must hold in order to confound seller 1:
\begin{equation*}
	\phi_H\alpha_{\phi_H}^1\bar{\rho}^1_0(p_1) + (1-\phi_H)\bar{\rho}^1_c(p_1,p_2^H) = \phi_L\alpha_{\phi_L}^1\bar{\rho}^1_0(p_1) + (1-\phi_L)\bar{\rho}^1_c(p_1,p_2^L), \quad \text{and} \quad
	p_2^H  = p_2^L,
\end{equation*}
where $\alpha_{\phi_H}^1,\alpha_{\phi_L}^1\in[0,1]$ denote the state-dependent probabilities of promoting seller 1. In particular, the platform's policy must incentivize seller 2 to set the same price regardless of the value of $\phi$. %Thus, as both sellers observe each other's pricing decisions, the platform must take into account both sellers' actions and incentives when designing its policy, even if the goal is to confound only one of the sellers.

Furthermore, we find that there is a fundamental difference between the two classes of confounding policies that we consider: while it is always feasible to confound one of the sellers (under Assumption~\ref{assump:TwoSellerDemand}), there is a wide range of demand models under which confounding both sellers is infeasible. Formally, we establish the following existence result for policies that confound one seller.

\begin{proposition}\label{prop: existence_confound_one_policy}
Suppose that seller 2 observes the value of $\phi$ and that Assumption~\ref{assump:TwoSellerDemand} holds. % (see Appendix~\ref{app: TwoSellers}).
Then, there exists a simple promotion policy that confounds seller 1 given any belief $\mu\in[0,1]$ of seller 1.
\end{proposition}
\vspace{-0.25cm}

By contrast, we show that for a broad range of demand models (i.e., that satisfy \eqref{eq: impossibility_confounding_maintext} below), there exist model parameters for which confounding both sellers is infeasible.

\begin{proposition} \label{prop: impossbility_confounding_two_sellers}
	Suppose that both sellers have an initial belief of $\mu\in(0,1)$ and that the demand model is such that for any prices $p_1,p_2$ one has that\vspace{-0.1cm}
	\begin{equation} \label{eq: impossibility_confounding_maintext}
		\frac{\bar{\rho}^1_c(p_1,p_2)}{\bar{\rho}^1_0(p_1)}+\frac{\bar{\rho}^2_c(p_1,p_2)}{\bar{\rho}^2_0(p_2)} > 1.\vspace{-0.1cm}
	\end{equation}
	Then, for all $\phi_L$ small enough, there exists no promotion policy that confounds both sellers.
\end{proposition}
\vspace{-0.25cm}

The condition in \eqref{eq: impossibility_confounding_maintext} essentially requires the aggregate demand when consumers face two sellers to be larger than when consumers have only one seller to buy from. This condition is natural in most settings of interest, and is satisfied by a broad range of demand models, including the uniform willingness to pay model from Example~\ref{example:UniformDemand}, logit demand models, and many others.
%We note that \eqref{eq: impossibility_confounding_maintext} is satisfied by a broad range of demand models (we provide a number of examples in Appendix~\ref{sec: two_sellers_confound_both}).
%Taken together, Propositions~\ref{prop: existence_confound_one_policy} and \ref{prop: impossbility_confounding_two_sellers} already suggest that policies that confound one seller have an advantage over those that confound both, due to their feasbility of implementation.
In the next section, we illustrate that policies that confound one seller are also typically superior in terms of the consumer surplus they generate.

\subsection{Evaluation of Confounding Policies} \label{sec: two_sellers_policy_comparison}\vspace{-0.1cm}

To analyze the value captured by confounding promotion policies in the setting with two competing sellers, we extend the demand model of Example~\ref{example:UniformDemand} to allow seller 2's price to vary (instead of fixing it at zero). Concretely, given parameters $a,b\in[0,1]$ and sellers' prices  $p_1\in[0,a],p_2\in[0,b]$, the probability of purchasing from seller 1 (see $\rho_1$ in \eqref{eq: two_seller_demand_rho_main}) is characterized by the functions $\bar{\rho}_c^1$, $\bar{\rho}_0^1$ defined as
\begin{equation} \label{eq: rho_1_uniform_two_sellers_main}
		\bar{\rho}_0^1(p_1) = a-p_1, \qquad \bar{\rho}_c^1(p_1,p_2) = 	\begin{cases}
			(a-p_1)(1-b+p_2)+\frac{1}{2}(a-p_1)^2, &\text{ if $p_1-p_2 \geq a-b$,} \\
			a-p_1 -\frac{1}{2}(b-p_2)^2, &\text{ if $p_1-p_2 < a-b$,}
		\end{cases}
\end{equation}
where the corresponding functions for seller 2 are defined by switching $a$ and $b$, as well as $p_1$ and $p_2$ in \eqref{eq: rho_1_uniform_two_sellers_main}. Under this demand model, we numerically solve the optimization problems that correspond to the optimal policy from each class (truthful, confound one, confound both) when sellers price in each period according to a  Bayesian Nash equilibrium (see equations~\eqref{eq: max_welfare_cond_state}--\eqref{eq: max_welfare_confound_both} in Appendix~\ref{sec: policies_definitions_two_sellers}) and compare the maximum consumer welfare that is achievable in each of these classes across a broad range of demand parameters.

%and compare the classes of policies described in \S\ref{sec: two_sellers_policy_classes} in terms of the consumer welfare they can induce in equilibrium. %Concretely,. %That is, we assume that consumers have independent valuations for the products offered by each seller, and the distributions of these valuations are, respectively, uniform in $[a-1,a]$ and $[b-1,b]$, where $a,b\in[0,1]$ parametrize these preferences.
%For this concrete model, we numerically compute optimal simple platform policies within each of the three classes and compare the welfare they induce over a broad range of parameters. %Our analysis results in three main insights; we provide more details in Appendix~\ref{sec: example_analysis_two_sellers}.

%Our findings split the parameter space in two broad regions. In the first one, that consists of 54.32\% of the parameter combinations, confounding promotion policies are not beneficial for the platform which is better off under truthful revelation. In the second region, composed of 45.43\% of parameter combinations, confounding only one seller outperforms truthfully disclosing $\phi$ to both sellers. Importantly, we observe that confounding both sellers almost never dominates truthful revelation (which occurs only in 0.25\% of parametric combinations we tested). These observations extend our findings of the single-seller model to a setting where more than one seller operates, as they demonstrate that confounding promotion policies are a valuable tool for the platform even when sellers compete in prices on every period.

First, %we find that confounding both sellers is either infeasible, or typically dominated by truthful revelation. Formally, we establish that for any demand model that satisfies a mild assumption, there exist instances for which confounding both sellers is infeasible.
%In our numerical analysis
we find that in the vast majority of cases (in 99.75\% of the parameter combinations), confounding both sellers is either infeasible or outperformed by truthful revelation; see the example in the left panel of Figure~\ref{fig:competition_charts_main}. While, given Proposition~\ref{prop: impossbility_confounding_two_sellers}, the frequent infeasibility of such policies might not be surprising, the typical suboptimality of confounding both sellers, even when feasible, illustrates that such policies considerably restrict the platform's actions.%, resulting in lower achievable surplus.

By contrast, we observe that confounding one seller often outperforms truthfully revealing the value of $\phi$ to both sellers (specifically, in 45.46\% of the parameter combinations). That is, by employing a signaling mechanism that discloses the true state to one seller, together with a promotion policy that ensures that the belief of the other seller remains constant throughout the horizon, the platform can often generate larger surplus than under truthful revelation to both sellers. This finding illustrates that designing the promotion policy to eliminate the information provided by sales observations can generate surplus gains even with competition between  sellers, and shows that the intuition behind Theorem~\ref{thm:LongRunAverageOptimalConsumerSurplus} applies here as well.

Together, these findings showcase that typically the platform should consider either confounding one seller or neither of them. However, these findings also illustrate that, whether or not confounding one seller is beneficial relative to truthful revelation might depend on the particular demand parameters at hand. Therefore, these results also highlight the practical benefit of reducing the policy design problem to one that focuses on simple confounding policies (in particular, static and single price)  and the procedure of identifying such policies (as illustrated in \S\ref{sec: OptSimplePolicies} and here).

%
%\begin{figure}[!htbp]
%	\centering
%	\includegraphics[width=0.9\linewidth]{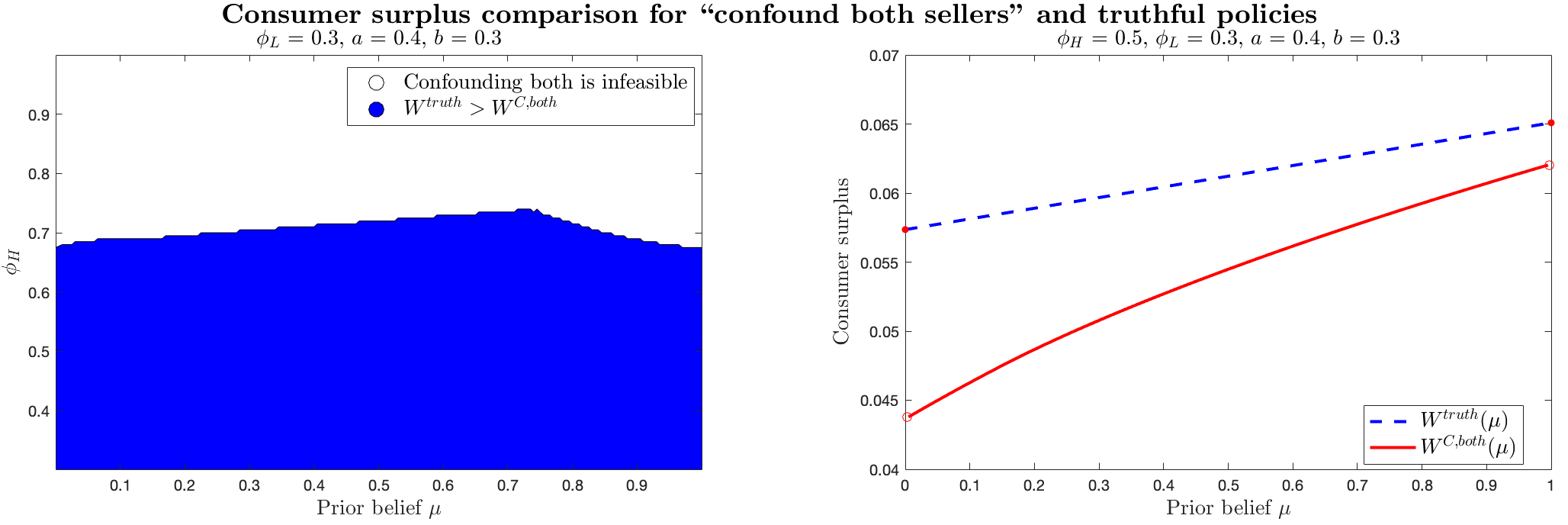}
%	\caption{Comparison of welfare achievable by truthful revelation and ``confound both sellers'' policies. The left panel shows a (typical) case where  confounding both sellers is either unfeasible or outperformed by truthful revelation depending on the parameters of the model. The right panel displays the relationship between the welfare achievable by these policies and the prior belief $\mu$ for a fixed combination of parameters.}
%	\label{fig:confoundbothvstruthful_main}
%\end{figure}

Finally, when confounding one seller, we also consider the platform's choice of which seller to confound. Our analysis shows that when confounding one of the sellers outperforms truthful revelation, it is optimal to confound the seller with relatively higher demand. Concretely, we observe that if the parameters of the demand model (see \eqref{eq: rho_1_uniform_two_sellers_main}) satisfy $a>b$, a policy that optimally confounds seller 1 outperforms confounding seller 2; see the example in the right panel of Figure~\ref{fig:competition_charts_main}. Intuitively, by revealing the true state to the seller with lower demand, the platform is able to incentivize the seller with high demand (who carries  more weight in the consumer welfare function) to keep his price low, thereby generating a consumer surplus gain that outweighs the cost of a relatively higher price by the low-demand seller.

\begin{figure}[!hbtp]
	\centering
	\includegraphics[width=0.95\linewidth]{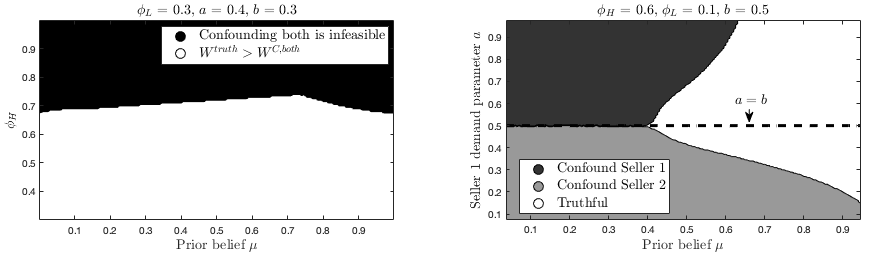}\vspace{-0.1cm}
	\caption{\textbf{Left:} Comparison of the consumer surplus achievable by truthful revelation vs. a policy that confounds both sellers, as a function of $\mu$ and $\phi_H$ with the rest of the  parameters fixed. Typically (for 99.75\% of the combinations we tested), confounding both sellers is either infeasible (black) or outperformed by truthful revelation (white). \textbf{Right:} Split of parameter space by the optimal seller to confound, as a function of $\mu$ and $a$ with the rest of the parameters fixed. When $a>b$, the platform's optimal choice is either to confound seller 1 or to reveal the state $\phi$ to both sellers, while the opposite holds when $b>a$. This indicates that to maximize consumer surplus, the platform should consider only whether to confound the seller with relatively high demand while revealing the value of $\phi$ to the seller with lower demand, or to disclose the value of $\phi$ to both sellers.	
	}
	\label{fig:competition_charts_main}
\end{figure}

%Taken together, these insights show that confounding promotion policies are also valuable for the platform in the setting with competing sellers, although they must be designed to target the appropriate seller to confound.

%\begin{figure}[!htbp]
%	\centering
%	%\includegraphics[width=\linewidth]{}
%	\includegraphics[width=\linewidth]{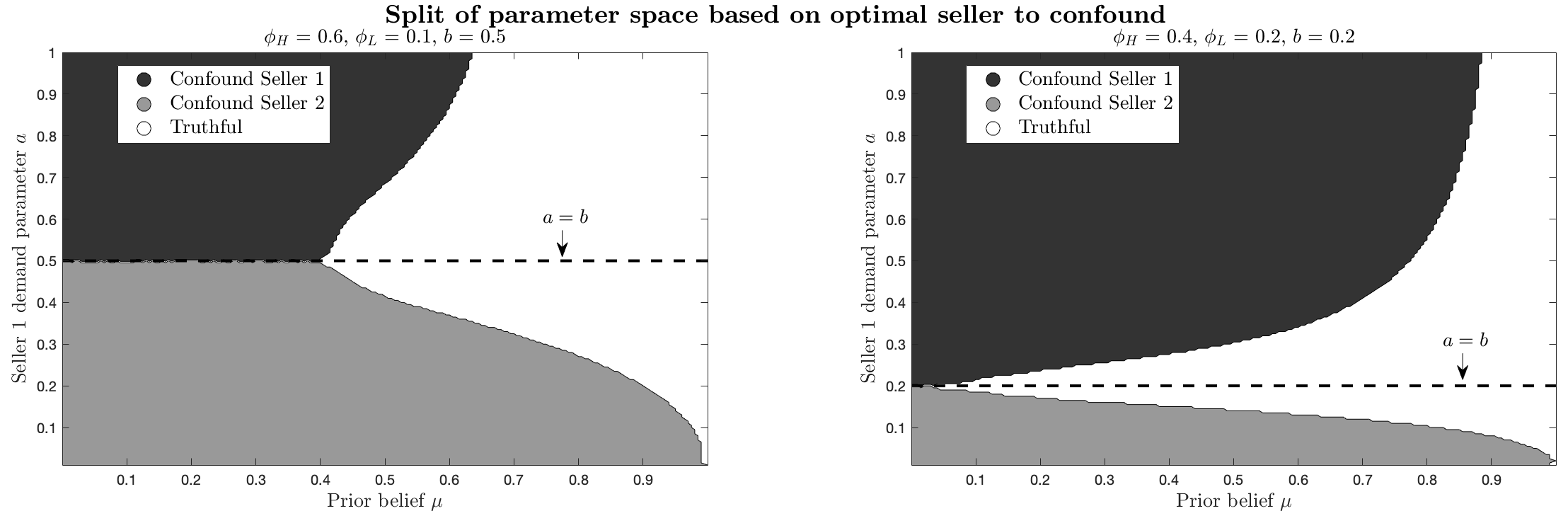}
%	\caption{Split of the parameter space according to the optimal seller to confound, if any, as a function of $\mu$ and $a$ with the rest of parameters fixed. When $a>b$, the platform's optimal policy is either to only confound seller 1 or to truthfully reveal the state $\phi$ to both sellers, while the opposite holds when $b>a$. This indicates that in order to maximize consumer welfare, the platform should consider only whether to confound the seller with relatively high demand while revealing the true value of $\phi$ to the seller with lower demand, or to disclose the value of $\phi$ to both sellers.}
%	\label{fig:whotoconfound_main}
%\end{figure}

\section{Equilibrium Analysis}\label{sec:EquilibriumAnalysis}

So far we have analyzed joint information and promotion policy design under the assumption that the seller makes myopic pricing decisions. %uses a Bayesian myopic pricing policy, $\boldsymbol{\pi^*}$.
In this section we consider the general set of non-anticipating pricing policies~$\Pi$ in the single-seller model (defined in~\S\ref{sec:Model}). We characterize a Bayesian Nash equilibrium by establishing that pricing myopically is in fact the seller's best response to a long-run average optimal platform policy. %characterized in Theorem~\ref{thm:LongRunAverageOptimalConsumerSurplus}.
Moreover, this equilibrium maximizes the long-run average consumer surplus (i.e., it is platform-optimal) within the set of equilibria where players use strategies that maximize their worst-case payoff.

\subsection{Bayesian Nash Equilibrium}\label{subsec:equilibrium}

In what follows, we show that the outcome in which the platform employs a confounding policy that generates an expected average consumer surplus of $co(W^C)(\mu_0)$, and where the seller makes myopic pricing decisions, can be supported as a Bayesian Nash equilibrium, which is defined as follows.

\begin{definition}\label{def: BayesianNashEqm}
	Fix the time horizon $T\geq 1$ and the prior belief $\mu_0$. A Bayesian Nash equilibrium $(\bA, \sigma, \bP)$ consists of a platform's strategy $(\bA, \sigma)$ and a seller's pricing policy $\bP$ such that: \vspace{-0.25cm}
	\begin{enumerate}[label=(\roman*)]
		\item The platform's strategy $(\bA, \sigma)$ is a best response to the seller's pricing policy $\bP$; i.e., for any platform's strategy $(\bA', \sigma')$ we have that %\vspace{-0.2cm}
		\begin{equation}\label{def:equilibrium:platform}
			W^{\boldsymbol{\alpha},\sigma,\boldsymbol{\pi}}_T(\mu_0) \geq W^{\boldsymbol{\alpha'},\sigma',\boldsymbol{\pi}}_T(\mu_0). %\vspace{-0.2cm}
		\end{equation}
	\item At the beginning of the first period, the seller's pricing policy $\bP$ is a best response to the platform's strategy $(\bA, \sigma)$; i.e., given any signal realization $s\in S$ and any pricing policy $\bP'$, we have that %\vspace{-0.2cm}
	\begin{equation}\label{def:equilibrium:seller:firstperiod}
		V_{1,T}^{\boldsymbol{\alpha},\sigma,\boldsymbol{\pi}} \left(\langle s,\boldsymbol{\alpha},\sigma \rangle\right) \geq V_{1,T}^{\boldsymbol{\alpha},\sigma,\boldsymbol{\pi}'} \left(\langle s,\boldsymbol{\alpha},\sigma \rangle\right). %\vspace{-0.2cm}
\end{equation}
%	\item At any point in time, the seller's pricing policy $\bP$ is a best response to the platform's strategy $(\bA, \sigma)$, i.e., for any time $t=1,\dots T$, any history $\bar{h}\in \bar{H}_t$, and any pricing policy $\bP'$ we have that
%	\begin{equation}\label{def:equilibrium:seller}
%		V_{t,T}^{\boldsymbol{\alpha},\sigma,\boldsymbol{\pi}} \left(\langle\boldsymbol{\alpha},\sigma,\bar{h} \rangle\right) \geq V_{t,T}^{\boldsymbol{\alpha},\sigma,\boldsymbol{\pi}'} \left(\langle\boldsymbol{\alpha},\sigma,\bar{h} \rangle\right).
%	\end{equation}
	\end{enumerate}
\end{definition}
\vspace{-0.25cm}

The following theorem establishes that for any $T\geq 1$, there exists a Bayesian Nash equilibrium $(\boldsymbol{\alpha},\sigma,\boldsymbol{\pi})$ where the seller's best response to the platform policy is to price myopically each period, and the platform {follows a confounding policy that} generates an  average consumer surplus equal to $co(W^C)(\mu_0)$.

{
\begin{theorem}[Bayesian Nash Equilibrium]\label{thm:epsilonEq}
Fix $T\geq 1$ and a prior belief $\mu_0$. Then, there exists a Bayesian Nash equilibrium $(\bA,\sigma,\bP)\in\mathcal{A} \times \Sigma \times \Pi$ such that (i) $\bP$ induces the seller to price myopically in %the equilbrium path
	every period (i.e.,  according to \eqref{eq:myopic}), and (ii) the platform uses  a simple confounding
	promotion policy on the equilibrium path, which results in an average consumer surplus of $co(W^C)(\mu_0)$, i.e.,
	$
	 \frac{1}{T}W^{\boldsymbol{\alpha},\sigma,\boldsymbol{\pi}}_T(\mu_0) = 	co(W^C)(\mu_0).
	$
\end{theorem}}
\vspace{-0.2cm}
Although the definition of Bayesian Nash equilibrium only requires the seller's pricing policy to be a best response to the platform's policy at the beginning of the first period (condition~\eqref{def:equilibrium:seller:firstperiod}), the equilibrium that we construct to prove Theorem~\ref{thm:epsilonEq} satisfies a stronger dynamic consistency property, in the sense that the seller's pricing policy is a best response to the platform's strategy at any point in time. Formally, it satisfies that for any time $t=1,\dots T$, history $\bar{h}\in \bar{H}_t$, and pricing policy $\bP'\in \Pi$, we have that  \vspace{-0.2cm}
	\begin{equation}\label{def:equilibrium:seller}
		V_{t,T}^{\boldsymbol{\alpha},\sigma,\boldsymbol{\pi}} \left(\langle\boldsymbol{\alpha},\sigma,\bar{h} \rangle\right) \geq V_{t,T}^{\boldsymbol{\alpha},\sigma,\boldsymbol{\pi}'} \left(\langle\boldsymbol{\alpha},\sigma,\bar{h} \rangle\right). \vspace{-0.2cm}
	\end{equation}
Note that while Theorem~\ref{thm:LongRunAverageOptimalConsumerSurplus} assumes that the seller sets prices myopically (according to $\boldsymbol{\pi^*}$) and establishes the optimality of confounding promotion policies, Theorem~\ref{thm:epsilonEq} establishes a finite-time equilibrium without such an assumption. However, confounding promotion policies are once again key to the result, as it is established by showing that equilibria are maintained by a confounding policy.

To prove Theorem~\ref{thm:epsilonEq}, we leverage the procedure described in \S\ref{sec: OptSimplePolicies} to design an optimal confounding policy. Specifically, given any %optimal simple
	confounding promotion policy $\bA^C$ that is constructed by solving problem \eqref{eq: ConfoundingProblemSimple}, we construct a promotion policy $\bA \in \mathcal{A}$ that is equal to $\bA^C$ along the equilibrium path and that guarantees, for all~$\mu\in[0,1]$, an expected consumer surplus of $W^C(\mu)$ in  each period, given that the seller follows a myopic pricing policy. We then prove that an optimal signal exists by adapting standard analysis in information design to our setting, which establishes that the platform policy generates an expected consumer surplus of  $co(W^C)(\mu)$ in each period. As we do not require sequential rationality from the seller in response to non-equilibrium platform policies, we construct a pricing policy $\bP$ that is myopic at all histories $\langle \bA,\sigma, \bar{h}\rangle$, but prices at~$p^*$ (see~\eqref{eq:pStar:OutsideOpt}) at all other histories. Thus, the platform cannot improve consumer surplus by deviating from~$(\bA,\sigma)$. We show that this pricing policy is a \emph{best-response} to the platform's policy in all time periods by showing that the seller's expected continuation payoff weakly decreases in any unilateral deviation. % from pricing myopically. %\ilan{The strategy profile $(\bA,\sigma,\bP)$ that results from this construction satisfies conditions \eqref{def:equilibrium:platform} and \eqref{def:equilibrium:seller}, which imply, in particular, that it is a Bayesian Nash Equilibrium.}
{As a result, we establish that for any optimal simple confounding policy that is constructed using the approach described in \S\ref{sec: OptSimplePolicies}, there exists a Bayesian Nash equilibrium in which the platform's promotion decisions follow this policy, and where the seller best responds by pricing myopically in every period.}

% Alternatively, one may establish an Approximate Bayesian Nash Equilibrium in finite time where the seller %uses the pricing policy $\boldsymbol{\pi^*}$ and
% prices myopically at all histories. It is approximate because if the seller commits to $\boldsymbol{\pi^*}$, as discussed in \S\ref{sec:LongRunAverageOptimalConsumerSurplus}, the platform could increase average consumer surplus by deviating to an alternative policy, but only by a small amount which diminishes over time.

%The promotion policy is designed so that if the seller deviates from pricing myopically, then his payoff in each period equals the maximum expected revenue from selling exclusively to patient consumers: $(1-\bar{\phi}(\mu_t))p^*\bar{\rho}_c(p^*)$. One may observe that this revenue is linear in $\mu_t$. Since the seller's beliefs $\mu_t$ are a martingale (as the seller is Bayesian) and the seller's expected continuation value is linear in $\mu_t$, in expectation, the acquired information generates no value for the seller, and myopic pricing is optimal.

Moreover, Theorem~\ref{thm:epsilonEq} implies that ``semi-myopic"  policies that price myopically unless the resulting price is confounding given the seller's belief (see, e.g., \citealt{Harrison2012}), might not be effective %\ilanst{deviations }
for the seller. When the myopic price is confounding, semi-myopic  policies select an alternate price that generates information; in that sense these policies are designed to avoid precisely the prices that the platform incentivizes. %\ilanst{the seller to set.}
Interestingly, in the equilibrium we construct, the promotion policy is designed so that the seller is incentivized to set the confounding price and does not deviate as the information gained by such deviation generates no net value.

\subsection{Robustness of Equilibrium} \label{sec: EfficiencyEqm}

The equilibrium characterization in Theorem~\ref{thm:epsilonEq} does not rule out the existence of other equilibria, including ones that may generate even higher consumer surplus.\footnote{For example, there may exist equilibria where the platform incentivizes the seller to set low prices for a fixed number of periods (which benefits early consumers), and then commits to promote the seller at high prices in later periods in order to increase the seller's payoff. Depending on the structure of $\rho$ and $W$, such alternation between high and low prices might generate higher  average consumer welfare in equilibrium, relative to the equilibrium characterized in Theorem~\ref{thm:epsilonEq}, while maintaining the seller's total expected revenue. Nonetheless, it can be shown that if the optimal simple confounding policy $\alpha^C$ that is constructed by solving problem \eqref{eq: ConfoundingProblemSimple} and the optimal signaling mechanism $\sigma$ associated with  this policy are uniquely determined, then the on-the-equilibrium path induced by the strategy profile of Theorem~\ref{thm:epsilonEq} is the only such path that (i) corresponds to a horizon maximin equilibrium that is asymptotically platform-optimal (see Definition~\ref{def: HorizonMaximinEqm}), (ii) consists of a simple confounding promotion policy, and (iii) assumes that the seller breaks ties in favor of the platform in case of indifference.
}
%\ilanst{However, such equilibria require both parties to have \emph{precise} knowledge of the length of the selling horizon, as well as strong commitment on behalf of the platform; even slight misspecification of the horizon length may lead to profitable deviations from both sides. Thus, such equilibria are difficult to implement from a practical perspective.}
{While it is not necessarily unique, the equilibrium characterized in Theorem~\ref{thm:epsilonEq} satisfies desirable robustness properties. % which are related to the platform and seller not precisely knowing the time horizon of the game.
Concretely, in what follows we consider a setting where agents have no precise knowledge of the game's time horizon, and aim to maintain a high time-average payoff while taking into account this uncertainty. We establish that the equilibrium characterized by Theorem~\ref{thm:epsilonEq} is also an equilibrium in this setting.}

{Formally, we establish that the equilibrium characterized by Theorem~\ref{thm:epsilonEq} is {asymptotically} platform-optimal within a class of maximin equilibria in which strategies are not predicated on precise knowledge of the horizon length, but rather designed to maximize payoff over the worst-case realized horizon length (up to time $T$). For $\mu \in [0,1]$ and $T \geq 1$, and given the strategies $(\bA,\sigma,\bP)$, define the minimal time-average payoff obtained by the platform by some period $T$ as 
	\vspace{-0.2cm}
	\begin{equation*}
		RW_T^{\bA,\sigma,\bP} (\mu ) := \min_{\bar{t}\leq T}\frac{1}{\bar{t}}W_{\bar t}^{\bA,\sigma,\bP} (\mu ),\vspace{-0.2cm}
	\end{equation*}
	and for a fixed period $t$, define the seller's minimal time-average continuation payoff given history $h \in H_t$ as:\vspace{-0.2cm}
	\begin{equation*}
		RV_{t,T}^{\bA,\sigma,\bP} (h ) := \min_{t\leq \bar{t}\leq T}\left(\frac{1}{\bar{t}-t+1}\right)V_{t,\bar t}^{\bA,\sigma,\bP} (h). \vspace{-0.2cm}
	\end{equation*}
	Based on these payoff functions, we introduce equilibria in which the platform and the seller maximize their minimal time-average payoffs at every history.

	% OLD VERSION:
%	\begin{definition}[Horizon-Maximin Equilibrium]
%		Fix $T$ and $\mu \in [0,1]$. A strategy profile $(\bA,\sigma,\bP)$ is a \emph{horizon-maximin equilibrium} if :
%		\begin{equation} \label{def:equilibrium:platform:Robust}
%			\begin{split}
%				RW_T^{\bA,\sigma,\bP} (\mu ) &\geq RW_{T}^{\bA',\sigma',\bP} (\mu ), \quad \forall \bA' \in  \mathcal{A},\sigma'\in\Sigma, \text{ and }\\
%				RV_{t,T}^{\bA,\sigma,\bP} ( \langle\bA,\sigma,\bar{h} \rangle) &\geq RV_{t,T}^{\bA,\sigma,\bP'} ( \langle\bA,\sigma,\bar{h} \rangle) , \quad \forall \boldsymbol{\pi}' \in  \Pi,   ~~\forall t=1,...,T ~~\bar{h} \in \bar{H}_t.
%			\end{split}
%		\end{equation}
%	\end{definition}

%\ilan{COMMENTED OUT PREVIOUS VERSION OF DEFINITION}

	\begin{definition}[Horizon-Maximin Equilibrium]\label{def: HorizonMaximinEqm}
		Fix the time horizon $T\geq 1$ and the prior belief $\mu_0$. We say that a strategy profile $(\bA,\sigma,\bP)$ is a \emph{horizon-maximin equilibrium} if:% it is a Bayesian Nash equilibrium in the game where we replace the seller's and the platform's payoffs by $RV_{t,T}$ and $RW_T$ respectively, i.e.:
		\vspace{-0.25cm}
		\begin{enumerate}[label=(\roman*)]
			\item For any platform's strategy $(\bA', \sigma')\in \mathcal{A} \times \Sigma$ we have that \vspace{-0.2cm}
			\begin{equation}\label{def:equilibrium:platform:robust}
				RW_T^{\bA,\sigma,\bP} (\mu_0) \geq RW_{T}^{\bA',\sigma',\bP} (\mu_0).\vspace{-0.2cm}
			\end{equation}
		\item In any period $t=1,\dots T$, at  any history $\bar{h}\in \bar{H}_t$, and for any pricing policy $\bP'\in \Pi$ we have that \vspace{-0.2cm}
		\begin{equation}\label{def:equilibrium:seller:robust}
			RV_{t,T}^{\boldsymbol{\alpha},\sigma,\boldsymbol{\pi}} \left(\langle\boldsymbol{\alpha},\sigma,\bar{h} \rangle\right) \geq RV_{t,T}^{\boldsymbol{\alpha},\sigma,\boldsymbol{\pi}'} \left(\langle\boldsymbol{\alpha},\sigma,\bar{h} \rangle\right). \vspace{-0.2cm}
		\end{equation}
		\end{enumerate}
	\end{definition}
%\vspace{-0.25cm}

	Let $\mathcal{E}(T)\subset \mathcal{A}\times \Sigma\times \Pi$ denote the set of horizon-maximin equilibria with maximal horizon length $T$. %Through this definition, we focus on equilibria in which players' strategies do not depend on precise knowledge of the horizon length and are rather designed to ensure that their worst-case (with respect to the realized horizon length) time-average payoff is maximized.
Note that, by requiring sequential rationality from the seller on the equilibrium path, horizon-maximin equilibria still capture that the seller utilizes collected information to dynamically improve performance. Our next result shows that the Bayesian Nash equilibrium characterized in Theorem~\ref{thm:epsilonEq} is a horizon-maximin equilibrium, and is in fact long-run optimal for the platform in the set of horizon-maximin equilibria.
	
{\begin{theorem}[Optimal Horizon-Maximin Equilibria]\label{thm:OptimalRobustEquilibria}
%\ilanst{Fix $T\geq 1$ and $\mu_0 \in [0,1]$. There exists a Horizon-Maximin Equilibrium $(\bA,\sigma,\bP)$ such that:}\vspace{-0.05cm}
%		\[
%		\ilanst{\frac{1}{T}W^{\boldsymbol{\alpha},\sigma,\bP}_T(\mu_0) = co(W^C)(\mu_0),}\vspace{-0.05cm}
%		\]
%		\ilanst{and for all $t=1,..,T$ and $\bar{h} \in \bar{H}_t$, the seller's pricing policy at $\langle \bA,\sigma,\bar{h}\rangle$ satisfies \eqref{eq:myopic} (i.e. is myopically optimal). Moreover,}\vspace{-0.2cm}
%		\begin{equation}\label{eq: long_run_platform_optimal_eqm}
%		\ilanst{	\lim_{T\rightarrow \infty}\sup_{(\bA,\sigma,\bP) \in \mathcal{E}(T)}  RW^{\bA,\sigma,\bP}_T(\mu_0) = co(W^C)(\mu_0).}
%		\end{equation}
Fix $T\geq 1$, $\mu_0 \in [0,1]$ and let $(\bA,\sigma,\bP)$ be the Bayesian Nash equilibrium constructed in Theorem~\ref{thm:epsilonEq}. Then, $(\bA,\sigma,\bP)$ is a horizon-maximin equilibrium. Moreover, this equilibrium generates the maximum long-run worst-case average payoff for the platform, namely 
		\begin{equation}\label{eq: long_run_platform_optimal_eqm}
			\lim_{T'\rightarrow \infty}\sup_{(\bA',\sigma',\bP') \in \mathcal{E}(T')}  RW^{\bA',\sigma',\bP'}_{T'}(\mu_0) =   RW^{\bA,\sigma,\bP}_T(\mu_0)= co(W^C)(\mu_0).
	\end{equation}
	\end{theorem}}
\vspace{-0.25cm}

In the proof of Theorem~\ref{thm:OptimalRobustEquilibria}, we establish that the equilibrium described in Theorem~\ref{thm:epsilonEq} is a horizon-maximin equilibrium for all $T$. {As in  Theorem~\ref{thm:epsilonEq}, in the equilibrium we construct, the seller prices at~$p^*$ following deviations by the platform; thus, it is optimal for the platform to use the prescribed policy on the equilibrium path.} On the other hand, given any platform policy, the seller's minimal time-average continuation payoff is, at most, the maximum expected revenue in the first period. As the platform's policy is static and confounding, the seller can achieve this revenue in every period. Thus, by pricing myopically in response to the confounding policy, the seller also maximizes his worst-case time-average payoff. To establish long-run optimality with respect to the platform's robust payoff, we prove that in the \emph{platform-optimal} horizon-maximin equilibrium, it is without loss to assume that the seller follows a myopic pricing policy.
} 
% \input{s5.1_logitNumerics.tex}
%\ilan{\input{s6_extensionsDiscussion.tex}}
%\input{s6_conclusion.tex}
\section{Concluding Remarks}\label{sec:Conclusion}

\textbf{Summary.} In this paper, we propose a model of a platform that can impact the price of a seller through a promotion policy, and by disclosing information on the additional demand associated with being promoted.  %\ilanst{We characterize the maximum long-run average consumer surplus achievable by a joint information disclosure and promotion policy for a broad class of demand models.}
We introduce the notion of \emph{confounding} promotion policies, which are designed to prevent the seller from learning the fraction of impatient consumers (while incurring a short-term cost from diverting some impatient consumers from the best product offerings), and leverage these policies to characterize the maximum long-run average consumer surplus that is achievable by the platform for a broad class of demand models. In addition, we provide a procedure to construct such optimal policies in a tractable manner, and showcase that the ability to send an initial information signal considerably simplifies the platform's policy design problem. Moreover, we extend our setting to incorporate competition between two sellers and illustrate that confounding promotion policies continue to be a valuable tool for the platform in this setting. Finally, we construct a Bayesian Nash equilibrium by showing that, in response to the platform's optimal policy, the seller's best response in every period is to use a Bayesian myopic pricing policy. We further establish that the equilibrium we identify is platform-optimal within the set of horizon-maximin equilibria, in which sellers do not have precise knowledge of the horizon length, and thus  maximize the  worst-case average payoff.

\textbf{Design Implication: Observable vs. Unobservable Promotions.} Our model emphasizes that a platform should not only carefully design the information that it shares with sellers, but also consider how design features may impact a seller's ability to procure information. For example, one feature a platform may consider is whether to reveal promotion decisions to sellers. %If the platform reveals promotion decisions, the maximum long-run average consumer surplus is still determined by the optimal confounding promotion policy and signaling mechanism.
However, confounding the seller at $\mu \in (0,1)$ is not possible when promotion decisions are observed, as in that case, the seller can learn based on sales observations and/or promotion decisions. In Appendix~\ref{sec:ObservedPromotions}, we consider this question formally and establish, consistent with the insight of our baseline model, that it is optimal to conceal promotion decisions rather than revealing them in terms of the achievable long-run average consumer surplus.
In general, a similar analysis holds when the platform has private information along more dimensions. The more aspects that it allows a seller to observe, the harder it is to prevent learning, because confounding policies must satisfy constraints along all dimensions and not only in expectation.

%\subsection{Extensions and Directions for Future Research}

%\blue{REMOVED COMPETITION BELOW AND ADDED ANOTHER DIRECTION INSTEAD BUT IT MIGHT BE UNECESSARY.}

\textbf{Future Research Directions.} There are several interesting extensions of our model. First,
%\ilanst{understanding how strategic competition between sellers affects our results may be valuable. While we believe that the key ideas from our work, and particularly the importance of confounding policies, would remain relevant under such extension, it is important to note that the way in which competition is modeled (e.g., whether \textit{at most} one or \textit{exactly} one seller is promoted) will significantly affect the result as it might impact its ability to confound sellers. %Also, it is not a priori clear what would be the appropriate way to  model each seller's observations.
%Also,}
in many cases the platform may observe relevant information about each arriving consumer; e.g., the patience type can be learned from the browsing and purchase history. While additional information may allow the platform to confound the seller more effectively, understanding the impact of additional information and identifying settings where it can increase consumer surplus is an interesting avenue of research. %\ilan{Moreover, the seller's decision to join the platform or not could also be influenced by the platform's policies. Incorporating sellers' entry and exit decisions to our modeling framework could be an interesting avenue to understand the effect of the platform's actions on seller retention and competition.}  F
Second, understanding how the platform can design confounding policies in a setting where the seller has private information, is an interesting and challenging direction. Finally, while we expect our findings in the case of two competing sellers to hold for more competitors as well, it would be interesting to study settings with many competitors to better understand how the number of sellers may affect the value of confounding policies.

% \subsection{Dynamically Changing Competition}
% Suppose you know a sequence of competition. Is it still an equilibrium? Suppose you have multiple sellers where there is still some other competition - is it still an equilibrium?
% \textbf{Strategic Seller}
% \textbf{Platform Does not Observe Consumer Type}
% Another interesting extension. 
%\input{s8_NashEquilibrium.tex}

\small
\setstretch{1}
\bibliographystyle{chicago}
\bibliography{papers}
\normalsize
\setstretch{1.45}
\newpage
\appendix

\setcounter{page}{1}
\renewcommand*{\thepage}{OA--\arabic{page}}

\setcounter{footnote}{0}
\renewcommand*{\thefootnote}{\fnsymbol{footnote}}

\begin{center}
	\Large Online Appendix:\\
	\vspace{-0.1cm} Information Disclosure and Promotion\\
	\vspace{-0.1cm} Policy Design for Platforms
\end{center}

\vspace{-0.5cm}

\begin{center}
	\large
		%	 Authors' names blinded for peer review
		{\sf Yonatan Gur}\footnote{Stanford University, {\tt $\{$ygur,ilanmor,dsaban$\}$@stanford.edu}}
		%\thanks{Stanford University, {\tt ygur@stanford.edu}}
		\qquad \,
		{\sf Gregory Macnamara}\footnote{Meta Platforms, Inc., {\tt gregory.macnamara@gmail.com}}
		\qquad \,
		{\sf Ilan Morgenstern}\footnotemark[1]%\thanks{Stanford University, {\tt ilanmor@stanford.edu}}
		\qquad \,
		{\sf Daniela Saban}\footnotemark[1]%\thanks{Stanford University, {\tt dsaban@stanford.edu}}
		%\vspace{0.2cm}
\end{center}

\vspace{-0.5cm}

\begin{center}
	\large \date{\monthyeardate\today}
\end{center}

\vspace{0.5cm}

%\section*{Appendix}

\setcounter{footnote}{9}
\renewcommand*{\thefootnote}{\arabic{footnote}}

	The appendix is organized as follows: Appendix~\ref{app: PolicyDesign} describes the procedure to compute optimal simple policies that we introduced in \S\ref{sec: OptSimplePolicies}. Appendix~\ref{sec:proofs} provides the proofs of all our main results. Appendix~\ref{app: TwoSellers} develops the  model with two sellers that compete in prices, which we discussed in \S\ref{sec: TwoSellers}.  Appendix~\ref{app:AdditionalExtensions} presents two more extensions: the first one considers the case where the seller observes the platform's promotion decisions (briefly described in \S\ref{sec:Conclusion}), and the %third
	second one extends our main results to a more general demand structure (as mentioned in \S\ref{sec:model-discussion}).

%\section{Extensions and Examples}
\section{Computation of Optimal Simple Confounding Policies} \label{app: PolicyDesign}

In this appendix we describe in more detail the procedure to compute simple confounding promotion policies introduced in \S\ref{sec: OptSimplePolicies}. We detail this procedure in \S\ref{app:DesigningSimplePolicies} and discuss some structural properties of the optimization problem defined to design simple confounding promotion policies that further simplify their construction. Then, in \S\ref{App:AnalysisOfDemandModel} we illustrate this procedure for the concrete demand model presented in Example~\ref{example:UniformDemand} and construct a myopic promotion policy for comparison with the optimal simple confounding policy.

%\input{app1_exampleConfoundingPromotionPolicy}
%\input{app2_exampleUniformDemand}
%\input{app3_observedPromotion}
% Flipped order of these two to improve flow
\subsection{Designing Simple Confounding Policies}\label{app:DesigningSimplePolicies}
In this section, we detail a recipe for how to design optimal simple confounding platform policies. In general, given any concrete demand structure that satisfies 
Assumption~\ref{assump:Demand}, one can design the optimal simple  policy in three steps; $(i)$ characterize a simple confounding promotion policy which generates value $W^C(\mu)$ for all $\mu$ given that the seller makes myopic pricing decisions; $(ii)$ determine $co(W^C(\mu))$ based on the characterization of $W^C(\mu)$; and $(iii)$ determine an optimal simple signal given the prior $\mu_0$,  $W^C(\mu)$, and $co(W^C)(\mu)$.

Working backwards, once the first two steps have been completed, determining the optimal signaling mechanism is straightforward.  As the platform has set an optimal confounding policy, the optimal long-run average consumer surplus is $co(W^C)(\mu_0)$, and the seller's belief does not change based on sales observations. Therefore, an optimal signaling mechanism ensures the seller's posterior belief distribution (in period 1) is optimal given that the expected continuation value will be $W^C(\mu_1)$ in every period. Thus, following \cite{Kamenica2011}, an optimal simple signaling mechanism, $\sigma'$, takes the form:

\begin{align*}
    \sigma'(\phi_L) &= \begin{cases}
        \phi_L, &~w.p.~\left(\frac{1-\mu'}{1-\mu_0}\right)\left(\frac{\mu''-\mu_0}{\mu''-\mu'}\right)\\
        \phi_H, &~w.p.~1-\left(\frac{1-\mu'}{1-\mu_0}\right)\left(\frac{\mu''-\mu_0}{\mu''-\mu'}\right)\\
    \end{cases}&
    \sigma'(\phi_H) &= \begin{cases}
        \phi_L, &~w.p.~\left(\frac{\mu'}{\mu_0}\right)\left(\frac{\mu''-\mu_0}{\mu''-\mu'}\right)\\
        \phi_H, &~w.p.~1-\left(\frac{\mu'}{\mu_0}\right)\left(\frac{\mu''-\mu_0}{\mu''-\mu'}\right)\\
    \end{cases}
\end{align*}
where $\mu' = \sup\{\mu \leq \mu_0: co(W^C(\mu)) = W^C(\mu) \}$ and $\mu'' = \inf\{\mu \geq \mu_0: co(W^C(\mu)) = W^C(\mu) \}$.

In the second step, one then solves for the concavification of $W^C(\mu)$. In many cases, for example, the demand model of Example \ref{example:UniformDemand}, $W^C(\mu)$ can be described analytically and is concave on the interior $(0,1)$, which simplifies the computation of $co(W^C)$. %In particular, one only needs to find the unique belief $\tilde{\mu}$ where $W^C(\tilde{\mu}) + (\mu-\tilde{\mu})(W^C(1)-W^C(\tilde{\mu}))$ lies entirely above $W^C(\mu)$ on $[0,1)$. 
However, if $W^C(\mu)$ cannot be described analytically, then $W^C(\mu)$ can be determined numerically over a grid of beliefs. In this case,  $co(W^C)(\mu)$ can be approximated by a simple numerical procedure. Finally, in completing the first step, the platform must solve the following optimization problem for each $\mu \in [0,1]$:
%\begin{equation*}%\label{eq:OptimalPromotion:SimpleClass}
%    \max_{\substack{\boldsymbol{\alpha} \in \mathcal{A}^S}} ~~
%\frac{1}{T}\E\left(\sum_{t=1}^T  W_c(p_t,a_t)\middle| \boldsymbol{\alpha},\boldsymbol{\pi^*},\mu \right)
%\end{equation*}
\begin{equation*}
	\begin{split}
		W^C(\mu):= \max_{\boldsymbol{\alpha} \in \mathcal{A}^C(\mu)}& ~~
		\frac{1}{T}\E\left(\sum_{t=1}^T  W(p_t,a_t,\psi_t)\middle| \boldsymbol{\alpha},\boldsymbol{\pi^*},\mu \right).
	\end{split}
\end{equation*}
Again, while $\mathcal{A}^C(\mu)$ remains a large space of policies, we can simplify the problem in several steps. First, by considering simple confounding promotion policies, it suffices to consider the analysis with $T=1$. Second, by Proposition~\ref{lemma:RevenueEquivalence}, for a fixed $\mu$, the promotion and pricing policies can be characterized using only three variables: the target price $p \in \mathcal{P}$, and the probability of promotion for each realized value of $\phi$, $\alpha_{\phi} \in [0,1]$ for $\phi\in\{\phi_L,\phi_H\}$. Here, we have two cases; if $\mu \in \{0,1\}$, then the seller's belief remain constant regardless of the platform's policy so we have that $\mathcal{A}^C(\mu) = \mathcal{A}$. Thus, for $\mu\in\{0,1\}$, choosing a myopically optimal promotion policy (e.g., as in \eqref{eq: myopicpromosingleprice} in the proof of Proposition~\ref{prop: LearningUnderMyopicPromo}; see \S\ref{app: proof_prop_LearningUnderMyopicPromo}) solves $W^C(\mu)$.

On the other hand, if $\mu\in(0,1)$, the design of an optimal simple confounding policy is reduced to solving problem~\eqref{eq: ConfoundingProblemSimple} in \S\ref{sec: OptSimplePolicies}. Moreover, we observe that one may remove the dependence on one of the two latter quantities through the confounding constraint. If $\mu \in (0,1)$, then given a promotion probability $\alpha_{\phi_L}$ and price $p$, the confounding constraint fully defines~$\alpha_{\phi_H}$:
\begin{equation*}%\label{eq:AlphaHDef}
    \alpha_{\phi_H} = \left(\frac{\phi_H-\phi_L}{\phi_H}\right)\left(\frac{\bar{\rho}_c(p)}{\bar{\rho}_0(p)}\right)+\alpha_{\phi_L}\left(\frac{\phi_L}{\phi_H}\right).
\end{equation*}
%In the case that $\mu\in\{0,1\}$, one of $\alpha_{\phi_L}$ or $\alpha_{\phi_H}$ does not affect the outcome and can be ignored; for simplicity we proceed describing the problem in terms of $\alpha_{\phi_L}$ assuming $\mu \in [0,1)$ but the method for determining~$W^C(1)$ is essentially identical. 
Plugging in this constraint into the objective function and the first constraint of \eqref{eq: ConfoundingProblemSimple} in \S\ref{sec: OptSimplePolicies} results in the following optimization problem, where we denote $\bar{\phi}(\mu) = \phi_L +\mu(\phi_H-\phi_L)$:
\begin{equation}\label{eq:simplifiedOptimizationProblem}
    \begin{split}
        W^C(\mu):= \max_{\substack{\alpha_{\phi_L} \in [0,1],\\p \in \mathcal{P}}}& ~~ (\bar {W}_0(p)-\bar{W}_{\text{out}}) \bigg( \mu(\phi_H-\phi_L)\frac{\bar{\rho}_c(p) }{\bar{\rho}_0(p)} + \phi_L\alpha_{\phi_L}\bigg) +\bar{W}_c(p)(1-\bar{\phi}(\mu)) + \bar{\phi}(\mu)\bar{W}_{\text{out}}\\
        \text{s.t.}&~~(1-\phi_L)p\bar{\rho}_c(p)+ \phi_L\alpha_{\phi_L}p\bar{\rho}_0(p)\geq p^*\bar{\rho}_c(p^*)(1-\phi_L - (\phi_H-\phi_L)\mu)
    \end{split}
\end{equation}
Given a price $p \in \mathcal{P}$, the objective is linear in $\alpha_{\phi_L}$, so at the optimal solution, at least one of the constraints that involve $\alpha_{\phi_L}$ will bind. That is,
%$$\alpha_{\phi_L} \in \left\{0,\frac{\pi^O(\mu)- (1-\phi_L)p\bar{\rho}_c(p)} {\phi_L p\bar{\rho}_0(p) }, 1\right\}.$$
$$\alpha_{\phi_L} \in \left\{0,\frac{p^*\bar{\rho}_c(p^*)(1-\phi_L - (\phi_H-\phi_L)\mu)- (1-\phi_L)p\bar{\rho}_c(p)} {\phi_L p\bar{\rho}_0(p) }, 1\right\}.$$
In either of these three cases, the only optimization variable that remains is the price, so it is often possible to characterize $W^C(\mu)$ and the associated promotion policy analytically in closed form as shown in Example \ref{example:UniformDemand}. However, even when the analytical characterization is not possible, with a concrete demand model specified, one may  characterize $W^C(\mu)$ numerically and then follow steps $(ii)$ and $(iii)$. 

The structure of the optimal $(p,\alpha_{\phi_L})$ depends on the underlying demand model and reflects the tension that the platform faces in achieving three goals: incenvitizing the seller to set low prices, promoting the best product offering to impatient consumers, and confounding the seller. In some cases, these goals are aligned. For example, if the seller's product generates more expected consumer surplus than the outside option and confounding the seller is easier at lower prices (reflected by a decreasing ratio $\frac{\bar{\rho}_c(p)}{\bar{\rho}_0(p)}$), then setting~$\alpha_{\phi_L} = 1$ and selecting $p$ as the smallest price that satisfies the constraint is optimal. Depending on the demand model, however, this may not be the case. If the outside option generates more consumer surplus in expectation, then increasing $\alpha_{\phi_L}$ means more impatient consumers see an inferior product but also that the platform may incentivize a lower price which benefits patient consumers. In this case, the platform must balance these competing objectives. Similarly, depending on the structure of $\frac{\bar{\rho}_c(p)}{\bar{\rho}_0(p)}$, the goal of confounding may or may not be aligned with the other two goals because the ratio can increase or decrease in $p$.

\subsection{Consumers with Uniformly Distributed Valuations}\label{App:AnalysisOfDemandModel}

 We now compute optimal myopic and confounding promotion policies for the demand model presented in Example~\ref{example:UniformDemand}, which is associated with consumers that have independent valuations for two products, which are uniformly distributed in $[a-1,a]$ and $[b-1,b]$, respectively, where $a,b\in[0,1]$ are fixed parameters. Recall that we %take the seller of product 1 as our focal seller and
 assume that the price of product 2 is fixed to zero. Then, the demand functions for this model in terms of the price of product 1 $p\in[0,a]$  are (see \eqref{eq:ConsumerDemand:Rho}):
%Suppose that there are two sellers on the platform. For $a,b \in [0,1]$, customer $t$ has independent values for the products distributed uniformly over a unit square:
%$$v^1_t \sim U[a-1,a], ~~v^2_t \sim U[b-1,b].$$
%Assume seller two sets a fixed price equal to $0$; or equivalently, that each customer's value $v_t^2$ is net of some fixed price. Each customer's probability of purchase as a function of the seller's (seller one's) price $p\in [0 ,a]$, the platform's promotion decision, and the consumer type is (see \eqref{eq:ConsumerDemand:Rho}):
$$\bar{\rho}_0(p)  = \Prob(v_1 - p\geq 0) = a-p, \qquad \bar{\rho}_c(p)  = 
\begin{cases} 
(1-b)(a-p)+\frac{(a-p)^2}{2}, &\text{ if } p>a-b\\
a-p - \frac{b^2}{2}, &\text{ if } p\leq a-b
\end{cases} $$
%Finally, defining consumer surplus in the standard way, we have (see \eqref{eq:ConsumerWelfare:W}):
In addition, the consumer surplus functions are given by (see \eqref{eq:ConsumerWelfare:W}):
\begin{equation*}
	\bar{W}_0(p) = \int_{a-1}^{a} \max \{v_1-p,0\} \partial v_1 = \frac{(a-p)^2}{2}, \qquad \bar{W}_{\text{out}} = \int_{b-1}^{b} \max \{v_2,0\} \partial v_2 = \frac{b^2}{2} 
\end{equation*}
\vspace{-0.5cm}
\begin{equation*}
    \bar{W}_c(p) =  \int_{b-1}^{b} \int_{a-1}^{a} \max \{v_1-p,v_2,0\} \partial v_1\partial v_2
    =\begin{cases} 
    \frac{1}{6}(3 b^2 + 3 (a-p)^2(1-b)  + (a-p)^3  )     , &\text{ if } p>a-b\\
\frac{1}{6}(3 (a-p)^2 + 3 b^2(1-a +p )  + b^3  ), &\text{ if } p\leq a-b.
\end{cases}
\end{equation*}
In what follows, we derive optimal myopic and confounding promotion policies for this  demand model, which results in the expressions defined in Examples~\ref{example:MyopicPromotion} and \ref{example:ConfoundingPromotion}. For simplicity, we focus on the case where  most consumers prefer the product offered by our focal seller. In particular, we assume that $a > 2b\left(1-\frac{b}{4}\right)$  which ensures that at all prices that may arise in equilibrium, the seller's product generates more value than the  outside option. This condition is equivalent to $(a-p^*>b)$ where $p^*:=\arg \max_{p} p\bar{\rho}_c(p)$. In this case, the platform is incentivized to promote the seller with high probability. 

Throughout this section, we will use the notation corresponding to simple promotion policies (see Definition~\ref{def:simplePromotionPolicies}). Thus, we only specify the price that is promoted with positive probability and the associated probabilities of promotion as a function of the seller's belief.
        
    \begin{proposition}[Optimal Policies with High Quality Seller]
    Fix $\mu \in [0,1]$ and suppose $a> 2b\left(1-\frac{b}{4}\right)$. An optimal myopic promotion policy has $\alpha_{\phi_L}(\mu)=\alpha_{\phi_H}(\mu)=1$, and:
    \begin{align*}
        p(\mu) &=\frac{1}{4}(2a-b^2(1-\bar{\phi}(\mu))-\sqrt{\bar{\phi}(\mu)}\sqrt{4a^2 -b^4(1-\bar{\phi}(\mu)) }.
    \end{align*}
    Moreover, for $\mu\in(0,1)$, an optimal confounding promotion policy has $\alpha_L^C(\mu)=1$ and:
    \begin{align*}
        \alpha_H^C(\mu) &= \frac{a-p^C(\mu) - b^2/2}{ \alpha-p^C(\mu)}\left(\frac{\phi_H-\phi_L}{\phi_H}\right) + \frac{\phi_L}{\phi_H},\\
        ~p^C(\mu) &=\frac{1}{4}\left(2a-b^2(1- \phi_L) -\sqrt{(2a-b^2)^2(\phi_H-\phi_L)\mu+\phi_L(4a^2-b^4)+b^4\phi_L^2} \right).
    \end{align*}
    \end{proposition}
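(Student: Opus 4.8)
The plan is to solve the two constrained optimization problems explicitly by substituting the closed-form demand and surplus expressions of the uniform-WtP model and exploiting the structural reductions established earlier in the paper. Both problems have the same shape: maximize the single-period expected consumer surplus over a promotion policy, subject to the constraint that the induced price is myopically optimal for the seller, with the confounding problem carrying the additional constraint that the probability of a sale be independent of $\phi$. In both cases I would first reduce to \emph{single-price} policies --- immediate from the form of~\eqref{eq:myopicPromotion}, since the objective and the seller's best response depend only on the promotion probabilities at the promoted price, and setting the promotion probability to $0$ at all other prices makes the incentive constraint easiest to satisfy (this is also Proposition~\ref{lemma:RevenueEquivalence}). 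A single-price policy is then a triple $(\bar p,\alpha_L,\alpha_H)$: the promoted price and the promotion probabilities under $\phi=\phi_L$ and $\phi=\phi_H$. Facing it, the seller either posts $\bar p$ and collects both patient and impatient demand, or posts a different price, in which case it is unpromoted and its best such choice is $p^*$ from~\eqref{eq:pStar:OutsideOpt} by strict concavity of $p\bar{\rho}_c(p)$.

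For the \emph{myopic promotion policy}, I would first invoke the hypothesis $\alpha>2\beta(1-\tfrac{\beta}{4})$, which is equivalent to $\alpha-p^*>\beta$ with $p^*=\tfrac14(2\alpha-\beta^2)$, to fix the operative branches of the piecewise functions: at every price that can arise, $\bar W_0(p)=\tfrac{(\alpha-p)^2}{2}>\tfrac{\beta^2}{2}=\bar W_c$, and $\bar\rho_0(p)=\alpha-p$, $\bar\rho_c(p)=\alpha-p-\tfrac{\beta^2}{2}$. Since $\bar W_0(p)>\bar W_c$, the objective of~\eqref{eq:myopicPromotion} is strictly increasing in both $\alpha_L$ and $\alpha_H$, so it is optimal to set $\alpha_L=\alpha_H=1$; the objective then equals $\bar\phi(\mu)\bar W_0(\bar p)+(1-\bar\phi(\mu))\bar W_c(\bar p)$, which is strictly decreasing in $\bar p$. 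Hence the platform picks the smallest $\bar p$ satisfying the incentive constraint $\bar\phi(\mu)\,\bar p\,\bar\rho_0(\bar p)+(1-\bar\phi(\mu))\,\bar p\,\bar\rho_c(\bar p)\ge(1-\bar\phi(\mu))\,p^*\bar\rho_c(p^*)$. The left side is a downward parabola in $\bar p$, so its super-level set is an interval; I would check that $p^*$ lies in it (its slack at $\bar p=p^*$ is $\bar\phi(\mu)\,p^*\bar\rho_0(p^*)\ge 0$), so the smaller root satisfies $\bar p\le p^*<\alpha-\beta$, confirming the branch. Substituting $p^*\bar\rho_c(p^*)=\tfrac{(2\alpha-\beta^2)^2}{16}$ and solving the binding quadratic yields exactly the claimed $p(\mu)$; at that price the seller is indifferent between $\bar p$ and $p^*$, and the sender-preferred tie-break in Definition~\ref{def:MyopicPricingPolicy} selects $\bar p$.

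For the \emph{optimal confounding promotion policy}, I would solve~\eqref{PlatformProblem:SinglePeriodConfounding} (equivalently its static $T=1$ reduction) over single-price triples, now adding the confounding condition~\eqref{eq:EqualityOfSalesProb}, $\phi_H\alpha_H\bar\rho_0(\bar p)+(1-\phi_H)\bar\rho_c(\bar p)=\phi_L\alpha_L\bar\rho_0(\bar p)+(1-\phi_L)\bar\rho_c(\bar p)$. This pins down $\alpha_H$ as an affine function of $\alpha_L$, namely $\phi_H\alpha_H=\phi_L\alpha_L+(\phi_H-\phi_L)\tfrac{\bar\rho_c(\bar p)}{\bar\rho_0(\bar p)}$, leaving only $\bar p$ and $\alpha_L$ free. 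As before, raising $\alpha_L$ (and with it $\alpha_H$) both increases the objective and relaxes the incentive constraint, while $\bar\rho_c<\bar\rho_0$ forces $\alpha_H<1$ at $\alpha_L=1$; hence $\alpha_L^C=1$ and $\alpha_H^C(\mu)=\tfrac{\phi_L}{\phi_H}+\tfrac{\phi_H-\phi_L}{\phi_H}\tfrac{\bar\rho_c(p^C(\mu))}{\bar\rho_0(p^C(\mu))}$, which is the stated formula after substituting the explicit $\bar\rho_0,\bar\rho_c$. The computational key is that, after inserting this $\alpha_H$, the effective impatient demand collapses: $(\phi_L\alpha_L(1-\mu)+\phi_H\alpha_H\mu)\bar\rho_0(\bar p)=\phi_L\bar\rho_0(\bar p)+\mu(\phi_H-\phi_L)\bar\rho_c(\bar p)$, so the incentive constraint simplifies to $\bar p\,\phi_L\bar\rho_0(\bar p)+\bar p\,(1-\phi_L)\bar\rho_c(\bar p)\ge(1-\bar\phi(\mu))\,p^*\bar\rho_c(p^*)$. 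Taking the smaller root of this quadratic gives $p^C(\mu)$, and the same parabola/interval argument confirms $p^C(\mu)<\alpha-\beta$ and that $p^C(\mu)$ is myopically optimal against $\alpha^C$. For $\mu\in\{0,1\}$ the confounding constraint is vacuous ($\mathcal{A}^C(\mu)=\mathcal{A}$), so the confounding optimum coincides with the myopic one.

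The arithmetic --- evaluating $p^*$ and $p^*\bar\rho_c(p^*)$, and the two quadratic solves --- I would not belabor. The step I expect to require the most care is the claim that it is without loss to push the promotion probabilities to the boundary ($\alpha_L=\alpha_H=1$, resp.\ $\alpha_L=1$): this uses $\alpha>2\beta(1-\tfrac{\beta}{4})$ both to sign $\bar W_0(p)-\bar W_c>0$ on the operative price range and to ensure the induced price lands in the branch $p<\alpha-\beta$ where the closed forms apply --- an apparent circularity I would break by first showing every candidate optimal price is $\le p^*$ and then invoking $p^*<\alpha-\beta$. A secondary subtlety, the seller's indifference at the binding incentive constraint, is handled by the tie-break built into Definition~\ref{def:MyopicPricingPolicy}.
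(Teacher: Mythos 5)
Your proposal is correct and follows essentially the same route as the paper's own proof: reduce to single-price policies, use $\alpha>2\beta(1-\tfrac{\beta}{4})$ (i.e.\ $p^*<\alpha-\beta$) to restrict attention to prices at most $p^*$ where $\bar W_0(p)>\bar W_c$, push the promotion probabilities to the boundary by monotonicity of the objective and slackening of the incentive constraint, and then solve the binding quadratic, with the confounding constraint eliminating $\alpha_{\phi_H}$ in terms of $\alpha_{\phi_L}$ exactly as the paper does. The only caveat is cosmetic: substituting $\bar\rho_c(p)=\alpha-p-\tfrac{\beta^2}{2}$ and $\bar\rho_0(p)=\alpha-p$ into the confounding constraint yields a numerator $\alpha-p^C(\mu)-\tfrac{\beta^2}{2}$ in $\alpha_H^C(\mu)$, so the $\beta^2$ appearing in the proposition's displayed formula looks like a typo in the paper rather than a flaw in your derivation.
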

    
    \begin{proof}
    We first compute the optimal price set by the seller when focusing exclusively on patient consumers, i.e., $p^*:=\arg \max_{p} p\bar{\rho}_c(p)$. To do so, we solve
    $$\max_{p\in[0,a]} p\bar{\rho}_c(p) = \max_{p\in[0,a]} p\left(a-p-\frac{b^2}{2}\right) $$
    The objective is strictly concave in $p$, so from first order conditions:
    $$ p^* = \frac{1}{4}(2a-b^2).$$
    Simple algebra and the assumption that $a> 2b\left(1-\frac{b}{4}\right)$ show that indeed we have that $p^* \leq a-b$. Moreover, if the seller sets price $p^*$ and is not promoted by the platform, his revenue is
    $$\pi^O(\mu) := p^*\bar{\rho}_c(p^*)(1-\phi_L - (\phi_H-\phi_L)\mu) = (1-\bar{\phi}(\mu))\left(\frac{2a-b^2}{4}\right)^2.$$   
%    First, determine the value of the outside option which requires solving for the optimal price to set for patient consumers (assume it is low enough that $p^* \leq a-b$). Thus, we solve:
%    $$\max_{p\in[0,a]} p\bar{\rho}_c(p) = \max_{p\in[0,a]} p\left(a-p-\frac{b^2}{2}\right) $$
%    The objective is concave in $p$, so from first order conditions:
%    $$ p^* = \frac{1}{4}(2a-b^2)$$
%    We now verify that  $p^* \leq a-b$ as:
%    $$\frac{1}{4}(2a-b^2) < a- b \Leftrightarrow  a> 2b\left( 1- \frac{b}{4}\right)$$
%    $$\pi^O(\mu) = (1-\bar{\phi}(\mu))\left(\frac{2a-b^2}{4}\right)^2$$
    \textbf{Myopic policy.} To design a myopic promotion policy, we consider the optimization problem defined in \eqref{eq:myopicPromotion} (see \S\ref{subsec:InsufficiencyOfTruthfulDisclosure}). By focusing on single-price promotion policies, this problem reduces to:
    \begin{equation*}
    	\begin{split}
    		\max_{\substack{\alpha_{\phi_H},\alpha_{\phi_L} \in [0,1],\\p \in \mathcal{P}}}& ~~ \E_{\phi}\left(\phi \alpha_{\phi} \bar{W}_0(p) +\phi(1- \alpha_{\phi} )\bar{W}_{\text{out}}  + (1-\phi)\bar{W}_c(p)|\mu\right) \\
    		\text{s.t.}&~~ p\bar{\rho}_0(p)(\phi_L\alpha_{\phi_L}(1-\mu) +\phi_H\alpha_{\phi_H}\mu) +  p\bar{\rho}_c(p)(1-\phi_L-\mu(\phi_H-\phi_L)) \geq \pi^O(\mu).
    	\end{split}
    \end{equation*}
    The objective is decreasing in $p$, by Assumption \ref{Assump:ConsSurplus}. Moreover, since a feasible solution is $\alpha_{\phi_L}=\alpha_{\phi_H} = 1$ and~$p=p^*$ and the objective at this solution dominates the consumer surplus at any $p>p^*$ (with any promotion probabilities), the optimal price must be less than $p^*$. Finally, at any price $p<p^*$, the objective is increasing in $\alpha_{\phi_L},\,\alpha_{\phi_H}$, as we assumed that $a> 2b\left(1-\frac{b}{4}\right)$. Thus, at the optimal solution we have $\alpha_{\phi_L} = \alpha_{\phi_H}=1$. Moreover, the objective is decreasing in $p$ so the optimal price is the smallest one that makes the constraint binding, i.e.
    \begin{equation*}
    	p\bar{\rho}_0(p)(\phi_L\alpha_{\phi_L}(1-\mu) +\phi_H\alpha_{\phi_H}\mu) +  p\bar{\rho}_c(p)(1-\phi_L-\mu(\phi_H-\phi_L)) = \pi^O(\mu).
    \end{equation*}
	As argued above, the optimal solution has $\alpha_{\phi_L} = 1,\alpha_{\phi_H}=1$, so this expression becomes
	\begin{equation*}
		\bar{\phi}(\mu)p(a-p) + (1-\bar{\phi}(\mu))p\left(a-p-\frac{b^2}{2}\right) = (1-\bar{\phi}(\mu))\left(\frac{2a-b^2}{4}\right)^2.
	\end{equation*}
	Finally, by algebra we have
	\begin{equation*}
		p(\mu) = \frac{1}{4} \left[ 2a-b^2(1-\bar{\phi}(\mu))-\sqrt{\bar{\phi}(\mu)}\sqrt{4a^2 -b^4(1-\bar{\phi}(\mu)) } \right].
	\end{equation*}
%Solve[p*(a - p) - (1 - phi)*p*b^2/2 - (1 - phi) ((2 a - b^2)/4)^2 == 0, p]    
%\begin{align*}
%    (1-\mathbbm{E}\phi)p^*\bar{\rho}_0(p^*) &= \mathbbm{E}\phi (p\bar{\rho}_0(p) +(1-\mathbbm{E}\phi)p\bar{\rho}_c(p)\\
%    (1-\bar{\phi}(\mu))\left(\frac{2a-b^2}{4}\right)^2&=\bar{\phi}(\mu)p(\alpha-p) + (1-\bar{\phi}(\mu))p\left(a-p-\frac{b^2}{2}\right)\\
%    (1-\bar{\phi}(\mu))\left(\frac{2a-b^2}{4}\right)^2&=p(a-p) - (1-\bar{\phi}(\mu))p\frac{b^2}{2}
%\end{align*}
%%%p/.Simplify[First[Solve[p*(a-p) +  (1-l-(h-l)*m)*p*b^2/2- (1-l-(h-l)*m)((2a-b^2)/4)^2==0,p]]]
%    \begin{align*}
%        \Rightarrow p(\mu) &= \frac{1}{4}(2a-b^2(1-\bar{\phi}(\mu))-\sqrt{\bar{\phi}(\mu)}\sqrt{4a^2 -b^4(1-\bar{\phi}(\mu)) }
%    \end{align*}
    \textbf{Confounding Policy.} By our analysis of \S\ref{app:DesigningSimplePolicies}, it suffices to solve problem \eqref{eq:simplifiedOptimizationProblem} in terms of $\alpha_{\phi_L}$ and $p$, which is:
    \begin{equation*}
    	\begin{split}
    		W^C(\mu):= \max_{\substack{\alpha_{\phi_L} \in [0,1],\\p \in \mathcal{P}}}& ~~ (\bar {W}_0(p)-\bar{W}_{\text{out}}) \bigg( \mu(\phi_H-\phi_L)\frac{\bar{\rho}_c(p) }{\bar{\rho}_0(p)} + \phi_L\alpha_{\phi_L}\bigg) +\bar{W}_c(p)(1-\bar{\phi}(\mu)) + \bar{\phi}(\mu)\bar{W}_{\text{out}}\\
    		\text{s.t.}&~~(1-\phi_L)p\bar{\rho}_c(p)+ \phi_L\alpha_{\phi_L}p\bar{\rho}_0(p)\geq \pi^O(\mu).
    	\end{split}
    \end{equation*}   
%    First, we rule out that $\alpha_{\phi_L}=0$  because at any $p<p^*$, the objective of \eqref{eq:simplifiedOptimizationProblem} is increasing in $\alpha_{\phi_L}$. We can now show that the optimal solution under the other two possibilities is the same. Assume that $\alpha_{\phi_L}$ is defined by the second case above. Plugging into \eqref{eq:simplifiedOptimizationProblem} results in the following problem:
%  
%\begin{align*}
%    \max_{p \in \mathcal{P}} \quad &(\bar {W}_0(p)-\bar{W}_{\text{out}}) \left(\frac{p^*\bar{\rho}_c(p^*)-p\bar{\rho}_c(p)}{p\bar{\rho}_0(p)} \right) + \bar{W}_c(p)\\
%    \text{s.t. } & \pi^O(\mu)- (1-\phi_L)p\bar{\rho}_c(p)\leq \phi_L p\bar{\rho}_0(p)\\
%    & p\leq p^*
%\end{align*}
From the assumption that  $a> 2b\left(1-\frac{b}{4}\right)$, it follows that $\frac{\bar{\rho}_c(p)}{\bar{\rho}_0(p)}$ is decreasing in $p$, and therefore the objective of the problem above is itself decreasing in $p$. It follows that the optimal solution satisfies $\alpha_{\phi_L}^C = 1$ and $p^C$ is the lowest price that makes the incentive-compatibility constraint binding:
\begin{equation*}
	(1-\phi_L)p\bar{\rho}_c(p)+ \phi_L p\bar{\rho}_0(p)= \pi^O(\mu).
\end{equation*}%\vspace{-0.1cm}
By plugging in the expressions for the demand functions we have:
\begin{equation*}
	p\left[\left(a-p\right) - (1-\phi_L)\frac{b^2}{2}\right] = (1-\bar{\phi}(\mu))\left(\frac{2a-b^2}{4}\right)^2.
\end{equation*}%\vspace{-0.1cm}
%    \begin{align*}
%        \pi^O(\mu)- (1-\phi_L)p\bar{\rho}_c(p)\leq \phi_L p\bar{\rho}_0(p) \\
%        (1-\bar{\phi}(\mu))\left(\frac{2a-b^2}{4}\right)^2 = (1-\phi_L)p\left(a-p - \frac{b^2}{2}\right)  + \phi_L p(a-p)
%    \end{align*}
    %%p/.Simplify[First[Solve[(1-l)*p*(a-p-b^2/2)+l*p*(a-p)-(1-l-(h-l)*m)((2a-b^2)/4)^2==0,p]]]
    Solving for $p$ and simplifying the resulting expression yields
%\begin{align*}
%    \Rightarrow p(\mu) &= \frac{1}{4}\left(2a-(1-\phi_L)b^2 - \sqrt{-4ab^2(\phi_H-\phi_L)\mu+4a^2(\phi_L+(\phi_H-\phi_L)\mu)+b^4(\phi_L^2-\phi_L+(\phi_H-\phi_L)\mu)} \right)\\
%    &= \frac{1}{4}\left(2a-(1-\phi_L)b^2 - \sqrt{(4a^2-4ab^2+b^4)(\phi_H-\phi_L)\mu+4a^2(\phi_L)+b^4(\phi_L^2-\phi_L)} \right)\\
%    &= \frac{1}{4}\left(2a-b^2 + \phi_L b^2 - \sqrt{(2a-b^2)^2(\phi_H-\phi_L)\mu+4\phi_L(a^2-b^4)+b^4\phi_L^2} \right)
%\end{align*}
\begin{equation*}
	p^C(\mu) = \frac{1}{4}\left(2a-b^2(1- \phi_L) -\sqrt{\mu(\phi_H-\phi_L)(2a-b^2)^2+\phi_L(4a^2-b^4)+b^4\phi_L^2} \right).
\end{equation*}
Finally, the expression for $\alpha_{\phi_H}^C(\mu)$ can be derived from the confounding constraint:
\begin{equation*}
	\begin{split}
		\alpha_{\phi_H}^C(\mu) = \left(\frac{\phi_H-\phi_L}{\phi_H}\right)\left(\frac{\bar{\rho}_c(p^C(\mu))}{\bar{\rho}_0(p^C(\mu))}\right)+\alpha_{\phi_L}^C(\mu)\left(\frac{\phi_L}{\phi_H}\right)  = \left(\frac{a-p^C(\mu) - b^2/2}{a-p^C(\mu) }\right)\left(\frac{\phi_H-\phi_L}{\phi_H}\right)+\left(\frac{\phi_L}{\phi_H}\right)
	\end{split}
\end{equation*}
%    \begin{align*}
%	\alpha_H^C(\mu) &= \frac{a-p^C(\mu) - b^2}{ \alpha-p^C(\mu)}\left(\frac{\phi_H-\phi_L}{\phi_H}\right) + \frac{\phi_L}{\phi_H},\\
%	~p^C(\mu) &=\frac{1}{4}\left(2a-b^2(1- \phi_L) -\sqrt{(2a-b^2)^2(\phi_H-\phi_L)\mu+\phi_L(4a^2-b^4)+b^4\phi_L^2} \right).
%\end{align*}
% At what parameters do we have that the derivative of the objective function is decreasing at $p_0$?
% So Fix $\mu=0$, what is the optimal price? When is that derivative negative?
% \emph{It seems like you can decrease the price more because decreasing ithe price gets the beneift if it's true and if not. This doesnt make sense. }
\vspace{-0.1cm}
    \end{proof}

\section{Proofs}\label{sec:proofs}
	In this appendix we provide the proofs of our results. We start by proving Lemma~\ref{lemma:PromotionAsFunctionOfBelief} in \S\ref{proof:lemma:PromotionAsFunctionOfBelief}. Then, we prove Propositions~\ref{lemma:RevenueEquivalence}, \ref{prop: LearningUnderMyopicPromo},  and~\ref{prop: ConfoundingPolicyExistence}  in \S\ref{app: proof_RevenueEquivalence}, \S\ref{app: proof_prop_LearningUnderMyopicPromo}, and  \S\ref{proof:prop: ConfoundingPolicyExistence} respectively. Finally, we prove Theorems~\ref{thm:LongRunAverageOptimalConsumerSurplus}, \ref{prop: noinfodesign},  \ref{thm:epsilonEq} and \ref{thm:OptimalRobustEquilibria} in \S\ref{Sec:Proof:TheoremLongRun}, \S\ref{app: NoInfoDesign},  \S\ref{app: proof_BayesianNashEqm} and \S\ref{app: proof_RobustEqm}, respectively.
%Throughout all proofs, we refer to the patience type of consumer $t$ explicitly using  $\custtype_t \in \{I,P\}$ and include $\custtype$ in the arguments for the functions $W(p,a)$ and $\rho(p,a)$ defined in \S\ref{sec:Model}.
\subsection{Proof of Lemma \ref{lemma:PromotionAsFunctionOfBelief}} \label{proof:lemma:PromotionAsFunctionOfBelief}

%Recall the Lemma: \emph{     Suppose the seller is using the myopic Bayesian pricing policy $\boldsymbol{\pi^*}$. Then, for any $\bA \in \mathcal{A}$, $\sigma \in \Sigma$, there exists a promotion policy $\boldsymbol{\alpha'} \in \mathcal{A}^M$ such that:
%    $$W_T^{\bA,\sigma,\boldsymbol{\pi^*}} (\mu_0) \leq W_T^{\boldsymbol{\alpha'},\sigma,\boldsymbol{\pi^*}} (\mu_0).$$ }

Fix $\bA \in \mathcal{A}$ and $\sigma \in \Sigma$. %By step 1 in the proof of Proposition~\ref{lemma:RevenueEquivalence} (see \S\ref{app: proof_RevenueEquivalence}),  we can assume without loss of generality that $\bA$ is a single-price policy (see Definition~\ref{def: SinglePricePromoPolicies}). 
We will construct a promotion policy $\bA'\in \mathcal{A}^M$ that generates at least the same total consumer surplus as $\bA$, by  inductively altering $\bA$ backwards over periods $t=T,...,1$. The proof follows a dynamic programming argument, so that the promotion decisions from a period $t$ onwards maximize the continuation consumer surplus  given that the seller prices according to $\bP^*$. Since the seller's myopic pricing decisions at any point in time only depend on his present belief and the platform's promotion policy, we are able to construct such a policy in a way that it depends on the history only  through the seller's belief. 

 For~$\bar{h}\in \bar{H}_t$, define the expected consumer surplus that is generated in the remaining periods from history $\bar{h}$ onwards:
$$W_{t,T}^{\bA,\sigma,\boldsymbol{\pi^*}}(\bar{h}) = \E_Z\left(\sum_{t'=t}^T  W(p_{t'},a_{t'},\psi_{t'}) \middle|\boldsymbol{\alpha},\sigma,\boldsymbol{\pi}^*,\bar{h}_t = \bar{h}\right),$$

where $\E_Z$ denotes taking the expectation with respect to any randomness in the pricing policy, promotion decisions, customer types, purchase decisions, and the true value of $\phi$. % Given $\bar h \in \bar{H}_t$, let $[\bar h]:= \{\bar h' \in \bar H_t: \mu_t(\langle\bA,\sigma,\bar h'\rangle) = \mu_t(\langle\bA,\sigma,\bar h\rangle) \} \subseteq \bar H_t$. In other words, $[\bar h] $ is the equivalence class of histories in $\bar{H}_t$ in which the seller has the same seller belief as in $\bar{h}$. 

%Moreover, given $\sigma$ and $\bA$, define $\mathcal{B}_t\subseteq[0,1]$ as the set of beliefs induced to the seller in period $t$ for at least one history, i.e., $\mathcal{B}_t = \left \lbrace \mu \in [0,1]:\, \mu_t \left( \langle \bA, \sigma, \bar{h}  \rangle \right) = \mu \text{ for some } \bar{h} \in \bar{H}_t \right \rbrace.$

\textbf{Period T}: We will define an optimization problem to design a promotion policy $\alpha_T'$ that maximizes the remaining welfare at period $T$, and therefore generates at least the same consumer surplus in period $T$ as~$\alpha_T$. For fixed $\mu\in[0,1]$, define:
\begin{equation}\label{eq: W_cont_problem}
	\begin{split}
		W^{cont}_T(\mu)&:= \max_{\substack{\alpha_{\phi_H},\alpha_{\phi_L} \in [0,1],\\p \in \mathcal{P}}} ~~ \E_{\phi}\left(\phi \alpha_{\phi} \bar{W}_0(p) +\phi(1- \alpha_{\phi} )\bar{W}_{\text{out}}  + (1-\phi)\bar{W}_c(p)|\mu\right) \\
		&\text{s.t.}~~ p\bar{\rho}_0(p)(\phi_L\alpha_{\phi_L}(1-\mu) +\phi_H\alpha_{\phi_H}\mu) +  p\bar{\rho}_c(p)(1-\bar{\phi}(\mu)) \geq  p^*\bar{\rho}_c(p^*)(1-\bar{\phi}(\mu)),
	\end{split}
\end{equation}
where we denote $\bar{\phi}(\mu) = \phi_L +\mu(\phi_H-\phi_L)$. Note that the feasible region of problem \eqref{eq: W_cont_problem} is non-empty (take $p=p^*$, $\alpha_\phi=0$ for $\phi = \phi_L,\phi_H$ as a feasible solution). Moreover, problem \eqref{eq: W_cont_problem} admits a solution since its objective is continuous and its feasible set is compact. Let $(\alpha_{\phi_L}(\mu), \alpha_{\phi_H}(\mu),p_T(\mu))$ be an optimal solution for \eqref{eq: W_cont_problem} and define the promotion policy for time $T$ as follows. For a history $\bar{h} \in \bar{H}_T$ such that $\mu = \mu_T(\langle\bA,\sigma,\bar h \rangle)$, define:
\begin{equation}\label{eq: aux_myopic_promo_policy_T}
	\alpha_T'(p,\phi,\bar{h}) = \begin{cases}
		\alpha_{\phi_L}(\mu), &\text{ if $p=p_T(\mu)$ and $\phi=\phi_L$} \\
		\alpha_{\phi_H}(\mu), &\text{ if $p=p_T(\mu)$ and $\phi=\phi_H$} \\
		0, &\text{otherwise.}
	\end{cases}
\end{equation}

We now define a new promotion policy, $\bA'_T$, as follows. At time $T$, use the policy defined by $\alpha_T'$ in \eqref{eq: aux_myopic_promo_policy_T}. At time periods $t<T$, use the same policy as $\bA$ (i.e., $\{\alpha_t\}_{t=1}^{T-1}$). Note that the promotion policy and pricing decisions in periods that precede $T$ are unaffected by changes in period $T$; thus, the beliefs associated with any history remain the same under $(\bA'_T,\sigma)$ and $(\bA,\sigma)$, i.e., for any $\bar{h}\in \bar{H}_T$:
\begin{equation*}
	\mu_T \left( \langle \bA, \sigma, \bar{h}  \rangle \right) = \mu_T \left( \langle \bA'_T, \sigma, \bar{h}  \rangle \right).
\end{equation*}
As a result, $\bA'_T$ satifies the condition of Definition~\ref{def:PromotionPolsSellerBelief} at time $T$, as \eqref{eq: aux_myopic_promo_policy_T} depends on the history $\bar{h}$ only through its corresponding belief, i.e., if $\bar{h}',\bar{h}''$ are such that $\mu_T \left( \langle \bA'_T, \sigma, \bar{h}'  \rangle \right) = \mu_T \left( \langle \bA'_T, \sigma, \bar{h}''  \rangle \right)$, we have that for all~$p\in \mathcal{P}$ and $\phi\in \{\phi_L,\phi_H\}$:
\begin{equation*}
	\alpha_T'(p,\phi,\bar{h}') = \alpha_T'(p,\phi,\bar{h}'').
\end{equation*}

Furtheremore, note that the constraint in \eqref{eq: W_cont_problem} ensures that setting price $p=p_T(\mu)$ is myopically optimal for a seller with belief $\mu$ at time $T$, given the promotion policy $\alpha_T'$ and independently of the previous history. In addition, since the objective of \eqref{eq: W_cont_problem} is to maximize the expected consumer surplus at time $T$, the promotion policy $\bA_T$ generates at least the same consumer surplus as $\bA$ on time $T$, i.e., for any $\bar{h}\in \bar{H}_T$:
\begin{equation*}
	W_{T,T}^{\bA'_T,\sigma,\boldsymbol{\pi^*}}(\bar{h})  \geq W_{T,T}^{\bA,\sigma,\boldsymbol{\pi^*}}(\bar{h}).
\end{equation*}
% Since both $\bA$ and $\bA_T'$ are identical in periods before $t$, we have that for all $t$ and  $\bar{h}\in \bar{H}_t$, $\bA'_T$ dominates $\bA$ in terms of remaining consumer surplus, i.e.,
Again, since  the promotion policy and pricing decisions in periods preceding $T$ are unaffected by changes in period $T$,  the expected consumer surplus generated in periods $t<T$ remains the same under both policies, from where we have that for any $t=1,\dots,T$ and any history $\bar{h}_t\in \bar{H}_t$,
\begin{equation*}
	W_{t,T}^{\bA'_T,\sigma,\boldsymbol{\pi^*}}(\bar{h})  \geq W_{t,T}^{\bA,\sigma,\boldsymbol{\pi^*}}(\bar{h}).
\end{equation*}
In particular, for $t=1$ we have that
$W_{T}^{\bA'_T,\sigma,\boldsymbol{\pi^*}}(\mu_0) \geq W_{T}^{\bA,\sigma,\boldsymbol{\pi^*}}(\mu_0)$. Finally, note that given a history $\bar{h}\in \bar{H}_T$ with corresponding seller belief $\mu$, by construction of \eqref{eq: W_cont_problem} we can write the remaining consumer surplus as
\begin{equation*}
	W_{T,T}^{\bA'_T,\sigma,\boldsymbol{\pi^*}}(\bar{h}) = W^{cont}_T(\mu),
\end{equation*}
where $W^{cont}_T$ is a continuous function\footnote{This follows from Berge's Maximum Theorem, which applies since the objective of \eqref{eq: W_cont_problem} is continuous on $(\alpha_{\phi_L}, \alpha_{\phi_H},p,\mu)$ and the constraint is linear in $\mu$.} of $\mu \in[0,1]$. We continue this procedure iteratively.

\textbf{Induction Hypothesis}: Fix $1\leq t\leq T$. Assume that we have constructed $\boldsymbol{\alpha}_{t+1}'$ such that the policy for the first $t$ periods is the same as $\bA$: that is, $\alpha_{t'}' = \alpha_{t'}$ for all $t'\leq t$. Moreover, for all $t'>t$, we can express the promotion policy as a function of the seller's belief. Moreover, assume that for any $t' = 1,\dots,T$ and any history $\bar{h} \in \bar{H}_{t'}$ we have:
$$W_{t',T}^{\bA'_{t+1},\sigma,\boldsymbol{\pi^*}}(\bar{h}) \geq W_{t',T}^{\bA,\sigma,\boldsymbol{\pi^*}}(\bar{h}),$$
and, in particular,
$W_{T}^{\bA'_{t+1},\sigma,\boldsymbol{\pi^*}}(\mu_0) \geq W_{T}^{\bA,\sigma,\boldsymbol{\pi^*}}(\mu_0)$.

In addition, suppose given a history $\bar{h}\in \bar{H}_{t+1}$ we can write the remaining consumer surplus at time $t+1$ as a function of the seller's belief, i.e.,
\begin{equation*}
	W_{t+1,T}^{\bA'_{t+1},\sigma,\boldsymbol{\pi^*}}(\bar{h}) = W^{cont}_{t+1}(\mu)
\end{equation*}
where $\mu = \mu_{t+1}\left( \langle \bA'_{t+1}, \sigma, \bar{h}  \rangle \right)$, and $W^{cont}_{t+1}$ is a continuous function of $\mu \in [0,1]$.

\textbf{Induction Step}: Fix $t$ and a promotion policy $\bA'_{t+1}$ that satisfies the induction hypothesis. Following the same reasoning as in the construction for period $T$, %fix $\mu \in \mathcal{B}_t$ and $\bar{h}\in \bar{H}_t$ such that $\mu = \mu_t(\langle\bA'_{t+1},\sigma,\bar h \rangle)$.  Proceeding as with period $T$, 
we will define an optimization problem to design a promotion policy for time $t$ that only depends on the history through the seller's belief, and that generates at least the same consumer surplus as $\bA'_{t+1}$ from period $t$ onwards. Here, the objective will not only consider the consumer surplus at time $t$, but also the continuation value from time $t+1$ onwards. 

We define some auxiliary notation to write this objective. First, given $p\in \mathcal{P}$ and $\phi\in \{\phi_L,\phi_H\}$, let $\rho^\phi(p,\alpha_\phi)$ be the probability of purchase if the seller sets price $p$ and is promoted with probability $\alpha_\phi$, i.e.
\begin{equation*}%\label{eq: rho_phi}
	\rho^\phi(p,\alpha_\phi) = \phi \alpha_\phi \bar{\rho}_0(p)+(1-\phi)\bar{\rho}_c(p).
\end{equation*}
Note that these probabilities determine the seller's belief updating process. Concretely, if the seller's belief at time $t$ is $\mu_t$ and sets a price $p$, and is promoted with probabilities $\alpha_{\phi_L}$, $\alpha_{\phi_H}$ (that depend on the true value of $\phi$), by applying Bayes' rule we have that the seller's belief at time $t+1$ updates to:
\begin{equation*}
 \mu_{t+1} = \begin{cases}
 	\frac{\mu_t\rho^{\phi_H}\left(p,\alpha_{\phi_H}\right)}{\mu_t\rho^{\phi_H}\left(p,\alpha_{\phi_H}\right) + (1-\mu_t)\rho^{\phi_L}\left(p,\alpha_{\phi_L}\right)}, & \text{if the consumer buys from the seller,}\\
 	\frac{\mu_t\left(1-\rho^{\phi_H}\left(p,\alpha_{\phi_H}\right)\right)}{\mu_t\left(1-\rho^{\phi_H}\left(p,\alpha_{\phi_H}\right)\right) + (1-\mu_t)\left(1-\rho^{\phi_L}\left(p,\alpha_{\phi_L}\right)\right)}, & \text{otherwise.}
 \end{cases}	
\end{equation*}
Then, we define $\Gamma_{t+1}^\phi(\alpha_{\phi_L},\alpha_{\phi_H},p,\mu)$ as the expected consumer surplus generated from time $t+1$ onwards if the true state if $\phi$, the seller's belief at time $t$ is $\mu$, and assuming the seller sets a price of $p$ at time $t$ and is promoted with probabilities $\alpha_{\phi_L}$, $\alpha_{\phi_H}$. By the induction hypothesis, we can write:
\begin{equation}\label{eq: aux_Gamma_continuation_value}
	\begin{split}
		\Gamma_{t+1}^\phi(\alpha_{\phi_L},\alpha_{\phi_H},p,\mu)& =\rho^\phi(p,\alpha_\phi)W^{cont}_{t+1}\left(	\frac{\mu\rho^{\phi_H}\left(p,\alpha_{\phi_H}\right)}{\mu\rho^{\phi_H}\left(p,\alpha_{\phi_H}\right) + (1-\mu)\rho^{\phi_L}\left(p,\alpha_{\phi_L}\right)} \right)\\
		&+(1-\rho^\phi(p,\alpha_\phi))W^{cont}_{t+1}\left(\frac{\mu\left(1-\rho^{\phi_H}\left(p,\alpha_{\phi_H}\right)\right)}{\mu\left(1-\rho^{\phi_H}\left(p,\alpha_{\phi_H}\right)\right) + (1-\mu)\left(1-\rho^{\phi_L}\left(p,\alpha_{\phi_L}\right)\right)} \right)
	\end{split}
\end{equation}

Notice that by the induction hypothesis and Assumption~\ref{assump:Demand}, $\Gamma_{t+1}^\phi$ is a continuous function of $(\alpha_{\phi_L},\alpha_{\phi_H},p,\mu)$ for fixed $\phi\in \{\phi_L,\phi_H\}$. We now define an optimization problem to design the promotion policy at time $t$ in a way that maximizes the remaining consumer surplus from time $t$ onwards:
\begin{equation}\label{eq: W_cont_problem_t}
	\begin{split}
	W^{cont}_{t}(\mu)&:= \max_{\substack{p \in \mathcal{P},\\\alpha_{\phi_H},\alpha_{\phi_L} \in [0,1]}}  \E_{\phi}\left(\phi \alpha_{\phi} \bar{W}_0(p) +\phi(1- \alpha_{\phi} )\bar{W}_{\text{out}}  + (1-\phi)\bar{W}_c(p)+\Gamma_{t+1}^\phi(\alpha_{\phi_L},\alpha_{\phi_H},p,\mu) \middle|\mu \right) \\
	~&\text{s.t.}~~ p\bar{\rho}_0(p)(\phi_L\alpha_{\phi_L}(1-\mu) +\phi_H\alpha_{\phi_H}\mu) +  p\bar{\rho}_c(p)(1-\bar{\phi}(\mu)) \geq  p^*\bar{\rho}_c(p^*)(1-\bar{\phi}(\mu)).
\end{split}
\end{equation}

Note that the objective of \eqref{eq: W_cont_problem_t} consists of the consumer surplus at period $t$ plus the expected continuation value from time $t+1$ onwards which is defined by $\Gamma_{t+1}^\phi$ in \eqref{eq: aux_Gamma_continuation_value}. Moreover, the constraint remains as in \eqref{eq: W_cont_problem}, which ensures that the price associated with the designed policy will be myopically optimal for the seller.

Proceeding as in the first step, observe that \eqref{eq: W_cont_problem_t} admits an optimal solution since its objective is continuous (by the induction hypothesis and Assumption~\ref{Assump:ConsSurplus}) and its feasible set is compact. Let $(\alpha_{\phi_L}^t(\mu), \alpha_{\phi_H}^t(\mu),p_t(\mu))$ be an optimal solution in \eqref{eq: W_cont_problem_t} and define the promotion policy for time $t$ as follows. For a history $\bar{h} \in \bar{H}_t$ such that $\mu = \mu_t(\langle\bA_{t+1}',\sigma,\bar h \rangle)$, define:
\begin{equation}\label{eq: aux_myopic_promo_policy_t}
	\alpha_t'(p,\phi,\bar{h}) = \begin{cases}
		\alpha_{\phi_L}^t(\mu), &\text{ if $p=p_t(\mu)$ and $\phi=\phi_L$} \\
		\alpha_{\phi_H}^t(\mu), &\text{ if $p=p_t(\mu)$ and $\phi=\phi_H$} \\
		0, &\text{otherwise.}
	\end{cases}
\end{equation}
Define the policy $\bA_t'$ as follows. At period $t$, use the policy $\alpha_t'$ defined in \eqref{eq: aux_myopic_promo_policy_t}. At periods $t+1,\dots,T$, use the policy $\bA_{t+1}'$ from the induction hypothesis, and at periods before $t$ use the policy $\bA$. We now verify that~$\bA_t'$ satifies the confitions of the induction hypothesis.

By the construction in \eqref{eq: aux_myopic_promo_policy_t} and the induction hypothesis, $\bA_t'$ only depends on the history through the seller's belief for periods $t,\dots,T$, and is equal to $\bA$ in periods before $t$. By construction of the objective in \eqref{eq: W_cont_problem_t} and the induction hypothesis, we have that for any $\bar{h}\in \bar{H}_t$:
\begin{equation*}
	W_{t,T}^{\bA'_t,\sigma,\boldsymbol{\pi^*}}( \bar{h})\geq W_{t,T}^{\bA'_{t+1},\sigma,\boldsymbol{\pi^*}}(\bar{h})\geq W_{t,T}^{\bA,\sigma,\boldsymbol{\pi^*}}(\bar{h}).
\end{equation*}
As in the first step, we have that as the promotion policy and pricing decisions in periods preceding $t$ are unaffected by changes in period $t$, the beliefs associated with any history $\bar{h}\in \bar{H}_t$ remain the same under~$(\bA'_t,\sigma)$ and $(\bA'_{t+1},\sigma)$, i.e., for any $\bar{h}\in \bar{H}_t$:
\begin{equation*}
	\mu_t \left( \langle \bA'_{t+1}, \sigma, \bar{h}  \rangle \right) = \mu_t \left( \langle \bA'_t, \sigma, \bar{h}  \rangle \right),
\end{equation*}
and, for the same reason, the consumer surplus generated in periods earlier than $t$ remains the same under both policies. In addition, in periods that follow $t$, both $\bA_t$ and $\bA_{t+1}$ are identical, so from the induction hypothesis we have that:
\begin{equation*}
	W_{t',T}^{\bA'_t,\sigma,\boldsymbol{\pi^*}}( \bar{h})\geq W_{t',T}^{\bA'_{t+1},\sigma,\boldsymbol{\pi^*}}(\bar{h})\geq W_{t',T}^{\bA,\sigma,\boldsymbol{\pi^*}}(\bar{h}),
\end{equation*}
for any $t'=1,\dots,T$ and any history $\bar{h}\in \bar{H}_{t'}$. In particular for the first period we have:
$$W_{T}^{\bA'_t,\sigma,\boldsymbol{\pi^*}}(\mu_0) \geq W_{T}^{\bA'_{t+1},\sigma,\boldsymbol{\pi^*}}(\mu_0) \geq W_{T}^{\bA,\sigma,\boldsymbol{\pi^*}}(\mu_0).$$

Finally, since the objective of \eqref{eq: W_cont_problem_t} is continuous, we have as in the first step that $W^{cont}_t$ is a continuous function of $\mu \in[0,1]$. By construction,  given a history $\bar{h}\in \bar{H}_t$ with corresponding seller belief $\mu = \mu_{t}\left( \langle \bA'_{t}, \sigma, \bar{h}  \rangle \right)$, we can write
\begin{equation*}
	W_{t,T}^{\bA'_t,\sigma,\boldsymbol{\pi^*}}(\bar{h}) = W^{cont}_t(\mu).
\end{equation*}
Therefore, $\bA'_t$ satisfies the conditions of the induction hypothesis.  By continuing this process iteratively until period $1$, we obtain $\bA'_1\in\mathcal{A}^M$ such that 
\begin{equation*}
	W_T^{\boldsymbol{\alpha}'_1,\sigma,\boldsymbol{\pi^*}} (\mu_0) \geq W_T^{\bA,\sigma,\boldsymbol{\pi^*}} (\mu_0).
\end{equation*}
%\qed
 \subsection{Proof of Proposition \ref{lemma:RevenueEquivalence}}\label{app: proof_RevenueEquivalence}
 
By Lemma~\ref{lemma:PromotionAsFunctionOfBelief}, we can assume without loss that $\boldsymbol{\alpha} \in \mathcal{A}^M$. The proof then consists of two steps. First, we establish that given $T\geq 1, \boldsymbol{\alpha} \in \mathcal{A},\sigma\in\Sigma$, we can construct a single-price promotion policy, $\boldsymbol{\alpha'} \in \mathcal{A}^{P}$ that generates at least the same consumer surplus as $\bA$, i.e., that
$$W^{\bA,\sigma,\boldsymbol{\pi^*}}_T(\mu_0) = W^{\boldsymbol{\alpha'},\sigma,\boldsymbol{\pi^*}}_T(\mu_0).$$
Then, the second step establishes that we can consider simple signaling mechanisms without welfare loss. Formally, we show that there exists a signaling mechanism $\sigma'\in\Sigma^S$ such that:
$$W^{\bA,\sigma,\boldsymbol{\pi^*}}_T(\mu_0) \leq W^{\bA,\sigma',\boldsymbol{\pi^*}}_T(\mu_0).$$ 
\textbf{Step 1 (Promotion Policy).}
Fix $\bA \in \mathcal{A}^M$ and $\sigma \in \Sigma$. Recall that $\boldsymbol{\pi}^*$ is the Bayesian myopic pricing policy defined in \S\ref{sec:Model}.
% We will construct $\alpha' \in \mathcal{A}^P$ such that:
% $$W^{\alpha,\sigma,\pi^*}_T(\mu_0) = W^{\alpha',\sigma,\pi^*}(\mu_0,T).$$ 
We will show that there exists a single-price  promotion policy  $\bA'$  that generates the same consumer surplus as $\bA$. %; note that by Lemma \ref{lemma:PromotionAsFunctionOfBelief}, it is without loss to  consider promotion policies that depend only on the seller's belief. % as the price set by a myopic pricing policy is a function of the seller's belief and the present promotion policy (conditioned on the history). %\ilanst{Beginning in period $T$, for every $\mu \in [0,1]$ let $p_T^*(\mu) \in \mathcal{P}$ be a price in the support of $\pi_T^*(\mu)$. Note that this implies we are replacing $\pi_T(\mu) = p_T^*(\mu)~w.p.~1$. Note that this altered pricing policy is myopically optimal at every period and maximizes the current expected consumer-welfare as every price in the support of $\boldsymbol{\pi^*}$ must have satisfied both. Working backwards, we can do the same in every period $t=T-1,...,1$ and create a pricing policy $\boldsymbol{\pi'}$ that is deterministic at every history.} 
Given that $\boldsymbol{\pi}^*$ is the myopic Bayesian pricing policy, for each period~$t=1,\dots,T$ and %belief $\mu\in[0,1]$, 
history $\bar{h}\in \bar{H}_t$,
there exists a (deterministic) price $p_t^*(\bar{h})$ such that $\pi^*_t\left( \langle \bA,\sigma,\bar{h} \rangle \right) = p_t^*(\bar{h})~w.p.~1$.  Using these prices, define:
\begin{equation}
	\alpha_t'(p,\phi,\bar{h}) = \begin{cases}
		\alpha_t(p,\phi,\bar{h}), &\text{ if } p=p_t^*(\bar{h})\\
		0, &\text{otherwise.}
	\end{cases}
\end{equation}
Letting $\boldsymbol{\alpha}' = \{\alpha_t'\}_{t=1}^T$, we have that
$\boldsymbol{\alpha}'\in\mathcal{A}^P$. Moreover, it is straighforward to verify that the seller's myopic pricing decisions (induced by $\boldsymbol{\pi^*}$) in response to the platform's promotion policies $\boldsymbol{\alpha}$ and $\boldsymbol{\alpha}'$ are the same. Thus, both $\boldsymbol{\alpha}$ and~$\boldsymbol{\alpha}'$ induce the same outcomes and therefore generate the same total consumer surplus. %(and, in fact, the same consumer surplus in every period).

\textbf{Step 2 (Signaling Mechanism).} Fix $\boldsymbol{\alpha} \in \mathcal{A}$ and $\sigma \in \Sigma$. The signaling mechanism $\sigma$ induces a probability distribution over posteriors $\mu_1 \in [0,1]$. Conditioned on $\mu_1$, by Lemma~\ref{prop: LearningUnderMyopicPromo}, we can assume without loss that the platform's expected value is independent of the realized signal $s$. Thus, given $\bA$, we can write the expected consumer surplus conditional on the belief in the first period: $W^{\boldsymbol{\alpha},\boldsymbol{\pi^*}}(\mu_1) := \E_Z\left(\sum_{t=1}^T W(p_t,a_t,\custtype_t)|\boldsymbol{\alpha},\boldsymbol{\pi^*},\mu_1\right)$.

If $W^{\boldsymbol{\alpha},\boldsymbol{\pi^*}}(\mu_0) \geq W^{\bA,\sigma,\boldsymbol{\pi^*}}_T(\mu_0)=\E_s \left(W^{\boldsymbol{\alpha},\boldsymbol{\pi^*}}(\mu_1)|\sigma \right) $, then we have the result by defining an uninformative simple signal. Namely, let $S=\{\phi_L,\phi_H\}$ and $\sigma'(\phi) = L ~w.p.~1, $ for $\phi \in \{\phi_L,\phi_H\}$.

Otherwise, following the same reasoning as in Proposition 1 in \cite{Kamenica2011}, we have that since $W^{\bA,\sigma,\boldsymbol{\pi^*}}_T(\mu_0)=\E_s \left(W^{\boldsymbol{\alpha},\boldsymbol{\pi^*}}(\mu_1)|\sigma \right)$ is a convex combination of points in the set $\{W^{\boldsymbol{\alpha},\pi}(\mu):\, \mu \in [0,1] \}$, %\ilanst{$[0,1]\times\R $ and $(\mu_0,W^{\boldsymbol{\alpha},\pi}(\mu_0))$ is in the interior of the convex hull,}
there exist points $0\leq \mu'<\mu_0<\mu''\leq 1$ and $\gamma \in (0,1)$ such that %\ilanst{where $W^{\boldsymbol{\alpha},\pi}(\mu') + \frac{W^{\boldsymbol{\alpha},\pi}(\mu'')-W^{\boldsymbol{\alpha},\pi}(\mu')}{\mu''-\mu'} \geq W^{\boldsymbol{\alpha},\pi}(\mu_0)$.}

\begin{equation*}
	W^{\bA,\sigma,\boldsymbol{\pi^*}}_T(\mu_0) = \left[\frac{\mu'' -\mu_0}{\mu''-\mu'}\right] W^{\boldsymbol{\alpha},\boldsymbol{\pi^*}}(\mu') + \left[\frac{\mu_0 -\mu'}{\mu''-\mu'}\right]W^{\boldsymbol{\alpha},\boldsymbol{\pi^*}}(\mu'').
\end{equation*}

Letting $S=\{\phi_L,\phi_H\}$ and  defining the signaling mechanism $\sigma'$ by
$$\sigma'(\phi_L) = \begin{cases}
	\phi_L, &~w.p.~\left(\frac{1-\mu'}{1-\mu_0}\right)\left(\frac{\mu''-\mu_0}{\mu''-\mu'}\right)\\
	\phi_H, &~w.p.~1-\left(\frac{1-\mu'}{1-\mu_0}\right)\left(\frac{\mu''-\mu_0}{\mu''-\mu'}\right)\\
\end{cases} \quad \text{and} \quad \sigma'(\phi_H) = \begin{cases}
	\phi_L, &~w.p.~\left(\frac{\mu'}{\mu_0}\right)\left(\frac{\mu''-\mu_0}{\mu''-\mu'}\right)\\
	\phi_H, &~w.p.~1-\left(\frac{\mu'}{\mu_0}\right)\left(\frac{\mu''-\mu_0}{\mu''-\mu'}\right)\\
\end{cases}$$
%    $$\sigma'(\phi_H) = \begin{cases}
	%        \phi_L, &~w.p.~\left(\frac{\mu'}{\mu_0}\right)\left(\frac{\mu''-\mu_0}{\mu''-\mu'}\right)\\
	%        \phi_H, &~w.p.~1-\left(\frac{\mu'}{\mu_0}\right)\left(\frac{\mu''-\mu_0}{\mu''-\mu'}\right)\\
	%    \end{cases} $$
completes the result.
\qed
\subsection{Proof of Proposition~\ref{prop: LearningUnderMyopicPromo}}\label{app: proof_prop_LearningUnderMyopicPromo}

Let $W^{M}(\mu)$ denote the consumer surplus associated with a myopic promotion policy that maximizes the present period consumer surplus (i.e., the value of problem~\eqref{eq:myopicPromotion}). %\ilanst{To make the proof of this proposition as concise and simple as possible, we leverage the result that we formally state and proof in}
 By  Proposition~\ref{lemma:RevenueEquivalence},  it is without loss of optimality to consider single-price promotion policies. %\footnote{It is worth noting that Proposition~\ref{lemma:RevenueEquivalence} only builds upon Lemma~\ref{prop: LearningUnderMyopicPromo}, and thus it does not depend on this or subsequent results in any way.} 
 Thus,  we can write:
\begin{equation}\label{eq: myopicpromosingleprice}
	    \begin{split}
		W^M(\mu):= \max_{\substack{\alpha_{\phi_H},\alpha_{\phi_L} \in [0,1],\\p \in \mathcal{P}}}& ~~ \E_{\phi}\left(\phi \alpha_{\phi} \bar{W}_0(p) +\phi(1- \alpha_{\phi} )\bar{W}_{\text{out}}  + (1-\phi)\bar{W}_c(p)|\mu\right) \\
		% \text{s.t.}&~~p \in \arg \max_{p \in \mathcal{P}} ~~\E_{y,\theta}\left( v(p,y) |\mu,\alpha_{\phi_H},\alpha_{\phi_L}\right)\\
		\text{s.t.}&~~ p\bar{\rho}_0(p)(\phi_L\alpha_{\phi_L}(1-\mu) +\phi_H\alpha_{\phi_H}\mu) +  p\bar{\rho}_c(p)(1-\phi_L-\mu(\phi_H-\phi_L)) \geq \\
		&\quad \quad \quad p^*\bar{\rho}_c(p^*)(1-\phi_L-\mu(\phi_H-\phi_L)).
	\end{split}
\end{equation}

Denote the myopic promotion policy that results from solving problem \eqref{eq: myopicpromosingleprice} by $\hat{\bA}$. That is, if for each belief $\mu\in[0,1]$, $\left(p(\mu),\alpha_{\phi_L}(\mu),\alpha_{\phi_H}(\mu)\right)$ is an optimal solution to \eqref{eq: myopicpromosingleprice}, $\hat{\bA}$ is defined as the following static policy:

\begin{equation*}
	\hat{\alpha}(p,\phi,\mu) = \begin{cases}
		\alpha_{\phi_L}(\mu), &\text{ if $p=p(\mu)$ and $\phi=\phi_L$} \\
		\alpha_{\phi_H}(\mu), &\text{ if $p=p(\mu)$ and $\phi=\phi_H$} \\
		0, &\text{otherwise.}
	\end{cases}
\end{equation*}

Then, we can write the average consumer surplus that results from an optimal myopic promotion policy as %\ds{discuss notation}
\begin{equation*} 
	\begin{split}
		\frac{1}{T}W_T^{\hat{\bA},\sigma,\boldsymbol{\pi^*}}(\mu_0)&  = \frac{1}{T}\E\left(\sum_{t=1}^T  W^M(\mu_{t}) \middle|\sigma\right) = \frac{1}{T}\left[\mu_0\E^H\left(\sum_{t=1}^T  W^M(\mu_{t}) \middle|\sigma\right) +(1-\mu_0)\E^L\left(\sum_{t=1}^T  W^M(\mu_{t}) \middle|\sigma\right)\right],
	\end{split}
\end{equation*}
where $\E^H, \E^L$ denote the conditional expectations given $\phi =\phi_H,~\phi_L$ respectively. As a result, it suffices to show that 
\begin{equation*}
	\lim_{T\rightarrow \infty} \frac{1}{T} \E^H\left(\sum_{t=1}^T  W^M(\mu_{t}) \middle|\sigma \right) = W^M(1), \quad \text{and} \quad \lim_{T\rightarrow \infty} \frac{1}{T} \E^L\left(\sum_{t=1}^T  W^M(\mu_{t}) \middle|\sigma \right) = W^M(0),
\end{equation*}

We will show that the limit holds for $\phi = \phi_L$  while the corresponding result for $\phi = \phi_H$ follows analogously. First, we claim that the optimal solution to \eqref{eq: myopicpromosingleprice} is to set $\alpha_{\phi_L}=\alpha_{\phi_H}=1$ and $p$ as the smallest price that makes the constraint binding. To see this, note that setting a price $p>p^*$ is never optimal in~\eqref{eq: myopicpromosingleprice}, since the platform can incentivize the seller to set a price that does not exceed $p^*$ for any choice of $(\alpha_{\phi_H},\alpha_{\phi_L})$. Thus, it is without loss of optimality to consider only $p\in[\inf(\mathcal{P}),p^*]$ in~\eqref{eq: myopicpromosingleprice}. Recall that we assumed that~$\bar{W}_0(p^*) \geq \bar{W}_{\text{out}}$. By Assumption~\ref{Assump:ConsSurplus}, $\bar{W}_0(p)$ is decreasing in $p$ and therefore $\bar{W}_0(p) \geq \bar{W}_{\text{out}}$ for all~$p\in[\inf(\mathcal{P}),p^*]$. This condition implies that the objective of~\eqref{eq: myopicpromosingleprice} is decreasing in $p$ and increasing in $\alpha_{\phi_H}$ and $\alpha_{\phi_L}$, and therefore an optimal solution is given by $\alpha_{\phi_L}=\alpha_{\phi_H}=1$ and $p=\hat{p}(\mu)$, where $p=\hat{p}(\mu)$ is the smallest solution to\footnote{A solution is guaranteed to exist since $\mathcal{P}$ contains zero.}
\begin{equation*}
	p\bar{\rho}_0(p)(\phi_L(1-\mu) +\phi_H\mu) +  p\bar{\rho}_c(p)(1-\phi_L-\mu(\phi_H-\phi_L)) = p^*\bar{\rho}_c(p^*)(1-\phi_L-\mu(\phi_H-\phi_L)).
\end{equation*}
We now claim that the demand path induced by this promotion policy, together with the seller's myopic pricing decisions results in sales observations that are informative to the seller. Indeed, we have that at any time $t$:
 \begin{equation*}
 	\begin{split}
 		&|\phi_H\hat{\alpha}(\hat{p}(\mu_t),\phi_H,\mu_t)\bar{\rho}_0(\hat{p}(\mu_t)) + (1-\phi_H)\bar{\rho}_c(\hat{p}(\mu_t)) - \phi_L\hat{\alpha}(\hat{p}(\mu_t),\phi_L,\mu_t)\bar{\rho}_0(\hat{p}(\mu_t)) - (1-\phi_L)\bar{\rho}_c(\hat{p}(\mu_t))|\\
 		= & |\phi_H\bar{\rho}_0(\hat{p}(\mu_t)) + (1-\phi_H)\bar{\rho}_c(\hat{p}(\mu_t)) - \phi_L\bar{\rho}_0(\hat{p}(\mu_t)) - (1-\phi_L)\bar{\rho}_c(\hat{p}(\mu_t))| \\
 		= & (\phi_H - \phi_L)\left(\bar{\rho}_0(\hat{p}(\mu_t))-\bar{\rho}_c(\hat{p}(\mu_t)\right).
 	\end{split}
\end{equation*}
Moreover, we claim that there exists $\gamma>0$ such that $(\phi_H - \phi_L)\left(\bar{\rho}_0(\hat{p}(\mu_t))-\bar{\rho}_c(\hat{p}(\mu_t)\right)>\gamma$ at all times $t$. To see that this holds, note first that $\hat{p}(\mu_t)< p^*$. Moreover, since the constraint in~\eqref{eq: myopicpromosingleprice} is binding in optimality, it follows that $\hat{p}(\mu_t)>0\geq inf(\mathcal{P})$. Thus, we have that $(\phi_H - \phi_L)\left(\bar{\rho}_0(\hat{p}(\mu_t))-\bar{\rho}_c(\hat{p}(\mu_t)\right)\geq\gamma$ where 
\begin{equation*}
	\gamma =~ \min_{p\in[inf(\mathcal{P})+\epsilon,p^*-\epsilon]}(\phi_H - \phi_L)\left(\bar{\rho}_0(p)-\bar{\rho}_c(p)\right)>0,
\end{equation*}
for some $\epsilon>0$ small enough. Then, by Proposition 7 in \cite{Harrison2012}, there exist constants $\chi,\beta>0$ such that for all $t\geq 1$ we have
\begin{align*}
	\E(\mu_{t} |\phi=\phi_L )&\leq \chi \exp(-\beta t),&\E(1-\mu_{t} |\phi=\phi_H )&\leq \chi \exp(-\beta t).
\end{align*}
To finalize the proof, take $\epsilon>0.$ Notice that since both the objective and the constraints of~\eqref{eq: myopicpromosingleprice} are linear in $\mu$, it follows from Berge's maximum theorem that $W^M(\mu)$ is continuous in $\mu$. Thus, there exists $\delta>0$ such that if $|\mu|\leq\delta$, then $|W^M(\mu) - W^M(0)|\leq\epsilon$. We then have that
\begin{equation*}
	\begin{split}
		\left|\E^L\left(\sum_{t=1}^T  W^M(\mu_{t}) \middle|\sigma \right) - T W^M(0)\right|& \leq \sum_{t=1}^T  E^L\left( |W^M(\mu_{t})-W^M(0)| \middle|\sigma\right)\\
		& \leq \sum_{t=1}^T  E^L\left( |W^M(\mu_{t})-W^M(0)|I(|\mu_t|\leq \delta)\middle|\sigma\right)\\
		&+\sum_{t=1}^T  E^L\left( |W^M(\mu_{t})-W^M(0)|I(|\mu_t|> \delta)\middle|\sigma\right)\\
		& \leq T\epsilon +C\sum_{t=1}^T\Prob\left[\mu_t> \delta | \phi=\phi_L \right]\\
		& \leq T\epsilon +\frac{C}{\delta}\sum_{t=1}^{\infty}\E\left[\mu_t | \phi=\phi_L \right]\leq T\epsilon +\frac{C}{\delta}\sum_{t=1}^{\infty}\chi e^{-\beta t},		
	\end{split}
\end{equation*}
where the last two inequalities follow by taking $C = \max_{\mu \in [0,1]} \{W^M(\mu)-W^M(0)\}$ and by Markov's inequality. Finally, dividing by $T$ yields
\begin{equation*}
\left|\frac{1}{T}\E^L\left(\sum_{t=1}^T  W^M(\mu_{t}) \middle|\sigma \right) - W^M(0)\right| \leq \epsilon + \frac{C\chi e^{-\beta}}{T\delta(1-e^{-\beta})}.
\end{equation*}
Taking $T\to\infty$ results in the desired limit. By repeating the same argument for $\phi = \phi_H$ we have that
$$\lim_{T\rightarrow \infty} ~~\frac{1}{T}W^{\hat{\bA},\sigma,\boldsymbol{\pi^*}}_T(\mu)  = \mu W^M(1)+(1-\mu)W^M(0) = W^{\text{truth}}(\mu),$$
as desired.

\subsection{Proof of Proposition \ref{prop: ConfoundingPolicyExistence}}\label{proof:prop: ConfoundingPolicyExistence}
%	\begin{proposition} \label{prop: ConfoundingPolicyExistence}
%		$A^C(\mu)$ is non-empty for all $\mu\in[0,1]$. That is, for any seller's belief the platform can construct a confounding promotion policy.
%	\end{proposition}
	%\begin{proof}
		Recall that $p^*$ is defined as the unique price that maximizes the seller's revenue  if he only sold to patient consumers (see \eqref{eq:pStar:OutsideOpt}) and note that $p^*$  does not depend on the seller's belief about $\phi$. Define $\boldsymbol{\bar{\alpha}} = \{\bar{\alpha}_t\}_{t=1}^T$ where for all~$t$ and $\mu \in [0,1]$:
		\begin{equation} \label{eq:DefinitionOfAlphaBar}
			\begin{split}
				\bar{\alpha}_t(p,\phi,\mu) =~ \begin{cases}
					\left(\frac{\phi_H-\phi_L}{\phi_H}\right)\left(\frac{\bar{\rho}_c(p^*)}{\bar{\rho}_0(p^*)}\right), &\text{ if } p = p^* \text{ and } \phi = \phi_H\\
					0,& \text{otherwise}
				\end{cases}
			\end{split}
		\end{equation}
		This promotion policy is well-defined as $0<\frac{\phi_H-\phi_L}{\phi_H}<1$ by definition, and $0<\frac{\bar{\rho}_c(p^*)}{\bar{\rho}_0(p^*)}<1$ by Assumption~\ref{assump:Demand}. One may observe that  $p^*$ is the unique myopically optimal price to set in response to $\bar\alpha_t$ at each period $t$ and for all $\mu\in[0,1]$. Moreover, by construction:
		$$\bar{\alpha}_t(p,\phi_H,\mu)\phi_H\bar{\rho}_0(p^*) + (1-\phi_H)\bar{\rho}_c(p^*) = (1-\phi_L)\bar{\rho}_c(p^*) + \bar{\alpha}_t(p,\phi_H,\mu)\phi_H\bar{\rho}_0(p^*), $$
		so the probability of a sale at price $p^*$ is independent of the true value of $\phi$. Thus, the seller's posterior belief will not update throughout the horizon, and thus $\boldsymbol{\bar{\alpha}} \in \mathcal{A}^C(\mu)$ for all $\mu \in [0,1].$
	%\end{proof}

\subsection{Proof of Theorem \ref{thm:LongRunAverageOptimalConsumerSurplus}}\label{Sec:Proof:TheoremLongRun}
The proof is divided into three sections. We first state preliminaries and auxiliary results in \S\ref{sec:Theorem1Proof:PrelAux}. Using these results, we prove the statement of Theorem \ref{thm:LongRunAverageOptimalConsumerSurplus} in \S\ref{sec:Theorem1Proof:Main}. Finally, we prove the auxiliary results in \S\ref{sec:Theorem1Proof:AuxProofs}.

\subsubsection{Preliminaries and Auxiliary Results}\label{sec:Theorem1Proof:PrelAux}
First, recall that $\boldsymbol{\pi^*}$ is the Bayesian myopic pricing policy (see Definition \ref{def:MyopicPricingPolicy}). %that maximizes the current consumer surplus.
By Lemma \ref{lemma:PromotionAsFunctionOfBelief}, since the seller makes myopic pricing decisions, it is without loss to specify the platform promotion strategy as a function of the seller belief instead of the entire history. Thus, throughout the proof we focus our analysis on $\mathcal{A}^M\subset \mathcal{A}$, the space of promotion policies based on the seller's belief. In addition, by Proposition~\ref{lemma:RevenueEquivalence}, we can restrict out attention to the class of single-price policies $\mathcal{A}^P$ without loss of optimality.

%\ilan{TO DISCUSS: I think it makes sense to bring Proposition 1 before so that we can write all the results in terms of $\bA \in \mathcal{A}^P$. Does this create any problems?}

For a fixed $\epsilon>0$ and promotion policy $\bA \in \mathcal{A}^P$, define the sets of beliefs, $M^{\alpha_t}(\epsilon) \subset [0,1]$, for $t=1,...,T$ where the expected consumer surplus is at least $\epsilon$ more than the corresponding value $co(W^{C})(\mu)$:
$$M^{\alpha_t}(\epsilon):=\{\mu\in[0,1]: \E_{a_t,p_t,\custtype_t,\phi} \left(W(p_t,a_t,\custtype_t)\middle|\alpha_t,\boldsymbol{\pi}^*,\mu\right)>co(W^{C})(\mu)+\epsilon\}.$$
The following result, that is later proved in \S\ref{sec:Theorem1Proof:AuxProofs}, establishes that if the platform uses a promotion policy that generates expected consumer surplus greater than $co(W^C)(\mu)+\epsilon$ in a given period, the sales observation is informative for the seller.
\begin{lemma}[Separation of Purchase Probabilities]\label{lemma:delta_epsilonRelationCoU}
    Fix $\epsilon>0$. {Then, there exists a $\delta=\delta(\epsilon)>0$} such that for all  $\boldsymbol{\alpha} \in \mathcal{A}^P$, if $\mu \in M^{\alpha_t}(\epsilon)$ and $p_t = \pi^*_t(\mu)$, then:
    $$|\phi_H\alpha_t(p_t,\phi_H,\mu)\bar{\rho}_0(p_t) + (1-\phi_H)\bar{\rho}_c(p_t) - \phi_L\alpha_t(p_t,\phi_L,\mu)\bar{\rho}_0(p_t) - (1-\phi_L)\bar{\rho}_c(p_t)|>\delta.$$
\end{lemma}

Next we show that beliefs converge to the truth exponentially fast in the number of periods in which $\mu_t \in M^{\alpha_t}(\epsilon)$, which is closely related to Lemma A.1. in \cite{Harrison2012}. Given a fixed $\epsilon>0$, define

\begin{equation}\label{eq:stoppingTimes}
    t_n = \min\left\{t:\sum_{t'=1}^t \mathbbm{1}\{\mu_{t'} \in M^{\alpha_{t'}}(\epsilon)\} \geq n\right\}
\end{equation}

where $t_n = T+1$ if $n>\sum_{t'=1}^T \mathbbm{1}\{\mu_{t'} \in M^{\alpha_{t'}}(\epsilon)\}$ and for convenience, we define history $h_{T+1}$ to include $\phi$ so that $\mu_{T+1}=0$ if $\phi=\phi_L$ and $\mu_{T+1}=1$ if $\phi=\phi_H$.
\begin{lemma}[Convergence of Seller Beliefs]\label{lemma:generalizedHarrison}
    Fix $\mu_0 \in [0,1]$ and let $\{t_n\}$ be defined according to \eqref{eq:stoppingTimes}. There exist constants $\chi,\beta>0$ such that for all $n\geq 1$,
    \begin{align*}
        \E(\mu_{t_n} |\phi=\phi_L )&\leq \chi \exp(-\beta n),&\E(1-\mu_{t_n} |\phi=\phi_H )&\leq \chi \exp(-\beta n)
    \end{align*}
\end{lemma}
% Again, proofs of these two lemmas are in \S\ref{sec:Theorem1Proof:AuxProofs}.
Finally, define $W^{\max}(\mu)$ as the maximum consumer surplus achievable by any  promotion policy when $T=1$ and the seller has belief $\mu$. %\ilanst{Recall the definition of $\bar{W}(\phi,\mu)$ from \S\ref{subsec:InsufficiencyOfTruthfulDisclosure}, and we have:}
%\begin{equation*}%\label{PlatformProblem:SinglePeriodConfounding}
%        \ilanst{W^{\max}(\mu):= \mu\bar{W}(\phi_H,\mu) + (1-\mu)\bar{W}(\phi_L,\mu).}
%\end{equation*}
\begin{lemma}[$W^C(\mu)$ Bounded by Linear Functions]\label{lemma:ConvergenceOfWC} 
    Fix $\epsilon>0$. There exists $\bar{C}\geq 0$ such that for all $\mu \in [0,1]$:
\begin{align*}
    co(W^{\max})(\mu) - co(W^{C})(\mu) &< \frac{\epsilon}{2} + \bar{C}\mu,\text{ and} & co(W^{\max})(\mu) - co(W^{C})(\mu) &< \frac{\epsilon}{2} +\bar{C}(1-\mu).&
\end{align*}
\end{lemma}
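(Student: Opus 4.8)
Write $D(\mu) := co(W^{\max})(\mu) - co(W^{C})(\mu)$. First I would record two easy facts. Since any confounding policy keeps the belief static and a myopic seller's expected one-period surplus at belief $\mu$ is at most $W^{\max}(\mu)$, we have $W^{C}(\mu) \le W^{\max}(\mu)$ pointwise, so $co(W^{C}) \le co(W^{\max})$ and $D \ge 0$; and $D$ is bounded, say $0 \le D(\mu) \le D_{\max} < \infty$, because consumer surplus is bounded over the (compact) range of prices a myopic seller posts. The plan is then to reduce everything to showing $D(\mu) \to 0$ as $\mu \to 0^{+}$ and as $\mu \to 1^{-}$: once $D < \epsilon/2$ holds on a neighborhood $[0,\mu_{*}) \cup (1-\mu_{*},1]$ of the endpoints, the remaining compact interval can be absorbed into the linear term by choosing the slope $\bar{C}$ large enough that $\bar{C}\mu$ (resp.\ $\bar{C}(1-\mu)$) already exceeds $D_{\max}$ there.

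\emph{Step 1 (endpoint values).} I would first establish $D(0)=D(1)=0$. At $\mu\in\{0,1\}$ the seller's belief is degenerate, the confounding constraint is vacuous ($\mathcal{A}^{C}(\mu)=\mathcal{A}^{M}$), and the platform can simply repeat the one-period optimal myopic promotion policy in every period, so $W^{C}(0)=W^{\max}(0)$ and $W^{C}(1)=W^{\max}(1)$. Moreover, from the definition of the concave hull, $co(f)(0)=f(0)$ and $co(f)(1)=f(1)$ for any bounded $f$: a convex combination of hypograph points whose first coordinate is $0$ must place all its weight on points with first coordinate $0$ (since all weights and all first coordinates are nonnegative), so $(0,z)\in Conv(f)$ forces $z\le f(0)$, and symmetrically at $1$. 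Applying this to $W^{\max}$ and $W^{C}$ gives $co(W^{\max})(0)=co(W^{C})(0)=W^{\max}(0)$ and likewise at $1$, hence $D$ vanishes at both endpoints.

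\emph{Step 2 (continuity at the endpoints --- the crux).} Next I would upgrade this to $\lim_{\mu\to 0^{+}}D(\mu)=0$ and $\lim_{\mu\to 1^{-}}D(\mu)=0$. For the upper envelope I would use that $W^{\max}$ is continuous in the belief (standard under Assumptions~\ref{assump:Demand}--\ref{Assump:ConsSurplus}, since the myopically induced price varies continuously with $\mu$), so $co(W^{\max})$ is continuous on $[0,1]$. For $co(W^{C})$, concavity together with the endpoint identifications from Step~1 yields the chord bound $co(W^{C})(\mu)\ge (1-\mu)W^{\max}(0)+\mu W^{\max}(1)$, whence $\liminf_{\mu\to 0^{+}}co(W^{C})(\mu)\ge W^{\max}(0)$; squeezed against $co(W^{C})\le co(W^{\max})\to W^{\max}(0)$, this forces $co(W^{C})(\mu)\to W^{\max}(0)$, i.e.\ $D(\mu)\to 0$, and the same argument applies at $\mu=1$. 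This is the real content of the lemma: although $W^{C}$ itself may jump down at $\mu=1$ (as flagged in the discussion after Example~\ref{example:ConfoundingPromotion}), its concavification does not, so the gap $D$ still closes continuously at that endpoint.

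\emph{Step 3 (choosing $\bar{C}$).} Finally, given $\epsilon>0$, Step~2 lets me pick $\mu_{*}\in(0,1/2)$ with $D(\mu)<\epsilon/2$ for all $\mu\in[0,\mu_{*})\cup(1-\mu_{*},1]$, and I set $\bar{C}:=D_{\max}/\mu_{*}$. Checking the two inequalities by cases: for $\mu$ within $\mu_{*}$ of either endpoint, the term $\epsilon/2$ alone already exceeds $D(\mu)$, so both bounds hold since $\bar{C}\mu,\bar{C}(1-\mu)\ge 0$; while for $\mu_{*}\le\mu\le 1-\mu_{*}$ both $\bar{C}\mu$ and $\bar{C}(1-\mu)$ are at least $\bar{C}\mu_{*}=D_{\max}\ge D(\mu)$. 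The main obstacle is Step~2 --- specifically confirming that $co(W^{\max})$ is continuous (which rests on continuity of the one-period value $W^{\max}(\cdot)$ in the belief) and that concavification repairs the possible discontinuity of $W^{C}$ at $\mu=1$; everything else is elementary bookkeeping with the single constant $\bar{C}$.
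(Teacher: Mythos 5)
Your proposal follows the same skeleton as the paper's argument: both rest on (i) the confounding constraint being vacuous at the degenerate beliefs, so that $W^{C}$ and $W^{\max}$ coincide at $\mu\in\{0,1\}$, (ii) continuity/limit control of the value functions at the endpoints, and (iii) boundedness together with an endpoints-versus-compact-middle case split to produce the single constant $\bar{C}$. Your chord bound $co(W^{C})(\mu)\ge(1-\mu)W^{C}(0)+\mu W^{C}(1)$ is a nice economy: it lets you bypass continuity of $W^{C}$ on the interior, which the paper does prove, and your endpoint identification $co(f)(0)=f(0)$, $co(f)(1)=f(1)$ and the final bookkeeping with $\bar{C}=D_{\max}/\mu_{*}$ are fine.

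The gap is in Step 2: you assume continuity of $W^{\max}$ (hence of $co(W^{\max})$ at the endpoints) as ``standard, since the myopically induced price varies continuously with $\mu$.'' That is precisely the nontrivial content, and it is where the paper spends essentially its entire proof: it establishes continuity in $\mu$ of the value $W^{C}(\mu,\delta)$ of the $\delta$-relaxed problem \eqref{PlatformProblem:relaxedOptimizationProblem} for every $\delta\ge 0$ via Berge's Maximum Theorem, which requires verifying continuity of the objective and both upper and lower hemicontinuity of the feasible correspondence $F(\mu,\delta)$ --- the seller's incentive constraint set moves with $\mu$, and the hemicontinuity proofs involve explicitly constructing feasible perturbations; for $\delta$ large the confounding constraint is slack, so this family covers the myopic value you need. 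Without such an argument, nothing in Assumptions~\ref{assump:Demand}--\ref{Assump:ConsSurplus} alone rules out a jump of $W^{\max}$ near an endpoint, in which case $co(W^{\max})$ (which, being concave, can only jump \emph{up} as one moves off an endpoint, so $\lim_{\mu\to 0^{+}}co(W^{\max})(\mu)>W^{\max}(0)$ is a priori possible) would defeat your squeeze, and the bound $D(\mu)<\epsilon/2+\bar{C}\mu$ would fail for small $\mu$. So the structure is right, but the continuity claim you wave through must be proved (e.g., by the paper's Berge-type argument) for the lemma to stand.
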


\subsubsection{Proof of Theorem \ref{thm:LongRunAverageOptimalConsumerSurplus}}\label{sec:Theorem1Proof:Main}
% We prove that there exists $\bar{T}$ such that for $T>\bar{T}$ and all $\boldsymbol{\alpha} \in \mathcal{A},\sigma \in \Sigma$ and $\mu\in[0,1]$:
    % $$\frac{1}{T}W^{\boldsymbol{\alpha},\sigma,\boldsymbol{\pi^*}}_T(\mu) < co(W^C)(\mu)+{\epsilon}.$$
Fix $\epsilon>0$, $T \geq 1$, and the platform strategies $\boldsymbol{\alpha} \in \mathcal{A}^M,\sigma \in \Sigma$. By Proposition~\ref{lemma:RevenueEquivalence}, we can take $\boldsymbol{\alpha} \in \mathcal{A}^P$ without loss of optimality. Let $\E_Z$ indicate expectation with respect to any randomness in the pricing policy, promotion policy, customer types, purchase decisions, and the true value of $\phi$: that is, ($\boldsymbol{p},\boldsymbol{a},\boldsymbol{\custtype},\boldsymbol{y},\phi$). Morover, let us denote $\Delta(\mu) = W^{\boldsymbol{\alpha},\sigma,\boldsymbol{\pi^*}}_T(\mu)- T co(W^{C})(\mu)$. Then,
    \begin{equation}\label{eq:simplifyingPayoffComparison}
        \begin{split}
            \Delta(\mu) =&\E_{Z}\left(\sum_{t=1}^T W(p_t,a_t,\custtype_t)\mathbbm{1}\{\mu_t \in M^{\alpha_t}(\epsilon/2) \}+ W(p_t,a_t,\custtype_t)\mathbbm{1}\{\mu_t \not\in M^{\alpha_t}(\epsilon/2) \}\right)  - T co(W^{C})(\mu)\\
        \stackrel{(a)}{\leq}& \E_Z\left(\sum_{t=1}^T W(p_t,a_t,\custtype_t)\mathbbm{1}\{\mu_t \in M^{\alpha_t}(\epsilon/2) \}+ W(p_t,a_t,\custtype_t)\mathbbm{1}\{\mu_t \not\in M^{\alpha_1}(\epsilon/2) \} - co(W^{C})(\mu_t)\right) \\
        % =& \sum_{t=1}^T \E_{Z}\left((u(\mu_t,\phi)-co(u^{MC})(\mu_t))\mathbbm{1}\{\mu_t \in M^{\alpha}(\epsilon/2) \}\right) \\
        % &\quad \quad \quad + \sum_{t=1}^T\E_{Z} \left((u(\mu_t,\phi)-co(u^{MC})(\mu_t))\mathbbm{1}\{\mu_t \not\in M^{\alpha}(\epsilon/2) \}\right) \\
        \stackrel{(b)}{\leq} & \sum_{t=1}^T \E_Z \left((W(p_t,a_t,\custtype_t) -  co(W^{C})(\mu_t))\mathbbm{1}\{\mu_t \in M^{\alpha_t}(\epsilon/2) \}\right) + \frac{\epsilon}{2} \sum_{t=1}^T\E \left(\mathbbm{1}\{\mu_t \not\in M^{\alpha_t}(\epsilon/2)  \} \right).
    \end{split}
    \end{equation}
    These inequalities follow since:
\begin{itemize}
    \item[(a)] $co(W^{C})$ is concave by construction and, for all periods $t$, $\E_Z [\mu_t] =\mu_0$ because Bayesian beliefs are a martingale. Thus, by Jensen's inequality: $\E_Z[co(W^C(\mu_t))] \leq co(W^C(\E_Z [\mu_t])) = co(W^C(\mu_0)).$
    \item[(b)] Splitting $co(W^{C})(\mu_t)$ across the outcomes $\mu_t \in M^{\alpha_t}(\epsilon/2)$ and $\mu_t \not\in M^{\alpha_t}(\epsilon/2)$ and applying the definition of $M^{\alpha_t}(\epsilon/2)$
\end{itemize}

Consider the last expression in \eqref{eq:simplifyingPayoffComparison}. The second term is less than $\frac{T\epsilon}{2}$ for all $T$. We complete the proof by showing the existence of $\bar{T}$ such that for all $T>\bar{T}$ the first term is less than  $KT\epsilon$ for some fixed $K>0$. That is:
\begin{equation}
    \label{eq:ValueInPeriodsWithSeparation}
    \sum_{t=1}^T \E_Z \left((W(p_t,a_t,\custtype_t) -  co(W^{C})(\mu_t))\mathbbm{1}\{\mu_t \in M^{\alpha}(\epsilon/2) \}\right) \leq KT\epsilon. %\ilanst{\frac{T \epsilon}{2}} 
\end{equation}

% We use two intermediate lemmas. %First, there exists a $\delta>0$ such that if $\mu \in M^{\alpha_t}(\epsilon)$, then the probability of a sale conditional on $\phi=\phi_H$ differs from the probability of a sale conditional on $\phi=\phi_L$ by at least $\delta$.

Consider the case when $\phi=\phi_L$ (denoting the conditional expectation by $\E^L$) and select $\chi,\beta>0$ according to Lemma~\ref{lemma:generalizedHarrison}:
\begin{align*}
	\sum_{t=1}^T \E^L_Z \left((W(p_t,a_t,\custtype_t) -  co(W^{C})(\mu_t))\mathbbm{1}\{\mu_t \in M^{\alpha}(\epsilon/2) \}\right) & \leq \sum_{n=1}^T \E_{Z}^L \left((W(p_{t_n},a_{t_n},\custtype_{t_n}) -  co(W^{C})(\mu_{t_n}))\right)\\
	 &\leq \sum_{n=1}^T \E_{Z}^L \left(co(W^{\max})(\mu_{t_n}) -  co(W^{C})(\mu_{t_n})\right)\\
    % &\leq \frac{1}{T}\sum_{n=1}^T \E_{Z} \left(co(W^{\max})(\mu_{t_n}) -  co(W^{C})(\mu_{t_n})\right)\\
    &\leq \sum_{n=1}^T\left(\frac{\epsilon}{2}+ \bar{C}\E_{Z}^L \left(\mu_{t_n}\right)\right), \quad \text{[by Lemma \ref{lemma:ConvergenceOfWC}]}\\
    &\leq \frac{T \epsilon}{2} + \bar{C}\sum_{n=1}^T \chi \exp(-\beta n), \quad  \text{[by Lemma \ref{lemma:generalizedHarrison}]}\\
    &\leq \frac{T \epsilon}{2} +\bar{C}\sum_{n=1}^{\infty} \chi \exp(-\beta n)\\
    &= \frac{T \epsilon}{2} +\bar{C} \frac{\chi}{e^{\beta} - 1},
    % &\leq \frac{1}{T}\sum_{n=1}^T \left(co(W^{\max})(\E_{Z} \mu_{t_n}) - W^{F}(\E_{Z} \mu_{t_n})\right)\\
    % &\leq \frac{1}{T}\sum_{n=1}^T \left(co(W^{\max})(\E_{Z} \mu_{t_n}) - W^{F}(\E_{Z} \mu_{t_n})\right)\\
\end{align*}
from where we have that
\begin{equation*}
	\frac{1}{T}\sum_{t=1}^T \E \left((W(p_t,a_t,\custtype_t) -  co(W^{C})(\mu_t))\mathbbm{1}\{\mu_t \in M^{\alpha}(\epsilon/2) \}\middle|\phi =\phi_L\right) \leq \frac{ \epsilon}{2} +\frac{\bar{C}}{T} \frac{\chi}{e^{\beta} - 1}.
\end{equation*}

This inequality follows analagously  for $\phi=\phi_H$. Thus we can select $\bar{T}>\bar{C}\frac{ \chi}{\epsilon} \frac{4}{e^{\beta} - 1}$, and for any $T>\bar{T}$:
$$\frac{1}{T}W^{\boldsymbol{\alpha},\sigma,\boldsymbol{\pi^*}}_T(\mu) < co(W^C)(\mu)+\frac{5}{4}{\epsilon}.$$

By taking the limit as $T\rightarrow\infty$ and then letting $\epsilon \rightarrow 0$, we have:
$$\lim_{T\rightarrow \infty}\sup_{\bA \in\mathcal{A},\sigma \in \Sigma} ~~\frac{1}{T}W^{\bA,\sigma,\boldsymbol{\pi^*}}(\mu) \leq co(W^C)(\mu).$$

To establish that this bound is tight, we prove the existence of $\bA\in \mathcal{A}$, $\sigma\in \Sigma$ that generate an expected surplus of $co(W^C)(\mu)$ for all $\mu$. First, we show that there exists a promotion policy that generates $W^C(\mu)$ for $\mu \in [0,1]$ and any $T\geq 1$. Second, there exists a signaling mechanism that, coupled with the promotion policy, generates an expected payoff $co(W^C)(\mu)$ of in every period.

\textbf{Existence of Optimal Confounding Promotion Policy}. Consider the optimization problem \eqref{PlatformProblem:relaxedOptimizationProblem} from the proof of Lemma \ref{lemma:delta_epsilonRelationCoU}. Fix $\mu \in [0,1],\delta= 0$. The feasible set of \eqref{PlatformProblem:relaxedOptimizationProblem}: $F(\mu,\delta) \subset [0,1]^2\times \mathcal{P}$, is compact since it is closed and bounded for any fixed $\mu,\delta$. Thus, there exists an optimal solution to \eqref{PlatformProblem:relaxedOptimizationProblem} by the extreme value theorem.

Let $\alpha^C \in \mathcal{A}^M$ correspond to the simple confounding promotion policy where the one-period solution to \eqref{PlatformProblem:relaxedOptimizationProblem} with $\delta = 0$ is repeated $T$ times for every $\mu$ (which is equivalent to the policy derived by solving problem \eqref{eq: ConfoundingProblemSimple} in \S\ref{sec: OptSimplePolicies}). By construction the seller's belief, myopically optimal price, and the expected welfare are the same in each period. Thus, the payoff generated by this policy given posterior belief $\mu_1 \in [0,1]$ is $ T\cdot W^C(\mu_1)$.

\textbf{Existence of Optimal Signaling Mechanism.} We now show that an optimal signalling mechanism achieves $co(W^C)(\mu)$. Note that $W^C(\mu)$ is upper semicontinuous in $\mu$ by the proof of Lemma \ref{lemma:ConvergenceOfWC}. Therefore, an optimal signal $\sigma \in \Sigma$ exists (see \cite{Kamenica2011} Corollaries 1 and 2, and related discussion). It follows that given a prior of $\mu_0$, the optimal signal generates expected consumer surplus of $co(W^C)(\mu_0)$ in each period.

Thus, for any fixed $T\geq 1$ and $\mu\in[0,1]$, there exists $\bA$, $\sigma$ such that:
$$\frac{1}{T}W^{\bA,\sigma,\boldsymbol{\pi^*}}_T(\mu) = co(W^C)(\mu).$$
This concludes the proof of Theorem~\ref{thm:LongRunAverageOptimalConsumerSurplus}.

 \subsubsection{Proofs of Auxiliary Results}\label{sec:Theorem1Proof:AuxProofs}
 
 We now provide the proofs of Lemmas~\ref{lemma:delta_epsilonRelationCoU}, \ref{lemma:generalizedHarrison} and \ref{lemma:ConvergenceOfWC}, which were used as auxiliary results in the proof of Theorem~\ref{thm:LongRunAverageOptimalConsumerSurplus}.

 \subsubsection*{Proof of Lemma \ref{lemma:delta_epsilonRelationCoU}}
 
%  Recall the Lemma:
%     \emph{
%        Fix $\epsilon>0$. There exists $\delta>0$ such that for all $\boldsymbol{\alpha} \in \mathcal{A}^P$, if $\mu \in M^{\alpha_t}(\epsilon)$, then:}
%        % $$\left|\Prob(y_t=1|\alpha,p^*(\mu),\mu,\theta =H ) - \Prob(y^1=1|\alpha,p^*(\mu),\mu,\theta =L )\right|>\delta.$$
%        \begin{equation*}
%        |\phi_H\alpha(p_t,\phi_H,\mu)\bar{\rho}_0(p_t) + (1-\phi_H)\bar{\rho}_c(p_t) - \phi_L\alpha(p_t,\phi_L,\mu)\bar{\rho}_0(p_t) - (1-\phi_L)\bar{\rho}_c(p_t)|>\delta.
%        \end{equation*}
            
	Fix $\epsilon>0$. We will show that there exists $\delta>0$ such that for all $\boldsymbol{\alpha} \in \mathcal{A}^P$, 
	if $\mu \in M^{\alpha_t}(\epsilon)$, then:
	\begin{equation}\label{eq: ConstraintViolationLemma2}
		|\phi_H\alpha_t(p_t,\phi_H,\mu)\bar{\rho}_0(p_t) + (1-\phi_H)\bar{\rho}_c(p_t) - \phi_L\alpha_t(p_t,\phi_L,\mu)\bar{\rho}_0(p_t) - (1-\phi_L)\bar{\rho}_c(p_t)|>\delta.
	\end{equation}

		First note that $M^{\alpha_t}(\epsilon)\subseteq (0,1)$. This follows since, for $\mu\in \{0,1\}$, any policy is confounding and thus $W^C(\mu)$ corresponds to the maximum surplus achievable by any policy in a single time period; so no policy can generate higher welfare than $W^C(\mu)$.
            
        Thus, for the rest of the proof it suffices to consider $\mu\in(0,1)$. By Lemma \ref{lemma:PromotionAsFunctionOfBelief}, it is without loss to specify the promotion as a function of $\mu$ instead of the entire history. Fix a time period $t$; we then simplify notation by letting, for $\phi \in \{\phi_L,\phi_H\}$, $\alpha_{\phi} = \alpha_t(p,\phi,\mu)$ where $\mu$ will be left implicit and $p$ is the single price where $\alpha_{\phi}$ may be greater than 0. %Moreover, again  it is without loss to express
        
        %\ilan{QUESTION; Why is it without loss to consider $\alpha \in \mathcal{A}^P$? Let me see if I understand: since by Prop. 1 there is $\alpha \in \mathcal{A}^P$ that achieves the same consumer surplus that $\bA$ \textbf{in every period} (not just total CS)? We'll fix by changing the lemmas and mentioning we go to $A^P$ wlog in the theorem proof.}

        With this notation, define the following  optimization problem given $\mu\in [0,1],\delta\geq 0$
     \begin{equation}\label{PlatformProblem:relaxedOptimizationProblem}
        \begin{split}
            W^C(\mu,\delta):= \max_{\substack{\alpha_{\phi_H},\alpha_{\phi_L} \in [0,1],\\p \in \mathcal{P}}}& ~~ \E_{\phi}\left(\phi \alpha_{\phi} \bar{W}_0(p) +\phi(1- \alpha_{\phi} )\bar{W}_{\text{out}}  + (1-\phi)\bar{W}_c(p)|\mu\right) \\
            % \text{s.t.}&~~p \in \arg \max_{p \in \mathcal{P}} ~~\E_{y,\theta}\left( v(p,y) |\mu,\alpha_{\phi_H},\alpha_{\phi_L}\right)\\
            \text{s.t.}&~~ p\bar{\rho}_0(p)(\phi_L\alpha_{\phi_L}(1-\mu) +\phi_H\alpha_{\phi_H}\mu) +  p\bar{\rho}_c(p)(1-\phi_L-\mu(\phi_H-\phi_L)) \geq \\
            &\quad \quad \quad p^*\bar{\rho}_c(p^*)(1-\phi_L-\mu(\phi_H-\phi_L))\\
            &~~ |\phi_H\alpha_{\phi_H}\bar{\rho}_0(p) + (1-\phi_H)\bar{\rho}_c(p) - \phi_L\alpha_{\phi_L}\bar{\rho}_0(p) - (1-\phi_L)\bar{\rho}_c(p)| \leq  \delta.
        \end{split}
    \end{equation}
    Note that when $\delta=0$, this problem is exactly the platform's promotion design problem when considering simple confounding policies (see \eqref{eq: ConfoundingProblemSimple} in \S\ref{sec: OptSimplePolicies}). Thus, \eqref{PlatformProblem:relaxedOptimizationProblem} is a relaxed version of that problem, where we require the promotion policy to be only ``$\delta$-confounding''. As in problem \eqref{eq: ConfoundingProblemSimple}, the first constraint requires that the pricing policy be myopically optimal. Setting $p$ as a decision variable ensures that it is the price that maximizes consumer surplus. Note that we also define the problem for $\mu\in\{0,1\}$ to be able to work with the compact interval $[0,1]$.
    
    In what follows, we denote the feasible set of problem \eqref{PlatformProblem:relaxedOptimizationProblem} by $F(\mu, \delta)$. That is, we say that $ (\alpha_{\phi_L},\alpha_{\phi_H},p) \in F(\mu,\delta)$ if $(\alpha_{\phi_L},\alpha_{\phi_H},p)\in[0,1]\times[0,1]\times \mathcal{P}$ satisfies the two constrains in \eqref{PlatformProblem:relaxedOptimizationProblem}. With this in mind, we prove the statement of the Lemma in a series of claims.
    
   	\textbf{Step 1.} The objective of \eqref{PlatformProblem:relaxedOptimizationProblem} is Lipschitz continuous in $(\alpha_{\phi_L},\alpha_{\phi_H},p)$.
   	\begin{proof}
   		Follows since the objective of \eqref{PlatformProblem:relaxedOptimizationProblem} is linear in $\alpha_{\phi_L},\alpha_{\phi_H}$ (with bounded coefficients) and is Lipschitz continuous in $p$ by Assumption \ref{Assump:ConsSurplus}.
   	\end{proof}
   
   \textbf{Step 2.} Fix $\epsilon>0$. There exists $\bar{\delta}>0$ such that for all $\mu \in [0,1]$, we have that if $\delta<\bar{\delta}$ and $x \in F(\mu,\delta)$, there exists $y \in F(\mu,0)$ such that $||x-y||<\epsilon$.
   \begin{proof}
   	%    To show this, we will apply Berge's theorem of the maximum so we first need to show that the feasible sets correspondence $F(\mu,\delta)$ is both upper and lower hemicontinuous in $(\mu,\delta)$. To do so, first note that the second constraint in \eqref{PlatformProblem:relaxedOptimizationProblem} is never binding when $\delta>2$, and the feasible sets $F(\mu,\delta)$ are increasing in $\delta$, so we can focus on $(\mu,\delta)\in [0,1]\times [0,2]$ without loss of generality. Note that $F(\mu,\delta)$ is closed-valued with compact range (as its range is $[0,1]\times [0,1]\times P$). Therefore, to show that $F$ is upper hemicontinuous, it suffices to observe that $F$ satisfies the closed graph property, which follows from continuity of the functions that define the constraints in \eqref{PlatformProblem:relaxedOptimizationProblem}.
   	First note that the second constraint in \eqref{PlatformProblem:relaxedOptimizationProblem} is never binding when $\delta>1$, and the feasible sets $F(\mu,\delta)$ are increasing in $\delta$, so we can focus on $(\mu,\delta)\in [0,1]\times [0,1]$ without loss of generality.
   	
   	For $\mu \in[0,1]$ and $x\in[0,1]\times[0,1]\times \mathcal{P}$ let us define
   	\begin{equation*}
   		G(x,\mu) = \min \left \lbrace ||x-y|| :\,y \in F(\mu,0) \right \rbrace.
   	\end{equation*}
   	This function is well-defined since $F(\mu,0)$ is a compact set. In addition, note that $G(x,\mu)=0$ if~$x\in F(\mu,0)$ and that $G(x,\mu)$ is continuous in $x$ for fixed  $\mu$, by continuity of the norm. Now, given~$\delta>0$, define
   	\begin{equation*}
   		H(\mu,\delta) = \max \left \lbrace G(x,\mu) :\,x \in F(\mu,\delta) \right \rbrace.
   	\end{equation*}
   	We claim that given fixed $\mu>0$, $H(\mu,\delta)$ is continuous in $\delta$. This follows by Berge's maximum theorem, which can be applied by noting $G(x,\mu)$ is continuous in $x$ (per our previous argument) and that the correspondence $F(\mu,\delta)$ is both upper and lowerhemicontinuous in $\delta$ for any fixed $\mu$ (which we show below). %We defer this argument to the end of the proof.
   	
   	Furthermore, note that $H(\mu,0)=0$ by definition. Therefore, given $\epsilon>0$, there exists $\delta_\mu>0$ such that $H(\mu,\delta)<\epsilon$ for all $\delta<\delta_{\mu}$. To find a $\bar{\delta}$ that establishes this condition uniformly for $\mu$, define the collection of sets given by $A_\delta = \left \lbrace \mu \in [0,1]:\, H(\mu,\delta)<\epsilon \right \rbrace$. By the previous argument, and since~$H(\mu,\delta)$ is a continuous function\footnote{This can be established in a similar fashion as we did for $\delta$. The proof of Lemma~\ref{lemma:ConvergenceOfWC} establishes continuity of $F(\mu,\delta)$ in $\mu$.} of $\mu$ for fixed $\delta$, we have that $\{A_\delta:\,\delta>0\}$ is an open cover of $[0,1]$. By compactness, this cover admits a finite subcover, say $\{A_{\delta_1},A_{\delta_2},\dots,A_{\delta_k}\}$. Letting $\bar{\delta} = \min_{1\leq j\leq k} \delta_j$ proves the claim.
   	
   	Finally, we show that $F(\mu,\delta)$ is both upper and lower hemicontinuous in $\delta$ for fixed $\mu$. This is established in two steps: 
   	\begin{enumerate}
   		\item Upper hemicontinuity is established by observing that $F$ satisfies the closed graph property, which follows from continuity of the functions that define the constraints in \eqref{PlatformProblem:relaxedOptimizationProblem}.
   		\item To establish lower hemicontinuity, let $\left(\delta_n\right)_{n\in \mathbb{N}} \subseteq [0,1]$ be a sequence that converges to some $\delta\in[0,1]$, and let $(\alpha_{\phi_L},\alpha_{\phi_H},p) \in F(\mu,\delta)$. We need to construct a sequence $(\alpha_{\phi_L},\alpha_{\phi_H},p)_n \rightarrow (\alpha_{\phi_L},\alpha_{\phi_H},p)$ such that $(\alpha_{\phi_L},\alpha_{\phi_H},p)_n \in F(\mu,\delta_n)$ for all large enough $n$. If the second constraint in \eqref{PlatformProblem:relaxedOptimizationProblem} is not binding, we can simply let $(\alpha_{\phi_L},\alpha_{\phi_H},p)_n = (\alpha_{\phi_L},\alpha_{\phi_H},p)$, and the result follows by continuity of the function that defines the second constraint in \eqref{PlatformProblem:relaxedOptimizationProblem}, and since the first constraint is independent of $\delta$. If the second constraint  in \eqref{PlatformProblem:relaxedOptimizationProblem} is binding, we have two cases:
   		\begin{enumerate}
   			\item If $\delta_n\geq \delta$, since the feasible sets are increasing in $\delta$, we simply let $(\alpha_{\phi_L},\alpha_{\phi_H},p)_n = (\alpha_{\phi_L},\alpha_{\phi_H},p)$.
   			\item If $\delta_n < \delta$, we consider the case where the expression inside the absolute value in \eqref{PlatformProblem:relaxedOptimizationProblem} is positive (the opposite case can be established by the same reasoning), which implies that:
   			\begin{equation*}
   				\phi_H\alpha_{\phi_H}\bar{\rho}_0(p) - \phi_L\alpha_{\phi_L}\bar{\rho}_0(p)  = \delta +\left(\phi_H-\phi_L\right)\bar{\rho}_c(p)
   			\end{equation*}
   		\end{enumerate}
   		We consider two cases. First, if $\alpha_{\phi_L}<1$, we  define:
   		\begin{equation*}
   			\left(\alpha_{\phi_L},\alpha_{\phi_H},p\right)_n = \left(\alpha_{\phi_L}+\frac{\delta-\delta_n}{\phi_L\bar{\rho}_0(p)},\alpha_{\phi_H},p\right).
   		\end{equation*}
   		For all large enough $n$, we have that $\left(\alpha_{\phi_L},\alpha_{\phi_H},p\right)_n\in F(\mu,\delta_n)$ by construction, as the first constraint is relaxed at this point.
   		
   		If $\alpha_{\phi_L}=1$, then we must have that $\alpha_{\phi_H}>0$. Suppose, in addition, that the first constraint in \eqref{PlatformProblem:relaxedOptimizationProblem} is not binding. Then, we  define:
   		\begin{equation*}
   			\left(\alpha_{\phi_L},\alpha_{\phi_H},p\right)_n = \left(\alpha_{\phi_L},\alpha_{\phi_H}-\frac{\delta-\delta_n}{\phi_H\bar{\rho}_0(p)},p\right).
   		\end{equation*}
   		One can verify that $\left(\alpha_{\phi_L},\alpha_{\phi_H},p\right)_n\in F(\mu,\delta_n)$ for all $n$ large enough by construction.
   		
   		Finally, if $\alpha_{\phi_L}=1$ and both constraints are binding, we must have that  $p\neq p^*$. Let us then define
   		\begin{equation*}
   			\left(\alpha_{\phi_L},\alpha_{\phi_H},p\right)_n = \left(1,\left(\frac{\phi_H-\phi_L}{\phi_H}\right)\frac{\bar{\rho}_c(p_n)}{\bar{\rho}_0(p_n)} + \frac{\phi_L}{\phi_H} + \frac{\delta_n }{\bar{\rho}_0(p_n)\phi_H},(1-\gamma_n)p+\gamma_np^*\right),
   		\end{equation*}
   		where the sequence $\gamma_n\to 0$ will be chosen later. By construction, the second constraint of \eqref{PlatformProblem:relaxedOptimizationProblem} is binding at $\left(\alpha_{\phi_L},\alpha_{\phi_H},p\right)_n$, and $\left(\alpha_{\phi_L},\alpha_{\phi_H},p\right)_n \to \left(\alpha_{\phi_L},\alpha_{\phi_H},p\right)$. Thus, it remains to check that we can choose $\gamma_n$ to satisfy the first constraint in \eqref{PlatformProblem:relaxedOptimizationProblem}. Let us denote the left-hand side of this constraint as $L(\alpha_{\phi_L},\alpha_{\phi_H},p),$ which, by  Assumption~\ref{assump:Demand},  is strictly concave in $p$. After  some algebra, we have that
   		\begin{equation*}
   			\begin{split}
   				L((\alpha_{\phi_L},\alpha_{\phi_H},p)_n) & = \phi_L p_n \bar{\rho}_0(p_n)+(1-\phi_L)p_n\bar{\rho}_c(p_n)+\mu p_n \delta_n\\
   				& > (1-\gamma_n) \left[\phi_L p \bar{\rho}_0(p)+(1-\phi_L)p\bar{\rho}_c(p)+\mu p \delta_n \right] \\
   				&+ \gamma_n \left[\phi_L p^* \bar{\rho}_0(p^*)+(1-\phi_L)p^*\bar{\rho}_c(p^*)+\mu p^* \delta_n\right]\\
   				& >p^*\bar{\rho}_c(p^*)(1-\phi_L-\mu(\phi_H-\phi_L)) + \gamma_np^*\left(\phi_L  \bar{\rho}_0(p^*)+ \mu(\phi_H-\phi_L)\bar{\rho}_c(p^*) \right)  \\
   				&+\gamma_n\mu p^* \delta_n-(1-\gamma_n) \mu p(\delta-\delta_n), 
   			\end{split}
   		\end{equation*}
   		where the first inequality follows from strict concavity  of $p\bar{\rho}_c(p)$ and $p\bar{\rho}_0(p)$. Finally, we choose the speed of convergence of $p_n$ by taking $\gamma_n$ that satisfies:
   		\begin{equation*}
   			\gamma_n>\frac{\mu p (\delta-\delta_n)}{\phi_L  \bar{\rho}_0(p^*)+ \mu(\phi_H-\phi_L)\bar{\rho}_c(p^*)},
   		\end{equation*}
   		which allows us to conclude that $L((\alpha_{\phi_L},\alpha_{\phi_H},p)_n)>p^*\bar{\rho}_c(p^*)(1-\phi_L-\mu(\phi_H-\phi_L))$ for all $n$ large enough such that $\delta_n<\delta$.
   		%	\begin{equation*}
   			%		\begin{split}
   				%			L((\alpha_{\phi_L},\alpha_{\phi_H},p)_n)& >(1-\gamma_n)L(1,\alpha_{\phi_H,n},p)+\gamma_nL(1,\alpha_{\phi_H,n},p^*)\\
   				%			&>(1-\gamma_n)L(1,\alpha_{\phi_H}-K\max\{|\delta_n-\delta|,\gamma_n\},p)+\gamma_nL(1,\alpha_{\phi_H,n},p^*)\\
   				%			& >p^*\rho_c(p^*)(1-\phi_L-\mu(\phi_H-\phi_L))-(1-\gamma_n)K\max\{|\delta_n-\delta|,\gamma_n\}\phi_H \mu p\bar{\rho}_0(p)
   				%		\end{split}
   			%	\end{equation*}
   	\end{enumerate}
   	It follows that $(\alpha_{\phi_L},\alpha_{\phi_H},p)_n \rightarrow (\alpha_{\phi_L},\alpha_{\phi_H},p)$ such that $(\alpha_{\phi_L},\alpha_{\phi_H},p)_n \in F(\mu,\delta_n)$ for all large enough~$n$, and we thus have lower hemicontinuity.
   \end{proof}

\textbf{Step 3.} Fix $\epsilon>0$. There exists $\bar{\delta}>0$ such that for all $\mu \in (0,1)$ one has $W^C(\mu,\bar{\delta}) - W^C(\mu,0) < \epsilon$.

\begin{proof}
	Fix $\epsilon>0$ and denote the objective of \eqref{PlatformProblem:relaxedOptimizationProblem} by
	\begin{equation*}
		Obj(\alpha_{\phi_L},\alpha_{\phi_H},p,\mu) = \E_{\phi}\left(\phi \alpha_{\phi} \bar{W}_0(p) +\phi(1- \alpha_{\phi} )\bar{W}_{\text{out}}  + (1-\phi)\bar{W}_c(p)|\mu \right).
	\end{equation*}
	By Assumption~\ref{Assump:ConsSurplus}, the objective is Lipschitz continuous. Moreover, denote an optimal solution to $W^C(\mu,\delta)$ by
		$\left(\alpha_{\phi_L}(\mu,\delta),\alpha_{\phi_H}(\mu,\delta),p(\mu,\delta)\right)$. 
	By step 2 and Lipschitz continuity of the objective, there exists $\bar{\delta}>0$ such that for $\mu\in[0,1]$, we can choose $\left(\tilde{\alpha}_{\phi_L},\tilde{\alpha}_{\phi_H},\tilde{p}\right)\in F(\mu,0)$ such that:
	\begin{equation*}
		Obj(\alpha_{\phi_L}(\mu,\bar{\delta}),\alpha_{\phi_H}(\mu,\bar{\delta}),p(\mu,\bar{\delta}),\mu)-Obj(\tilde{\alpha}_{\phi_L},\tilde{\alpha}_{\phi_H},\tilde{p},\mu)<\epsilon,
	\end{equation*}
	which implies that
	\begin{equation*}
		W^C(\mu,\bar{\delta})< Obj(\tilde{\alpha}_{\phi_L},\tilde{\alpha}_{\phi_H},\tilde{p},\mu) +\epsilon < W^C(\mu,0) + \epsilon,
	\end{equation*}
for all $\mu\in(0,1)$, as desired.
	%This follows by the Lipschitz continuity of the objective and the continuity (in $\delta$) of the feasibility sets (independently of $\mu$) as proven in the previous point.
\end{proof}

%\ilan{QUESTION; Why does step 3 hold exactly? I'm missing the last step:
%	\begin{equation*}
%		\begin{split}
%		W^C(\bar{\mu},\bar{\delta}) - W^C(\bar{\mu},0) &= Obj(x^*(\mu,\delta))-Obj(x^*(\mu,0))\\
%		&\leq K||x^*(\mu,\delta)-x^*(\mu,0)||\\
%		& <K||x^*(\mu,\delta)-y(\mu,0)||+K||y(\mu,0)-x^*(\mu,0)||\\
%		&<K\epsilon + +K||y(\mu,0)-x^*(\mu,0)||
%		\end{split}
%	\end{equation*}
%}

To conclude the proof, note that by step 3, we have that if $\mu \in M^{\alpha_t}(\epsilon)$, then there exists $\bar{\delta}>0$ such that
\begin{equation*}
	\E_{a_t,p_t,\custtype_t,\phi} \left(W(p_t,a_t,\custtype_t)\middle|\alpha_t,\boldsymbol{\pi}^*,\mu\right)>co(W^{C})(\mu)+\epsilon>W^C(\mu,0) +\epsilon>W^C(\mu,\bar{\delta}).
\end{equation*}
This implies that at time $t$, the tuple $\left(\alpha_{\phi_L},\alpha_{\phi_H},p_t\right)$ is infeasible in $W^C(\mu,\bar{\delta})$, so one of the constraints in \eqref{PlatformProblem:relaxedOptimizationProblem} must be violated. Since $p_t$ is chosen according to $\pi^*_t$, it maximizes the seller's present revenue and therefore the first constraint in \eqref{PlatformProblem:relaxedOptimizationProblem} is satisfied. It follows then that the second constraint is not satisfied, which implies \eqref{eq: ConstraintViolationLemma2}. This completes the proof of the Lemma. %\qed

%%%%%%%%%%%%%%%%%%%%%%%%%%%%%%%%%%%%%%%%%%%%%%%%%%%%%%%%%%%%%%%%%%%%%%%%%%%%%%%%%%%%%%%%%%%%%%%%%%%%%%%%%%%%%%%%%%%%%%%%%%%%%%%%%%%%%%%%%
%%%%%%%%%%%%%%%%%%%%%%%%%%%%%%%%%%%%%%%%%%%%%%%%%%%%%%%%%%%%%%%%%%%%%%%%%%%%%%%%%%%%%%%%%%%%%%%%%%%%%%%%%%%%%%%%%%%%%%%%%%%%%%%%%%%%%%%%%
%%%%%%%%%%%%%%%%%%%%%%%%%%%%%%%%%%%%%%%%%%%%%%%%%%%%%%%%%%%%%%%%%%%%%%%%%%%%%%%%%%%%%%%%%%%%%%%%%%%%%%%%%%%%%%%%%%%%%%%%%%%%%%%%%%%%%%%%%
%%%%%%%%%%%%%%%%%%%%%%%%%%%%%%%%%%%%%%%%%%%%%%%%%%%%%%%%%%%%%%%%%%%%%%%%%%%%%%%%%%%%%%%%%%%%%%%%%%%%%%%%%%%%%%%%%%%%%%%%%%%%%%%%%%%%%%%%%
%%%%%%%%%%%%%%%%%%%%%%%%%%%%%%%%%%%%%%%%%%%%%%%%%%%%%%%%%%%%%%%%%%%%%%%%%%%%%%%%%%%%%%%%%%%%%%%%%%%%%%%%%%%%%%%%%%%%%%%%%%%%%%%%%%%%%%%%%
%%%%%%%%%%%%%%%%%%%%%%%%%%%%%%%%%%%%%%%%%%%%%%%%%%%%%%%%%%%%%%%%%%%%%%%%%%%%%%%%%%%%%%%%%%%%%%%%%%%%%%%%%%%%%%%%%%%%%%%%%%%%%%%%%%%%%%%%%
\subsubsection*{Proof of Lemma \ref{lemma:generalizedHarrison}} %Recall the Lemma.\emph{     Fix $\mu \in [0,1]$ and let $\{t_n\}$ be defined according to \eqref{eq:stoppingTimes}. There exist constants $\chi,\psi>0$ such that:
%    \begin{align*}
%        \E(\mu_{t_n} |\phi=\phi_L )&\leq \chi \exp(-\psi n)&\E(1-\mu_{t_n} |\phi=\phi_H )&\leq \chi \exp(-\psi n), \quad %\forall t=1,2,...,T.
 %   \end{align*}}\\
 %   \noindent \emph{Proof.}
 
 Let us denote for $i=L,H$:
 	\begin{equation*}
 		\rho_t^i := \phi_i\alpha_t(p_t,\phi_i,\mu_t)\bar{\rho}_0(p_t) + (1-\phi_i)\bar{\rho}_c(p_t),
 	\end{equation*}
 	where $p_t = \pi_t^*(\mu_t)$.

    Consider the first inequality in the statement of the lemma (i.e., conditioned on $\phi=\phi_L$). The proof of the second follows nearly verbatim. Let $\E^L$ indicate that we are taking expectation conditional on $\phi=\phi_L$. Assume that $\sigma$ is uninformative (we will incorporate this adjustment at the end) and consider the evolution of the seller's belief from the first period onwards.
    
    Let us define $\{\tilde{\mu}_t\}_{t=1}^T$ as an alternative belief process that only updates at the periods where $\mu_t \in M^{\alpha_t}(\epsilon)$. Formally, let $\tilde{\mu}_1 = \mu_1$ and for $t\geq 2$ define:
    \begin{equation*}
    	\tilde{\mu}_t = \begin{cases}
    		\tilde{\mu}_{t-1}, & \text{ if $\mu_t \notin M^{\alpha_t}(\epsilon)$,}\\
    		\frac{\tilde{\mu}_{t-1}\left(\frac{\rho_t^H}{\rho_t^L}\right)^{y_t}}{\tilde{\mu}_{t-1}\left(\frac{\rho_t^H}{\rho_t^L}\right)^{y_t} + (1-\tilde{\mu}_{t-1})\left(\frac{1-\rho_t^L}{1-\rho_t^H}\right)^{1-y_t}},& \text{ if $\mu_t \in M^{\alpha_t}(\epsilon)$.}
    	\end{cases}
    \end{equation*}

	We claim that there exist constants $\xi,\beta>0$ such that
	\begin{equation*}
		 \E^L(\mu_{t_n} )\leq\E^L(\tilde{\mu}_{t_n} )\leq \chi \exp(-\beta n).
	\end{equation*}
	The first inequality follows by noting that $\E^L(\mu_t )\leq\E^L(\tilde{\mu}_t )$ for all $t\geq 1$. This follows from noting that the Bayesian updated belief process $\{\mu_t\}$ is a supermartingale with respect to $\E^L$ and by observing that this process is updated at all periods so it considers more information than the process $\{\tilde{\mu}_t\}$ at any given time (including the information that is considered to update $\{\tilde{\mu}_t\}$).
    
    To establish the second inequality, we draw from the proof of Lemma A.1 in \cite{Harrison2012}  (see equation (A4) in their paper and the following equation). We have that:
    %$$\E^L(\tilde{\mu}_{t_n}) = \E^L \left(\frac{1}{1+\left(\frac{1-\mu_0}{\mu_0}\right)\exp(L_{t_n})}\right),$$
    $$\E^L(\tilde{\mu}_{t_n}) = \E^L \left(\frac{1}{1+\left(\frac{1-\mu_0}{\mu_0}\right)\exp(L_{n})}\right),$$
    where
    \begin{equation}
        \label{def:L_tn}
        %L_{t_n} = \sum_{t'=1}^{t_n}(y_{t'} - \rho_{t'}^L)\log\left(\frac{\rho_{t'}^L(1-\rho_{t'}^H)}{\rho_{t'}^H(1-\rho_{t'}^L)}\right)+\sum_{t'=1}^{t_n}\left( (1-\rho_{t'}^L)\log\left(\frac{1-\rho_{t'}^L}{1-\rho_{t'}^H}\right)+\rho_{t'}^L\ilan{\log\left(\frac{\rho_{t'}^L}{\rho_{t'}^H}\right)} \right).
        %L_{t_n} = \sum_{j=1}^{t_n}(y_{j} - \rho_{j}^L)\log\left(\frac{\rho_{j}^L(1-\rho_{j}^H)}{\rho_{j}^H(1-\rho_{j}^L)}\right)+\sum_{j=1}^{t_n}\left( (1-\rho_{j}^L)\log\left(\frac{1-\rho_{j}^L}{1-\rho_{j}^H}\right)+\rho_{j}^L\log\left(\frac{\rho_{j}^L}{\rho_{j}^H}\right) \right).
        L_{n} = \sum_{j=1}^{n}(y_{t_j} - \rho_{t_j}^L)\log\left(\frac{\rho_{t_j}^L(1-\rho_{t_j}^H)}{\rho_{t_j}^H(1-\rho_{t_j}^L)}\right)+\sum_{j=1}^{n}\left( (1-\rho_{t_j}^L)\log\left(\frac{1-\rho_{t_j}^L}{1-\rho_{t_j}^H}\right)+\rho_{t_j}^L\log\left(\frac{\rho_{t_j}^L}{\rho_{t_j}^H}\right) \right).
    \end{equation}
	Notice that, by Lemma \ref{lemma:delta_epsilonRelationCoU}, one may fix $\delta>0$ such that if $\mu_t \in M^{\alpha_t}(\epsilon)$, then: $ |\rho_t^H - \rho_t^L|>\delta.$ Thus, we have that $ |\rho_{t_j}^H - \rho_{t_j}^L|>\delta$ for all $j\geq 1$. Therefore, we can directly invoke Lemma A.1. in \cite{Harrison2012}, so that there exist constants $\xi,\beta>0$ such that $\E^L(\tilde{\mu}_{t_n} )\leq \chi \exp(-\beta n)$.

Now we consider the evolution of the seller's belief accounting for the platform's opportunity to use a signal. Thus, we take expectation over the signal $s$ which is chosen according to some signaling mechanism $\sigma$. Fix~$\sigma \in \Sigma^S$, which by Proposition \ref{lemma:RevenueEquivalence}, is without loss of optimality. Thus $\mu_1$ can take two values which we denote: $\underline{\mu}= \mu_1\left(\langle s= L \rangle\right)\leq \mu_0 \leq \mu_1\left(\langle s=H \rangle\right) = \overline{\mu} $ which, using Bayes' rule and algebra, implies that:
\begin{align*}
    \Prob(s=L|\phi = \phi_L) = \frac{(1-\underline{\mu})(\overline{\mu}-\mu_0)}{(1-\mu_0)(\overline{\mu}-\underline{\mu})},& & \Prob(s=L|\phi=\phi_H) = \frac{\underline{\mu}(\overline{\mu}-\mu_0)}{\mu_0(\overline{\mu}-\underline{\mu})},
\end{align*}
where we assume\footnote{Otherwise, the belief converges immediately to the true state and the last step of the proof follows without to adjust the constants in the bound to depend on the posteriors induced by the signal.} that the mechanism $\sigma$ induces posterior beliefs $0<\underline{\mu}<\overline{\mu}<1$.

From the first part of the proof, we have that there exist constants $\bar{\xi},\underline{\xi},\bar{\beta},\underline{\beta}$ such that, by taking expectation over the signal $s$ we have
%Note that based on the derivation above and the proof of Theorem 1 in \cite{Harrison2012}, given $\mu_1$ we can choos $\chi = 2\max\left\{\frac{\mu_1}{1-\mu_1},2\right \},$
%and $\beta$ to be independent of $\mu$. Thus, taking expectation over the signal $s$, one has:
% $$\E_{s}\left(\frac{\mu_1}{1-\mu_1}\middle|\phi=\phi_L\right)  = \frac{\mu_0}{1-\mu_0}.$$
\begin{align*}
    \E_{s}\left(\mu_{t_n}|\phi=\phi_L\right) &= \Prob(s=H|\phi=\phi_L) \E\left(\mu_{t_n}|\phi=\phi_L,\mu_1 = \overline{\mu}\right) + \Prob(s=L|\phi=\phi_L) \E\left(\mu_{t_n}|\phi=\phi_L,\mu_1 = \underline{\mu}\right) \\
    %&\leq  \Prob(s=H|\phi=\phi_L) 2\max\left\{\frac{\overline{\mu}}{1-\overline{\mu}},2 \right\}e^{-\beta n} + \Prob(s=L|\phi=\phi_L) 2\max\left\{\frac{\underline{\mu}}{1-\underline{\mu}},2 \right\}e^{-\beta n} \\
    &\leq  \Prob(s=H|\phi=\phi_L) \bar{\xi}e^{-\bar{\beta} n} + \Prob(s=L|\phi=\phi_L) \underline{\xi}e^{-\underline{\beta} n} \\
    %& \leq \max\left\{\frac{\overline{\mu}}{1-\overline{\mu}},\frac{\underline{\mu}}{1-\underline{\mu}},2 \right\}e^{-\beta n}
    & \leq \max\left\{\bar{\xi},\underline{\xi} \right\}e^{-\min \left\{\bar{\beta},\underline{\beta} \right\} n}
%    &\stackrel{(a)}{=} \begin{cases}
%        \left(\frac{\mu_0}{1-\mu_0}\right) e^{-\xi n}, &\text{ if } \frac{\underline{\mu}}{1-\underline{\mu}}>\frac{2\gamma}{\xi^2}\\
%        \left(\frac{\mu_0(\overline{\mu} - (1-\underline{\mu})\left(\frac{2\gamma}{\xi^2}\right))-\overline{\mu}(\underline{\mu}-(1-\mu_1)\left(\frac{2\gamma}{\xi^2}\right))}{(1-\mu_0)(\overline{\mu}-\underline{\mu})}\right)e^{-\xi n}, &\text{ if } \frac{\underline{\mu}}{1-\underline{\mu}}<\frac{2\gamma}{\xi^2}<\frac{\overline{\mu}}{1-\overline{\mu}}\\
%        \frac{2\gamma}{\xi^2}e^{-\xi n}, &\text{ if } \frac{\overline{\mu}}{1-\overline{\mu}}<\frac{2\gamma}{\xi^2}.
%    \end{cases}
\end{align*}
Defining $\xi = \max\left\{\bar{\xi},\underline{\xi} \right\}$, $\beta = \min \left\{\bar{\beta},\underline{\beta} \right\}$ completes the proof for the case of $\phi=\phi_L$.
The same proof holds when $\phi=\phi_H$, where although the constants $(\chi,\beta)$ may be different, taking the respective maxima among them establishes the result.%\qed

\subsubsection*{Proof of Lemma \ref{lemma:ConvergenceOfWC}}

%Recall the Lemma.\emph{     Fix $\epsilon>0$. There exists $\bar{C}\geq 0$ such that for all $\mu \in [0,1]$:
%\begin{align*}
%    co(W^{\max})(\mu) - co(W^{C})(\mu) &< \frac{\epsilon}{2} + \bar{C}\mu,\text{ and} & co(W^{\max})(\mu) - co(W^{C})(\mu) &< \frac{\epsilon}{2} +\bar{C}(1-\mu).&
%\end{align*}}
%\\
%    \noindent \emph{Proof.}

%    \begin{proof}
%\ilan{Changed to upper semi-continuity to avoid having to prove lhc of correspondence. Results still goes through.}

        We first note that for all $\delta \geq 0$, $W^C(\mu,\delta)$ is upper semi-continuous in $\mu$, for $\mu \in (0,1)$ (we defer this argument to the end of the proof). In addition, since the confounding constraint in \eqref{eq:simplifiedOptimizationProblem} need not hold for $\mu \in \{0,1\}$ (see related discussion in Appendix~\ref{app:DesigningSimplePolicies}), we have for all $\delta \geq 0$:
        %$$\lim_{\mu\rightarrow 0^+} W^C(\mu,\delta) \leq W^C(0,\delta), \quad \text{and} \quad \lim_{\mu\rightarrow 1^-} W^C(\mu,\delta) \leq W^C(1,\delta),$$
        $$\lim_{\mu\rightarrow 0^+} W^C(\mu,\delta) \leq W^C(0), \quad \text{and} \quad \lim_{\mu\rightarrow 1^-} W^C(\mu,\delta) \leq W^C(1),$$
        where, for $\mu \in \{0,1\}$, $W^C(\mu)$ is as in Appenxix~\ref{app:DesigningSimplePolicies} (i.e., the maximum consumer surplus achieved by a myopic policy in a single period, given $\mu\in\{0,1\}$).
        
        Abusing notation, let us redefine $W^C(\mu,\delta)$ for $\mu\in\{0,1\}$ as $W^C(\mu,\delta) := W^C(\mu)$. By extending the definition of $W^C(\mu,\delta)$ to consider $\mu \in \{0,1\}$, we have that $W^C(\mu,\delta)$ is upper semi-continuous %\footnote{See \cite{laraki2004regularity}.}
        in $\mu$ for all fixed $\delta\geq 0$. Therefore, its concavification $co(W^C)(\mu,\delta)$ is bounded and continuous in $\mu$, given~$\delta\geq0$. In particular, this is the case for $co(W^C)(\mu) = co(W^C)(\mu,0)$
        
        Moreover, it follows from \eqref{eq:myopicPromotion} in \S\ref{subsec:InsufficiencyOfTruthfulDisclosure} and Proposition~\ref{lemma:RevenueEquivalence} that $W^{max}(\mu) = W^C(\mu,\delta)$, for any fixed $\delta>1$, as the second constraint in \eqref{PlatformProblem:relaxedOptimizationProblem} is never binding for $\delta>1$. In particular, by the same reasoning as above we have that $W^{max}(\mu)$ is upper semi-continuous for $\mu \in [0,1]$. In addition, $W^{max}(0) = W^C(0)$ and $W^{max}(1) = W^C(1)$, from where it follows that $co(W^{max})(\mu)$ is bounded and continuous in $\mu$.
        
        Finally, by the previous arguments observe that 	$co(W^{max})(\mu) = co(W^C)(\mu)$ for $\mu \in \{0,1\}$. Since both $co(W^{max})$ and $co(W^C)$ are bounded and continuous functions of $\mu\in [0,1]$ that are equal in the endpoints of the interval, given $\epsilon>0$, we can construct $C_1,C_2>0$ such that
        \begin{equation*}
        co(W^{\max})(\mu) - co(W^{C})(\mu) < \frac{\epsilon}{2} + C_1\mu,\text{ and }  co(W^{\max})(\mu) - co(W^{C})(\mu) < \frac{\epsilon}{2} +C_2(1-\mu).
        \end{equation*}
        Taking $\bar{C} = \max \{C_1,C_2\}$ yields the result.        	        

        We complete the proof by establishing that $W^C(\mu,\delta)$ is upper semi-continuous in $\mu$ for all $\delta\geq 0$ by showing the conditions for Berge's Maximum Theorem hold, i.e. that the objective of \eqref{PlatformProblem:relaxedOptimizationProblem} is  continuous in the problem's decision variables, and that the feasible set correspondence $F(\mu,\delta)$ is non-empty and compact-valued and non-empty, as well as upper hemicontinuous in $\mu$, for a fixed $\delta\geq 0.$

        \textbf{\emph{Proof of Continuity of Objective Function.}}
        The objective of \eqref{PlatformProblem:relaxedOptimizationProblem} is  continuous in $(\mu,\alpha_{\phi_L},\alpha_{\phi_H},p)$ as it is linear in $\mu,\alpha_{\phi_L},\alpha_{\phi_H}$ and is  continuous in $p$ by Assumption \ref{Assump:ConsSurplus}.
        
        \textbf{\emph{Proof of Compactness and Non-emptyness of Feasible Set.}} For any $\mu\in(0,1)$ and $\delta\geq 0$, $F(\mu,\delta)$ is bounded since it is a subset of $[0,1]^2 \times P$, and it is closed since the constraints of \eqref{PlatformProblem:relaxedOptimizationProblem} are defined by continuous functions. Moreover $F(\mu,\delta)\neq \emptyset$ since we always have
        	$\left( \alpha_{\phi_L},\alpha_{\phi_H},p \right) = \left( 0,\left(\frac{\phi_H-\phi_L}{\phi_H}\right) \frac{\bar{\rho}_c(p^*)}{\bar{\rho}_0(p^*)}  ,p^* \right) \in F(\mu,\delta)$.

        \textbf{\emph{Proof of Upper Hemicontinuity.}} As in the proof of Lemma~\ref{lemma:delta_epsilonRelationCoU}, this follows by noting that $F$ satisfies the closed graph property.
        %Fix $\mu \in (0,1)$ and open set $V \subset [0,1]^2\times P$. If $F(\mu,\delta)\subset V,$ then we must show that there exists $\epsilon>0$ such that for $\mu-\epsilon <\mu'<\mu+\epsilon  \Rightarrow F(\mu',\delta) \subset V$.
                %\emph{Proof.} Fix open set $V$ such that $F(\mu,\delta) \subset V$. For any $\epsilon>0$, we have that $F(\mu-\epsilon,\delta)\subset F(\mu,\delta)\subset V$ because increasing $\mu$ relaxes the only constraint affected by $\mu$. Thus, consider $\epsilon_n\downarrow 0$. We want to show that there exists $\bar{N}$ such that for $N>\bar{N}$, $F(\mu+\epsilon_n,\delta)\subset V$. This follows because $F(\mu+\epsilon_n,\delta )$ is closed for each $\epsilon_n$ and it converges to $F(\mu,\delta)$. 
                This concludes the proof of the Lemma.%\qed
%    \end{proof}

%Moved setting with no info design here as the result is now thm 2
%\subsection{The Benchmark With No Signaling}\label{app: NoInfoDesign}
\subsection{Proof of Theorem~\ref{prop: noinfodesign}}\label{app: NoInfoDesign}

%\textbf{[Ilan: (1) is it appropriate to refer to the initial information signal as "information design"? Check with Daniela. To me "information design" is more broad (and include, for example, confounding) and our results here are about the value of the initial information signal rather than the value of information design. I would clarify in the beginning that the comment of the referee was about the benchmark without information design, and we interpret it as a benchmark without initial information signal, and why. If Daniela is fine with refering to this as "information design" then I am ok with that but at least we want to clarify what do we mean by that and that henceforth we refer to it as information design.} 
%(2) perhaps we can add some numerical inverstigation of the value of the initial information signal as a function of the horizon length for various settings? \blue{YONI -- I changed the discussion about the value of signaling to focus more on the need for a dynamic vs. a static policy and the need to track seller's beliefs if signaling is not available (see below), so I'm not sure this would add too much at this point. If you think we should do the numerics then happy to, but let's discuss if it's necessary at this point.} (3) Overall i think we need to claridy and explain more concisely in the beginning what are the main findings of this section (perhpas in a short subsection), and add some interpretation/insight form these.] \blue{-- [Improved and rewrote introduction below]}

%In this section we study a 
We now consider the setting where the platform may not employ a signaling mechanism to communicate information about the underlying state $\phi$ to the seller, and therefore the platform's only lever for optimization is its promotion policy.
Our analysis here therefore quantifies the value of the platform's signaling ability by comparing the maximum achievable long-run average consumer surplus in this setting with the corresponding one for the baseline model presented in~\S\ref{sec:Model}, where the platform may send an informative signal about the value of~$\phi$ at the beginning of the horizon.

As established in Theorem~\ref{thm:LongRunAverageOptimalConsumerSurplus}, joint optimization of the signaling mechanism and the promotion policy allows the platform to generate an optimal long-run average consumer surplus of $co(W^C)(\mu_0)$. This is achieved by
optimally designing the initial information signal so that the seller's prior belief $\mu_0$ is updated to some (possibly different) value $\mu_1$ at the beginning of the first period, and employing an optimal confounding promotion policy thereafter. In contrast, removing the platform's signaling ability constraints the seller's belief at the beginning of the first period to be equal to the prior belief $\mu_0$.

We establish that, interestingly, even without the ability to send an initial signal to the seller, the platform can still design a promotion policy that approximately generates the maximal long-run average consumer surplus of $co(W^C)(\mu_0)$. This finding is formalized in Theorem~\ref{prop: noinfodesign}, which was presented in~\S\ref{sec: no_signaling}. %\ref{sec: NoSignalingOptimalWelfare}.
 Nevertheless, removing the ability to send an initial informative signal to the seller generates three main differences with respect to the baseline setting.

First, the platform must design a more complex promotion policy to approximate the long-run average payoff of $co(W^C)(\mu_0)$ that it would if it had signaling ability. As we establish in Theorem~\ref{thm:LongRunAverageOptimalConsumerSurplus} and discuss in \S\ref{sec: OptSimplePolicies}, when the platform has the ability to optimize over signaling mechanism as well, it can construct a \emph{static} policy that guarantees the maximum achievable long-run average payoff. Without signaling ability this is not true anymore, and the platform may require a dynamic promotion policy instead (the proof of Theorem~\ref{prop: noinfodesign} details the design of such policy).

Second, the dynamic promotion policy that we construct to approximate the long-run average payoff of $co(W^C)(\mu_0)$ need not be confounding in all periods. In fact, under this policy the seller's belief may change until it approaches a certain target, after which the promotion policy becomes confounding and the seller's belief remains constant. Thus, as without an initial signal the platform has no ability to control the starting point of the seller's belief process, it may need to control the drift of this process over time. This is in contrast with the baseline setting, in which the platform designs the initial point of the seller's belief process via the signaling mechanism, and thereafter ensures the belief does not change (see related discussion in  \S\ref{sec:longrunoptimalsurplus}). Importantly, to be able to implement the proposed (dynamic and non-confounding) policy, the platform must track the seller's belief for a number of periods that depends on the sample path of sales, resulting in an additional nontrivial burden relative to the setting with signaling ability.

Finally, as the dynamic promotion policy that may be non-confounding and the seller's belief may change over time, the per-period consumer welfare generated by the platform may be non-stationary. This is in contrast with the baseline setting where the seller's belief stays constant after the initial signal is realized and the expected consumer surplus is fixed thereafter. %Therefore, even though in both settings the long-run average consumer surplus will approximate $co(W^C)(\mu_0)$, the convergence path of the per-period welfare is different depending on whether the platform has signaling ability of not. We illustrate the comparison between the evolution of the average consumer surplus generated by the platform in these two settings in \S\ref{sec: NoSignalingNumerics}.

\subsubsection{Near-Optimal Policy Design Without Signaling Ability} \label{sec: NoSignalingOptimalWelfare}

We next establish that even without the ability to send an initial informative signal to the seller, the platform can construct a promotion policy that approximately generates a long-run average consumer surplus of $co(W^C)(\mu_0)$ which, by Theorem~\ref{thm:LongRunAverageOptimalConsumerSurplus}, is a tight upper bound on the maximal long-run average consumer surplus achievable by the platform. Note that removing the ability to send an initial signal is outcome-equivalent to assuming that the platform employs an uninformative signaling mechanism in the baseline setting.

\begin{manualthm}{\ref{prop: noinfodesign}}
	Fix $\epsilon>0$, and suppose that the platform's signaling mechanism is uninformative (i.e., $\sigma = \sigma^U$). Then, given $\mu_0\in [0,1]$, there exists a promotion policy $\bA$ such that
	\begin{equation*}
		\lim_{T\rightarrow \infty} ~~\frac{1}{T}W_T^{\bA,\sigma^U,\boldsymbol{\pi^*}}(\mu_0) \geq co(W^C)(\mu_0) -\epsilon.
	\end{equation*}
\end{manualthm}

\begin{proof}[Proof of Theorem~\ref{prop: noinfodesign}]
	
To prove the theorem, we construct a promotion policy that generates an expected long-run average consumer surplus of at least $co(W^C)(\mu_0)-\epsilon$, provided a fixed $\epsilon>0$.

Let $\epsilon>0$. If $W^C(\mu_0) = co(W^C)(\mu_0)$, let $\bA$ be the constant policy where in each period, $\alpha_t$ is a confounding policy that maximizes consumer surplus (e.g., the simple confounding policy defined as the solution to~\eqref{eq: ConfoundingProblemSimple} in \S\ref{sec: OptSimplePolicies}). By definition of a confounding policy, we have that $\mu_t = \mu_0$ for all $t=1,\dots,T$ and therefore
	\begin{equation*}
		\lim_{T\rightarrow \infty} ~~\frac{1}{T}W_T^{\bA,\sigma^U,\boldsymbol{\pi^*}}(\mu_0) = co(W^C)(\mu_0).
	\end{equation*}
	Suppose instead that $W^C(\mu_0) < co(W^C)(\mu_0)$. In particular, since by definition of the concavification operator we have that $co(W^C)(0) = W^C(0)$ and $co(W^C)(1) = W^C(1)$, this assumption implies that $0<\mu_0<1$. The rest of the proof consists of six steps.
	
	\textbf{Step 1 (Continuity conditions).} In this step we establish that there exist constants $\mu_L,\mu_H\in[0,1]$ such that $\mu_L<\mu_0<\mu_H$, $W^C(\mu_L) = co(W^C)(\mu_L)$ and $W^C(\mu_H) = co(W^C)(\mu_H)$. In addition, there exists $\delta>0$ such that if $|\mu-\mu_L|\leq\delta$ or $|\mu-\mu_H|\leq\delta$, one has that $|W^C(\mu)-W^C(\mu_j)|\leq\epsilon/2$
	
To show this, we propose the following constants:
	\begin{equation} \label{eq: mu_L_and_mu_H}
		\begin{split}
			\mu_L = \sup \left \lbrace \mu\leq \mu_0 :\,W^C(\mu) = co(W^C)(\mu) \right \rbrace,\\
			\mu_H = \inf \left \lbrace \mu\geq \mu_0 :\,W^C(\mu) = co(W^C)(\mu) \right \rbrace.
		\end{split}
	\end{equation}

Since we know that $co(W^C)(0) = W^C(0)$ and $co(W^C)(1) = W^C(1)$ (as explained above), it follows that $\mu_L \geq 0$ and $\mu_H \leq 1$. To show that these constant satisfy the remaining properties required by the claim, we rely of the following lemma that establishes continuity of $W^C(\mu)$ for $\mu \in (0,1)$. We defer the proof of the Lemma for the end of this section.
	
	\begin{lemma} \label{lem: continuity_W^C}
		Define $W^C(\mu)$ as in~\eqref{PlatformProblem:SinglePeriodConfounding}. Then, $W^C(\mu)$ is a continuous function of $\mu \in (0,1)$.
	\end{lemma}

In addition, since $co(W^C)$ is a concave function of $\mu\in [0,1]$, it is continuous in $(0,1)$. Thus, by Lemma~\ref{lem: continuity_W^C}, the expression $co(W^C)(\mu)-W^C(\mu)$ is continuous for $\mu\in(0,1)$. Therefore, it follows from~\eqref{eq: mu_L_and_mu_H} that $\mu_L<\mu_0<\mu_H$, $W^C(\mu_L) = co(W^C)(\mu_L)$ and $W^C(\mu_H) = co(W^C)(\mu_H)$.
	
	%We claim that there exist $\mu_L,\mu_H\in[0,1]$ such that $\mu_L<\mu_0<\mu_H$, $W^C(\mu_L) = co(W^C)(\mu_L)$ and $W^C(\mu_H) = co(W^C)(\mu_H)$. Indeed, by the proof of Lemma~\ref{lemma:ConvergenceOfWC} we have that $\mu_L=0$ and $\mu_H = 1$ satisfy these properties.
	
	%Since $W^C$ is a continuous function of $\mu \in [0,1]$ (by Lemma~\ref{lem: continuity_W^C}, which we prove at the end of this Appendix), $co(W^C)$ is also continuous. \textbf{[Ilan: please state the lemma here. Then write where is it proved. Then write what it implies for the present proof.]} Thus, we can choose the constants $\mu_L$ and $\mu_H$ that satisfy the properties above \textbf{[Ilan: which ones?]} to minimize $\mu_H-\mu_L$.
	
	Moreover, by continuity of $W^C$ (Lemma~\ref{lem: continuity_W^C}), we can choose $\delta>0$ such that for $j\in\{L,H\}$, if $|\mu-\mu_j|\leq\delta$, we have that $|W^C(\mu)-W^C(\mu_j)|\leq\epsilon/2$. %The rest of the proof consists of five steps. \textbf{[Ilan: can you describe briefly what is done in each steps? perhaps one sentence per step? then, it will also be useful to give titles to the steps (in parenthesis).]}
	
	\textbf{Step 2 (Construction of promotion policy with drift on seller's belief).} 

In this step we construct a policy $\tilde{\bA}$ such that the induced process of seller's beliefs $\{\mu_t\}$ satisfies $0<|\mu_{t+1}-\mu_{t}|\leq 2\delta$ for all periods $t=1,\dots,T$.
	
We construct the proposed policy by solving the following optimization problem. Given $\mu \in(0,1)$, $0<b<1$, and $\xi>0$, consider the problem given by
	\begin{equation} \label{eq: PolicyConstructionProblem}
		\begin{split}
			W^{aux}(\mu,\xi,b)&:=\max_{\substack{\alpha_{\phi_H},\alpha_{\phi_L} \in [0,1],\\p \in P}} ~~ \E_{\phi}\left(\phi \alpha_{\phi} \bar{W}_0(p) +\phi(1- \alpha_{\phi} )\bar{W}_c  + (1-\phi)\bar{W}_c(p)|\mu\right) \\
			% \text{s.t.}&~~p \in \arg \max_{p \in P} ~~\E_{y,\theta}\left( v(p,y) |\mu,\alpha_{\phi_H},\alpha_{\phi_L}\right)\\
			\text{s.t.}&~~ p\bar{\rho}_0(p)(\phi_L\alpha_{\phi_L}(1-\mu) +\phi_H\alpha_{\phi_H}\mu) +  p\bar{\rho}_c(p)(1-\phi_L-\mu(\phi_H-\phi_L)) \geq \\
			&\quad \quad \quad p^*\bar{\rho}_c(p^*)(1-\phi_L-\mu(\phi_H-\phi_L)),\\
			&~~ b\xi\leq |\phi_H\alpha_{\phi_H}\bar{\rho}_0(p) + (1-\phi_H)\bar{\rho}_c(p) - \phi_L\alpha_{\phi_L}\bar{\rho}_0(p) - (1-\phi_L)\bar{\rho}_c(p)| \leq  \xi.
		\end{split}
	\end{equation}
	Note that for $b=0$, the problem is identical to $W^C(\mu,\xi)$ as defined by~\eqref{PlatformProblem:relaxedOptimizationProblem} in \S\ref{sec:Theorem1Proof:AuxProofs}. Moreover, we claim that problem \eqref{eq: PolicyConstructionProblem} is feasible for all small enough $b>0$. This can be seen by taking $p=p^*$, $\alpha_{\phi_L} = 0$ and
	\begin{equation*}
		\alpha_{\phi_H} = \left(\frac{\phi_H-\phi_L}{\phi_H}\right)\left(\frac{\bar{\rho}_c(p^*)}{\bar{\rho}_0(p^*)}\right) + \frac{b\xi}{\phi_H\bar{\rho}_0(p^*)},
	\end{equation*}

which satisfy the constraints of \eqref{eq: PolicyConstructionProblem} provided that $b$ is small enough.

%Let $\tilde{\bA}$ be the single-price and static promotion policy constructed by solving
	
	Moreover, for a single-price promotion policy, we can write the seller's beliefs at period $t$ as follows (see e.g., the proof of Lemma A.1. in \cite{Harrison2012}):
	\begin{equation*}
		\mu_{t} =  \frac{1}{ 1+\left(\frac{1-\mu_0}{\mu_0}\right)\exp(L_{t})},
	\end{equation*}
	where $ L_t = \sum_{j=1}^{t} \gamma_t$, and
	\begin{equation*}
		\gamma_t = (1-y_t)\log \left(\frac{1-\rho_t^L}{1-\rho_t^H}\right) + y_t \log \left(\frac{\rho_t^H}{\rho_t^L}\right),
	\end{equation*}
	where we denote for $i=L,H$:
	\begin{equation*}
		\rho_t^i := \phi_i\alpha_t(p_t,\phi_i,\mu_t)\bar{\rho}_0(p_t) + (1-\phi_i)\bar{\rho}_c(p_t),
	\end{equation*}
with $p_t = \pi_t^*(\mu_t)$. Note that $\mu_t$ is a continuous function of $L_t$, and thus there exists $\xi_0>0$ such that if~$|\gamma_{t+1}|<\xi_0$, one has that $|\mu_{t+1}-\mu_t|\leq 2\delta$. Furthermore, since $y_t\in\{0,1\}$, ensuring that $|\gamma_{t+1}|\leq \xi_0$ is equivalent to requiring that $|\rho_t^L-\rho_t^H|\leq \xi_1$ for some $\xi_1>0$.
	
Thus, by defining $\tilde{\bA}$ as a static and single-price policy induced by solving $W^{aux}(\mu,\xi_1,b)$ for some small $b>0$, one has that the belief process induced by $\tilde{\bA}$ satisfies: (i) $|\mu_{t+1}-\mu_{t}|\leq 2\delta$; and (ii) $b\xi\leq |\rho_t^L-\rho_t^H|$ at all times $t=1,\dots,T$.

\textbf{Step 3 (Construction of stopping times).} Fix the promotion policy defined in Step 2. For $j\in\{L,H\}$, define the stopping times $\tau_j = \min\{t:\mu_t \in [\mu_j-\delta,\mu_j+\delta]\}$, and $\tau=~\min \{\tau_L,\tau_H\}$, where $\delta$ is as in step 1. Then, we next show that $\E[\tau]<\infty$, and $\tau$ is finite almost surely.
	
To prove this, note that $\E[\tau] = \mu_0\E[\tau | \phi = \phi_H]+ (1-\mu_0)\E[\tau | \phi = \phi_L]$. We will show that $\E[\tau | \phi = \phi_L]<\infty$ (a similar argument proves that $\E[\tau | \phi = \phi_H]<\infty$).
	
Fix $a\in(0,1)$ such that $a<\mu_L+\delta$, and define $\tau_a$ as the stopping time representing the first period such that $\mu_t\in[0,a]$, i.e., $\tau_a = \inf \left \lbrace t\geq 1:\, \mu_t\in[0,a] \right \rbrace$. Since $|\mu_{t+1}-\mu_t|\leq 2\delta$ (by step 1) and $\mu_0>\mu_L$, it follows that $\tau_a\geq \tau_L$. Therefore,
	\begin{equation*}
		\begin{split}
			\E[\tau | \phi = \phi_L] &\leq \E[\tau_L | \phi = \phi_L] \\
			& \leq \E[\tau_a | \phi = \phi_L] = \sum_{k=1}^{\infty}\Prob\left[\tau_a\geq k | \phi=\phi_L \right] \stackrel{(i)}{\leq} \frac{1}{a}\sum_{k=1}^{\infty}\E\left[\mu_k | \phi=\phi_L \right]\stackrel{(ii)}{\leq} \frac{1}{a}\sum_{k=1}^{\infty} \chi e^{-\lambda k}<\infty,
		\end{split}
	\end{equation*}
	for some constants $\chi,\lambda>0$. Here, (i) follows by applying Markov inequality and (ii) follows from Lemma A.1 in \cite{Harrison2012}, where the constants $\chi,\lambda>0$ exist since, in the notation of step 2, we have that $0<b\xi\leq |\rho_t^L-\rho_t^H|$ for all $t$ and therefore Lemma A.1 in \cite{Harrison2012} applies directly to the belief process induced by $\tilde{\bA}$. We conclude that $\E\left[\tau\right]<\infty$ and, as a result, $\tau<\infty$ almost surely.

\textbf{Step 4 (Bounds for stopping times' hitting probabilities).} In this step we establish the following bounds:
	\begin{equation*}
		\frac{\mu_H-\mu_0-\delta}{\mu_H-\mu_L}\leq \Prob \left[\tau = \tau_L\right] \leq \frac{\mu_H-\mu_0+\delta}{\mu_H-\mu_L}.
	\end{equation*}

To prove these bounds, note that the belief process $\{\mu_t\}$ induced by the policy $\tilde{\bA}$ defined in Step 2 is a martingale with bounded increments, since it is defined by Bayesian updating. Moreover, the stopping time $\tau$ has finite expectation by step 3. Thus, we can apply the optional stopping theorem to derive that~$\E[\mu_\tau] = \mu_0$. This implies that
	\begin{equation*}
		\begin{split}
			\mu_0&=\Prob \left[\tau = \tau_L\right]\E[\mu_\tau | \tau = \tau_L]+\left(1-\Prob \left[\tau = \tau_L\right]\right)\E[\mu_\tau | \tau = \tau_H]\\
			& \geq \Prob \left[\tau = \tau_L\right](\mu_L-\delta)+\left(1-\Prob \left[\tau = \tau_L\right]\right)(\mu_H-\delta),
		\end{split}
	\end{equation*}
	which implies that
	\begin{equation*}
		\frac{\mu_H-\mu_0-\delta}{\mu_H-\mu_L}\leq \Prob \left[\tau = \tau_L\right].
	\end{equation*}
	Similarly, we have that $\mu_0\leq \Prob \left[\tau = \tau_L\right](\mu_L+\delta)+\left(1-\Prob \left[\tau = \tau_L\right]\right)(\mu_H+\delta)$, which implies that
	\begin{equation*}
		\Prob \left[\tau = \tau_L\right] \leq \frac{\mu_H-\mu_0+\delta}{\mu_H-\mu_L}.
	\end{equation*}
	
\textbf{Step 5 (Welfare at stopping time).} In this step we show that $\E\left[W^C(\mu_\tau)\right] \geq co(W^C)(\mu_0)-\epsilon$. Note that by definition of $\mu_L$ and $\mu_H$ (see~\eqref{eq: mu_L_and_mu_H}), we have that $co(W^C)(\mu)>W^C(\mu)$ for all $\mu \in (\mu_L,\mu_H)$. Thus,~$co(W^C)(\mu)$ is an affine function for $\mu \in [\mu_L,\mu_H]$, from where it follows that
	\begin{equation*}
		co(W^C)(\mu_0) = \frac{\mu_H-\mu_0}{\mu_H-\mu_L}W^C(\mu_L) + \frac{\mu_0-\mu_L}{\mu_H-\mu_L}W^C(\mu_H).
	\end{equation*}
	Therefore,
	\begin{equation*}
		\begin{split}
			\E\left[W^C(\mu_\tau)\right]& = \Prob \left[\tau = \tau_L\right]\E\left[W^C(\mu_\tau)|\tau=\tau_L\right] + \Prob \left[\tau = \tau_H\right]\E\left[W^C(\mu_\tau)|\tau=\tau_H\right]\\
			& \stackrel{(a)}{\geq} \Prob \left[\tau = \tau_L\right] \left(W^C(\mu_L)-\epsilon/2\right)+\Prob \left[\tau = \tau_H\right] \left(W^C(\mu_H)-\epsilon/2\right)\\
			& \stackrel{(b)}{\geq} co(W^C)(\mu_0)-\frac{\epsilon}{2} - \frac{\delta}{\mu_H-\mu_L}\left(W^C(\mu_L)+W^C(\mu_H)\right),
		\end{split}
	\end{equation*}
where (a) follows by steps 1 and 3, and (b) follows by step 4. Without loss of generality, we can choose $\delta>0$ to be small enough so that the last term above is larger than $-\epsilon/2$. Thus, we have that $\E\left[W^C(\mu_\tau)\right] \geq co(W^C)(\mu_0)-\epsilon$.

\textbf{Step 6 (Definition of final policy).} In this step we construct a promotion policy $\bA$ that satisfies Theorem~\ref{prop: noinfodesign}. Define the policy $\bA$ by:
	\begin{equation*}
		\alpha_t = \begin{cases}
			\tilde{\alpha}_t, & \text{ if $t<\tau$}\\
			\alpha^C, & \text{ if $t\geq \tau$,}
		\end{cases}
	\end{equation*}
	where $\alpha^C$ is the simple confounding promotion policy constructed in the proof of Theorem~\ref{thm:LongRunAverageOptimalConsumerSurplus} (i.e., as in \S\ref{sec: OptSimplePolicies}) that, in particular, generates an expected per-period welfare of $W^C(\mu_t)$.
	%where, for $\mu\in[0,1]$, $\alpha^C(\mu)$ is a single-price confounding policy that achieves $W^C(\mu)$.
	Then, we have that for all $K\leq T$,
	
	\begin{equation} \label{eq: split_W_benchmark}
		\begin{split}
			\frac{1}{T}W_T^{\bA,\sigma^U,\boldsymbol{\pi^*}}(\mu_0)&  = \frac{1}{T}\E\left(\sum_{t=1}^T  W(p_{t},a_{t},\psi_{t}) \middle|\boldsymbol{\alpha},\sigma^U,\boldsymbol{\pi}^*,\mu_0\right)\\
			& = \frac{1}{T}\E\left(\sum_{t=1}^T  W(p_{t},a_{t},\psi_{t}) \middle|\boldsymbol{\alpha},\sigma^U,\boldsymbol{\pi}^*,\mu_0,\tau\leq K\right) \Prob\left(\tau\leq K\right) \\
			&+\frac{1}{T}\E\left(\sum_{t=1}^T  W(p_{t},a_{t},\psi_{t}) \middle|\boldsymbol{\alpha},\sigma^U,\boldsymbol{\pi}^*,\mu_0,\tau>K\right) \Prob\left(\tau > K\right).
		\end{split}
	\end{equation}
	To bound the second term, note that
	\begin{equation*}
		\frac{1}{T}\E\left(\sum_{t=1}^T  W(p_{t},a_{t},\psi_{t}) \middle|\boldsymbol{\alpha},\sigma^U,\boldsymbol{\pi}^*,\mu_0,\tau\geq K\right) \Prob\left(\tau > K\right)\geq W_{min}\Prob\left(\tau> K\right),
	\end{equation*}
	where we denote $W_{min} = \inf \left \lbrace W^{aux}(\mu,\xi_1,b):\,\mu\in[m_L+\delta,\mu_H-\delta] \right \rbrace$, with $\xi_1$ and $b$ as in step 1.
	
	For the first term, by definition of $\bA$ we have that
	\begin{equation*}
		\begin{split}
			\frac{1}{T}\E\left(\sum_{t=1}^T  W(p_{t},a_{t},\psi_{t}) \middle|\boldsymbol{\alpha},\sigma^U,\boldsymbol{\pi}^*,\mu_0,\tau\leq K\right)& = \frac{1}{T}\E\left(\sum_{t=1}^{\tau-1}  W(p_{t},a_{t},\psi_{t}) + \sum_{t=\tau}^{T}  W(p_{t},a_{t},\psi_{t}) \middle|\boldsymbol{\alpha},\sigma^U,\boldsymbol{\pi}^*,\mu_0,\tau\leq K\right)\\
			& \geq  \frac{(K-1)W_{min}}{T} + \frac{T-K}{T}\E\left(W^C(\mu_\tau)\mid \mu_0\right)\\
			& \geq \frac{(K-1)W_{min}}{T} + \frac{T-K}{T} \left(co(W^C)(\mu_0)-\epsilon\right),
		\end{split}
	\end{equation*}
	where the last inequality follows by step 5. Plugging back into \eqref{eq: split_W_benchmark} and taking the limit as $T\to \infty$ yields
	\begin{equation*}
		\lim_{T\rightarrow \infty} ~~\frac{1}{T}W_T^{\bA,\sigma^U,\boldsymbol{\pi^*}}(\mu_0) \geq \Prob\left(\tau\leq K\right) \left( co(W^C)(\mu_0)-\epsilon \right) + \Prob\left(\tau> K\right)W_{min},
	\end{equation*}
	for all $K>0$. Since $\tau<\infty$ almost surely, by taking $K\to \infty$ on the right hand side, we conclude that
	\begin{equation*}
		\lim_{T\rightarrow \infty} ~~\frac{1}{T}W_T^{\bA,\sigma^U,\boldsymbol{\pi^*}}(\mu_0) \geq  co(W^C)(\mu_0)-\epsilon,
	\end{equation*}
which concludes the proof.\end{proof}

To conclude this section, we prove Lemma~\ref{lem: continuity_W^C}, which was an used as an auxiliary result in step 1 above.

\begin{proof}[Proof of Lemma~\ref{lem: continuity_W^C}]
	Note that we can write
	\begin{equation}\label{eq: W^C(mu)}
		W^C(\mu) = \max \left \lbrace \E_{\phi}\left(\phi \alpha_{\phi} \bar{W}_0(p) +\phi(1- \alpha_{\phi} )\bar{W}_c  + (1-\phi)\bar{W}_c(p)|\mu\right):\, \left(\alpha_{\phi_H},\alpha_{\phi_L},p\right) \in F(\mu) \right \rbrace,
	\end{equation}
	where the feasible set correspondence $F(\mu)$ is defined as the set of vectors $\left(\alpha_{\phi_H},\alpha_{\phi_L},p\right) \in[0,1]\times [0,1]\times \mathcal{P}$ that satisfy $G_1\left(\alpha_{\phi_H},\alpha_{\phi_L},p,\mu\right) \geq 0,$ and $G_2\left(\alpha_{\phi_H},\alpha_{\phi_L},p\right) = 0,$  where
	
	\begin{equation*}
		\begin{split}
			G_1\left(\alpha_{\phi_H},\alpha_{\phi_L},p,\mu\right)&= p\bar{\rho}_0(p)(\phi_L\alpha_{\phi_L}(1-\mu) +\phi_H\alpha_{\phi_H}\mu) +  \left(p\bar{\rho}_c(p) -p^*\bar{\rho}_c(p^*) \right)(1-\phi_L-\mu(\phi_H-\phi_L)),\\
			G_2\left(\alpha_{\phi_H},\alpha_{\phi_L},p\right)&=
			\left(\phi_H\alpha_{\phi_H}-\phi_L\alpha_{\phi_L}\right)\bar{\rho}_0(p) - (\phi_H-\phi_L)\bar{\rho}_c(p).
		\end{split}
	\end{equation*}
	To show that $W^C(\mu)$ is continuous in $\mu$, it suffices to show that the conditions for Berge's Maximum Theorem are satisfied for \eqref{eq: W^C(mu)}. We have shown that the objective function of \eqref{eq: W^C(mu)} is continuous, and that the feasible set correspondence is non-empty, compact-valued and upper hemicontinuous in the proof of Lemma~\ref{lemma:ConvergenceOfWC}, so it remains to show that $F(\mu)$ is lower hemicontinuous in $\mu$.
	
	To establish lower hemicontinuity, let $\left(\mu_n\right)_{n\in \mathbb{N}} \subseteq (0,1)$ be a sequence that converges to some $\mu\in(0,1)$, and let $(\alpha_{\phi_L},\alpha_{\phi_H},p) \in F(\mu)$. We need to construct a sequence $(\alpha_{\phi_L},\alpha_{\phi_H},p)_n \rightarrow (\alpha_{\phi_L},\alpha_{\phi_H},p)$ such that~$(\alpha_{\phi_L},\alpha_{\phi_H},p)_n \in F(\mu_n)$ for all large enough $n$.
	
	If $G_1\left(\alpha_{\phi_H},\alpha_{\phi_L},p,\mu\right)> 0$, then we can simply take $(\alpha_{\phi_L},\alpha_{\phi_H},p)_n = (\alpha_{\phi_L},\alpha_{\phi_H},p)$ for all $n$. By continuity of $G_1$, this sequence satsifies the desired property.
	
	Suppose then that the first constraint is binding, i.e., $G_1\left(\alpha_{\phi_H},\alpha_{\phi_L},p,\mu\right)=0$. Note that since $p\bar{\rho}_c(p) \leq p^*\bar{\rho}_c(p^*)$ (by definition of $p^*$), this implies that $\bar{\rho}_0(p)>0$. Then, from the second constraint we have that
	\begin{equation*}
		\alpha_{\phi_H} = \frac{\phi_L}{\phi_H}\alpha_{\phi_L} + \left(1-\frac{\phi_L}{\phi_H}\right) \frac{\bar{\rho}_c(p)}{\bar{\rho}_0(p)}.
	\end{equation*}
	Plugging this expression into $G_1$ yields
	\begin{equation*}
		G_1\left(\alpha_{\phi_H},\alpha_{\phi_L},p,\mu\right) = p\bar{\rho}_0(p)\phi_L \alpha_{\phi_L} + (\phi_H-\phi_L) p \bar{\rho}_c(p)+\left(p\bar{\rho}_c(p) -p^*\bar{\rho}_c(p^*) \right)(1-\phi_L-\mu(\phi_H-\phi_L)).
	\end{equation*}
	We propose the following sequence:
	\begin{equation*}
		\left(\alpha_{\phi_H},\alpha_{\phi_L},p\right)_n = \left( \frac{\phi_L}{\phi_H}\alpha_{\phi_L} + \left(1-\frac{\phi_L}{\phi_H}\right) \frac{\bar{\rho}_c(p_n)}{\bar{\rho}_0(p_n)},\alpha_{\phi_L},(1-\gamma_n)p+\gamma_np^*\right),
	\end{equation*}
	where the sequence $\gamma_n$ is given by
	\begin{equation*}
		\gamma_n = \min \left \lbrace 1,\max \left \lbrace 0,\frac{(\phi_H-\phi_L)(\mu-\mu_n)(p^*\bar{\rho}_c(p^*)-p\bar{\rho}_c(p))}{p^*\bar{\rho}_0(p^*)\phi_L\alpha_{\phi_L}  +(\phi_H-\phi_L)p^*\bar{\rho}_c(p^*)}\right \rbrace\right \rbrace.
	\end{equation*}
	Note that $\gamma_n \to 0$, and $\gamma_n \in [0,1]$, so that $p_n\in \mathcal{P}$ for all $n$. By construction, we have that $G_2\left(\left(\alpha_{\phi_H},\alpha_{\phi_L},p\right)_n\right) = 0$, so it remains to show that our choice of the sequence $\left(\gamma_n\right)$ ensures that $G_1\left(\left(\alpha_{\phi_H},\alpha_{\phi_L},p,\mu\right)_n\right) \geq 0$ for all large enough $n$.
	
	By Assumption~\ref{assump:Demand}, we have that $G_1$ is strictly concave in $p$. Thus, we have that
	
	\begin{equation*}
		\begin{split}
			G_1\left(\left(\alpha_{\phi_H},\alpha_{\phi_L},p,\mu\right)_n\right) & = p_n\bar{\rho}_0(p_n)\phi_L \alpha_{\phi_L} + (\phi_H-\phi_L) p_n \bar{\rho}_c(p_n) \\
			& +\left(p_n\bar{\rho}_c(p_n) -p^*\bar{\rho}_c(p^*) \right)(1-\phi_L-\mu_n(\phi_H-\phi_L))\\
			& \geq \gamma_n \left[p^*\bar{\rho}_0(p^*)\phi_L \alpha_{\phi_L} + (\phi_H-\phi_L) p^*\bar{\rho}_c(p^*)\right]\\
			&+ (1-\gamma_n)\left[G_1\left(\alpha_{\phi_H},\alpha_{\phi_L},p,\mu\right) -(\mu-\mu_n)(\phi_H-\phi_L)(p^*\bar{\rho}_c(p^*)-p\bar{\rho}_c(p))\right]\\
			& = \gamma_n \left[p^*\bar{\rho}_0(p^*)\phi_L \alpha_{\phi_L} + (\phi_H-\phi_L) p^*\bar{\rho}_c(p^*)\right]\\
			& - (1-\gamma_n)(\mu-\mu_n)(\phi_H-\phi_L)(p^*\bar{\rho}_c(p^*)-p\bar{\rho}_c(p)),
		\end{split}
	\end{equation*}
	where the last step follows since we assumed that $G_1\left(\alpha_{\phi_H},\alpha_{\phi_L},p,\mu\right)=0$. Note that if $\mu_n\geq \mu$, the last term above is non-negative and we have that $G_1\left(\left(\alpha_{\phi_H},\alpha_{\phi_L},p,\mu\right)_n\right)\geq 0$. On the contrary, if $\mu_n< \mu$, we have that for all large $n$, $0<\gamma_n<1$, and by construction
	\begin{equation*}
		\begin{split}
			G_1\left(\left(\alpha_{\phi_H},\alpha_{\phi_L},p,\mu\right)_n\right)& \geq \gamma_n \left[p^*\bar{\rho}_0(p^*)\phi_L \alpha_{\phi_L} + (\phi_H-\phi_L) p^*\bar{\rho}_c(p^*)\right]\\
			& - (\mu-\mu_n)(\phi_H-\phi_L)(p^*\bar{\rho}_c(p^*)-p\bar{\rho}_c(p)) \geq 0.
		\end{split}
	\end{equation*}
	It follows that $(\alpha_{\phi_L},\alpha_{\phi_H},p)_n \in F(\mu_n)$ for all large enough $n$. Finally, by construction we have that $(\alpha_{\phi_L},\alpha_{\phi_H},p)_n \rightarrow (\alpha_{\phi_L},\alpha_{\phi_H},p)$. This proves that $F(\mu)$ is lower hemicontinuous and therefore $W^C(\mu)$ is a continuous function of $\mu$.
\end{proof}

%\subsubsection{Numerical Analysis}\label{sec: NoSignalingNumerics}

%\blue{PENDING}

{\subsection{Proof of Theorem \ref{thm:epsilonEq}} \label{app: proof_BayesianNashEqm}

Let $\bA^C \in \mathcal{A}^S$ be a promotion policy defined by solving problem \eqref{eq: ConfoundingProblemSimple} for all beliefs $\mu\in[0,1]$ as we describe in \S\ref{sec: OptSimplePolicies} and Appendix~\ref{app:DesigningSimplePolicies}. By the construction of Appendix~\ref{app:DesigningSimplePolicies}, there exists a  signaling mechanism $\sigma^C$ that, together with the promotion policy $\bA^C$, generate average consumer surplus of $co(W^C)(\mu_0)$ given any~$\mu_0\in[0,1]$ and $T\geq 1$, given that the seller makes myopic pricing decisions. Moreover, as described in Appendix~\ref{app:DesigningSimplePolicies} and formally shown in Proposition \ref{lemma:RevenueEquivalence}, we can take $\sigma^C$ as a simple signaling mechanism (i.e., that has a signal space of cardinality 2, e.g. $S = \{s',s''\}$). We will adjust the promotion policy at off-path beliefs below to establish an equilibrium.

\textbf{Construction of Equilibrium.} We construct a strategy profile $(\tilde\bA,\tilde\sigma,\tilde \bP)$ that satisfies conditions~\eqref{def:equilibrium:platform} and \eqref{def:equilibrium:seller}, which in particular implies condition~\eqref{def:equilibrium:seller:firstperiod} and therefore constitutes a Bayesian Nash Equilibrium. Fix~$\mu_0$ and set $\tilde\sigma = \sigma^C$. Since $\tilde\sigma$ has two outcomes, the seller's belief at the start of period 1, $\mu_1$ can take two possible values; denote these as $\mu'$, $\mu''$ and assume they correspond to signals $s',s''$, respectively. Define, for all $t$ and $\bar{h} \in \bar{H}_t$, the promotion policy $\tilde \bA =  \{\tilde \alpha_t\} _{t=1}^T$ as follows:

\begin{equation}\label{eq: PromoPolicyThm2}
	\tilde\alpha_t(p,\phi,\bar{h}) = \begin{cases}
		\alpha^C(p,\phi,\mu'), &\text{ if } s = s' \text{ and } \mu_t\left(\langle \bA^C,\sigma^C,\bar{h} \rangle \right) = \mu' \\
		\alpha^C(p,\phi,\mu''), &\text{ if } s = s'' \text{ and } \mu_t\left(\langle \bA^C,\sigma^C,\bar{h} \rangle \right) = \mu'' \\
		0, &\text{otherwise}.
	\end{cases}
\end{equation}

Letting $\bP^*$ be the Bayesian myopic pricing policy, define $\tilde\bP = \{\tilde\pi_t\}_{t=1}^T$ where, for each $t=1,...,T$:
\begin{equation}\label{eq: SellerPolicyThm2}
\tilde \pi_t\left(\langle \bA,\sigma, \bar{h}\rangle\right) = \begin{cases}
    \pi_t^*\left(\langle \bA,\sigma, \bar{h}\rangle\right), &\text{ if } \bA = \tilde\bA, \text{ and } \sigma = \tilde\sigma\\
    p^* & \text{ otherwise}.
\end{cases}
\end{equation}

To establish that $(\tilde\bA,\tilde\sigma,\tilde \bP)$ is a Bayesian Nash Equilibrium, we first show that $\tilde \bP$ is a best response for the seller given the platform's strategy $(\tilde\bA,\tilde\sigma)$ at any point in time, i.e., that condition~\eqref{def:equilibrium:seller} is satisfied. To do so, fix a period $t\in \{1,\dots, T\}$, a history $\bar{h}\in \bar{H}_t$ and suppose that the seller's belief is $\mu_t$. By construction, the seller's immediate payoff at time $t$ is maximized by choosing $p_t = \tilde \pi_t(\langle \tilde\bA,\tilde\sigma, \bar{h}\rangle)$, and the resulting payoff is at least $p^*\bar{\rho}_c(p^*)(1-\phi_L - (\phi_H-\phi_L)\mu_t)$. By contrast, if the seller deviates to some other price $p_t'$, by construction of $\tilde\bA$ we have that his time $t$ expected payoff is bounded above by $p^*\bar{\rho}_c(p^*)(1-\phi_L - (\phi_H-\phi_L)\mu_t)$. Thus, deviating to $p_t'$ does not result in a time $t$ payoff gain for the seller.
	
In subsequent periods, as choosing $p_t'$ at time $t$ might result in a shift in the seller's belief, we have by construction of $\tilde\bA$, that at any period $t'>t$ where the seller's belief is $\mu_{t'}\neq \mu_t$ is bounded above by~$p^*\bar{\rho}_c(p^*)(1-\phi_L - (\phi_H-\phi_L)\mu_{t'})$, as the seller is promoted with probability zero in such periods. As this expression is linear in the seller's belief which is constructed by Bayesian updating, it follows that the expected value at any time $t'$ of such deviation is bounded above by $p^*\bar{\rho}_c(p^*)(1-\phi_L - (\phi_H-\phi_L)\mu_{t})$, which is again, by definition, a lower bound for the seller's expected time $t'$ profit if he continues to price myopically. Therefore, we have that there is no profitable deviation at time $t$, from where it follows that $$V_{t,T}^{\boldsymbol{\alpha},\sigma,\boldsymbol{\pi}} \left(\langle\boldsymbol{\alpha},\sigma,\bar{h} \rangle\right) \geq V_{t,T}^{\boldsymbol{\alpha},\sigma,\boldsymbol{\pi}'} \left(\langle\boldsymbol{\alpha},\sigma,\bar{h} \rangle\right).$$
%\ilanst{By construction, if the seller prices myopically at each history $\langle \tilde\alpha,\tilde\sigma,\bar{h} \rangle$ period at belief $\mu$, his expected payoff is at least $p^*\bar{\rho}_c(p^*)(1-\phi_L - (\phi_H-\phi_L)\mu)$. However, also by construction, if he deviates to any other price, then the present payoff is weakly smaller than the one associated to the myopically optimal price (by definition). In future periods, the seller's expected value in periods where the seller has a different belief is $p^*\bar{\rho}_c(p^*)(1-\phi_L - (\phi_H-\phi_L)\mu)$ because the expected value is linear in $\mu$ and beliefs are Bayesian. Thus, the seller does not generate value in expectation from learning (but may lose it). Thus, pricing myopically in the each period generates the highest possible expected payoff for the seller both in the present period and in future periods.}
To establish that the platform's strategy is a best response to the seller's pricing policy,  suppose that the seller sets prices according to $\tilde\bP$, and consider the consumer surplus generated by a platform deviation to any~$\left(\bA',\sigma'\right)\neq \left(\tilde{\bA},\tilde{\sigma}\right)$. In this case, the seller sets $p^*$ every period.  Recall that $\sigma^T$ denotes the truthful signaling mechanism and note that, by definition of $\tilde \bP$, both $\left(\bA',\sigma'\right)$ and $\left(\tilde{\bA},\sigma^T\right)$ result in the seller setting~$p^*$ in every period, and thus are payoff-equivalent for the platform.\footnote{Unless we have that $\sigma^T= \tilde{\sigma}$, in which case we directly have that the welfare induced by $\left(\tilde{\bA},\sigma^T\right)$ is at least as large as when the seller sets $p^*$ in every period (by construction of $\bA^C$), and the result follows.} By construction of $\bA^C$ (see \eqref{eq: ConfoundingProblemSimple} in \S\ref{sec: OptSimplePolicies}), we have that the consumer surplus in every period is at least as large as the corresponding one when the seller sets $p^*$ if the seller prices myopically. Finally, since the truthful mechanism is a simple mechanism, it follows from the optimality of $\tilde{\sigma} = \sigma^C$ (as in the proof of Theorem~\ref{thm:LongRunAverageOptimalConsumerSurplus}) that
\begin{equation*}
	W_{T}^{\bA',\sigma',\tilde \bP}(\mu) \leq W_{T}^{\tilde{\bA},\sigma^T,\tilde \bP}(\mu) \leq W_{T}^{\tilde{\bA},\tilde{\sigma},\tilde \bP}(\mu)
\end{equation*}
%\ilanst{We have $W_{T}^{\bA',\sigma',\tilde \bP}(\mu) \leq W_{T}^{\bA^*,\sigma^T,\tilde \bP}(\mu)$ because regardless of the true $\phi$, the platform can incentivize price $p^*$ and design the optimal promotion probabilities at that price with truthful revelation. Truthful revelation, is in turn dominated by the optimal confounding payoff because truthful revelation is a confounding policy.}
Therefore, any deviation by the platform decreases the expected consumer surplus and  condition~\eqref{def:equilibrium:platform} holds. Thus, $\left(\tilde{\bA},\tilde{\sigma},\tilde \bP\right)$ is a Bayesian Nash equilibrium. Moreover, the seller prices myopically on the equilibrium path by definition of the policy $\tilde \bP$ and, on the other hand, the platform's promotion decisions are made according to the confounding policy $\bA^C$. Finally, by the proof of Theorem~\ref{thm:LongRunAverageOptimalConsumerSurplus} and by construction of $\tilde{\bA}$ we have, as desired, that
	\begin{equation*}
	\frac{1}{T}W_{T}^{\tilde{\bA},\tilde{\sigma},\tilde \bP}(\mu) = 	co(W^C)(\mu_0).
\end{equation*}
}
{
\subsection{Proof of Theorem \ref{thm:OptimalRobustEquilibria}} \label{app: proof_RobustEqm}

%\emph{    Fix $T\geq 1$ and $\mu_0 \in [0,1]$. There exists a Horizon-Maximin Equilibrium $(\bA,\sigma,\bP)$ such that:
%$$\frac{1}{T}W^{\boldsymbol{\alpha},\sigma,\bP}_T(\mu_0) = co(W^C)(\mu_0),$$
%and for all $t=1,..,T$ and $\bar{h} \in \bar{H}_t$, the seller's pricing policy at $\langle \bA,\sigma,\bar{h}\rangle$ satisfies \eqref{eq:myopic} (i.e. is myopically optimal). Moreover,
%\begin{equation*}
%    \lim_{T\rightarrow \infty}\sup_{(\bA,\sigma,\bP) \in \mathcal{E}(T)}  %\ilanst{\frac{1}{T}}
%    RW^{\bA,\sigma,\bP}_T(\mu_0) = co(W^C)(\mu_0).
%\end{equation*}
%}
%\begin{proof}

The proof consists of two parts. We first establish that the equilibrium defined in the proof of Theorem~\ref{thm:epsilonEq} is a Horizon-Maximin Equilibrium. In particular, we show that this equilibrium yields an average consumer surplus of $co(W^C)(\mu_0)$. Then, we show that the proposed equilibrium is long-run optimal for the platform with respect to its robust payoff.

%We first establish the existence of a Horizon-Maximin Equilibrium that yields an average consumer surplus of $co(W^C)(\mu_0)$ and that induces the seller to make myopic pricing decisions on-path. Then, we show that the proposed equilibrium is long-run optimal for the platform with respect to its robust payoff.%, in the sense that \eqref{eq: long_run_platform_optimal_eqm} is satisfied.

\textbf{Part 1 (Verification of Conditions for Horizon Maximin Equilibrium).} Fix $T\geq 1$, $t\in \{1,\dots,T\}$, and define $(\tilde\bA,\tilde\sigma,\tilde \bP)$ as in the proof of Theorem~\ref{thm:epsilonEq} (see~\eqref{eq: PromoPolicyThm2} and \eqref{eq: SellerPolicyThm2} in \S\ref{app: proof_BayesianNashEqm}). We claim that $(\tilde\bA,\tilde\sigma,\tilde \bP)$ is a Horizon-Maximin Equilibrium. To prove this, we show that conditions  \eqref{def:equilibrium:platform:robust} and \eqref{def:equilibrium:seller:robust} hold.

Given a history $\bar{h}\in \bar{H}_t$, for the seller we have:
\begin{equation*}
	\begin{split}
	\max_{\bP\in \Pi} RV_{t,T}^{\tilde\bA,\tilde\sigma,\boldsymbol{\pi}} \left(\langle \tilde\bA,\tilde\sigma,\bar{h} \rangle\right) \stackrel{(a)}{\leq}\max_{\bP \in \Pi }V_{t,t}^{\tilde\bA,\tilde\sigma,\bP}\left(\langle \tilde\bA,\tilde\sigma,\bar{h} \rangle\right)\stackrel{(b)}{\leq}V_{t,t}^{\tilde\bA,\tilde\sigma,\tilde \bP}\left(\langle \tilde\bA,\tilde\sigma,\bar{h} \rangle\right) \stackrel{(c)}{\leq} RV_{t,T}^{\tilde\bA,\tilde\sigma,\tilde \bP} \left(\langle \tilde\bA,\tilde\sigma,\bar{h} \rangle\right),
	\end{split}
\end{equation*}

%	\begin{equation*}
%		\begin{split}
%			\max_{\bP\in \Pi} RV_{t,T}^{\tilde\bA,\tilde\sigma,\bP} (\mu)\stackrel{(a)}{\leq}\max_{\bP}V_{t,t}^{\tilde\bA,\tilde\sigma,\bP}(\mu)\stackrel{(b)}{\leq}V_{t,t}^{\tilde\bA,\tilde\sigma,\tilde \bP}(\mu)\stackrel{(c)}{\leq} RV_{t,T}^{\tilde\bA,\tilde\sigma,\tilde \bP} (\mu),
%		\end{split}
%	\end{equation*}
    where (a) follows by definition of $RV_{t,T}$, (b) follows since the price set by $\tilde \bP$ at time $t$ is myopically optimal given the platform's strategy $\left(\tilde{\bA},\tilde{\sigma}\right)$, and (c) follows because the seller's per-period payoff under $(\tilde\bA,\tilde\sigma,\tilde \bP)$ is, by construction, the same in every period. %\ilanAs a result, we have:
% $$RV_{t,T}^{\tilde\bA,\tilde\sigma,\tilde\bP} (\left(\langle \tilde\bA,\tilde\sigma,\bar{h} \rangle\right) =V_{t,t}^{\tilde\bA,\tilde\sigma,\tilde \bP}\left(\langle \tilde\bA,\tilde\sigma,\bar{h} \rangle\right) = \frac{1}{T}V_{1,T}^{\tilde\bA,\tilde\sigma,\tilde\bP}\left(\langle \tilde\bA,\tilde\sigma,\bar{h} \rangle\right).$$
 Therefore, by employing $\tilde\bP$, the seller is best-responding to the platform policy $(\tilde\bA,\tilde\sigma)$ with respect to his robust payoffs (i.e., condition \eqref{def:equilibrium:seller:robust} holds). Moreover, by definition of $\tilde\bP$ (see \eqref{eq: SellerPolicyThm2} in \S\ref{app: proof_BayesianNashEqm}),  the seller prices  myopically on every history of the form $\left(\langle \tilde\bA,\tilde\sigma,\bar{h} \rangle\right)$ under $(\tilde\bA,\tilde\sigma,\tilde\bP)$.

    Now we establish that the platform is also best-responding.
    % $$RW_T^{\bA,\sigma,\bP} (\mu ) \geq RW_{T}^{\bA',\sigma',\bP} (\mu ), \quad \forall \bA' \in  \mathcal{A},\sigma'\in\Sigma.$$
    Consider the consumer surplus generated by the platform deviating to some alternative strategy $\left(\bA',\sigma'\right)$, so that by definition of $\tilde \bP$, the seller sets $p^*$ in every period. Denote the truthful signaling mechanism by $\sigma^T$  and the myopic promotion policy by $\hat{\bA}$ (see~\S\ref{subsec:InsufficiencyOfTruthfulDisclosure}). We have that:
    \begin{equation*}
    	RW_{T}^{\bA',\sigma',\tilde \bP}(\mu) \stackrel{(a)}{\leq} \frac{1}{T}W_{T}^{\bA',\sigma',\tilde \bP}(\mu) \stackrel{(b)}{\leq} \frac{1}{T} W_{T}^{\hat{\bA},\sigma^T,\tilde \bP}(\mu) \stackrel{(c)}{\leq}  \frac{1}{T} W_{T}^{\tilde\bA,\sigma^T,\tilde \bP}(\mu) \stackrel{(d)}{\leq}  \frac{1}{T} W_{T}^{\tilde\bA,\tilde\sigma,\tilde \bP}(\mu) \stackrel{(e)}{=} RW_{T}^{\tilde\bA,\tilde\sigma,\tilde \bP}(\mu),
    \end{equation*}

    where (a) holds by definition of $RW_T$, and (b) follows %(as in the proof of Theorem~\ref{thm:epsilonEq})  
    because under the myopic policy $\hat{\bA}$, even with truthful revelation, the per-period consumer welfare is at least as large as the corresponding one when the seller sets price $p^*$ (regardless of the true value of $\phi$). Next, (c) follows because, given a truthful signal, an optimal confounding policy such as $\tilde \bA$ results in at least the same surplus as the myopic policy $\hat{\bA}$ (as, if  $\hat{\bA}\neq \tilde \bA$, the seller sets $p^*$ in every period), and (d) follows since $\tilde{\sigma}$ is an optimal signaling mechanism associated with the promotion policy $\tilde{\bA}$. Finally, (e) follows because the expected consumer surplus is the same in every period under a confounding promotion policy. Therefore, condition \eqref{def:equilibrium:platform:robust} holds, i.e.,  $RW_{T}^{\bA',\sigma',\tilde \bP}(\mu)\leq RW_{T}^{\tilde\bA,\tilde\sigma,\tilde \bP}(\mu)$.

    Thus, $(\tilde\bA,\tilde\sigma,\tilde \bP)$ is a Horizon-Maximin Equilibrium. Finally, note that by construction of $(\tilde\bA,\tilde\sigma,\tilde \bP)$, by Theorem~\ref{thm:LongRunAverageOptimalConsumerSurplus} and by (e) above, we have that
    \begin{equation}\label{eq: aux_long_run_optimal_robust_eqm}
    	RW_{T}^{\tilde\bA,\tilde\sigma,\tilde \bP}(\mu) = \frac{1}{T} W_{T}^{\tilde\bA,\tilde\sigma,\tilde \bP}(\mu) = co(W^C)(\mu).
    \end{equation}

	\textbf{Part 2 (Long-run Optimality).} We now establish that $(\tilde\bA,\tilde\sigma,\tilde \bP)$ is a long-run optimal equilibrium for the platform with respect to its robust payoffs. By \eqref{eq: aux_long_run_optimal_robust_eqm}, we have that $(\tilde\bA,\tilde\sigma,\tilde \bP)$ results in a robust payoff of~$co(W^C)(\mu_0)$ for the platform, so to establish that that condition \eqref{eq: long_run_platform_optimal_eqm} holds, it remains to show that 
	given~$\mu_0\in[0,1]$, 
	\begin{equation*}
		\lim_{T\rightarrow \infty}\sup_{(\bA,\sigma,\bP) \in \mathcal{E}(T)}  RW^{\bA,\sigma,\bP}_T(\mu_0) = co(W^C)(\mu_0),
	\end{equation*}
	where $\mathcal{E}(T)$ denotes the set of Horizon-Maximin equilibria with maximal horizon length $T$. To show this, we rely on the following auxiliary result.
	
	\begin{lemma}\label{lem: aux_robust_eqm}
		Fix $T\geq 1$, $\mu\in[0,1]$ and $(\bA,\sigma,\bP) \in \mathcal{E}(T)$. Then, there exists $\bA'\in \mathcal{A}$ such that
		%there exists $(\bA',\sigma',\bP') \in \mathcal{E}(T)$ where for all $t=1,...,T$ and $\bar h \in \bar{H}_t$, the seller's pricing policy is myopic (i.e. satisfies \eqref{eq:myopic})  at $\langle \bA',\sigma', \bar h\rangle$ and:
		$$RW_T^{\bA,\sigma,\bP} (\mu) \leq RW_{T}^{\bA',\sigma,\bP^*} (\mu)$$
	\end{lemma}

We defer the proof of Lemma~\ref{lem: aux_robust_eqm} to the end of this section. Note then that for fixed $T$, by Lemma~\ref{lem: aux_robust_eqm} and the definition of $RW_T$ we have that
\begin{equation*}
	\sup_{(\bA,\sigma,\bP) \in \mathcal{E}(T)}  RW^{\bA,\sigma,\bP}_T(\mu_0) \leq \sup_{(\bA',\sigma') \in \mathcal{A}\times \Sigma}  RW^{\bA',\sigma',\bP^*}_T(\mu_0) \leq \sup_{(\bA',\sigma') \in \mathcal{A}\times \Sigma} \frac{1}{T} W_{T}^{\bA',\sigma',\bP^*}(\mu_0).
\end{equation*}
By taking limits on this inequality and applying Theorem~\ref{thm:LongRunAverageOptimalConsumerSurplus}, we conclude that
\begin{equation*}
	\lim_{T\rightarrow \infty}\sup_{(\bA,\sigma,\bP) \in \mathcal{E}(T)}  RW^{\bA,\sigma,\bP}_T(\mu_0) \leq \lim_{T\rightarrow \infty} \sup_{(\bA',\sigma') \in \mathcal{A}\times \Sigma} \frac{1}{T} W_{T}^{\bA',\sigma',\bP^*}(\mu_0) = co(W^C)(\mu_0).
\end{equation*}

This concludes the proof of Theorem~\ref{thm:OptimalRobustEquilibria}.
To close this section, we provide the proof of Lemma~\ref{lem: aux_robust_eqm}, which was used in the argument above.

	\begin{proof}[Proof of Lemma~\ref{lem: aux_robust_eqm}]
		We use a series of five claims to prove this result. The first two claims show that under an Horizon-Maximin equilibria, the seller's per-period revenue at any history is at least the corresponding one to setting price $p^*$ and not being promoted by the platform.
		
		\textbf{Claim 1.} Let $(\bA,\sigma,\bP) \in \mathcal{E}(T)$, $t\in \{1,\dots,T\}$ and $\bar{h}\in\bar{H}_t$. Denote $\mu_t$ as the seller's belief at history $\langle \bA,\sigma,\bar{h} \rangle$. Then, $RV_{t,T}^{\bA,\sigma,\bP} \left(\langle \bA,\sigma,\bar{h} \rangle\right) \geq (1-\bar{\phi}\left(\mu_t \right))p^* \bar{\rho}_c(p^*).$
		
		\begin{proof}
			Let $\bar{\bP}$ be the pricing policy that sets $p^*$ on every history. Then, since $(\bA,\sigma,\bP) \in \mathcal{E}(T)$, it follows from \eqref{def:equilibrium:seller:robust} that
			\begin{equation*}
				RV_{t,T}^{\bA,\sigma,\bP} \left(\langle \bA,\sigma,\bar{h} \rangle\right) \geq RV_{t,T}^{\bA,\sigma,\bar{\bP}} \left(\langle \bA,\sigma,\bar{h} \rangle\right) = \left(\frac{1}{t'-t+1}\right)V_{t, t'}^{\bA,\sigma,\bar{\bP}} \left( \langle \bA,\sigma,\bar{h} \rangle \right),
			\end{equation*}
		for some $t'\in \{t,...,T\}$. Since Bayesian beliefs form a martingale, we have that
		\begin{equation}\label{eq: aux_claim1_theorem3}
			\left(\frac{1}{t'-t+1}\right)V_{t, t'}^{\bA,\sigma,\bar{\bP}} \left( \langle \bA,\sigma,\bar{h} \rangle \right) \geq \left(\frac{1}{t'-t+1}\right) \sum_{\tau =t}^{t'}\E \left[1-\bar{\phi}(\mu_\tau) | \mathcal{H}_t \right]p^* \bar{\rho}_c(p^*) = (1-\bar{\phi}\left(\mu_t \right))p^* \bar{\rho}_c(p^*),
		\end{equation}
	which implies that $RV_{t,T}^{\bA,\sigma,\bP} \left(\langle \bA,\sigma,\bar{h} \rangle\right) \geq (1-\bar{\phi}\left(\mu_t \right))p^* \bar{\rho}_c(p^*).$
		\end{proof}
	
	\textbf{Claim 2.} As in Claim 1, let $(\bA,\sigma,\bP) \in \mathcal{E}(T)$, $t\in \{1,\dots,T\}$, $\bar{h}\in\bar{H}_t$, and denote $\mu_t = \mu_t\left( \langle \bA,\sigma,\bar{h} \rangle \right)$. Then, the seller's expected revenue at time $t$ under $(\bA,\sigma,\bP)$ is at least $(1-\bar{\phi}\left(\mu_t \right))p^* \bar{\rho}_c(p^*)$.
	
	\begin{proof}
		By Claim 1 and the definition of $RV_{t,T}$, it follows that
		\begin{equation*}
			(1-\bar{\phi}\left(\mu_t \right))p^* \bar{\rho}_c(p^*)\leq RV_{t,T}^{\bA,\sigma,\bP} \left(\langle \bA,\sigma,\bar{h} \rangle\right) = \min_{t\leq \bar{t}\leq T}\left(\frac{1}{\bar{t}-t+1}\right)V_{t,\bar t}^{\bA,\sigma,\bP} \left( \langle \bA,\sigma,\bar{h} \rangle \right) \leq V_{t,t}^{\bA,\sigma,\bP} \left( \langle \bA,\sigma,\bar{h} \rangle \right),
		\end{equation*}
	as desired.
	\end{proof}

The next step is to construct an alternative pricing policy $\bP'$ that is deterministic and does not reduce the platform's robust payoff. To do so, let us denote the expected consumer surplus from periods $t$ and $t'\geq t$, given the strategies $(\bA,\sigma,\bP)$ and a history $h\in H_t$ by:
$$W_{t,t'}^{\bA,\sigma,\bP}(h) = \E\left(\sum_{\tau=t}^{t'} W(p_{\tau},a_{\tau},\psi_{\tau}) \middle|\boldsymbol{\alpha},\sigma,\bP,h_t = h\right),$$
where the expectation is taken with respect to any randomness in the pricing policy, promotion policy, customer types, purchase decisions, and the true value of $\phi$. Moreover, define the minimal time-average continuation consumer surplus at time $t$ given history $h \in H_t$ as:
\begin{equation*}
	RCS_{t,T}^{\bA,\sigma,\bP} (h) := \min_{t\leq \bar{t}\leq T}\left(\frac{1}{\bar{t}-t+1}\right)W_{t,\bar t}^{\bA,\sigma,\bP} (h).
\end{equation*}

Now, for the pricing policy $\bP$ and a history $h\in H_t$, let $S(h)$ denote the support of the price distribution induced by $\pi_t(h)$. To construct the desired policy $\bP'$, we can proceed recursively in a similar fashion to the proof of Lemma~\ref{lemma:PromotionAsFunctionOfBelief}.

\textbf{Period $\mathbf{T}$:} Given a history $h\in H_T$, define $p_T'(h)$ as the price that maximizes the minimal time-average continuation consumer surplus at $h$:
\begin{equation*}
	p_T'(h) = \arg\max_{p\in S(h)} \E \left[RCS_{T,T}^{\bA,\sigma,\bP} \left( h \right) \middle| p_T = p \right],
\end{equation*}
and define $\bP^{(T)}$ as the pricing policy that is equal to $\bP$ at periods $t=1,\dots,T-1$, but prices at $p_T'(h)$ at any time-$T$ history, i.e., that for $h\in H_T$,
\begin{equation*}
	\pi^{(T)}_T(h) = p_T'(h),\text{ with probability 1.}
\end{equation*}
Then, by definition of $\bP^{(T)}$, for any  $t=1,\dots,T$, and any history $h\in H_t$, we have that
\begin{equation*}
	RCS_{t,T}^{\bA,\sigma,\bP^{(T)}} (h)\geq RCS_{t,T}^{\bA,\sigma,\bP} (h).
\end{equation*}

\textbf{Recursive definition:} We proceed as follows. At period $1\leq t< T$, define the policy $\bP^{(t)}$ as being equal to~$\bP$ in periods before $t$, and equal to $\bP^{(t+1)}$ for periods after $t$. For period $t$ and a history $h\in H_t$, define~$p_t'(h)$ as the price that maximizes the minimal time-average continuation consumer surplus at $h$, among the prices in the support of $\pi_t(h)$, assuming that subsequent prices are set according to $\bP^{(t+1)}$, that is:\footnote{If there are multiple maximizers, we can choose one arbitrarily.}
\begin{equation*}%\label{eq: max_robust_welfare_price}
	p_t'(h) = \arg\max_{p\in S(h)} \E \left[RCS_{t,T}^{\bA,\sigma,\bP^{(t+1)}} \left( h \right) \middle| p_t = p \right],
\end{equation*}
where the expectation is taken with respect to the randomness of the platform's promotion choice, the consumer's purchase decision and her type at time $t$. Finally, given a time $t$ history $h\in H_t$, let us define the pricing policy $\bP^{(t)}$ by:
\begin{equation*}
	\pi^{(t)}_t(h) = p_t'(h),\text{ with probability 1.}
\end{equation*}
By construction of $\bP^{(t)}$ and applying induction on the subsequent time periods, we have that for any time period $t'$ and any history $h\in H_{t'}$, we have that 
\begin{equation} \label{eq: robust_policy_induction}
	RCS_{t',T}^{\bA,\sigma,\bP^{(t)}} (h)\geq RCS_{t',T}^{\bA,\sigma,\bP^{(t+1)}} (h)\geq RCS_{t',T}^{\bA,\sigma,\bP} (h).
\end{equation}
Finally, we define our desired policy as
\begin{equation}\label{eq: alternative_pricing_policy_robust}
	\bP' = \bP^{(1)},
\end{equation}
where $\bP^{(1)}$ is constructed recursively as described above.

\textbf{Claim 3.} Let $\bP'\in \Pi$ be the pricing policy defined by \eqref{eq: alternative_pricing_policy_robust}. Then, $RW_T^{\bA,\sigma,\bP} (\mu) \leq RW_{T}^{\bA,\sigma,\bP'} (\mu)$.

\begin{proof}
	By the construction described above and in particular \eqref{eq: robust_policy_induction}, we have that $RCS_{t,T}^{\bA,\sigma,\bP} (h)\leq RCS_{t,T}^{\bA,\sigma,\bP'} (h)$ for all $t=1,\dots,T$ and $h\in H_t$.  In particular, for $t=1$ this implies that $RW_T^{\bA,\sigma,\bP} (\mu) \leq RW_{T}^{\bA,\sigma,\bP'} (\mu)$.
\end{proof}

Now, we construct an alternative promotion policy that result in the same robust payoffs as $(\bA,\sigma,\bP')$. Specifically, for $\bar{h}\in \bar{H}_t$, let us define $\bA'$ by:
\begin{equation}\label{eq: aux_def_promo_policy_robust_eqm}
	\alpha_t'(p,\phi,\bar{h}) = \begin{cases}
		\alpha_t(p,\phi,\bar{h}) & \text{ if $p=p_t'\left(\langle \bA,\sigma,\bar{h} \rangle  \right)$,}\\
		0 & \text{ otherwise.}
	\end{cases}
\end{equation}

\textbf{Claim 4.} We have that $RW_T^{\bA,\sigma,\bP'} (\mu) = RW_{T}^{\bA',\sigma,\bP'} (\mu)$.
\begin{proof}
	Since $\bP'$ is deterministic by construction, $(\bA,\sigma,\bP')$ and $(\bA',\sigma,\bP')$ induce the same outcomes and the same payoffs.
\end{proof}

Finally, we show that the pricing policy $\bP'$ results in the same prices as the myopic Bayesian pricing policy~$\bP^*$ (see \S\ref{sec:LongRunAverageOptimalConsumerSurplus}) given the promotion policy $\bA'$. In particular, the platform's robust payoff is the same under these two policies.

\textbf{Claim 5.} Given the policy $\bA'$, the pricing policy $\bP'$ results in myopically optimal prices for the seller (i.e., it satisfies \eqref{eq:myopic}). In particular, we have that $RW_T^{\bA,\sigma,\bP'} (\mu) = RW_{T}^{\bA',\sigma,\bP^*} (\mu)$.

\begin{proof}
	Fix a history $\bar{h}\in \bar{H}_t$. Given that the promotion policy defined by \eqref{eq: aux_def_promo_policy_robust_eqm} is single-price, the price that maximizes the seller's period $t$ revenue is either $p^*$ or $p_t'\left(\langle \bA,\sigma,\bar{h} \rangle  \right)$. By Claim 2, $p_t'\left(\langle \bA,\sigma,\bar{h} \rangle  \right)$ results in (weakly) larger period-$t$ revenue than $p^*$. If this holds strictly, then $p_t'\left(\langle \bA,\sigma,\bar{h} \rangle  \right)$ is the unique myopically optimal price at time $t$. Otherwise, if both $p_t'\left(\langle \bA,\sigma,\bar{h} \rangle  \right)$ and $p^*$ result in the same expected revenue at time~$t$, we can assume without loss of generality  that $p_t'\left(\langle \bA,\sigma,\bar{h} \rangle  \right)$ results in higher present consumer surplus than~$p^*$ since, if that was not the case, the platform could redefine its promotion policy to incentivize the seller to set $p^*$ in the present period without altering future actions, and generate at least the same consumer surplus as under $\bA'$. Thus, $\pi_t'\left( \langle \bA,\sigma,\bar{h} \rangle \right)$ satisfies \eqref{eq:myopic}, as desired.
\end{proof}

To conclude the proof of the Lemma, we have from Claims 3, 4, and 5 that
\begin{equation*}
	RW_T^{\bA,\sigma,\bP} (\mu) \leq RW_T^{\bA,\sigma,\bP'} (\mu) = RW_T^{\bA',\sigma,\bP'} (\mu) = RW_{T}^{\bA',\sigma,\bP^*} (\mu),
\end{equation*}
whre $\bP'$ and $\bA'$ are defined on \eqref{eq: alternative_pricing_policy_robust} and \eqref{eq: aux_def_promo_policy_robust_eqm}, respectively.
	\end{proof}

%\end{proof}
}
% Moved proposition 1 to auxiliary results of theorem 1 to have better flow
%\input{proofs/proof_RevenueEquivalence_SimplePols.tex}
%\input{proofs/proof_AlternateHistories.tex}
% \input{proofs/proof_AlternateHistories.tex}
% \input{proofs/proof_Theorem3_LongRunAvgOptPol.tex}

% \input{proofs/proof_Theorem4_Concavity.tex}

% \input{proofs/proof_bindingConstraint.tex}
% \input{proofs/proof_propConcaveExamples.tex}

% \input{proofs/proof_LongRunAverageCS.tex}
% \input{proofs/fullyRevealingSignals.tex}
% \newpage
% \input{proofs/generalSignal_Strategic.tex}
% \newpage
% \input{proofs/generalSignal_Myopic_Limit.tex}
% \newpage
% \input{proofs/generalSignal_MyopicCase1.tex}
% % \input{proofsUpdated.tex}
% \input{proofs.tex}
% \input{theorem2Limit.tex}

\section{Two Competing Sellers}\label{app: TwoSellers}

In our baseline model, we focused on a single seller's pricing decisions and treated competitors' prices as exogenous. In what follows, we study a natural extension of our single-seller model, where we explicitly model competition among sellers. At the beginning of the horizon the platform sets $(i)$ a promotion policy that, given sellers' prices in each period, determines which seller will be promoted; and $(ii)$ a signaling mechanism that may provide sellers with information about the true value of $\phi$. Then, in every period, sellers set their prices taking into account the platform's promotion policy, observe their sales realizations and update their beliefs about the proportion of patient buyers $\phi$.

%Aligned with our original setting we consider sellers that price myopically given their beliefs. However,
As sellers now set prices simultaneously, their corresponding pricing decisions must take into account their competitor's actions.
%they also take into account the price set by their respective competitor into their pricing decision.
Concretely, we assume that in every period, sellers' posted prices form a Nash equilibrium in the pricing game defined by thir corresponding revenue at that period, considering the platform's promotion policy. Thus, while sellers make myopic pricing decisions, %as they only consider revenue at the present period,
they best-respond to their competitor's price at each period. We introduce notation for the two-seller model and formally define the notion of myopic Bayesian pricing in this setting in~\S\ref{sec: model_two_sellers}.

We analyze three natural families of platform policies. The first one is \textit{truthful revelation,} under which the platform discloses the value of $\phi$ to both sellers at the beginning of the horizon and then optimally designs its promotion policy, taking into account the sellers' myopic pricing decisions. The second class involves \textit{confounding one seller only}, consisting of direct extensions of the confounding promotion policies defined in~\S\ref{sec:ConfoundingPromotionPolicies}; these policies prevent one of the sellers from updating his beliefs throughout the horizon while disclosing the value of $\phi$ to the other seller.
%this is a direct extension of the confounding promotion policies defined in \S\ref{sec:ConfoundingPromotionPolicies} to the present setting,
The third family consists of policies that \textit{confound both sellers;} these policies are designed to eliminate the informational content provided by sales observations to keep the beliefs of both sellers constant. We introduce these policies and discuss their properties in \S\ref{sec: policies_definitions_two_sellers}. In particular, we illustrate that these policies enable us to reduce the dynamics of the game to a static setting, where the equilibrium prices set by sellers and the platform's promotion policy are the same at every period.

Next, we compare these three classes of policies in terms of the expected consumer welfare they induce in~\S\ref{sec: example_analysis_two_sellers}. To this end, we focus on a demand model with consumers that have independent, uniformly distributed valuations for each seller's product, which extends the setting presented in Example~\ref{example:UniformDemand} to the case of two sellers. Then, we compute the policies described above and compare them in terms of the welfare they induce over a broad range of parameters for the model. Our findings split the parameter space in two broad regions. In the first one, consisting of 54.32\% of the parameter combinations, confounding promotion policies are not beneficial for the platform compared to truthful revelation. In the second region, consisting of 45.43\% of parameter combinations, confounding one seller outperforms truthfully disclosing $\phi$ to both sellers. Importantly, we observe that confounding both sellers almost never dominates truthful revelation (which occurs only in 0.25\% of parametric combinations). These observations demonstrate that promotion policies that confound a single seller are a valuable tool for the platform even when sellers compete in prices in every period. In that sense, these observations solidify the extension of our findings to a setting with more than one seller.

\subsection{The Model} \label{sec: model_two_sellers}

The timeline of the game follows the same structure as in the single-seller model presented in \S\ref{sec:Model} with the exception that we now consider two sellers, $1$ and $2$, that update their beliefs and then set prices simultaneously in every period. For each seller $i$, the set of possible prices is denoted by $\mathcal{P}_i$ and is assumed to be a compact interval of the real line that contains zero. In each period $t$ the platform's action is either to promote some seller $i$ ($a_t = i\in \{1,2\}$) or neither of them ($a_t = 0$), and is determined by the platform's promotion policy.

Following similar notation as in the single-seller model, we assume that the probability of purchasing from seller  $i\in \{1,2\}$ is captured by a commonly known function $\rho_i$ that depends on the consumer type $\psi\in \{P,I\}$ where $\psi = P$ ($\psi=I$) denotes that the consumer is patient (impatient), both sellers' prices $p_1\in \mathcal{P}_1,p_2\in \mathcal{P}_2$,  and the platform promotion decision $a\in\{0,1,2\}$:% as follows:
\begin{equation}\label{eq:TwoSellerConsumerDemand}
	\rho_i(p_1,p_2,a, \psi) = \Prob(y_i = 1|p_1,p_2,a, \psi) =
	\begin{cases}
		\bar{\rho}_c^i(p_1,p_2), &\text{ if $\psi=P$,} \\
		\bar{\rho}_0^i(p_i), &\text{ if $\psi=I$} \text{ and } a = i, \\
		0, &\text{ if $\psi=I$} \text{ and } a \neq i.
	\end{cases}
\end{equation}

We make the following standard assumptions on the demand model.

%\begin{assumption}[Two-seller demand]\label{assump:TwoSellerDemand}
\begin{manualassumption}{\ref{assump:TwoSellerDemand}}[Two-seller demand]
	For each $i=1,2$, the function $\bar{\rho}_0^i(p_i)$ is decreasing and Lipschitz continuous in $p_i$; and the function $\bar{\rho}_c^i(p_i,p_{-i})$ is decreasing in $p_i$, increasing in $p_{-i}$ and Lipschitz continuous in $(p_1,p_2)$. In addition,  $p_i\bar{\rho}_c^i(p_i,p_{-i})$ and $p_i\bar{\rho}_0^i(p_i)$ are strictly concave in $p_i$ for any fixed value of $p_{-i}$. Finally,~$\bar{\rho}_0^i(p_i) \geq \bar{\rho}_c^i(p_i,p_{-i})$ for all $p_i\in\mathcal{P}_i$, $p_{-i}\in\mathcal{P}_{-i}$.
\end{manualassumption}
%\end{assumption}

In each period the platform's payoff is defined as the expected consumer surplus, and is denoted by a commonly known function $W$ of the sellers' prices $p_1$, $p_2$, the platform's promotion decision $a$, and the consumer's type:
\begin{equation}\label{eq:TwoSellerConsumerWelfare:W}
	W(p_1,p_2,a, \psi) =
	\begin{cases}
		\bar{W}_c(p_1,p_2),  &\text{ if $\psi=P$,} \\
		\bar{W}_1(p_1),  &\text{ if $\psi=I$ and }  a = 1, \\
		\bar{W}_2(p_2), &\text{ if $\psi=I$ and } a =2,\\
		0, &\text{ if $\psi=I$ and } a =0,
	\end{cases}
\end{equation}
where we normalize the welfare associated with not promoting either seller when the buyer is impatient as zero. In line with the single-seller model, we make the following assumption on the welfare functions.

\begin{assumption}[Two-seller consumer Surplus]\label{Assump:TwoSellerConsSurplus}
	The functions $\bar{W}_c(p_1,p_2)$, $\bar{W}_1(p_1)$, and $\bar{W}_2(p_2)$ are all decreasing and Lipschitz continuous in each of the prices $p_1\in\mathcal{P}_1$, $p_2\in\mathcal{P}_2$.
\end{assumption}

%\blue{ADDED FORMAL HISTORY DEFINITIONS BELOW AS IN \S\ref{sec:Model} AND CHANGED SUBSEQUENT DEFINITION IN THESE TERMS.}

The definitions for histories, strategies and belief systems extend from the single-seller model with minor modifications which we now describe. First, the platform's signaling mechanism is a pair $\mathbf{\sigma} = \left(\sigma_1,\sigma_2\right)$ where each component $\sigma_i:\{\phi_L,\phi_H\}\rightarrow \Delta(\mathcal{S})$ represents the signaling mechanism for seller $i$, that is, we allow the platform to send different signals to each seller. Second, at each period the platform has access to the history of prices and sales realizations for both sellers in all past period, whereas each seller observes both prices but only their own sales realization. The history observed by the platform at the beginning of period $t$ consists of the true value of $\phi$ and the sequence of signals, prices and sales realizations for both sellers. We formally denote $\bar{h}_t = \left\langle (s_1,s_2), \left(p_1^{\tau},p_2^{\tau}, y_1^{\tau}, y_2^{\tau}\right)_{\tau=1}^{t-1} \right\rangle$, where $s_i$ is the realized signal sent to seller $i$, and $y_i^t\in\{0,1\}$ is the sales realization for seller $i$ at time $t$. We denote the set of all such possible histories by time $t$ as  $\bar{H}_t = S^2 \times  \left(\mathcal{P}_1\times \mathcal{P}_2 \times \{0,1\}^2\right)^{t-1}$. In line with the formulation in \S\ref{sec:Model}, the platform's promotion policy, $\boldsymbol{\alpha} = \{\alpha_t\}_{t=1}^T$, is a collection of functions  that map time-$t$ histories, seller's prices and the state $\phi$ to a probability distribution on the platform's action space $\{0,1,2\}$, i.e., $\bA_t: \mathcal{P}_1\times \mathcal{P}_2 \times \{\phi_L,\phi_H\} \times \bar{H}_t \rightarrow \Delta\left(\{0,1,2\}\right)$.

On the other hand, at the beginning of time $t$, each of the sellers has observed his own signal, price and sales realization, but only his competitor's price choices. In addition, each seller observes the signaling mechanism and promotion policy chosen by the platform. Formally, we denote the history observed by seller $i$ at time~$t$ by $h_t^{(i)} = \left\langle s_i,\bA,\mathbf{\sigma}, \left(p_1^{\tau},p_2^{\tau}, y_i^{\tau}\right)_{\tau=1}^{t-1}\right\rangle$, and the set of possible histories observed by seller $i$ at the beginning of period $t$ by $H_t^{(i)} = \mathcal{S}\times \mathcal{A}\times \Sigma \times \left(\mathcal{P}_1 \times \mathcal{P}_2 \times \{0,1\}\right)^{t-1}$. Finally, given a platform's strategy $(\bA,\mathbf{\sigma})$ and a platform history $\bar{h}_t \in \bar{H}_t$, we denote the corresponding history observed by seller $i$ (slightly abusing notation) by $h_t^{(i)}(\bar{h}_t,\bA,\mathbf{\sigma})$.

 %We denote by $\alpha_i$ the probability of promoting seller $i$ as a function of its price, the true state and the full history.%\footnote{A similar result to Lemma XXX holds here: it is without loss to consider promotion policies that only depend on sellers' beliefs in every period.}

The expected payoffs are defined as in \S\ref{sec:Model}. In addition, we extend the definition of Myopic Bayesian Pricing policies to consider the presence of two sellers as follows.

\begin{definition}[Myopic Bayesian Nash Pricing Policy]\label{def:MyopicPricingPolicy_comp}
	Fix a platform's signaling mechaism $\mathbf{\sigma}$ and a promotion policy $\bA$. In every period $t$ and at every history $\bar{h}_t \in \bar{H}_t$, a myopic Bayesian Nash pricing policy $\boldsymbol{\pi^*}=\{\pi_{t}^*\}_{t=1}^T$ selects a pair of prices $\left(p_1^t,p_2^t\right) \in \mathcal{P}_1\times \mathcal{P}_2$ that form a Nash equilibrium in the game with payoff functions defined as each seller's expected revenue for period $t$. %, conditional on the competitor setting the price established by the policy \textbf{[Ilan: isnt this obvious?]}, the platform's promotion policy, and each seller's observed history \textbf{[Ilan: isnt this obvious?]}. \textbf{[Ilan: that's a bit too long. Try to shorten if possible.]}
	Formally, $\pi_t^*$ determines prices $\left(p_1^t,p_2^t\right) \in \mathcal{P}_1\times \mathcal{P}_2$ such that for $i=1,2$:
	\begin{equation}\label{eq:myopic_Nash}
		p_i^t \in \arg \max_{p_i \in \mathcal{P}_i} ~~~\E \left[ p_{i} \rho_i(p_i,p_{-i}^t,a_t, \psi_t) |h_t^{(i)} = h_t^{(i)}(\bar{h}_t,\bA,\mathbf{\sigma})\right],
	\end{equation}
	where the expectation above is taken with respect to the distribution of histories $\bar{h}'_t \in H_t$ such that $h_t^{(i)}(\bar{h}'_t,\bA,\mathbf{\sigma}) = h_t^{(i)}(\bar{h}_t,\bA,\mathbf{\sigma})$ and the platform's promotion being drawn according to $\bA_t$. If there exist multiple price pairs that satisfy \eqref{eq:myopic_Nash}, we assume that $\boldsymbol{\pi^*}$ selects one that maximizes the present consumer surplus.\footnote{As in the single-seller model, this is in line with the concept of \emph{sender preferred equilibria}. Moreover, it is straightforward to see that this policy is well-defined within the class of single-price promotion policies, which is a superset of the ones we consider in the next section.}
\end{definition}

%Later on, we show that this notion is well-defined for the subset of promotion policies that we analyze.
Finally, let us denote by $p_i^{BR}:\mathcal{P}_{-i}\to \mathcal{P}_i$ the best-response for seller $i$ as a function of his competitor's price conditional on the buyer being patient, i.e.,
\begin{equation} \label{eq: best_response_two_sellers}
	p_i^{BR}(p_{-i}):= \arg \max_{p_i\in \mathcal{P}_i} p_i \bar{\rho}^i_c(p_i,p_{-i}),
\end{equation}
which is well-defined and single-valued by Assumption~\ref{assump:TwoSellerDemand}.

\subsection{Promotion Policy Design under Confounding and Truthful Revelation} \label{sec: policies_definitions_two_sellers}

The rest of this Appendix is dedicated to studying the value of jointly designing signaling mechanisms and confounding promotion policies in the setting with two sellers whose prices are set according to the Myopic Bayesian Nash policy that we previously defined. Specifically, we consider three platform policies: (i) \textit{truthful revelation}, meaning that the platform truthfully discloses the true state $\phi$ to both sellers and employs an optimal promotion policy thereafter; (ii) \textit{confound one seller only}, in which the platform truthfully reveals the value of $\phi$ to one of the sellers and keeps the other one confounded (i.e.,  designs a promotion policy to keep the uninformed seller's beliefs constant after the initial signal is realized; and (iii) \textit{confound both sellers}, in which the promotiofn policy is designed to keep both sellers from learning the value of $\phi$. We formally define each of these benchmarks in \S\ref{sec: two_sellers_truthful}--\ref{sec: two_sellers_confound_both}.

%\blue{ADDED PARAGRAPH BELOW TO EXPLAIN THAT THE POLICIES INDUCE STATIC NASH PLAY BY SELLERS. THIS IS IMPORTANT LATER ON.}

In line with the discussion in \S\ref{sec: OptSimplePolicies} and Appendix~\ref{app:DesigningSimplePolicies}, we focus on the design of static single-price policies.\footnote{The definition of single-price policies extends straightforwardly from Definition~\ref{def: SinglePricePromoPolicies} as the policies that do not promote any seller unless both of them set a target price.} Furthermore, since each of the classes of policies defined below maintains sellers' beliefs constant throughout the horizon (for each seller, by either confounding their beliefs or disclosing the true value of $\phi$), and since sellers' price according to the Myopic Bayesian Nash policy defined in~\eqref{eq:myopic_Nash}, the resulting posted prices are static, forming a Nash equilibrium at every period. As in the case of a single seller, this allows for tractability by inducing the promotion design problem and its analysis to a static setting.

\subsubsection{Truthful Revelation} \label{sec: two_sellers_truthful}

To define the welfare under truthful revelation, denote  $W^\phi$ as the maximum welfare achievable when the true state is $\phi \in \{\phi_L,\phi_H\}$ and both sellers are informed of the true state. Once both sellers are informed of the value of $\phi$, their beliefs are constant (and equal to $0$ or~$1$, depending on the value of $\phi$), and thus we can define $W^\phi$ by the following optimization problem, where we denote the probability of promoting seller~$i$ if both sellers comply with the target prices by $\alpha_{\phi}^i$.
\begin{equation}\label{eq: max_welfare_cond_state}
	\begin{split}
		W^{\phi}&:= \max_{\substack{\alpha_{\phi}^1, \alpha_{\phi}^2 \in [0,1],\\p_1\in \mathcal{P}_1,p_2\in \mathcal{P}_2}} ~~ \phi \alpha_{\phi}^1 \bar{W}_1(p_1) +\phi\alpha_{\phi}^2\bar{W}_2(p_2)  + (1-\phi)\bar{W}_c(p_1,p_2) \\
		\text{s.t.}&~~\phi \alpha_{\phi}^i p_i \bar{\rho}^i_0(p_i) + (1-\phi) p_i \bar{\rho}^i_c(p_i,p_{-i}) \geq (1-\phi) p_i^{BR}(p_{-i}) \bar{\rho}^i_c(p_i^{BR}(p_{-i}),p_{-i}), \quad \text{for $i=1,2$,}\\
		&~~ \alpha_{\phi}^1 + \alpha_{\phi}^2 \leq 1.
	\end{split}
\end{equation}
The first set of constraints ensure incentive compatibility for each seller, by requiring that complying with the price targeted by the promotion policy yields a larger expected profit than deviating to the second best option. This problem is always feasible as we can take $(p_1,p_2)$ as Nash equilibrium prices of the game with payoff functions $\{p_i \bar{\rho}^i_c(p_i,p_{-i})\}_{i=1,2}$ (which exists by Assumption~\ref{assump:TwoSellerDemand}) and letting $\alpha_\phi^1 = \alpha_\phi^2 = 1/2$. It follows that given a prior belief $\mu\in[0,1]$, the expected welfare under truthful revelation is
\begin{equation*}
	W^{\text{truth}}(\mu) = \mu W^{\phi_H} +(1-\mu) W^{\phi_L}.
\end{equation*}

\subsubsection{Confound One Seller Only}\label{sec: two_sellers_confound_one}

Suppose now that the platform selects a seller to confound and truthfully reveals the value of $\phi$ to the other seller. For simplicity we assume that seller 1 is the one chosen to be confounded while seller 2 observes the state $\phi$ in the first period. The case where the platform decides to confound seller 2 instead is simply defined by reversing the roles of sellers 1 and 2 in the constraints of \eqref{eq: max_welfare_confound_seller_1} below.

In order to confound seller 1, i.e., to keep his beliefs constant over time, the platform's promotion policy must be designed in a manner that prevents seller 1 from learning from his own demand realizations and seller 2's price choices. In particular, seller 2 must also be incentivized to make pricing decisions independently of the value of $\phi$, as otherwise these decisions, which are observed by seller 1, would be informative about $\phi$. More concretely, if at a certain time period seller 2's price is $p_2^H$ if $\phi = \phi_H$ and $p_2^L$ otherwise, the following two conditions must hold in order to confound seller 1:
\begin{equation*}
	\phi_H\alpha_{\phi_H}^1\bar{\rho}^1_0(p_1) + (1-\phi_H)\bar{\rho}^1_c(p_1,p_2^H) = \phi_L\alpha_{\phi_L}^1\bar{\rho}^1_0(p_1) + (1-\phi_L)\bar{\rho}^1_c(p_1,p_2^L), \quad \text{and} \quad
	p_2^H  = p_2^L.
\end{equation*}
The first condition follows from requiring that sales realizations are not informative for seller 1 (following the same reasoning to derive equation \eqref{eq:EqualityOfSalesProb} in \S\ref{sec:LongRunAverageOptimalConsumerSurplus}), while the second condition ($p_2^H = p_2^L$) ensures that seller 2's price is uninformative of the state $\phi$ for seller 1, which is achieved by requiring these prices to be independent of the value of $\phi$. By denoting then $p_2 = p_2^H = p_2^L$ and following the analysis in \S\ref{sec: OptSimplePolicies}, we can write the maximum welfare achievable by confounding seller 1 and revealing the true state to seller 2 when seller 1's belief that the state is $\phi = \phi_H$ is $\mu \in(0,1)$ as:
\begin{equation}\label{eq: max_welfare_confound_seller_1}
	\begin{split}
		W^{C1}(\mu)&:= \max_{\substack{\alpha_{\phi_H}^i,\alpha_{\phi_L}^i \in [0,1], i = 1,2,\\
				p_1 \in \mathcal{P}_1,p_2\in \mathcal{P}_2}} ~~ \E_{\phi}\left(\phi \alpha_{\phi}^1 \bar{W}_1(p_1) +\phi\alpha_{\phi}^2\bar{W}_2(p_2)  + (1-\phi)\bar{W}_c(p_1,p_2)|\mu\right) \\
		% First constraint: Seller 1 IC
		\text{s.t.}&~~ p_1\bar{\rho}^1_0(p_1)(\phi_L\alpha_{\phi_L}^1(1-\mu) +\phi_H\alpha_{\phi_H}^1\mu) +  p_1\bar{\rho}^1_c(p_1,p_2)(1-\phi_L -\mu(\phi_H-\phi_L)) \geq \\
		&\quad \quad \quad (1-\phi_L -\mu(\phi_H-\phi_L))p_1^{BR}(p_2) \bar{\rho}^1_c(p_1^{BR}(p_2),p_2);  \\
		% Second constraint: Seller 2 IC for each state
		&~~  \phi\alpha_{\phi}^2 p_2 \bar{\rho}^2_0(p_2) + (1-\phi) p_2 \bar{\rho}^2_c(p_1,p_2) \geq (1-\phi) p_2^{BR}(p_1) \bar{\rho}^2_c(p_1,p_2^{BR}(p_1)), \text{for $\phi\in\{\phi_H,\phi_L\}$;}\\
		% Confounding for seller 1
		&~~ \phi_H\alpha_{\phi_H}^1\bar{\rho}^1_0(p_1) + (1-\phi_H)\bar{\rho}^1_c(p_1,p_2) = \phi_L\alpha_{\phi_L}^1\bar{\rho}^1_0(p_1) + (1-\phi_L)\bar{\rho}^1_c(p_1,p_2);\\
		% Promotion probabilities constraint
		&~~\alpha_{\phi}^1 + \alpha_{\phi}^2 \leq 1, \text{for $\phi\in\{\phi_H,\phi_L\}$.}
	\end{split}
\end{equation}
The first two constraints establish incentive compatibility for seller~1 (given a belief $\mu$) and seller~2 (given access to the true value of~$\phi$); the third constraint ensures confounding seller~1. In particular, by solving problem~\eqref{eq: max_welfare_confound_seller_1} we can establish the existence of promotion policies that confound seller 1, provided the seller~2 is informed. This allows us to prove Proposition~\ref{prop: existence_confound_one_policy} as presented in \S\ref{sec: TwoSellers}.

\begin{manualprop}{\ref{prop: existence_confound_one_policy}}
	Suppose that seller 2 observes the value of $\phi$ and that Assumption~\ref{assump:TwoSellerDemand} holds. Then,
	there exists a simple promotion policy that confounds seller 1 given any seller 1 belief $\mu\in[0,1]$.
\end{manualprop}

\begin{proof}
	If $\mu\in \{0,1\}$, seller 1 never updates his belief so any promotion policy is confounding. Suppose then that $\mu\in(0,1)$.
	By the previous argument, and following a similar reasoning as the proof of Proposition~\ref{prop: ConfoundingPolicyExistence} (see \S\ref{proof:prop: ConfoundingPolicyExistence}), it suffices to show that problem~\eqref{eq: max_welfare_confound_seller_1} is always feasible. To do so, let $(p_1,p_2)$ be Nash equilibrium prices of the game with payoff functions $\{p_i \bar{\rho}^i_c(p_i,p_{-i})\}_{i=1,2}$, and set
	\begin{equation*}%\label{eq:AlphaHDef}
		\alpha_{\phi_H}^1 = \left(\frac{\phi_H-\phi_L}{\phi_H}\right)\left(\frac{\bar{\rho}_c^1(p_1,p_2)}{\bar{\rho}_0^1(p_1)}\right)+\alpha_{\phi_L}^1\left(\frac{\phi_L}{\phi_H}\right), \quad \alpha_{\phi_L}^2 = 1-\alpha_{\phi_L}^2, \quad \alpha_{\phi_H}^2 = 1-\alpha_{\phi_H}^2,
	\end{equation*}
	so that, by Assumption~\ref{assump:TwoSellerDemand}, for any selection of $\alpha_{\phi_L}^1\in[0,1]$, this combination of prices and promotion probabilities is feasible for the optimization problem in \eqref{eq: max_welfare_confound_seller_1}.
\end{proof}

%To show that problem~\eqref{eq: max_welfare_confound_seller_1} is always feasible, let $(p_1,p_2)$ be Nash equilibrium prices of the game with payoff functions $\{p_i \bar{\rho}^i_c(p_i,p_{-i})\}_{i=1,2}$, and set
%\begin{equation*}%\label{eq:AlphaHDef}
%	\alpha_{\phi_H}^1 = \left(\frac{\phi_H-\phi_L}{\phi_H}\right)\left(\frac{\bar{\rho}_c^1(p_1,p_2)}{\bar{\rho}_0^1(p_1)}\right)+\alpha_{\phi_L}^1\left(\frac{\phi_L}{\phi_H}\right), \quad \alpha_{\phi_L}^2 = 1-\alpha_{\phi_L}^2, \quad \alpha_{\phi_H}^2 = 1-\alpha_{\phi_H}^2,
%\end{equation*}
%so that for any selection of $\alpha_{\phi_L}^1\in[0,1]$, this combination of prices and promotion probabilities is feasible for the optimization problem in \eqref{eq: max_welfare_confound_seller_1}.
 Finally, to complete the definition of $W^{C1}$, one may proceed in line with the single-seller model when seller~1's belief is $\mu \in \{0,1\}$. In that case, seller 1's belief remain constant throughout the horizon regardless of any observed information, so we simply define $W^{C1}$ to be the maximum welfare achievable when both sellers know the true state, i.e.,  $W^{C1}(\mu) = W^{\text{truth}}(\mu)$ for $\mu \in \{0,1\}$.  %Now that $W^{C1}(\mu)$ is defined for all $\mu\in [0,1]$, note that by the same reasoning as in Theorem~\ref{thm:LongRunAverageOptimalConsumerSurplus}, the platform can design the signaling mechanism for seller 1 in order to achieve an expected welfare of $co\left(W^{C1}\right)(\mu)$.

%\blue{ADDED THIS PART TO CLARIFY HOW WE ACHIEVE CONCAVIFICATION IN THE NEXT PART SINCE THERE WAS CONFUSION.}

Given these observations, $W^{C1}(\mu)$ defines the maximal expected consumer welfare that is achievable by optimally designing a promotion policy that confounds seller 1 when seller 2 is informed as a function of seller 1's belief $\mu$. However, the platform may generate additional expected consumer welfare through designing the signaling mechanism for seller 1. The following proposition establishes that,
in line with the analysis in the proof of Theorem~\ref{thm:LongRunAverageOptimalConsumerSurplus}, the platform can construct a signal that results in an expected welfare of $co(W^{C1}(\mu_0))$.

\begin{proposition}\label{prop: signaling_confound_one}
%\begin{manualprop}{\ref{prop: signaling_confound_one}}
	Suppose that both sellers have a prior of $\mu_0$ and that the platform sets the promotion policy to confound seller 1 only by solving~\eqref{eq: max_welfare_confound_seller_1}. Then,
	there exists an optimal signaling mechanism $\sigma_1$ for seller~1, and the resulting expected surplus is $co(W^{C1})(\mu_0)$.
%\end{manualprop}
\end{proposition}

\begin{proof}
	We first claim that $W^{C1}(\mu)$ is an upper-semicontinuous function of $\mu$. To see this,
	first note that by Proposition~\ref{prop: existence_confound_one_policy}, problem~\eqref{eq: max_welfare_confound_seller_1} is non-empty for all $\mu\in(0,1)$. Moreover, the objective of~\eqref{eq: max_welfare_confound_seller_1} is continuous in~$\mu$ (as it is linear). Moreover, since all the constraints of~\eqref{eq: max_welfare_confound_seller_1} are defined by continuous functions,  the feasible set correspondence of~\eqref{eq: max_welfare_confound_seller_1} is compact-valued and upper-hemicontinuous in $\mu$. It then follows from Berge's Maximum Theorem that $W^{C1}(\mu)$ is upper-semicontinuous for $\mu\in(0,1)$.
	
	For $\mu \in \{0,1\}$, recall that we defined $W^{C1}(\mu) = W^{\text{truth}}(\mu)$. It is easy to see that at beliefs $\mu\in\{0,1\}$, problems \eqref{eq: max_welfare_cond_state} \eqref{eq: max_welfare_confound_seller_1} have the same objective, and problem \eqref{eq: max_welfare_confound_seller_1} contains the constraints of \eqref{eq: max_welfare_cond_state} as well as other constraints (e.g., the confounding constraint). Thus, it follows that
	\begin{equation*}
		\lim_{\mu \to 0^+} W^{C1}(\mu)\leq W^{C1}(0), \qquad \lim_{\mu \to 1^-} W^{C1}(\mu)\leq W^{C1}(1),
	\end{equation*}
	and therefore $W^{C1}(\mu)$ is upper-semicontinuous for $\mu\in\{0,1\}$ as well.
	
	Given that we assume that seller 2 is informed of the value of $\phi$, we only need to design the signaling mechanism for seller 1. By the same argument as in the proof of Theorem~\ref{thm:LongRunAverageOptimalConsumerSurplus} (see also Corollaries 1 and 2 in \cite{Kamenica2011} and related discussion there), there exists an optimal signaling mechanism that results in an expected payoff of $co(W^{C1})(\mu)$.
\end{proof}

Finally, we define the maximum welfare achievable when seller 2 is the one chosen to be confounded (instead of seller 1), $W^{C2}(\mu)$, by reversing the roles of sellers 1 and 2 in the constraints of \eqref{eq: max_welfare_confound_seller_1}. By the same logic as in Proposition~\ref{prop: signaling_confound_one}, an optimal signaling mechanism that results in an expected welfare of $co(W^{C2})(\mu_0)$ can be constructed in this case as well.

\subsubsection{Confound Both Sellers}\label{sec: two_sellers_confound_both}

Consider now the case where the platform confounds both sellers. Following a similar reasoning as in \S\ref{sec: two_sellers_truthful} and \ref{sec: two_sellers_confound_one}, we define the maximum welfare achievable when both sellers have a belief of $\mu$ as follows:
\begin{equation}\label{eq: max_welfare_confound_both}
	\begin{split}
		W^{C,\text{both}}(\mu)&:= \max_{\substack{\alpha_{\phi_H}^i,\alpha_{\phi_L}^i \in [0,1], i = 1,2,\\
				p_1 \in \mathcal{P}_1,p_2\in \mathcal{P}_2}} ~~ \E_{\phi}\left(\phi \alpha_{\phi}^1 \bar{W}_1(p_1) +\phi\alpha_{\phi}^2\bar{W}_2(p_2)  + (1-\phi)\bar{W}_c(p_1,p_2)|\mu\right) \\
		% First constraint: Seller i IC
		\text{s.t.}&~~ p_i\bar{\rho}^i_0(p_i)(\phi_L\alpha_{\phi_L}^i(1-\mu) +\phi_H\alpha_{\phi_H}^i\mu) +  p_i\bar{\rho}^i_c(p_i,p_{-i})(1-\phi_L -\mu(\phi_H-\phi_L)) \geq \\
		&\quad \quad \quad (1-\phi_L -\mu(\phi_H-\phi_L))p_i^{BR}(p_{-i}) \bar{\rho}^i_c(p_i^{BR}(p_{-i}),p_{-i}), \text{for $i\in \{1,2\}$;}  \\
		% Confounding for seller i
		%&~~ \phi_H\alpha_{\phi_H}^i\bar{\rho}^i_0(p_i) + (1-\phi_H)\bar{\rho}^i_c(p_i,p_{-i}) = \phi_L\alpha_{\phi_L}^i\bar{\rho}^i_0(p_i) + (1-\phi_L)\bar{\rho}^i_c(p_i,p_{-i}), \text{for $i\in \{1,2\}$.}  \\
		&~~ \left(\phi_H\alpha_{\phi_H}^i-\phi_L\alpha_{\phi_L}^i\right)\bar{\rho}^i_0(p_i)  =  (\phi_H-\phi_L)\bar{\rho}^i_c(p_i,p_{-i}), \text{for $i\in \{1,2\}$;}  \\
		& ~~ \alpha_{\phi}^1+\alpha_{\phi}^2 \leq 1,\text{for $\phi\in \{\phi_H,\phi_L\}$.}
	\end{split}
\end{equation}

\begin{remark}
	In principle, one could consider the case where sellers have different beliefs after the initial signaling is realized. However, in that case sellers could learn information about their competitor's posterior beliefs based on their price choices (due to the incentive compatibility constraint), which would limit the platform's ability to confound sellers. Thus, in order to maintain beliefs as constant throughout the horizon, the platform's signaling mechanism must ensure that the beliefs of both sellers are equal after the initial signaling is realized.
\end{remark}

In contrast with the two scenarios we have discussed in Appendices~\ref{sec: two_sellers_truthful} and \ref{sec: two_sellers_confound_one}, confounding both sellers is not always feasible. In fact, we next establish that under mild assumptions on the demand model, there always exist model parameters under which confounding both sellers is infeasible.

%\begin{proposition} \label{prop: impossbility_confounding_two_sellers}
\begin{manualprop}{\ref{prop: impossbility_confounding_two_sellers}}
	Suppose that both sellers have a belief of $\mu\in(0,1)$ and that the demand model is such that for any prices $p_1,p_2$ one has that
	\begin{equation} \label{eq: impossibility_confounding}
		\frac{\bar{\rho}^1_c(p_1,p_2)}{\bar{\rho}^1_0(p_1)}+\frac{\bar{\rho}^2_c(p_1,p_2)}{\bar{\rho}^2_0(p_2)} > 1.
	\end{equation}
	%Then, problem \eqref{eq: max_welfare_confound_both} is infeasible for all $\phi_L$ small enough.
	Then, for all $\phi_L$ small enough, there exists no promotion policy that confounds both sellers.
\end{manualprop}
%\end{proposition}

%\blue{ADDED MORE ON INTUITION OF CONDITION \eqref{eq: impossibility_confounding} AND REORGANIZED}

Condition \eqref{eq: impossibility_confounding} is a regularity property satisfied by a wide range of demand models. To discuss the intuition of this condition, consider a symmetric demand model (i.e. $\bar{\rho}_0^1 = \bar{\rho}_0^2$ and $\bar{\rho}^1_c(p_1,p_2) = \bar{\rho}^1_c(p_2,p_1)$), and suppose that both sellers set a price $p$. Then, evaluating the LHS of \eqref{eq: impossibility_confounding} yields
\begin{equation*}
	\frac{\bar{\rho}^1_c(p,p)}{\bar{\rho}^1_0(p)}+\frac{\bar{\rho}^2_c(p,p)}{\bar{\rho}^2_0(p)} = 2\frac{\bar{\rho}^1_c(p,p)}{\bar{\rho}^1_0(p)}.
\end{equation*}
If \eqref{eq: impossibility_confounding} holds, then we would have that
	$2\bar{\rho}^1_c(p,p) > \bar{\rho}^1_0(p)$.
In other words, one obtains that the aggregate demand when the consumer may buy from two sellers is larger than the demand when consumers are provided with only one seller to buy from, which is likely to hold in most settings of interest.

Moreover, condition \eqref{eq: impossibility_confounding} is also satisfied by a range of non-symmetric demand models. For example, it is straightforward to show that this condition holds for a logit demand model. Indeed, one has:
$$\frac{\bar{\rho}_c^1(p_1,p_2)}{\rho_0^1(p_1)} =
\frac{e^{q_1-p_1}}{1+e^{q_1-p_1}+e^{q_2-p_2}}\bigg/\frac{e^{q_1-p_1}}{1+e^{q_1-p_1}} =
\frac{1+e^{q_1-p_1}}{1+e^{q_1-p_1}+e^{q_2-p_2}} $$
Then, combining that with the corresponding expression for seller 2's demand, we have:
\begin{equation*}
	\frac{\bar{\rho}^1_c(p_1,p_2)}{\bar{\rho}^1_0(p_1)}+\frac{\bar{\rho}^2_c(p_1,p_2)}{\bar{\rho}^2_0(p_2)} = \frac{1+e^{q_1-p_1}}{1+e^{q_1-p_1}+e^{q_2-p_2}}+\frac{1+e^{q_2-p_2}}{1+e^{q_1-p_1}+e^{q_2-p_2}}
	= 1 + \frac{1}{1+e^{q_1-p_1}+e^{q_2-p_2}} >~1.
\end{equation*}
Finally, condition \eqref{eq: impossibility_confounding} can also be verified to hold for the model with uniform valuations that we study in Appendix~\ref{sec: example_analysis_two_sellers} as long as the parameters of the model are not boundary cases.

Nonetheless, even though one cannot always guarantee that confounding both sellers is feasible, we can still consider whether this policy performs well in parametric regimes where it is indeed feasible, which we do in Appendix~\ref{sec: example_analysis_two_sellers}. We now prove Proposition~\ref{prop: impossbility_confounding_two_sellers}.

\begin{proof}[Proof of Proposition~\ref{prop: impossbility_confounding_two_sellers}]
	To establish the Proposition, it suffices to show that ensuring that the confounding constraints of both sellers hold (i.e., the second set of constraints in problem \eqref{eq: max_welfare_confound_both}) is infeasible for all $\phi_L$ small enough.
	Suppose towards a contradiction that there exists a pair of prices $(p_1,p_2)\in \mathcal{P}_1\times \mathcal{P}_2$ such that it is possible to satisfy both sellers' confounding constraints:
	\begin{align*}
		\phi_H\alpha_{\phi_H}^1\bar{\rho}^1_0(p_1) + (1-\phi_H)\bar{\rho}^1_c(p_1,p_2) = \phi_L\alpha_{\phi_L}^1\bar{\rho}^1_0(p_1) + (1-\phi_L)\bar{\rho}^1_c(p_1,p_2),\quad   \text{Confound Seller 1}, \\
		\phi_H\alpha_{\phi_H}^2\bar{\rho}^2_0(p_2) + (1-\phi_H)\bar{\rho}^i_c(p_1,p_2) = \phi_L\alpha_{\phi_L}^2\bar{\rho}^2_0(p_2) + (1-\phi_L)\bar{\rho}^2_c(p_1,p_2),\quad   \text{Confound Seller 2},
	\end{align*}
	where we use short-hand notation for the promotion policy (i.e., $\alpha_{\phi} := \alpha(p_1,p_2,\phi,\bar{h})$). By simple algebra and combining both constraints we have that
	\begin{equation}\label{eq: aux_confound_both_infeasible}
		\alpha_{\phi_H}^1+\alpha_{\phi_H}^2 = \left(\frac{\phi_L}{\phi_H}\right) \left(\alpha_{\phi_L}^1+\alpha_{\phi_L}^2\right) + \left(1-\frac{\phi_L}{\phi_H}\right)\left(\frac{\bar{\rho}^1_c(p_1,p_2)}{\bar{\rho}^1_0(p_1)}+\frac{\bar{\rho}^2_c(p_1,p_2)}{\bar{\rho}^2_0(p_2)}\right).
	\end{equation}
	For the promotion policy to be feasible, it must satisfy  $\alpha_{\phi_H}^1+\alpha_{\phi_H}^2\leq 1$. If that is the case, it follows from \eqref{eq: aux_confound_both_infeasible} that:
	\begin{equation*}
		\left(1-\frac{\phi_L}{\phi_H}\right)\left(\frac{\bar{\rho}^1_c(p_1,p_2)}{\bar{\rho}^1_0(p_1)}+\frac{\bar{\rho}^2_c(p_1,p_2)}{\bar{\rho}^2_0(p_2)}\right)\leq~1.
	\end{equation*}
	It then follows from \eqref{eq: impossibility_confounding} that this condition fails to hold for all small enough $\phi_L$ given a fixed $\phi_H>~0$. This concludes the proof.
\end{proof}

\subsection{Comparisons of Platform Policies under Uniform Valuations} \label{sec: example_analysis_two_sellers}

%\textbf{[Ilan: need to explain shortly what we do here. We look at a platform policy and then compute the static equilibrium it induces among sellers (is this the same as assuming sellers are myopic?) and then we compare the results in terms of social welfare? Need also to explain that tracking a dynamic equilibrium is hard and complex] \blue{[YONI -- Edited below. I don't discuss the dynamics here as I added that discussion at the beginning of \S\ref{sec: policies_definitions_two_sellers}.]}}

We now analyze the benefit (in terms of welfare) of confounding one or both sellers relative to truthfully reporting the value of promotion. To demonstrate this benefit, we consider a demand model that extends the structure of Example~\ref{example:UniformDemand} to a settings with two sellers posting prices in each period. For this demand model, we %numerically solve the platform's promotion design problem for the scenarios described in \S\ref{sec: policies_definitions_two_sellers} and compare these policies in terms of welfare (i.e.,)
compare the values of $W^{\text{truth}}$, $W^{C1}$, $W^{C2}$, and $W^{C,\text{both}}$, as defined in~\eqref{eq: max_welfare_cond_state}--\eqref{eq: max_welfare_confound_both} across a broad range of parameters. With this analysis, we aim to better understand the extent to which confounding promotion policies generate value in settings with competition. %While we have illustrated the value of confounding policies in the single-seller setting with a single seller and fixed competition, it could be that in the model with competition, confounding policies may hinder the competitive pressure that results in lower prices and thus reduce consumer welfare.

%On one hand, we have seen that confounding policies may outperform truthful revelation when facing a single seller with fixed competitors, which illustrates the value of confounding promotion policies. On the other hand, it could be that in the model with competition, confounding policies may hinder the competitive pressure that results in lower prices and thus reduce consumer welfare.

Our analysis illustrates that confounding one of the sellers often outperforms truthful revelation. At the same time, we observe that confounding both sellers is either infeasible or typically suboptimal. This leads to conclude that the platform may often restrict attention to confounding one of the sellers or neither. Motivated by this observation, we then illustrate that when confounding one seller is valuable, the platform should select to confound the seller with the higher demand of the two (and disclose the true value of $\phi$ to the seller with lower demand). In what follows, we define the demand model based on uniform valuations and two sellers and then provide comparisons to illustrate the insights we have described in \S\ref{sec: TwoSellers}.

\textbf{Uniform Valuations.} %\label{sec: uniform_model_two_sellers}
We consider a simple extension to the model defined in Example~\ref{example:UniformDemand}. Suppose that given $a,b \in [0,1]$, each arriving buyer has valuations for the products sold by sellers 1 and 2 that are independently distributed uniformly over a unit square: $v^1 \sim U[a-1,a]$ and $v^2 \sim U[b-1,b]$. Then, given prices $p_1$ and $p_2$, a utility-maximizing buyer derives a utility of $\max\{v_1-p_1,v_2-p_2,0\}$ by purchasing from the option that results in higher utility given her valuations, and where we normalize the outside option of refraining to purchase to yield a payoff of zero. Then, we can compute the demand functions for each seller in terms of sellers' prices  $p_1\in[0,a],p_2\in[0,b]$ as follows:
\begin{equation*}
	\bar{\rho}_0^1(p_1) = a-p_1, \qquad \bar{\rho}_0^2(p_2) = b-p_2,
\end{equation*}
\begin{equation} \label{eq: rho_c_uniform_two_sellers}
	\begin{split}
			\bar{\rho}_c^1(p_1,p_2) &= 	\begin{cases}
					(a-p_1)(1-b+p_2)+\frac{1}{2}(a-p_1)^2, &\text{ if $p_1-p_2 \geq a-b$,} \\
					a-p_1 -\frac{1}{2}(b-p_2)^2, &\text{ if $p_1-p_2 < a-b$.}
				\end{cases}\\
			\bar{\rho}_c^2(p_1,p_2) &= 	\begin{cases}
					b-p_2 -\frac{1}{2}(a-p_1)^2, &\text{ if $p_1-p_2 \geq a-b$,} \\
					(b-p_2)(1-a+p_1)+\frac{1}{2}(b-p_2)^2, &\text{ if $p_1-p_2 < a-b$.}
				\end{cases}
	\end{split}
\end{equation}

In addition, the expected welfare functions are given by
\begin{equation*}
	\bar{W}_1(p_1) = \frac{1}{2}(a-p_1)^2, \qquad \bar{W}_2(p_2) = \frac{1}{2}(b-p_2)^2,
\end{equation*}
\begin{equation*}
	\bar{W}_c(p_1,p_2) = 	\begin{cases}
		\frac{1}{6}(a-p_1)^3+\frac{1}{2}(a-p_1)^2(1-b+p_2)+\frac{1}{2}(b-p_2)^2, &\text{ if $p_1-p_2 \geq a-b$,} \\
		\frac{1}{6}(b-p_2)^3+\frac{1}{2}(b-p_2)^2(1-a+p_1)+\frac{1}{2}(a-p_1)^2, &\text{ if $p_1-p_2 < a-b$.}
	\end{cases}
\end{equation*}

With some algebra, one can show that the best response for seller 2 when taking only the demand of patient buyers (as defined in \eqref{eq: best_response_two_sellers}) is
\begin{equation*}
	p_2^{BR}(p_1) = 	\begin{cases}
		\frac{1}{2}\left(b-\frac{1}{2}(a-p_1)^2\right), &\text{ if $a-p_1 \leq 2-\sqrt{4-2b}$,} \\
		\frac{1}{3} \left(2\left(1-a+b+p_1\right)  -\sqrt{b^2 + 2b(1-a+p_1) + 4(1-a+p_1)^2}\right), &\text{ if $a-p_1 > 2-\sqrt{4-2b}$.}
	\end{cases}
\end{equation*}
The corresponding best response function for seller 1 is obtained by flipping $a$ with $b$ and $p_1$ with $p_2$ in this expression.

%\subsubsection{Numerical Comparison}%\label{sec: numerical_comps_two_sellers}

% \textbf{[Ilan: now need to explain in detail what we do. We take each of these platform policies and then compute the equilibrium based on the analysis provided in... etc. over a broad collection of parametric values.]
%\blue{ [YONI -- Added more detail below, as well as explanation of how problem becomes static in \S\ref{sec: policies_definitions_two_sellers}. Let me know if there's something specific you think it's missing.]}
%}

\textbf{Setup of Numerical Comparison.} Considering the demand model with uniform valuations described above, we numerically compare the welfare achieved by the policies defined in \S\ref{sec: policies_definitions_two_sellers} over a broad range of parameter values. % and derive a series of insights from these comparisons.
For each combination of parameters, we compute the values of $W^{\text{truth}}$, $W^{C1}$, $W^{C2}$ and $W^{C,\text{both}}$ (as defined in \S\ref{sec: two_sellers_truthful}--\ref{sec: two_sellers_confound_both}) by
numerically solving problems~\eqref{eq: max_welfare_cond_state} and
\eqref{eq: max_welfare_confound_seller_1}, as well as problem \eqref{eq: max_welfare_confound_both} when it is feasible.\footnote{We used the Sequential Quadratic Programming (SQP) algorithm implemented in MATLAB R2021a for numerical optimization. We declared problem~\eqref{eq: max_welfare_confound_both} to be infeasible if this algorithm could not find a point that satisfies the problem's constraints, with a tolerance parameter of $10^{-10}$.} As discussed in \S\ref{sec: policies_definitions_two_sellers}, solving the corresponding problem for each class of policies yields an optimal static single-price promotion policy for the platform, as well as the associated myopic Nash equilibrium prices that  sellers are incentivized to set at every period. The resulting value of each problem is the corresponding expected welfare associated with the truthful, ``confound one'' and ``confound both'' promotion policies, together with a signaling mechanism that is uninformative for the seller/s that are chosen to be confounded. At the end of this section, we consider the additional gains of optimal signaling. In our analysis we considered the grid of parameter values consisting of $\mu \in \{0.01,0.02,\dots,0.99\}$, $a,b \in \{0.1,0.2,\dots,1\}$, and the pairs $(\phi_H,\phi_L) \in \Phi =  \{0.01,0.05,0.1,0.2,\dots,0.9,0.95,0.99\}$ with $\phi_H>\phi_L$, and pairs of the form $(\phi_H,\phi_H-\epsilon)$ with $\epsilon = 0.005$ and $\phi_H\in \Phi$.
%For each of these combinations, we compare the resulting values of $W^{\text{truth}}$, $W^{C1}$, $W^{C2}$ and $W^{C,\text{both}}$ and synthesize the resulting insights and observations next.

\textbf{Results and Discussion.} We find that confounding both sellers is either infeasible or typically dominated by truthful revelation (i.e., for the vast majority of parameter combinations). Concretely, we observe that the optimal promotion designed to confound both sellers (defined in \eqref{eq: max_welfare_confound_both}) is infeasible for 44.73\% of the parameter combinations we tested. In 55.02\% of cases, confounding both sellers is feasible but outperformed by truthful revelation (i.e., $W^{\text{truth}}(\mu)\geq  W^{C,\text{both}}(\mu)$), with the opposite scenario taking place only on 0.25\% of the cases. This aligns with the intuition provided by Proposition~\ref{prop: impossbility_confounding_two_sellers}, namely that the constraints required to confound both sellers are quite restrictive in the promotion design optimization problem, usually rendering the value of this problem to be relatively low when even feasible. We illustrate these observations in Figure~\ref{fig:confoundbothvstruthful}.

\begin{figure}
	\centering
	\includegraphics[width=\linewidth]{Plots/ConfoundBothvsTruthful}
	\caption{Comparison of welfare achievable by truthful revelation and ``confound both sellers'' policies. The left panel shows a (typical) case where  confounding both sellers is either unfeasible or outperformed by truthful revelation depending on the parameters of the model. The right panel displays the relationship between the welfare achievable by these policies and the prior belief $\mu$ for a fixed combination of parameters.}
	\label{fig:confoundbothvstruthful}
\end{figure}

In addition, we observe that although confounding both sellers is either infeasible or typically suboptimal, confounding only one of the sellers often outperforms truthful revelation. Specifically, in 45.46\% of the instances we tested the platform can achieve higher welfare under a policy that confounds only one of the sellers relative to the welfare achieved by truthful revelation. Interestingly, when this is the case, the choice of which seller to confound is relevant %in the sense that choosing the ``wrong'' seller to confound may result in lower welfare than under truthful revelation. Indeed, Interestingly, we observe that
since only in 3.3\% of the instances analyzed we have that both $W^{C1}(\mu)> W^{\text{truth}}(\mu)$ and  $W^{C2}(\mu)> W^{\text{truth}}(\mu)$, and in the rest of the cases (42.16\%). %\footnote{Furthermore, considering only the non-symmetric instances we tested (i.e., those with $a\neq b$) we observe that confounding any seller outperforms truthful revelation only in 1.53\% of cases.}
%Thus, for the vast majority of cases (96.7\% of the parameter combinations considered), either truthful revelation outperforms confounding one seller only or, if that is not the case,
selecting the appropriate seller to confound is essential to outperform truthful revelation. These results %two cases
are illustrated in Figure~\ref{fig:confoundonevstruthful}, that displays the relationship between the prior belief $\mu$ and the maximum achievable welfare under the policies we described before for select parameters of the model.

\begin{figure}
	\centering
	\includegraphics[width=\linewidth]{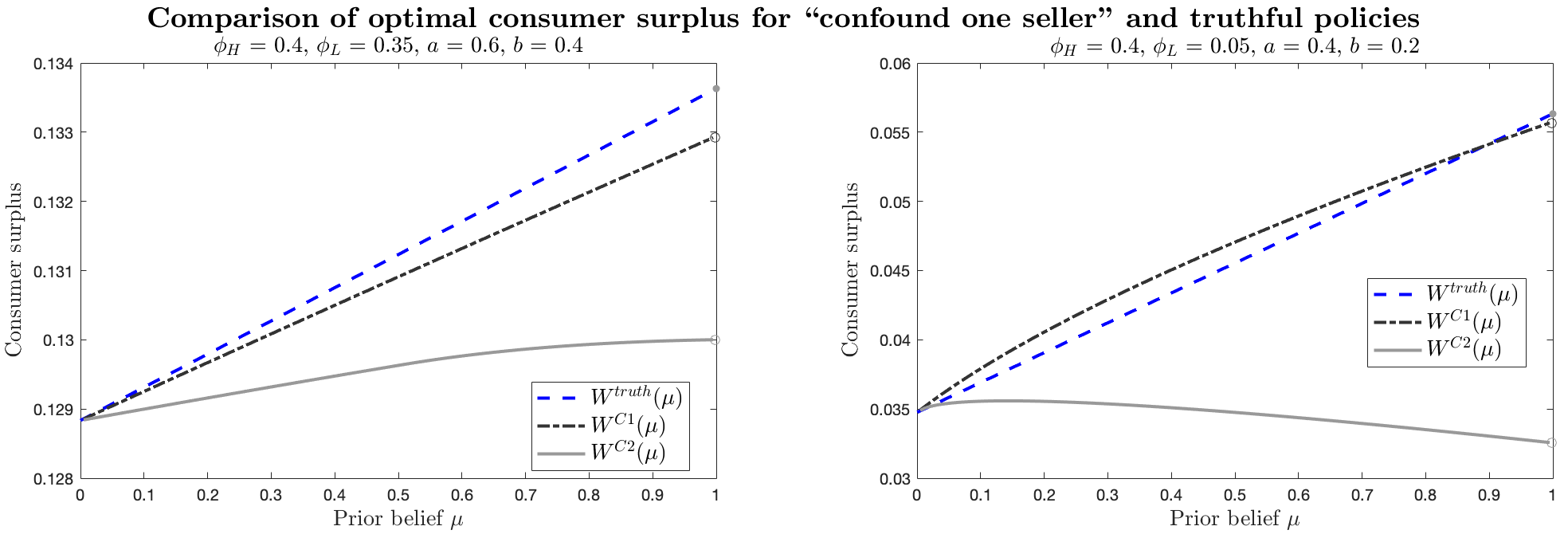}
	\caption{Comparison of welfare achievable by truthful revelation and ``confound only one'' policies as a function of $\mu$. While there are instances where truthful revelation outpeforms confounding either seller (left panel), it is often the case that confounding one of the sellers achieves higher welfare than truthful revelation (right panel). Typically, when confounding one of the sellers is valuable for the platform, confounding the other seller underperforms truthful revelation.}
	\label{fig:confoundonevstruthful}
\end{figure}

Motivated by the previous observation, we look to analyze which seller should the platform select to confound, if any. We find that when confounding one of the sellers outperforms truthful revelation, confounding the seller with ``high'' demand and revealing the state truthfully to the seller with ``low'' demand achieves the highest welfare. More precisely, if the parameters of the demand model defined in~\eqref{eq: rho_c_uniform_two_sellers} satisfy $a>b$, the platform may restrict its attention to consider whether to confound seller 1 or not, while truthfully revealing the value of $\phi$ to seller 2 without loss of optimality, while the opposite holds if $b>a$. This observation holds for all the parameter combinations we analyzed. We illustrate this finding in Figure~\ref{fig:whotoconfound}, which depicts the optimal choice of seller to confound as a function of the prior $\mu$ and seller 1's demand parameter $a$. Indeed, we observe that the regions that define the optimal choice of seller to confound are separated by the horizontal line defined by $a=b$.

The intuition driving this observation is as follows. In order for the platform to confound one of the sellers, it must set an incentive for the other (informed) seller to post the same price independently of the true value of $\phi$, which results in the informed seller offering a relatively high price (in order to maintain incentive compatibility independently of the state). This high price is offset by the fact that the platform is able to induce the confounded seller to offer a relatively low price, by leveraging the uncertainty on $\phi$ that remains for this seller. When $a>b$, seller 1 transactions have a larger impact on the consumer welfare function relative to seller 2, and therefore the platform aims to having seller 1 offer a relatively low price to maximize welfare. This is achieved by confounding seller 1, at the expense of having a higher price posted by seller 2 (which is required for confounding seller 1 in the presence of his competitor's prices).

\begin{figure}[!hbtp]
	\centering
	\includegraphics[width=\linewidth]{Plots/WhoToConfound_ab}
	\caption{Split of the parameter space according to the optimal seller to confound, if any, as a function of $\mu$ and $a$ with the rest of parameters fixed. When $a>b$, the platform's optimal policy is either to only confound seller 1 or to truthfully reveal the state $\phi$ to both sellers, while the opposite holds when $b>a$. This indicates that in order to maximize consumer welfare, the platform should consider only whether to confound the seller with relatively high demand while revealing the true value of $\phi$ to the seller with lower demand, or to disclose the value of $\phi$ to both sellers.}
	\label{fig:whotoconfound}
\end{figure}

Notice that so far we have analyzed the value of confounding promotion policies without considering the additional effect of optimally designing the platform's signaling mechanism. Based on our previous observations, if the platform considers promotion policies that confound one of the sellers only, it follows from Proposition~\ref{prop: signaling_confound_one} and the discussion in \S\ref{sec: two_sellers_confound_one}
%the same reasoning as in the proof of  Theorem~\ref{thm:LongRunAverageOptimalConsumerSurplus}
that by appropriately choosing which of the two sellers to confound and carefully designing the signaling mechanism for this seller (while employing a truthful signaling mechanism for the other seller), the platform can achieve an expected welfare of $\max \{co(W^{C1})(\mu),co(W^{C2})(\mu)\}$. %\textbf{[Ilan: how does it follow? would the seller that is not confounded price statically? if yes, we have to emphasize it, that the same Nash equilibrium is repeating itself in every period. we need a CLEAR MAPPING to the original setting. what would be the static background price? refer to equation number in the model and plug in the Nash price.] \blue{[YONI - Added proposition  and more discussion in the previous section to make this clear. There is no mapping to the original problem, but the signaling mechanism can be constructed optimally here to reach the concavification. Let me know if it's clear now. Thanks.]}
%}
Therefore, in line with our analysis from \S\ref{sec: NumericalEvals}, we define the relative gain in consumer welfare associated with confounding one of the sellers as follows:
$$
RG(\mu):=\max \left \lbrace \frac{co(W^{C1})(\mu)-W^{truth}(\mu)}{W^{truth}(\mu)} ,\frac{co(W^{C2})(\mu)-W^{truth}(\mu)}{W^{truth}(\mu)}\right \rbrace.
$$

Recall that by definition, $W^{C1}(\mu)=W^{C2}(\mu)=W^{truth}(\mu)$ for $\mu\in \{0,1\}$. Since $W^{truth}(\mu)$ is linear in $\mu$ and the concavification of the optimal welfare functions is itself a concave function, we have that for all $\mu \in [0,1]$,
\begin{equation*}
	co(W^{C1})(\mu) \geq \mu W^{C1}(1) + (1-\mu)W^{C1}(0) = W^{truth}(\mu),
\end{equation*}
implying that the relative gain $RG(\mu)$ is always non-negative. Furthermore, one has that $RG(\mu)~=~0$ if and only if confounding any seller is dominated by truthful revelation (i.e., if $co(W^{C1})(\mu) = co(W^{C2})(\mu) =W^{truth}(\mu)$, implying that revealing the value of $\phi$ to both sellers dominates confounding either of them).

Figure~\ref{fig:relativegains_twosellers} depicts the relationship between the relative gain associated with confounding one of the sellers and the parameters of the model. We observe that the relative gain is largest when the gap between $\phi_H$ and $\phi_L$ is relatively large, and when the prior belief $\mu$ is below $1/2$ but not exceedingly low. Intuitively, the difference $\phi_H - \phi_L$ is a measure of the uncertainty regarding the true value of $\phi$, and sellers are further incentivized to offer lower prices in order to be promoted when the mass of impatient consumers is larger. As the gap $\phi_H - \phi_L$ increases, confounding allows the platform to incentivize sellers to offer prices that are closer to those they would offer if the proportion of impatient buyers was high ($\phi = \phi_H$); and the payoff of establishing such incentives is higher when neither of the possible states of the world are very unlikely to occur so that sellers take both possibilities into account in their pricing decisions.

\begin{figure}
	\centering
	\includegraphics[width=\linewidth]{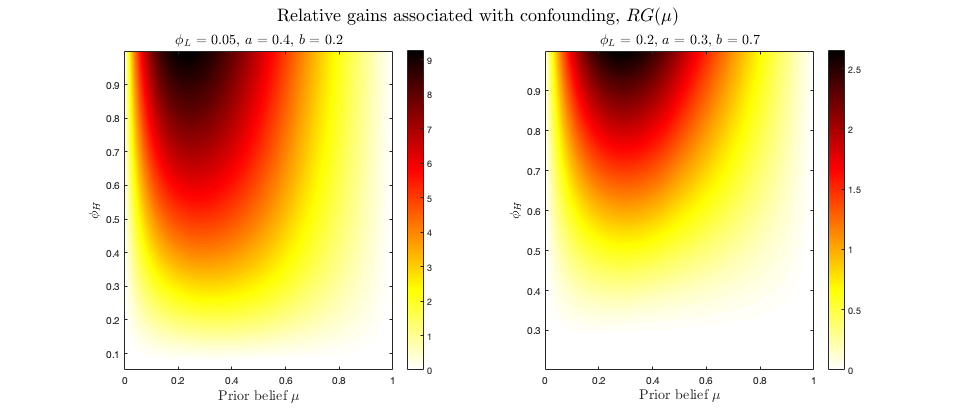}
	\caption{Relationship between the relative gain associated with confounding, $RG(\mu)$ (in percentage points) and the parameters of the model (keeping $\phi_L$, $a$ and $b$ fixed).}
	\label{fig:relativegains_twosellers}
\end{figure}

The previous discussion illustrates that in the presence of two competing sellers, the platform can often benefit from jointly designing signaling mechanisms and a promotion policy that confounds one of the sellers. Once the platform selects which seller to confound and the target price for the informed seller (who observes the value of $\phi$), the optimization problem for deriving the optimal confounding policy reduces to one where a single seller is confounded, with the additional constraint that the promotion probabilities must incentivize the informed seller to set the desired target price. %This effectively keeps the informed seller as fixed and resembles the single-seller model where we focus on a single seller, which we have studied in detail in the paper.

%\textbf{[Ilan: this statement is much weaker and less confident than what is stated before. As I wrote above, we need a clear mapping to the original model.]}

%\textbf{\blue{[YONI -- There's no mapping, we can't plug in seller 2's price into the demand and forget about it becuase the promotion policy needs to keep incentive compatibility for both sellers. I think the statement in the last paragraph is as strong as we can go to say this resembles the original model in terms of focusing in one seller. If you think this is not worth keeping then let's remove it.]}}

\section{Additional Extensions}\label{app:AdditionalExtensions}

In this appendix we present two %three
extensions to our baseline model.  %In \S\ref{app: NoInfoDesign} we consider a setting where the platform has no ability to send an initial informative signal to the seller, which results in the findings summarized in \S\ref{sec: no_signaling}.
In \S\ref{sec:ObservedPromotions} we consider a setting where, in addition to observing the sales realization in every period, the seller also observes the platform's promotion decision. The main insights we derive from this extension are summarized in \S\ref{sec:Conclusion}. Then, in \S\ref{app:GeneralizedDemand} we extend the baseline model to a more general demand structure, in which both patient and impatient consumers may be affected by the platform's promotion decisions, and show that our main results continue to hold.

%\input{app8_BenchmarkNoInfoDesign}

% MOVED EXTENSION WITH OBSERVED PROMOTION TO THE END
\subsection{Seller Observes Promotions}\label{sec:ObservedPromotions}

We now analyze a setting where the seller observes the promotion decision at each period. We show that, as in our baseline formulation, the achievable long-run average consumer surplus when the seller is myopic is determined by the optimal confounding payoff. %the general idea of the model holds true in this case - that the notion of will determine the platform's, but confounding in this case is much more restrictive.
Formally, we adjust the model of \S\ref{sec:Model} by denoting the information available to the seller at the \emph{beginning} of period $t$ as:\vspace{-2mm}
\begin{align*}
    h_1^a &= \left\langle s,\bA,\sigma \right\rangle, \quad \text{ and } \quad	h_t^a = \left\langle s,\bA,\sigma, \left(p_{t'},a_{t'}, y_{t'}\right)_{t'=1}^{t-1}\right\rangle, \text{ for } t>1.
\end{align*}
We denote by $\{\mathcal{H}_t^a = \sigma(h_t), t=1,...,T\}$ the filtration associated with the process $\{h_t^a\}_{t=1}^T$, and we denote the set of possible histories at the beginning of period $t$ as $\bar{H}_t^a = \{L,H\} \times \left(\mathcal{P} \times \{0,1\}^2\right)^{t-1}$. The seller's belief system, $\boldsymbol{\mu}$ is defined in terms of these histories. The payoffs and action spaces remain the same so the seller's and platform's myopic policies remain the same (with respect to the seller's beliefs). However, with new information revealed, the space of confounding promotion policies changes. Confounding promotion policies are defined in the same way (though in terms of the adjusted histories and belief structure).
\begin{definition}[Confounding Promotion Policies]\label{def:confounding:ObservedPromotion}
    Suppose the seller uses the myopic pricing policy, $\boldsymbol{\pi^*}$. For each belief $\mu \in [0,1]$, define the set of confounding promotion policies $\mathcal{A}^{C,a}(\mu) \subset \mathcal{A}^M $ as those which prevent the seller's belief from updating throughout periods $t=1,\ldots,T$. That is, $\bA \in \mathcal{A}^{C,a}(\mu)$, if and only if for all~$t=1,...,T$, one has $\Prob(\mu_{t+1}=\mu|\mu_t=\mu,\boldsymbol{\pi^*},\bA) = 1.$
\end{definition}
With access to promotion decisions, the seller can  learn the true value of $\phi$ based on sales observations and/or promotion decisions, so confounding the seller requires the platform to use policies that satisfy more stringent conditions.  To see this, suppose that the seller sets a price of $p$ at history $\bar{h}$ and consider the seller's belief update when it observes that he was promoted and made a sale. Proceeding as in \eqref{eq:EqualityOfSalesProb} in \S\ref{sec:ConfoundingPromotionPolicies}, we have that for the promotion policy to be confounding it must hold that:
\begin{equation}\label{eq: confounding_with_promo_observations_1}
	\alpha(p,\phi_H,\bar{h}) \left( \phi_H \bar{\rho}_0(p)+(1-\phi_H)\bar{\rho}_c(p)\right) = \alpha(p,\phi_L,\bar{h}) \left( \phi_L \bar{\rho}_0(p)+(1-\phi_L)\bar{\rho}_c(p)\right).
\end{equation}
On the other hand, if the seller is not promoted but still makes a sale, the beliefs will stay constant after updating only if
\begin{equation}\label{eq: confounding_with_promo_observations_2}
	(1-\phi_H)\bar{\rho}_c(p)(1-\alpha(p,\phi_H,\bar{h})) = (1-\phi_L)\bar{\rho}_c(p)(1-\alpha(p,\phi_L,\bar{h})).
\end{equation}
These two conditions make confounding quite restrictive in comparison to the setting where promotions are not observed. To illustrate this, suppose that both \eqref{eq: confounding_with_promo_observations_1} and \eqref{eq: confounding_with_promo_observations_2} hold. Note that condition \eqref{eq: confounding_with_promo_observations_2} holds if either $\bar{\rho}_c(p) = 0$ or $\alpha(p,\phi_H,\bar{h}) = \alpha(p,\phi_L,\bar{h}) = 1$, which gives us two separate cases to consider.

First, if $\bar{\rho}_c(p) = 0$, condition \eqref{eq: confounding_with_promo_observations_1} reduces to $\alpha(p,\phi_H,\bar{h}) \phi_H \bar{\rho}_0(p) = \alpha(p,\phi_L,\bar{h})  \phi_L \bar{\rho}_0(p)$. Since for the seller to be incentivized to set price $p$, we must have that $\bar{\rho}_0(p)>0$, we then have that the policy must satisfy:
\begin{equation*}
	\alpha(p,\phi_H,\bar{h}) \phi_H  = \alpha(p,\phi_L,\bar{h})  \phi_L  ,~~ \bar{\rho}_0(p)>0, ~~\text{ and }  \bar\rho_c(p)=0.
\end{equation*}
In particular, the second condition requires the seller to set a price that results in no demand from patient buyers, which is unlikely to be incentive compatible in many settings.

For the second case, note that if $\alpha(p,\phi_H,\bar{h}) = \alpha(p,\phi_L,\bar{h}) = 1$, conditions \eqref{eq: confounding_with_promo_observations_1} and \eqref{eq: confounding_with_promo_observations_2} reduce to:
\begin{equation*}
	\begin{split}
		\alpha(p,\phi_H,\bar{h}) = \alpha(p,\phi_L,\bar{h}) = 1, ~~\text{ and }  \bar\rho_0(p)=\bar\rho_c(p).   
	\end{split}
\end{equation*}
%First, confounding the seller at price $p$ requires $\alpha(p,\phi_H,\mu) = \alpha(p,\phi_L,\mu)$. Otherwise, if the promotion decision depends on $\phi$, (i.e., $\alpha(p,\phi_H,\mu) \ne \alpha(p,\phi_L,\mu)$), then the promotion decision itself reveals information about $\phi$. Second, the probability of a sale at price $p$ must be independent of the seller's patience type. If the seller is promoted at price $p$, the policy is confounding only if $\bar\rho_0(p) = \bar\rho_c(p)$, and if the seller is not promoted at price $p$, then impatient customers do not purchase, so the price must have $\bar\rho_c(p)=0$ as well. Such a price will not arise from a seller that prices myopically, however, so confounding at $\mu \in (0,1)$ requires:
%\begin{equation*}
%    \begin{split}
%        \alpha(p,\phi_H,\mu) = \alpha(p,\phi_L,\mu) = 1, ~~\text{ and }  \bar\rho_0(p)=\bar\rho_c(p).   
%    \end{split}
%\end{equation*}
Again, these conditions impose considerable restrictions on the promotion design problem as, in particular, they impose restrictions on the demand model. In particular, for many demand models satisfying Assumption~\ref{assump:Demand} (including Example \ref{example:UniformDemand}), confounding is not possible. That is,~$A^{C,a}(\mu) = \emptyset.$

To formally show that revealing promotion decisions and employing confounding policies results in a loss of consumer surplus, let us define as in the baseline analysis, the optimal confounding payoff~$W^{C,a}(\mu)$ under the alternate histories $h_t^a$, i.e.,
\begin{equation}\label{PlatformProblem:SinglePeriodConfounding:ObservedPromotion}
    \begin{split}
        W^{C,a}(\mu):= \max_{\boldsymbol{\alpha} \in \mathcal{A}^{C,a}(\mu)}& ~~
        \frac{1}{T}\E\left(\sum_{t=1}^T  W(p_t,a_t,\psi_t)\middle| \boldsymbol{\alpha},\boldsymbol{\pi^*},\mu \right),
    \end{split}
\end{equation}
where we set $W^{C,a}(\mu) = \bar{W}_{\text{out}}$ if $\mathcal{A}^{C,a}(\mu) =\emptyset$. In particular, we can establish that this payoff is dominated by the one achievable by confounding policies in the setting where promotion decisions are not observed.

%Since, $\mathcal{A}^{C,a}(\mu) \subset \mathcal{A}^C(\mu)$, it follows that the optimal consumer surplus generated by promotion policies is smaller in the more refined information setting. 

\begin{proposition}[Access to Promotion Decisions Decreases Consumer Surplus] \label{corr:DescreasesConsSurpl}
    For all $\mu \in [0,1]$, 
    \begin{equation}
        W^C(\mu) \geq W^{C,a}(\mu).
    \end{equation}
\end{proposition}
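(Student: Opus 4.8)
The plan is to reduce the claim to the two facts flagged in the paragraph preceding the statement: (i) the feasible set shrinks, $\mathcal{A}^{C,a}(\mu) \subseteq \mathcal{A}^{C}(\mu)$, and (ii) on that shrunken set the platform's objective is the same object in the two information structures. Given (i) and (ii), every policy that attains $W^{C,a}(\mu)$ in program~\eqref{PlatformProblem:SinglePeriodConfounding:ObservedPromotion} is also feasible for program~\eqref{PlatformProblem:SinglePeriodConfounding} with an identical value, so the maximum over the smaller set cannot exceed the maximum over the larger one, giving $W^{C,a}(\mu)\le W^{C}(\mu)$.

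First I would prove the inclusion $\mathcal{A}^{C,a}(\mu)\subseteq\mathcal{A}^{C}(\mu)$. Fix $\boldsymbol{\alpha}\in\mathcal{A}^{C,a}(\mu)$. Since $\mathcal{A}^{C,a}(\mu)\subseteq\mathcal{A}^M$, the policy depends on the history only through the seller's current belief. Under $\boldsymbol{\alpha}$ and the myopic rule $\boldsymbol{\pi^*}$ in the observed-promotions model, the belief is frozen at $\mu$ in every period (Definition~\ref{def:confounding:ObservedPromotion}), so each period's price is the myopic best response to the pair $(\mu,\alpha_t)$; crucially this is the \emph{same} function of $\mu$ in both models, because the period-$t$ promotion realization $a_t$ is drawn only after $p_t$ is posted, so observing past promotion realizations influences the seller's current price only through their effect on his current belief. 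Hence, run $\boldsymbol{\alpha}$ in the nominal model starting from belief $\mu$: the seller sees the coarser signal $(p_{t'},y_{t'})$ in place of $(p_{t'},a_{t'},y_{t'})$, so his filtration $\{\mathcal H_{t}\}$ is a coarsening of $\{\mathcal H_{t}^{a}\}$ along the realized price path, and by the tower property $\E\bigl[\mathbbm{1}\{\phi=\phi_H\}\mid\mathcal H_{t+1}\bigr]=\E\bigl[\E[\mathbbm{1}\{\phi=\phi_H\}\mid\mathcal H_{t+1}^{a}]\mid\mathcal H_{t+1}\bigr]=\E[\mu\mid\mathcal H_{t+1}]=\mu$. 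Thus $\Prob(\mu_{t+1}=\mu\mid\mu_t=\mu,\boldsymbol{\pi^*},\boldsymbol{\alpha})=1$ for all $t$, i.e.\ $\boldsymbol{\alpha}\in\mathcal{A}^{C}(\mu)$ (Definition~\ref{def:confounding}).

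Next I would check equality of objective values. The argument above shows that for $\boldsymbol{\alpha}\in\mathcal{A}^{C,a}(\mu)$ the joint law of $(p_t,a_t)_{t=1}^{T}$ under $(\boldsymbol{\alpha},\boldsymbol{\pi^*})$ is identical in the two models (same frozen belief $\Rightarrow$ same prices $\Rightarrow$ same promotion probabilities), so $\frac1T\E\bigl[\sum_{t=1}^{T}W(p_t,a_t)\bigr]$ coincides. Therefore any maximizer of \eqref{PlatformProblem:SinglePeriodConfounding:ObservedPromotion} lies in $\mathcal{A}^{C}(\mu)$ and achieves the same value in \eqref{PlatformProblem:SinglePeriodConfounding}, so $W^{C,a}(\mu)\le W^{C}(\mu)$. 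It remains to handle the degenerate case $\mathcal{A}^{C,a}(\mu)=\emptyset$, where by convention $W^{C,a}(\mu)=0$: since $\mathcal{A}^{C}(\mu)\ne\emptyset$ (Example~\ref{example:ConfoundingPromotion:General}) and consumer surplus is nonnegative (it is measured relative to the outside option of not purchasing), $W^{C}(\mu)\ge 0=W^{C,a}(\mu)$.

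The main obstacle is the first step: one must argue carefully that ``no belief updating under the richer observation'' forces ``no belief updating under the coarser observation,'' which requires first pinning down the policy as belief-measurable, then establishing that the induced price path — and hence the very filtrations being compared — are the same object in both models, and only then invoking the tower property. Once that is in place, the objective-equivalence and the empty-set case are routine bookkeeping.
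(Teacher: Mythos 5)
Your proposal is correct and follows essentially the paper's own (very short) argument: $W^{C,a}$ and $W^{C}$ maximize the same consumer-surplus objective, $\mathcal{A}^{C,a}(\mu)\subseteq\mathcal{A}^{C}(\mu)$ implies the maximum over the smaller set is no larger, and the case $\mathcal{A}^{C,a}(\mu)=\emptyset$ is covered by the convention $W^{C,a}(\mu)=0$ together with nonnegativity of consumer surplus. The only difference is that the paper simply asserts the inclusion from its explicit characterization of confounding with observed promotions (equal promotion probabilities and $\bar\rho_0(p)=\bar\rho_c(p)$ at the posted price, which immediately make the unconditional sale probability independent of $\phi$), whereas you justify it more abstractly by coupling the price paths in the two information structures and applying the tower property to the frozen posterior — a sound, arguably more rigorous, route to the same step.
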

\begin{proof}
	From the two cases in the preceding discussion, note that for any $\mu \in[0,1]$, $\mathcal{A}^{C,a}(\mu) \subset \mathcal{A}^{C}(\mu)$, i.e., any confounding policy with observed promotions is also confounding when promotions are unobservable.
\end{proof}
% The result is immediate as $\mathcal{A}^{C,a}(\mu) \subset \mathcal{A}^{C}(\mu)$ is strictly smaller with observed promotions. 
In many cases the inequality in Proposition \ref{corr:DescreasesConsSurpl} is strict. For example, since $\mathcal{A}^{C,a}(\mu) = \emptyset $ for all $\mu \in (0,1)$ in Example \ref{example:UniformDemand}, $W^{C,a}(\mu) =\bar{W}_{\text{out}}$ for all $\mu \in (0,1)$. Finally, we can establish that in the setting with observable promotions, $co(W^C)(\mu)$  remains an upper bound on the maximum achievable long-run consumer welfare when the seller employs a myopic pricing policy.

\begin{proposition} \label{prop: upper_bound_observed_promotions}
	If the seller observes promotion decisions, then for all $\mu_0 \in [0,1]$,
	$$\lim_{T\rightarrow \infty}\sup_{\substack{\alpha \in\mathcal{A},\\\sigma \in \Sigma}} ~~\frac{1}{T}W^{\boldsymbol{\alpha},\sigma,\boldsymbol{\pi^*}}_T(\mu_0) \leq co(W^C)(\mu_0),$$
	where $W^C$ is given by \eqref{PlatformProblem:SinglePeriodConfounding} in the baseline setting.
\end{proposition}

Proposition~\ref{prop: upper_bound_observed_promotions} follows from the first part of the proof of Theorem~\ref{thm:LongRunAverageOptimalConsumerSurplus}, which can be directly adapted to this setting by modifying Lemmas~\ref{lemma:delta_epsilonRelationCoU} and~\ref{lemma:generalizedHarrison} (see Appendix~\ref{Sec:Proof:TheoremLongRun}) to consider that the seller has additional sources of information in this setting, and therefore his beliefs also converge  to the truth expontentially fast in the number of periods where some learning occurs.

Through Propositions~\ref{corr:DescreasesConsSurpl} and~\ref{prop: upper_bound_observed_promotions}, we have established that providing access to promotion decisions reduces the long-run average consumer surplus generated by the platform's policy when the seller makes myopic pricing decisions. Thus, one concrete policy recommendation for a platform seeking to maximize consumer surplus is to withhold access to individual promotion decisions.

\subsection{A Generalization of the Demand Model}\label{app:GeneralizedDemand}

In what follows, we provide a more general version of the consumer demand model presented in \S\ref{sec:Model}. In our baseline model, we assumed that patient consumers consider all available products while impatient consumers consider the promoted product only. We now relax this assumption by allowing all types of consumers to consider all products; however, everything else equal, we assume that impatient consumers are more likely %(less likely) 
to buy from a promoted %(non-promoted) 
seller than patient ones. Specifically, the present extension generalizes the baseline model by (i) allowing impatient consumers to buy from non-promoted sellers with nonzero probability, and (ii) allowing the demand of patient consumers to also depend on the platform's promotion decision. We present the extension in full detail next and establish that our main results continue to hold in this setting.

As in the baseline formulation, we assume that in every period a new consumer arrives and draws her patience type $\psi \in \{I,P\}$ independently. We focus on a single focal seller that in every period chooses a price $p$ from a compact interval $\mathcal{P}$  of the real line. As in \S\ref{sec:Model}, the probability of purchasing from the seller is determined by a demand function $D(\cdot)$ which depends on the seller's price $p\in\cal P$, the platform's promotion decision $a\in \{0,1\}$, the consumer's type $\psi \in \{I,P\}$ and the vector of competitors' prices $\vec{q}$. Thus, we can write the demand function as follows:
\begin{equation}\label{eq: GeneralDemand}
	D(p,a,\psi,\vec q) := D_{a,\psi,\vec q}(p) = \Prob(y = 1|p,a,\psi,\vec q).
\end{equation}
%\begin{equation}\label{eq:ConsumerDemandGeneral}
%	D(p,a,\psi) = \Prob(y = 1|p,a,\psi) =
%	\begin{cases}
	%		D_{0,P}(p), &\text{ if the consumer is patient and $a=0$} \\
	%		D_{1,P}(p), &\text{ if the consumer is patient and $a=1$} \\
	%		D_{0,I}(p), &\text{ if the consumer is impatient} \text{ and } a = 0 \\
	%		D_{1,I}(p), &\text{ if the consumer is impatient} \text{ and } a =1.
	%	\end{cases}
%\end{equation}
Throughout our analysis, we focus on our focal seller and keep the competitors'  prices $\vec q$ fixed. To simplify exposition we therefore remove the vector $\vec q$ from the notation unless necessary, and simply denote demand by $D_{a,\psi}(p)$; thus, depending on the consumer's type and the platform's promotion decision we have four scenarios that determine the demand as a function of the product's price. 

\begin{table}[!htbp]
	\centering
	%\begin{table}%[]
	\begin{tabular}{|c|c|c|c|}
		\hline
		\multicolumn{2}{|c|}{\multirow{2}{*}{$D(p,a,\psi,\vec q)$}} & \multicolumn{2}{c|}{Customer type} \\ \cline{3-4}
		\multicolumn{2}{|c|}{}                            & Patient         & Impatient        \\ \hline
		\multirow{2}{*}{Promotion}     & $a=0$     & $D_{0,P}(p)$              & $D_{0,I}(p)$               \\ \cline{2-4}
		& $a=1$     & $D_{1,P}(p)$              & $D_{1,I}(p)$               \\ \hline
	\end{tabular}
	%\end{table}
	\caption{Demand as a function of the customer's type and the platform's promotion decision.}
	\label{tab:ConsDemandGeneral}
\end{table}

In line with the baseline formulation, we make the following assumption on the demand function.
\begin{assumption}[Generalized demand]\label{assump:GeneralDemand}
	For any $a\in\{0,1\}$, $\psi \in \{I,P\}$, %and competitor prices $\vec q$,
	the demand functions $D_{a,\psi}(p)$ are decreasing and Lipschitz continuous in $p$; the expected revenue functions $pD_{a,\psi}(p)$ are strictly concave in $p$. In addition, for all $p\in \mathcal{P}$, it holds that
	\begin{equation*}
		D_{1,I}(p)\geq D_{1,P}(p) \geq D_{0,P}(p) \geq D_{0,I}(p).
	\end{equation*}
\end{assumption}
The order of the demand functions given in Assumption~\ref{assump:GeneralDemand} represents a setting where, as in model of \S\ref{sec:Model}, impatient consumers are relatively more likely to buy from a promoted seller than patient consumers. Moreover, note that the baseline formulation presented in \S\ref{sec:Model} is captured here by assuming that for all $p\in \mathcal{P}$, $D_{1,P}(p)=D_{0,P}(p)$ and $D_{0,I}(p) = 0$. Thus, the present extension generalizes the baseline model by (i) allowing impatient consumers to buy from non-promoted sellers with nonzero probability, and (ii) allowing the demand of patient consumers to be influenced by the platform's promotion decision, although to a lesser degree than impatient consumers. Intuitively, we can interpret this generalization of the model in terms of the consumers' degree of impatience -- in the baseline model, consumers' patience degrees are extreme, i.e., impatient consumers are ``infinitely impatient'', while patient consumers were ``infinitely patient.'' The present setting relaxes this feature by allowing variable degrees of patience and impatience, while preserving the notion that impatient consumers are relatively more sensitive to promotion decisions. Finally, the remaining properties that we assume on the demand functions are as in Assumption~\ref{assump:Demand} in the baseline model.

In addition, we also allow consumer surplus to depend on the platform's promotion policy and the consumer's type. Thus, consumer surplus is given by%\footnote{As with demand, we subsequently remove the competitor price vector $q$ from the notation as we fix it throughout our analysis. \textbf{[Ilan: doesn't look like we do that]}}
\begin{equation}\label{eq: GeneralWelfare}
	W(p,a,\psi) := W_{a,\psi}(p),
\end{equation}
where the promotion and type-dependent functions $W_{a,\psi}$ satisfy the following conditions, in line with Assumption~\ref{Assump:ConsSurplus} of the baseline formulation.

\begin{assumption}[Generalized Consumer Surplus]\label{Assump:ConsSurplusGeneral}
	The functions $W_{a,\psi}(p)$ are decreasing and Lipschitz continuous in $p$, for all fixed $a=0,1$ and $\psi = I,P$.% in~$p$. %Moreover, $\bar{W}_c(p)\geq \bar{W}_0(p),$ for all $p$.
\end{assumption}
In what follows, the definition of the seller's payoff remains as in \S\ref{sec:Model}, i.e., $v(p,y)=py$. In addition, the players' total payoffs, histories, beliefs, signaling mechanisms and strategies are defined as in the baseline formulation of \S\ref{sec:Model}.

\subsubsection{Consumer Surplus Analysis with a Myopic Seller}

As with the baseline model, we analyze the case where the seller sets prices according to a myopic policy $\boldsymbol{\pi^*}$, which is defined as in \S \ref{sec:LongRunAverageOptimalConsumerSurplus}. In addition, Lemma~\ref{lemma:PromotionAsFunctionOfBelief} continues to hold\footnote{The proof presented on \S\ref{proof:lemma:PromotionAsFunctionOfBelief} applies to this formulation as well with minor notational changes. The only step that requires additional attention is establishing that $W_T^{cont}(\mu)$ as defined by \eqref{eq: W_cont_problem} is a continuous function of $\mu$, which can be established by noting that the feasible set correspondence defined by condition \eqref{eq: FirstConstraintGeneral} is continuous in $\mu$.} in the current setting (with the general demand and consumer welfare functions defined on \eqref{eq: GeneralDemand} and \eqref{eq: GeneralWelfare}), so we can focus on promotion policies based on the seller's belief (i.e., $\bA \in \mathcal{A}^M$) without loss of optimality.

Following a similar structure as the baseline model, we can define the promotion policy that maximizes instantaneous consumer welfare  as a function of the seller's belief $\mu_1$ by modifying the optimization problem defined by \eqref{eq:myopicPromotion} on \S\ref{subsec:InsufficiencyOfTruthfulDisclosure} to consider the generalized demand and consumer welfare functions defined on \eqref{eq: GeneralDemand} and \eqref{eq: GeneralWelfare} as follows:\footnote{Slightly abusing notation, for $\alpha\in[0,1]$ we denote $W(p,\alpha,\psi) = \alpha W_{1,\psi}(p)+(1-\alpha)W_{0,\psi}(p)$, and similarly for the demand function.}
\begin{equation} \label{eq:myopicPromotionGeneral}
	\begin{split}
		\max_{\substack{ p \in \mathcal{P},\\ \alpha:~P\times \{\phi_L,\phi_H\}\times[0,1]\rightarrow [0,1]}}& ~~
		\E_{\phi} \left[\E_{\psi}\left[ W(p,\alpha(p,\phi,\mu_1),\psi) \big| \phi \right]    \big|\mu_1\right]\\
		\text{s.t.}&~~ p\in\arg\max_{p'\in \mathcal{P}}~ \E_{\phi} \left[\E_{\psi}\left[ p' D(p',\alpha(p',\phi,\mu_1),\psi) \big| \phi \right]    \big|\mu_1\right],
	\end{split}
\end{equation}
where, as in \S\ref{subsec:InsufficiencyOfTruthfulDisclosure}, the constraint above ensures that price $p$ is selected according to the Myopic Bayesian policy induced by the platform's promotion policy. We refer to the solution of \eqref{eq:myopicPromotionGeneral} as the optimal myopic promotion policy.

As in the analysis presented in \S\ref{sec:LongRunAverageOptimalConsumerSurplus}, a key quantity to consider is the price that maximizes the seller's revenue if it were to assume that the platform will not promote him. Given a belief $\mu\in [0,1]$, we denote this quantity by $p^*(\mu)$, formally defined by
\begin{equation}\label{eq:pStarOutsideOptGeneral}
	p^*(\mu) := \arg \max_{p\in \mathcal{P}} ~~ p\E_{\phi} \left[\phi D_{0,I}(p) + (1-\phi) D_{0,P}(p) \big| \mu\right].
\end{equation}
The price $p^*(\mu)$ is uniquely defined due to Assumption~\ref{assump:GeneralDemand}, which requires $pD_{0,I}(p)$ and $pD_{0,P}(p)$ to be  strictly concave in $p$. By denoting $\bar{\phi}(\mu):= \phi_L + (\phi_H-\phi_L)\mu$ as in \S \ref{sec:LongRunAverageOptimalConsumerSurplus}, it follows that the seller can always achieve an expected revenue of at least $p^*(\mu)\left[\bar{\phi}(\mu) D_{0,I}(p^*(\mu)) + (1-\bar{\phi}(\mu)) D_{0,P}(p^*(\mu)) \right]$, given a belief of $\mu$.

Note that, in contrast with the baseline model, the price $p^*(\mu)$ is a function of the seller's belief~$\mu$ (rather than a constant). This is due to the fact that in the original model we assumed that impatient buyers do not buy from the seller unless it is promoted, i.e., $D_{0,I}(p)=0$. The present generalization relaxes this assumption and allow this demand to be positive.

\subsubsection{Confounding Promotion Policies}

We now turn our attention to the class of confounding policies, $\mathcal{A}^C(\mu)$, which is defined as in the baseline model (see Definition~\ref{def:confounding} on \S\ref{sec:ConfoundingPromotionPolicies}). Recall that promotion policies are those that prevent the seller from learning information about $\phi$ from sales observations, that is,
\begin{equation}\label{eq:EqualityOfSalesProbGeneral}
	\Prob\left(y_t=1|\phi=\phi_H,\mu_t=\mu,p_t = p,\alpha_t\right)  = \Prob\left(y_t=1|\phi=\phi_L,\mu_t=\mu,p_t = p,\alpha_t\right).
\end{equation}
We can write this condition in terms of the demand functions, price and promotion policy as follows. Note that given the true value of $\phi$, in any given period, a patient consumer arrives and purchases from the seller with probability  $(1-\phi)\left[\alpha_t(p,\phi,\mu)D_{1,P}(p)+ (1-\alpha_t(p,\phi,\mu))D_{0,P}(p) \right]$ while an impatient consumer does so with probability  $\phi\left[\alpha_t(p,\phi,\mu)D_{1,I}(p)+ (1-\alpha_t(p,\phi,\mu))D_{0,I}(p) \right].$ Then, we have that
\begin{equation*}
	\Prob\left(y_t=1|\phi,\mu_t=\mu,p_t = p,\alpha_t\right) = \phi \bar{D}_I(p,\alpha_t(p,\phi,\mu)) + (1-\phi) \bar{D}_P(p,\alpha_t(p,\phi,\mu)),
\end{equation*}
where, for $j=I,P$, we denote
\begin{equation*}
	\bar{D}_j(p,a) = aD_{1,j}(p) + (1-a)D_{0,j}(p).
\end{equation*}
The confounding constraint (equation~\eqref{eq:EqualityOfSalesProbGeneral}) can then be written as:
\begin{equation*}
	\phi_H \bar{D}_I(p,\alpha_t(p,\phi_H,\mu)) + (1-\phi_H) \bar{D}_P(p,\alpha_t(p,\phi_H,\mu)) = \phi_L \bar{D}_I(p,\alpha_t(p,\phi_L,\mu)) + (1-\phi_L) \bar{D}_P(p,\alpha_t(p,\phi_L,\mu)).
\end{equation*}

As in Proposition~\ref{prop: ConfoundingPolicyExistence}, it can be easily established that the class of confounding promotion policies  $A^C(\mu)$ is non-empty for all $\mu\in[0,1]$. This can be shown by considering the policy defined by
\begin{equation*}
	\begin{split}
		\bar{\alpha}_t(p,\phi,\mu) = ~\begin{cases}
			\frac{\left(\phi_H-\phi_L\right)\left[D_{0,P}(p^*(\mu))-D_{0,I}(p^*(\mu))\right]}{\phi_H\left[D_{1,I}(p^*(\mu))-D_{0,I}(p^*(\mu)\right]+(1-\phi_H)\left[D_{1,P}(p^*(\mu))-D_{0,P}(p^*(\mu)\right]}, &\text{ if } p = p^*(\mu) \text{ and } \phi = \phi_H\\
			0,& \text{otherwise}
		\end{cases}
	\end{split}
\end{equation*}
One can verify that, by Assumption~\ref{Assump:ConsSurplusGeneral}, it holds that $0\leq \bar{\alpha}_t(p,\phi,\mu)\leq 1$. In addition, $\bar{\alpha}_t(p,\phi,\mu)$ satisfies the confounding condition given in \eqref{eq:EqualityOfSalesProbGeneral} by construction. Finally, as in \S\ref{sec:ConfoundingPromotionPolicies} we define the maximum long-run average consumer surplus induced by a confounding policy as
\begin{equation*}
	\begin{split}
		W^C(\mu_1):= \max_{\boldsymbol{\alpha} \in \mathcal{A}^C(\mu)}& ~~
		\frac{1}{T}\E\left(\sum_{t=1}^T  W(p_t,a_t,\psi_t)\middle| \boldsymbol{\alpha},\boldsymbol{\pi^*},\mu_1 \right).
	\end{split}
\end{equation*}

We now establish that the characterization of Theorem~\ref{thm:LongRunAverageOptimalConsumerSurplus} continues to hold in this setting.

\begin{theorem}\label{thm:General_LongRunAverageOptimalConsumerSurplus}
	Consider the setting with the generalized demand and consumer welfare functions defined on \eqref{eq: GeneralDemand} and \eqref{eq: GeneralWelfare}. Then, for all $\mu \in [0,1]$,
	$$\lim_{T\rightarrow \infty}\sup_{\substack{\alpha \in\mathcal{A},\\\sigma \in \Sigma}} ~~\frac{1}{T}W^{\bA,\sigma,\boldsymbol{\pi^*}}_T(\mu) = co(W^C)(\mu).$$
	Furthermore, for any fixed $T$, there exists a signaling mechanism $\sigma$ and a confounding policy $\bA$ that generate an expected average consumer surplus of $co(W^C)(\mu_0)$.
\end{theorem}

\begin{proof}
	We first note that Proposition~\ref{lemma:RevenueEquivalence} holds in this generalized setting as well, as its proof does not depend on forms of the demand or consumer welfare functions (see \S\ref{app: proof_RevenueEquivalence}). Therefore, we can restrict out attention to the class of single-price policies $\mathcal{A}^P$ without loss of optimality.
	
	Provided that we verify that analogous results to Lemmas~\ref{lemma:delta_epsilonRelationCoU}, \ref{lemma:generalizedHarrison} and \ref{lemma:ConvergenceOfWC} hold in this setting as well, the result can be established by following the same steps as in the proof of Theorem~\ref{thm:LongRunAverageOptimalConsumerSurplus}. We now verify that we can indeed verify these analogous results. As in the proof of Theorem~\ref{thm:LongRunAverageOptimalConsumerSurplus} (see \S\ref{Sec:Proof:TheoremLongRun}), define
	$$M^{\alpha_t}(\epsilon):=\{\mu\in[0,1]: \E_{a_t,p_t,\custtype_t,\phi} \left(W(p_t,a_t,\custtype_t)\middle|\alpha_t,\boldsymbol{\pi}^*,\mu\right)>co(W^{C})(\mu)+\epsilon\}.$$
	Then, the following analogous result to Lemma~\ref{lemma:delta_epsilonRelationCoU} can be established:

	\begin{lemma}[Separation of Purchase Probabilities]\label{lemma:General_delta_epsilonRelationCoU}
		Fix $\epsilon>0$. There exists $\delta>0$ such that for all  $\boldsymbol{\alpha} \in \mathcal{A}^P$, if $\mu \in M^{\alpha_t}(\epsilon)$ and $p_t = \pi^*_t(\mu)$, then:
		$$|\phi_H \bar{D}_I(p,\alpha_t(p,\phi_H,\mu)) + (1-\phi_H) \bar{D}_P(p,\alpha_t(p,\phi_H,\mu)) - \phi_L \bar{D}_I(p,\alpha_t(p,\phi_L,\mu)) - (1-\phi_L) \bar{D}_P(p,\alpha_t(p,\phi_L,\mu))|>\delta.$$
	\end{lemma}
	
	\begin{proof}[Proof of Lemma~\ref{lemma:General_delta_epsilonRelationCoU}]
		This result can be established by adapting the optimization problem defined by \eqref{PlatformProblem:relaxedOptimizationProblem} in the proof of Lemma~\ref{lemma:delta_epsilonRelationCoU} (see \S\ref{sec:Theorem1Proof:AuxProofs}) to consider the general form of the demand and consumer welfare functions:
		\begin{equation}\label{eq: General_PlatformProblem:relaxedOptimizationProblem}
			W^C(\mu,\delta):= \max \left \lbrace \E_{\phi}\left[\E_{\psi}\left[ W(p,\alpha_\phi,\psi) \big| \phi \right]    \big|\mu\right]:\,  \left(\alpha_{\phi_H},\alpha_{\phi_L},p\right) \in F(\mu,\delta)\right \rbrace,
		\end{equation}
		where the feasible set $F(\mu,\delta)$ is defined as the collection of vectors $\left(\alpha_{\phi_H},\alpha_{\phi_L},p\right) \in [0,1]\times [0,1] \times \mathcal{P}$ such that the seller (weakly) prefers setting price $p$ over $p^*(\mu)$:
		\begin{equation} \label{eq: FirstConstraintGeneral}
			\begin{split}
				\E_{\phi} \left[\E_{\psi}\left[ p D(p,\alpha_\phi,\psi) \big| \phi \right]    \big|\mu\right] \geq
				p^*(\mu)\left[\bar{\phi}(\mu) D_{0,I}(p^*(\mu)) + (1-\bar{\phi}(\mu)) D_{0,P}(p^*(\mu)) \right],
			\end{split}
		\end{equation}
		and the ``$\delta$-confounding'' constraint holds:
		\begin{equation}\label{eq: SecondConstraintGeneral}
			|\phi_H \bar{D}_I(p,\alpha_{\phi_H}) + (1-\phi_H) \bar{D}_P(p,\alpha_{\phi_H}) - \phi_L \bar{D}_I(p,\alpha_{\phi_L}) - (1-\phi_L) \bar{D}_P(p,\alpha_{\phi_L}) |\leq  \delta.
		\end{equation}
		Lemma~\ref{lemma:General_delta_epsilonRelationCoU} can then be established by following the same three steps as in the proof of Lemma~\ref{lemma:delta_epsilonRelationCoU} (see \S\ref{sec:Theorem1Proof:AuxProofs}) to the modified problem $W^C(\mu,\delta)$ as defined on~\eqref{eq: General_PlatformProblem:relaxedOptimizationProblem}.
	\end{proof}
	
	The next step is to show that an analogous result to Lemma~\ref{lemma:generalizedHarrison} holds in this setting as well. 
	%\begin{equation}\label{eq:stoppingTimes}
	%	t_n = \min\left\{t:\sum_{t'=1}^t \mathbbm{1}\{\mu_{t'} \in M^{\alpha_{t'}}(\epsilon)\} \geq n\right\}
	%\end{equation}
	\begin{lemma}[Convergence of Seller Beliefs]\label{lemma:General_generalizedHarrison}
		Fix $\mu_0 \in [0,1]$ and let $\{t_n\}$ be defined according to \eqref{eq:stoppingTimes} in \S\ref{Sec:Proof:TheoremLongRun}. There exist constants $\chi,\beta>0$ such that for all $n\geq 1$,
		\begin{align*}
			\E(\mu_{t_n} |\phi=\phi_L )&\leq \chi \exp(-\beta n),&\E(1-\mu_{t_n} |\phi=\phi_H )&\leq \chi \exp(-\beta n)
		\end{align*}
	\end{lemma}
	
	\begin{proof}[Proof of Lemma~\ref{lemma:General_generalizedHarrison}]
		Fix a promotion policy $\bA\in \mathcal{A}^P$ and define, for $i\in\{L,H\}$:
		\begin{equation*}
			\rho_t^i :=\phi_i \bar{D}_I(p_t,\alpha_t(p_t,\phi_i,\mu_t)) + (1-\phi_i) \bar{D}_P(p_t,\alpha_t(p_t,\phi_i,\mu_t)),
		\end{equation*}
		where $p_t = \pi_t^*(\mu_t)$. 
		The proof of Lemma~\ref{lemma:generalizedHarrison} (see \S\ref{sec:Theorem1Proof:AuxProofs}) can then be applied verbatim to establish the result.
	\end{proof}
	
	Finally, we can show that Lemma~\ref{lemma:ConvergenceOfWC} continues to hold in this setting. As in \S\ref{sec:Theorem1Proof:PrelAux}, define $W^{\max}(\mu)$ as the maximum consumer surplus achievable by any promotion policy when $T=1$ and the seller has belief $\mu$. 
	
	\begin{lemma}[$W^C(\mu)$ Bounded by Linear Functions]\label{lemma:General_ConvergenceOfWC}
		Fix $\epsilon>0$. There exists $\bar{C}\geq 0$ such that for all $\mu \in [0,1]$:
		\begin{align*}
			co(W^{\max})(\mu) - co(W^{C})(\mu) &< \frac{\epsilon}{2} + \bar{C}\mu,\text{ and} & co(W^{\max})(\mu) - co(W^{C})(\mu) &< \frac{\epsilon}{2} +\bar{C}(1-\mu).&
		\end{align*}
	\end{lemma}
	
	\begin{proof}[Proof of Lemma~\ref{lemma:General_ConvergenceOfWC}]
		The result can be established by following the same argument as in the proof of Lemma~\ref{lemma:ConvergenceOfWC} (see \S\ref{sec:Theorem1Proof:AuxProofs}). The only step that requires additional justification is showing that the feasible set correspondence $F(\mu,\delta)$ is upper hemicontinuous in $\mu$, for fixed $\delta$. To see this, note that $p^*(\mu)$ is a continuous function of $\mu$ as the objective in \eqref{eq:pStarOutsideOptGeneral} is continuous in both $p$ and $\mu$. Therefore, both constraints \eqref{eq: FirstConstraintGeneral} and \eqref{eq: SecondConstraintGeneral} are defined by continuous functions, from where it follows that $F(\mu,\delta)$ satisfies the closed-graph property. In addition, $F(\mu,\delta)$ has compact range, which implies upper hemicontinuity.
	\end{proof}
	
	Finally, Theorem~\ref{thm:General_LongRunAverageOptimalConsumerSurplus} follows by applying the same argument as in the proof of Theorem~\ref{thm:LongRunAverageOptimalConsumerSurplus} (see \S\ref{sec:Theorem1Proof:Main}), but with Lemmas~\ref{lemma:General_delta_epsilonRelationCoU}, \ref{lemma:General_generalizedHarrison}, and \ref{lemma:General_ConvergenceOfWC} playing the roles of Lemmas~\ref{lemma:delta_epsilonRelationCoU}, \ref{lemma:generalizedHarrison}, and \ref{lemma:ConvergenceOfWC}, respectively.
\end{proof}

\subsubsection{Equilibrium Analysis}

To conclude, we illustrate that the analysis presented in \S\ref{sec:EquilibriumAnalysis} also extends to the present setting, that is that Theorems~\ref{thm:epsilonEq} and \ref{thm:OptimalRobustEquilibria} also hold with the genderalized demand and consumer welfare functions defined on \eqref{eq: GeneralDemand} and \eqref{eq: GeneralWelfare}. { To see this, let us define the optimization problem that the platform can use to design a simple confounding promotion policy, which extends the definition of problem \eqref{eq: ConfoundingProblemSimple} on \S\ref{sec: OptSimplePolicies} to the current setting:
	\begin{equation} \label{eq: General_ConfoundingProblemSimple}
		\begin{split}
			W^C(\mu)&:= \max_{\substack{\alpha_{\phi_H},\alpha_{\phi_L} \in [0,1],\\p \in \mathcal{P}}} ~~ \E_{\phi}\left[\E_{\psi}\left[ W(p,\alpha_\phi,\psi) \big| \phi \right]    \big|\mu\right] \\
			\text{s.t.}&~~ \E_{\phi} \left[\E_{\psi}\left[ p D(p,\alpha_\phi,\psi) \big| \phi \right]    \big|\mu\right] \geq
			p^*(\mu)\left[\bar{\phi}(\mu) D_{0,I}(p^*(\mu)) + (1-\bar{\phi}(\mu)) D_{0,P}(p^*(\mu)) \right],\\
			&~~ \phi_H \bar{D}_I(p,\alpha_{\phi_H}) + (1-\phi_H) \bar{D}_P(p,\alpha_{\phi_H}) = \phi_L \bar{D}_I(p,\alpha_{\phi_L}) + (1-\phi_L) \bar{D}_P(p,\alpha_{\phi_L}).
		\end{split}
	\end{equation}
	Letting $\bA^C$ be the simple confounding promotion policy that is constructed by solving \eqref{eq: General_ConfoundingProblemSimple} for each belief $\mu\in[0,1]$, we can then define the strategy profile $(\tilde\bA,\tilde\sigma,\tilde \bP)$ as in the proof of Theorem~\ref{thm:epsilonEq} (see equations \eqref{eq: PromoPolicyThm2} and \eqref{eq: SellerPolicyThm2} on \S\ref{app: proof_BayesianNashEqm}). With this strategy profile, the proofs of Theorems~\ref{thm:epsilonEq} and \ref{thm:OptimalRobustEquilibria} extends almost verbatim (see \S\ref{app: proof_BayesianNashEqm} and \S\ref{app: proof_RobustEqm}); with the slight modification that involves replacing the price that maximizes revenue from patient consumers exclusively, $p^*$, and its corresponding revenue, $(1-\bar{\phi}(\mu))p^* \bar{\rho}_c(p^*)$, with the corresponding belief-dependent price $p^*(\mu)$ as defined on \eqref{eq:pStarOutsideOptGeneral}, and the associated revenue corresponding to the seller not being promoted in this setting, $p^*(\mu)\left[\bar{\phi}(\mu) D_{0,I}(p^*(\mu)) + (1-\bar{\phi}(\mu)) D_{0,P}(p^*(\mu)) \right]$. The proof of Theorem~\ref{thm:OptimalRobustEquilibria} requires an additional minor modification when establishing Claim 1 of Lemma~\ref{lem: aux_robust_eqm}. Here, we define $\bar{\bP}$ as the policy that sets price $p^*(\mu)$ when the seller's belief at the time of choosing his price is $\mu$. Then, we need an additional step to establish condition \eqref{eq: aux_claim1_theorem3}:
	\begin{equation*}
		\begin{split}
			V_{t, t'}^{\bA,\sigma,\bar{\bP}} \left( \langle \bA,\sigma,\bar{h} \rangle \right) &\stackrel{(a)}{\geq}  \sum_{\tau =t}^{t'}\E \left[p^*(\mu_\tau)\left(\bar{\phi}(\mu_\tau) D_{0,I}(p^*(\mu_\tau)) + (1-\bar{\phi}(\mu_\tau)) D_{0,P}(p^*(\mu_\tau)) \right) | \mathcal{H}_t \right]\\
			& \stackrel{(b)}{\geq}  \sum_{\tau =t}^{t'}\E \left[p^*(\mu_t)\left(\bar{\phi}(\mu_\tau) D_{0,I}(p^*(\mu_t)) + (1-\bar{\phi}(\mu_\tau)) D_{0,P}(p^*(\mu_t)) \right) | \mathcal{H}_t \right]\\
			& \stackrel{(c)}{=} \left(t-t'+1\right) p^*(\mu_t)\left[\bar{\phi}(\mu_t) D_{0,I}(p^*(\mu_t)) + (1-\bar{\phi}(\mu_t)) D_{0,P}(p^*(\mu_t)) \right],
		\end{split}
	\end{equation*}

where (a) follows by definition of $\bar{\bP}$, (b) follows by definition of $p^*(\mu)$ (see \eqref{eq:pStarOutsideOptGeneral}) and, finally, (c) follows since the belief process $\{\mu_t\}$  is a martingale (as it is constructed by Bayesian updating) and since $\bar{\phi}(\mu)$ is an affine expression of $\mu$. The rest of the proof follows the same structure as in \S\ref{app: proof_RobustEqm}.
	
}

\end{document}